 \def\@setOxy O(#1,#2,#3)x(#4,#5,#6)y(#7,#8,#9)%
\def\tikz@plane@origin{\pgfpointxyz{#1}{#2}{#3}}%
    \def\tikz@plane@x{\pgfpointxyz{#4}{#5}{#6}}%
    \def\tikz@plane@y{\pgfpointxyz{#7}{#8}{#9}}%
\newsavebox{\smallclawbox}
\savebox{\smallclawbox}{%
\begin{tikzpicture}[x=1ex,y=1ex,baseline={([yshift=-.7ex]current bounding box.center)}]
    \coordinate (v) at (0,0);
    \coordinate (v1) at ([shift=(90:.8)]v);
    \coordinate (v2) at ([shift=(210:.8)]v);
    \coordinate (v3) at ([shift=(330:.8)]v);
    \draw (v) -- (v1);
    \draw (v) -- (v2);
    \draw (v) -- (v3);
    \filldraw (v) circle (.6pt);
    \filldraw (v1) circle (.6pt);
    \filldraw (v2) circle (.6pt);
    \filldraw (v3) circle (.6pt);
\end{tikzpicture}%
}
\newcommand{\smallclaw}{{\usebox{\smallclawbox}}}
\newsavebox{\solidedgebox}
\savebox{\solidedgebox}{%
\begin{tikzpicture}[baseline={([yshift=-0.7ex]0,0)}] 
    \coordinate (v1) at (0,0);
    \coordinate (v2) at (.7,0);
    \draw (v1) -- (v2); 
    \filldraw (v1) circle (1.3pt);
    \filldraw (v2) circle (1.3pt);
\end{tikzpicture}%
}
\newcommand{\solidedge}{{\usebox{\solidedgebox}}}
\newsavebox{\dashededgebox}
\savebox{\dashededgebox}{%
\begin{tikzpicture}[baseline={([yshift=-0.7ex]0,0)}] 
    \coordinate (v1) at (0,0);
    \coordinate (v2) at (.7,0);
    \draw[dashed] (v1) -- (v2); 
    \filldraw (v1) circle (1.3pt);
    \filldraw (v2) circle (1.3pt);
\end{tikzpicture}%
}
\newcommand{\dashededge}{{\usebox{\dashededgebox}}}
\newsavebox{\dottededgebox}
\savebox{\dottededgebox}{%
\begin{tikzpicture}[baseline={([yshift=-0.7ex]0,0)}] 
    \coordinate (v1) at (0,0);
    \coordinate (v2) at (.7,0);
    \draw[dotted] (v1) -- (v2); 
    \filldraw (v1) circle (1.3pt);
    \filldraw (v2) circle (1.3pt);
\end{tikzpicture}%
}
\newcommand{\dottededge}{{\usebox{\dottededgebox}}}
\newcommand{\CC}{{\mathbb C}}
\newcommand{\PP}{{\mathbb P}}
\newcommand{\QQ}{{\mathbb Q}}
\newcommand{\RR}{{\mathbb R}}
\newcommand{\ZZ}{{\mathbb Z}}
\newcommand{\ee}{\mathrm{e}}
\newcommand{\ii}{\mathrm{i}}
\newcommand{\sL}{\mathfrak{sl}}
\newcommand{\dd}{\mathrm{d}}
\newcommand{\Li}{\mathrm{Li}\,}
\newcommand{\aaa}{{\overline{a}}}
\newcommand{\ddd}{{\overline{d}}}
\newcommand{\ff}{{\overline{f}}}
\newcommand{\gggg}{{\overline{g}}}
\newcommand{\hh}{{\overline{h}}}
\newcommand{\kk}{{\overline{k}}}
\newcommand{\mm}{{\overline{m}}}
\newcommand{\sss}{{\overline{s}}}
\newcommand{\xx}{{\overline{x}}}
\newcommand{\zz}{{\overline{z}}}
\newcommand{\Gg}{{\overline{G}}}
\newcommand{\sC}{\mathcal{C}}
\newcommand{\sD}{\mathcal{D}}
\newcommand{\sE}{\mathcal{E}}
\newcommand{\sF}{\mathcal{F}}
\newcommand{\sG}{\mathcal{G}}
\newcommand{\sI}{\mathcal{I}}
\newcommand{\sM}{\mathcal{M}}
\newcommand{\sS}{\mathcal{S}}
\newcommand{\sV}{\mathcal{V}}
\newcommand{\sVGint}{{\sV_G^\mathrm{int}}}
\newcommand{\VGint}{{|\sVGint|}}
\newcommand{\sVGext}{{\sV_G^\mathrm{ext}}}
\newcommand{\VGext}{{|\sVGext|}}
\newcommand{\intsv}{\int_\mathrm{sv}}
\renewcommand{\Im}{\mathop\mathrm{Im}}
\renewcommand{\Re}{\mathop\mathrm{Re}}
\newcommand{\K}{\overline{K}}
\theoremstyle{plain}
\newtheorem{thm}{Theorem}
\newtheorem{lem}[thm]{Lemma}
\newtheorem{con}[thm]{Conjecture}
\newtheorem{cor}[thm]{Corollary}
\newtheorem{prop}[thm]{Proposition}
\newtheorem{quest}[thm]{Question}
\newtheorem{remark}[thm]{Remark}
\newtheorem{defn}[thm]{Definition}
\newtheorem{ex}[thm]{Example}
\newcommand*{\doilink}[2]{\href{https://doi.org/\detokenize{#2}}{#1}}
\newcommand*{\arxivlink}[2]{\href{https://arxiv.org/abs/\detokenize{#2}}{#1}}
\title{Graphical functions in even dimensions}
\date{}
\author{Michael Borinsky \and Oliver Schnetz}
\address{Michael Borinsky\\
Nikhef Theory Group\\
Science Park 105\\
1098 XG Amsterdam, The Netherlands}
\address{Oliver Schnetz\\
Department Mathematik\\
Cauerstra{\ss}e 11\\
91058 Erlangen, Germany}
\begin{document}

\maketitle
\begin{abstract}
Graphical functions are special position space Feynman integrals, which can be used to calculate Feynman periods and one- or two-scale processes at high loop orders.
With graphical functions, renormalization constants have been calculated to loop orders seven and eight in four-dimensional $\phi^4$ theory and to order five in six-dimensional $\phi^3$ theory.
In this article we present the theory of graphical functions in even dimensions $\geq4$ with detailed reviews of known properties and full proofs whenever possible.
\end{abstract}

\section{Introduction}
\subsection{Motivation}
Graphical functions\footnote{This is an updated version of an 2021 preprint, incorporating corrections from the erratum \doilink{Commun.\ Number Theory Phys.\ 20 (2026) 461--462}{10.4310/CNTP.260606021019}. 
All main results of the original article remain valid. The erratum addresses subtle inaccuracies in certain statements that do not affect the conclusions.
Otherwise, this version is equivalent to the original article published in \doilink{Commun.\ Number Theory Phys.\ 16 (2022) 515--614}{10.4310/CNTP.2022.v16.n3.a3}.}
 are massless position space Feynman integrals in quantum field theory (QFT) which depend on three vectors $z_0$, $z_1$, $z_2$ in $D$-dimensional Euclidean space.
They are a powerful tool to perform calculations in perturbative QFT. Graphical functions were first defined in \cite{gf} but also play a prominent role in four-dimensional conformal
QFTs where they were independently introduced (see e.g.\ \cite{SYM}).

The theory of graphical functions has successfully been applied to four-dimensional $\phi^4$ theory (see e.g.\ \cite{ZZ,gf,numfunct}). Calculations of Feynman periods
in $\phi^4$ theory led to the discovery of the coaction conjectures in \cite{coaction} and the coaction principle (see e.g. \cite{Bcoact1,Bcoact2,motg2}). 

In this article, we provide an extension of graphical functions to even dimensions $\geq4$. This extension made six dimensional $\phi^3$ theory accessible to higher loop orders
(the loop order is the number of independent cycles in the underlying Feynman graph). Without the theory of graphical functions $\phi^3$ is a notoriously hard subject.
The authors calculated all primitive periods in $\phi^3$ theory up to loop order six \cite{phi3,Shlog}.
At loop order seven, 561 of the 607 primitive Feynman periods were computed. At loop orders eight and nine many sporadic results exist \cite{phi3,Shlog}.
These period calculations were the basis for a recent computation of the renormalization constants in $\phi^3$ theory up to loop order
five with applications to percolation theory \cite{5loopphi3} (see \cite{classphi3} for the independent classical calculation).
Recently, higher dimensional graphical functions also found applications in the context of graph complexes \cite{Bidf}.

Graphical functions also exist in odd dimensions $\geq3$. Their theory differs from the even dimensional case and is much less explored. 
The case of two dimensions is singular and needs special treatment \cite{gf}.
For the physical QFTs in the standard model, graphical functions in even dimensions $\geq4$ seem to suffice.

Graphical functions also led to the seven loop results for the renormalization constants in $\phi^4$ theory \cite{numfunct,7loops}.
Recently, the eight loop calculation of the gamma function was finalized \cite{Shlog}.
Renormalization demands an extension to non-integer dimensions as a regularization mechanism, see e.g.\ \cite{par,IZ}.
In the context of graphical functions this dimensional regularization is very convenient for calculations.
A precise definition of dimensional regularization, however, requires the use of the parametric representation which can be unwieldy for mathematical proofs.
Therefore, the theory of dimensionally regularized graphical functions mostly relies on a series of (well tested) conjectures.

In this article, we focus on graphical functions in even integer dimension $D\geq4$, where most results can be proved:
\begin{equation}\label{eqD}
D=2\lambda+2,\quad\lambda\in\{1,2,3,\ldots\}.
\end{equation}

We also restrict ourselves to scalar graphical functions. Graphical functions with positive spin can be expressed as tuples of scalar graphical functions.
By a dimension shift mechanism, calculations of graphical functions representing Feynman integrals of particles with positive spin demand the handling of scalar graphical
functions in higher even dimensions. Hence, this article also paves the ground for future calculations in QFTs with spin (such as Yang-Mills theories).

\subsection{The graphical function method}

Let $G$ be a graph with edge set $\sE_G$ and vertex set $\sV_G$ which splits into internal vertices $\sVGint$ and external vertices $\sVGext$.
We assume that $G$ has exactly three external vertices and label the vertices by vectors in $D$-dimensional Euclidean space, $\sVGext=\{z_0,z_1,z_2\} \subset \RR^D$.
The internal labels are $x_i\in\RR^D$, $i=1,2,\ldots,\VGint$.

Every edge $e\in\sE_G$ has a weight $\nu_e\in\RR$. We often assume $\lambda\nu_e\in\ZZ$.
Although Feynman graphs in scalar QFTs only have unit edge-weights, it is convenient to allow general weights.
Weights of multiple edges are additive: if $e,f\in\sE_G$ with weights $\nu_e,\nu_f$ have equal endpoints, the pair $e,f$ can be replaced by a single edge with the same endpoints
and weight $\nu_e+\nu_f$. Zero weight edges can be eliminated from the graph $G$. We do not allow $G$ to have self-loops.

To every edge $e=uv\in\sE_G$ we associate a quadric which is the square of the distance between the labels of its endpoints $u,v\in\sV_G$,
\begin{equation}\label{eqQe}
Q_e(u,v)=\|u-v\|^2=(u^1-v^1)^2+\ldots+(u^D-v^D)^2.
\end{equation}
With this data we write the position space three-point function $A_G$ as a positive integral \cite{IZ},
\begin{equation}\label{eqAdef}
A_G(z_0,z_1,z_2)=\left(\prod_{i=1}^\VGint\int_{\RR^D}\frac{\dd^Dx_i}{\pi^{D/2}}\right)\frac{1}{\prod_{e\in\sE_G}Q_e^{\lambda\nu_e}}.
\end{equation}
In this article we only consider graphs $G$ for which the above integral converges. This significantly restricts the set of admissible graphs, see Proposition \ref{propconv}.

\begin{figure}
\tdplotsetmaincoords{80}{120}

\begin{center}
\begin{tikzpicture}[tdplot_main_coords,scale=.85]
\draw[thin, ->,black!50] (0,0,0) -- (4.5,0,0) node[anchor=south east,opacity=1]{$x^1$};

\draw[thin, ->,black!50] (0,0,0) -- (0,4.5,0)  node[anchor=south west,opacity=1]{$x^2$};

\draw[thin, ->,black!50] (0,0,0) -- (0,0,4.5)  node[anchor=south west,opacity=1]{$x^D$};

\tdplotsetrotatedcoords{0}{90}{0}
\draw[dotted,black!50,tdplot_rotated_coords] (0,.8,0) arc (90:180:.8);

\pgfmathsetmacro{\Ox}{32}
\pgfmathsetmacro{\Oy}{14}
\pgfmathsetmacro{\Oz}{3}
\pgfmathsetmacro{\Onex}{-5}
\pgfmathsetmacro{\Oney}{3}
\pgfmathsetmacro{\Onez}{1}
\pgfmathsetmacro{\Zx}{-12}
\pgfmathsetmacro{\Zy}{1}
\pgfmathsetmacro{\Zz}{6}

\tdplotcrossprod(\Onex,\Oney,\Onez)(\Zx,\Zy,\Zz)
\pgfmathsetmacro{\rx}{\tdplotresx}
\pgfmathsetmacro{\ry}{\tdplotresy}
\pgfmathsetmacro{\rz}{\tdplotresz}
\tdplotcrossprod(\rx,\ry,\rz)(\Onex,\Oney,\Onez)
\pgfmathsetmacro{\tx}{\tdplotresx/100}
\pgfmathsetmacro{\ty}{\tdplotresy/100}
\pgfmathsetmacro{\tz}{\tdplotresz/100}

\pgfmathsetmacro{\dplane}{\rx*\Ox + \ry*\Oy + \rz*\Oz}

\pgfmathsetmacro{\OneLen}{sqrt(\Onex*\Onex+\Oney*\Oney+\Onez*\Onez)}
\pgfmathsetmacro{\iLen}{sqrt(\tx*\tx+\ty*\ty+\tz*\tz)}

\tikzset{perspective/.style= {canvas is plane={O(0,0,0)x(\Onex/\OneLen,\Oney/\OneLen,\Onez/\OneLen)y(\tx/\iLen,\ty/\iLen,\tz/\iLen)}} }
\pgfmathsetmacro{\axisscale}{2}
\tikzset{perspective2/.style= {canvas is plane={O(0,0,0)x(\axisscale*\Onex/\OneLen,\axisscale*\Oney/\OneLen,\axisscale*\Onez/\OneLen)y(\axisscale*\tx/\iLen,\axisscale*\ty/\iLen,\axisscale*\tz/\iLen)}} }

\pgfmathsetmacro{\axisovershoot}{2}
\pgfmathsetmacro{\axisundershoot}{.5}

\coordinate (v0) at (\Ox,\Oy,\Oz);
\coordinate (v1) at ([shift={(\Onex,\Oney,\Onez)}]v0);
\coordinate (vz) at ([shift={(\Zx,\Zy,\Zz)}]v0);
\coordinate (vi) at ([shift={(\tx,\ty,\tz)}]v0);

\filldraw[black!50] (0,0,0) circle (1.3pt);

\draw[dashed,-{Stealth[length=10pt, width=15pt]},black!50] (0,0,0) -- node[inner sep=1.5pt,below right] {$z_0$} (v0);
\draw[dashed,-{Stealth[length=10pt, width=15pt]},black!50] (0,0,0) -- node[inner sep=1.5pt,below left] {$z_1$} (v1);
\draw[dashed,-{Stealth[length=10pt, width=15pt]},black!50] (0,0,0) -- node[inner sep=1.5pt,below right] {$z_2$} (vz);
\draw[thin,->] ($(v0)!-\axisundershoot/\OneLen!(v1)$) -- ($(v0)!{1+2.7*\axisovershoot/\OneLen}!(v1)$) node[perspective,anchor=north west,opacity=1]{$\Re z$};
\draw[thin,->] ($(v0)!-\axisundershoot/\iLen!(vi)$) -- ($(v0)!{\OneLen/\iLen+1.4*\axisovershoot/\iLen}!(vi)$) node[perspective,anchor=south west,opacity=1]{$\Im z$};

\coordinate (C1) at ($(v0)!{2*\OneLen/\iLen}!(vi)$);
\node[perspective2] (C) at ($(C1)!{.5}!(v1)$) {$\CC$};

\pgfmathsetmacro{\xscale}{1.7}
\coordinate (R0) at (${1 + 2.1*\axisundershoot/\OneLen + 2*\axisundershoot/\iLen}*(v0) - 2.1*\axisundershoot/\OneLen*(v1)- 2*\axisundershoot/\iLen*(vi)$);
\coordinate (R1) at (${ - \xscale*2*\axisovershoot/\OneLen + 2*\axisundershoot/\iLen}*(v0) + {1 + 2*\xscale*\axisovershoot/\OneLen}*(v1)- 2*\axisundershoot/\iLen*(vi)$);
\coordinate (R2) at (${ - \OneLen/\iLen - 2*\xscale*\axisovershoot/\OneLen - 2*\axisovershoot/\iLen}*(v0) + {1 + 2*\xscale*\axisovershoot/\OneLen}*(v1)+ {\OneLen/\iLen + 2*\axisovershoot/\iLen}*(vi)$);
\coordinate (R3) at (${1 - \OneLen/\iLen + 2.1*\axisundershoot/\OneLen - 2*\axisovershoot/\iLen}*(v0) - {2.1*\axisundershoot/\OneLen}*(v1)+ {\OneLen/\iLen + 2*\axisovershoot/\iLen}*(vi)$);

\draw[opacity = .05,fill,black!50]  (R0) -- (R1) -- (R2) -- (R3);

\filldraw (v0) circle(1.3pt) node[below left,perspective] {$0$};
\filldraw (v1) circle(1.3pt) node[below,perspective] {$1$};
\filldraw (vz) circle(1.3pt) node[above right,perspective] {$z$};

\draw[thick] (v0) -- node[perspective,below]{$1$} (v1);
\draw[thick] (v0) -- node[perspective,above left]{$|z|$} (vz);
\draw[thick] (vz) -- node[perspective,below right]{$|z-1|$} (v1);

\end{tikzpicture}
\end{center}
\caption{The vectors $z_0,z_1,z_2\in\RR^D$ span a plane that is identified with the complex plane $\CC$ by requiring $z_0,z_1$ to coincide with $0,1\in \CC$.
The position of the vector $z_2$ inside the plane determines the value $z\in\CC$ (up to conjugation). Comparing ratios of squared side lengths of the triangles $z_0,z_1,z_2 \in \RR^D$
and $0,1,z \in \CC$ gives the relations in \eqref{eqinvs}.}
\label{fig:Ctriangle}
\end{figure}

The value of $A_G(z_0,z_1,z_2)$ is invariant under simultaneous translations and rotations of the vectors $z_0,z_1,z_2 \in \RR^D$ and it is homogeneous under scaling.
An efficient parametrization is obtained by the identification of the affine plane spanned by $z_0,z_1,z_2$ in $\RR^D$ with $\CC$.
We assume that $z_0$ and $z_1$ coincide with 0 and 1 in $\CC$ (setting the scale in $\CC$, see Figure~\ref{fig:Ctriangle}).
The third vector $z_2$ is associated with a variable $z\in\CC$ in one of the two possible ways (the ambiguity under complex conjugation leads to the symmetry (G1) in Theorem \ref{thm1}).
The graphical function $f_G(z)$ is the Feynman integral $A_G$ evaluated on the complex plane $\CC$.

We obtain the following relations between $z_0,z_1,z_2\in \RR^D$ and $z\in\CC$:
\begin{equation}\label{eqinvs}
z\zz =  \frac{\|z_2-z_0\|^2}{\|z_1-z_0\|^2},
\quad(z-1)(\zz-1) =\frac{\|z_2-z_1\|^2}{\|z_1-z_0\|^2},
\end{equation}
where $\zz$ is the complex conjugate of $z$. With these relations we get
\begin{equation}\label{eqfA}
f_G(z)=\|z_1-z_0\|^{(D-2)\sum_e\nu_e-D\VGint}A_G(z_0,z_1,z_2).
\end{equation}
By power-counting the pre-factor on the right hand side compensates for the scaling behavior of the Feynman integral $A_G$.

\subsection{Statement of results}
In Section \ref{sectgf} we define graphical functions and summarize their fundamental properties in Theorem \ref{thm1}: Graphical functions are single-valued real-analytic functions
on $\CC\backslash\{0,1\}$ (see \cite{par}) which admit series expansions at their singular points $0,1$ and $\infty$ of log-Laurent type (\ref{01expansion}) and (\ref{inftyexpansion}).
The expansion at infinity lifts graphical functions to objects on the (punctured) Riemann sphere $\CC\cup\{\infty\}$.

Theorem \ref{thm1}, in spite of its technical nature, is a cornerstone of the theory of graphical functions. The proof is based on the expansion (\ref{gfr}) of graphical functions
into Gegenbauer polynomials (see \cite{geg}) using radial and angular graphical functions (Sections \ref{sectrad}--\ref{sectpf1}). A full proof of (\ref{gfr}) is only provided in the
classical case of four dimensions and unit edge-weights, see Theorems \ref{D4posthm} and \ref{thmgegex0}. The general case is well tested but formally relies on the validity of
interchanging the Gegenbauer expansion with the position space integrals, see Conjecture \ref{congegex}.

Integration of $z$ over the complex plane $\int_\CC \dd^2z$ converts graphical functions into numbers: Feynman periods \cite{gf,numfunct}.
The calculation of Feynman periods is a classical subject in QFT (see e.g.\ \cite{BK,Census}).
Feynman periods are essential ingredients in the calculation of renormalization constants which, in turn, allow one to calculate critical exponents for statistical models \cite{ZJ}.
In Section \ref{sectper} we describe how Feynman periods can be obtained from graphical functions.

Note that graphical functions were originally devised in \cite{gf} for the purpose of computing Feynman periods.
Lists of Feynman periods in four-dimensional $\phi^4$ theory up to loop order eight can be found in \cite{Census,numfunct}. A list of periods in six-dimensional $\phi^3$ theory up to loop
order nine is in \cite{Shlog} which also extends the $\phi^4$ data to loop order eleven.

In Sections \ref{sectgfid1} to \ref{sectfish} we report on known identities for graphical functions. The efficiency of the theory of graphical functions relies on the existence of these
identities as tools for their computation.

The most important technique to calculate graphical functions is appending a single edge $e$ of weight $\nu_e=1$ to the vertex $z$ (thus creating a new internal vertex), 
\begin{align}
    \def\rad{.6}
\begin{tikzpicture}[baseline={([yshift=-.7ex]0,0)}]
    \coordinate (v) at (0,0);
    \draw[fill=black!20] (v) circle (\rad);
    \node (G) at (-.8,-\rad) {$G$};
    \coordinate[label=above:$1$] (v1) at ([shift=(90:\rad)]v);
    \coordinate[label=below:$0$] (v0) at ([shift=(-90:\rad)]v);
    \coordinate[label=right:$z$] (vz) at ([shift=(0:\rad)]v);
    \coordinate (vm) at ([shift=(0:1.5)]v);
    \filldraw (v0) circle (1.3pt);
    \filldraw (v1) circle (1.3pt);
    \filldraw (vz) circle (1.3pt);
    \coordinate (w) at (4,0);
    \draw[fill=black!20] (w) circle (\rad);
    \node (G1) at (3.2,-\rad) {$G_1$};
    \coordinate[label=above:$1$] (w1) at ([shift=(90:\rad)]w);
    \coordinate[label=below:$0$] (w0) at ([shift=(-90:\rad)]w);
    \coordinate (wx) at ([shift=(0:\rad)]w);
    \coordinate[label=right:$z$] (wz) at ([shift=(0:{2.5*\rad})]w);
    \coordinate (wm) at ([shift=(180:1.5)]w);
    \draw (wx) -- (wz);
    \draw[thick,->] (vm) -- (wm);
    \filldraw (w0) circle (1.3pt);
    \filldraw (w1) circle (1.3pt);
    \filldraw (wx) circle (1.3pt);
    \filldraw (wz) circle (1.3pt);
\end{tikzpicture}
\label{eqappend},
\end{align}
where we associate the labels $0,1,z$ to the external vertices $z_0,z_1,z_2$.

The main result of this article is the algorithm that performs the operation of appending an edge to a given graphical function in arbitrary even dimensions $\geq 4$
by taking single-valued primitives, see Section \ref{sectappalg}.

The three-point functions associated to $G_1$ and $G$ are related by an inhomogeneous Laplace equation,
\begin{equation}\label{eqlap}
\Delta_{z_2} A_{G_1}(z_0,z_1,z_2)=-\frac{4}{(\lambda-1)!}A_G(z_0,z_1,z_2).
\end{equation}
To verify \eqref{eqlap}, recall that for an edge $e$ with weight $\nu_e=1$ the relevant factor in the integrand of (\ref{eqAdef}) is a Green's functions of the $D$-dimensional Laplacian \cite{IZ},
\begin{equation}\label{eqLprop}
\Delta_u\frac{1}{Q_e(u,v)^\lambda}=-\frac{4\pi^{\lambda+1}}{(\lambda-1)!}\delta^{(D)}(u-v),
\end{equation}
where $\delta^{(D)}$ is the $D$-dimensional Dirac delta distribution.

In Lemma \ref{lemdiff} we derive from \eqref{eqlap} the following relation between the graphical functions $f_G(z)$ and $f_{G_1}(z)$,
\begin{equation}\label{eqdiff}
\frac{1}{(z-\zz)^\lambda}\Delta_{\lambda-1}(z-\zz)^\lambda f_{G_1}(z)=-\frac{1}{(\lambda-1)!}f_G(z),
\end{equation}
where $\Delta_{\lambda-1}$ is an effective Laplacian acting on complex functions 
\begin{equation}\label{eqdiffn}
\Delta_{\lambda-1}=\partial_z\partial_\zz+\frac{\lambda(\lambda-1)}{(z-\zz)^2}.
\end{equation}
Solving the partial differential equation (\ref{eqdiff}) for $f_{G_1}(z)$ is one of the main results of this article. 
We proceed in three steps: Firstly, a general solution has to be found. This is trivial for $D=4$, $\lambda=1$, where the effective Laplacian factors.
The general solution in even dimensions $\geq4$ is constructed in Theorem \ref{thm2}. The proof of Theorem \ref{thm2} is in Sections \ref{sectpf2a} and \ref{sectpf2}.

As second step towards the solution of \eqref{eqdiff} we need to control the kernel of (\ref{eqdiffn}). Remarkably, the kernel is trivial in the space of graphical functions:
\begin{thm}\label{thm01}
The differential equation (\ref{eqdiff}) has a unique solution in the space of graphical functions.
\end{thm}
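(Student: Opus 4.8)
The plan is to reduce to the homogeneous equation and then use the Gegenbauer--radial structure developed in the preceding sections. Let $f$ and $\widetilde f$ be two graphical functions solving \eqref{eqdiff} with the same right-hand side $-\tfrac{1}{(\lambda-1)!}f_G$, and set $h=f-\widetilde f$. Then $h$ is single-valued and real-analytic on $\CC\setminus\{0,1\}$, is invariant under $z\leftrightarrow\zz$, admits log-Laurent expansions of the types \eqref{01expansion} and \eqref{inftyexpansion} at $0,1,\infty$, and satisfies the homogeneous equation $\Delta_{\lambda-1}\bigl((z-\zz)^\lambda h\bigr)=0$. (Graphical functions are not closed under addition, so $h$ need not itself be one; but the analytic properties just listed are inherited, as are the refined leading-order and decay conditions shared by $f$ and $\widetilde f$, since in both cases the operator of \eqref{eqdiff} must reproduce the expansions of $f_G$ at $0,1,\infty$.) It therefore suffices to prove that the only such $h$ is $h\equiv0$.

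The first step is to classify the solutions of $\Delta_{\lambda-1}\bigl((z-\zz)^\lambda h\bigr)=0$ in this ambient analytic space. This is what the machinery of Sections~\ref{sectrad}--\ref{sectpf1} delivers: in polar coordinates $z=\rho\,\ee^{\ii\varphi}$ the operator $\Delta_{\lambda-1}$ is diagonal on the angular (Gegenbauer) modes, and in the $n$-th mode the homogeneous equation for $g=(z-\zz)^\lambda h$ is a second-order Euler equation in $\rho$ with indicial exponents $\pm(n+\lambda)$. Real-analyticity of $h$ across the real axis $z=\zz$ — where the coefficient of $\Delta_{\lambda-1}$ is singular — selects the angular factor $(\sin\varphi)^\lambda\,C_n^\lambda(\cos\varphi)$ rather than its $(\sin\varphi)^{1-\lambda}$ partner, so $h$ has radial part $\rho^{n}$ or $\rho^{-n-2\lambda}$ and no logarithm. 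Hence every admissible homogeneous solution is a finite combination of the elementary functions $\rho^{n}Z_n$ and $\rho^{-n-2\lambda}Z_n$, where $Z_n$ is the degree-$n$ zonal (axially symmetric) harmonic centred at $0$ or at $1$. For $D=4$, $\lambda=1$ this is immediate and unconditional, since $\Delta_0=\partial_z\partial_\zz$ factorizes: $(z-\zz)h=\phi(z)-\phi(\zz)$ with $\phi$ holomorphic on $\CC\setminus\{0,1\}$, and single-valuedness together with \eqref{01expansion}, \eqref{inftyexpansion} forces $\phi$ to be a polynomial plus finite principal parts at $0$ and at $1$. For general even $D$ the mode decomposition relies on \eqref{gfr}, hence on Conjecture~\ref{congegex}, and is unconditional through Theorems~\ref{D4posthm} and \ref{thmgegex0} only in the classical case.

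The second step eliminates every mode. The decay at infinity recorded in \eqref{inftyexpansion} removes the non-decaying modes $\rho^{n}Z_n$, $n\ge0$, in particular the constant. The surviving candidates $\rho^{-n-2\lambda}Z_n$, singular of order $n+2\lambda$ at $0$ or at $1$, are precisely the homogeneous freedom in solving \eqref{eqdiff}; to close this freedom one invokes that the log-Laurent expansion \eqref{01expansion} of a graphical function at $0$ and at $1$ is tightly constrained, with leading order and coefficients pinned through \eqref{eqdiff} by the expansions of $f_G$. Matching term by term against \eqref{01expansion} forces every coefficient of $\rho^{-n-2\lambda}Z_n$ to vanish, so $h\equiv0$ and \eqref{eqdiff} has at most one solution among graphical functions. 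For $D=4$ this last matching is explicit: after \eqref{inftyexpansion} removes its polynomial part, $\phi=\sum_k a_k z^{-k}+\sum_k b_k(z-1)^{-k}$, and one reads off that no nonzero choice of $a_k,b_k$ gives expansions at $0$ and $1$ of the form \eqref{01expansion}.

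I expect this last matching to be the main obstacle: it is the only point at which one uses that $h$ comes from genuine graphical functions rather than from an arbitrary solution of the differential equation, and it requires confronting the finitely many elementary radial exponents in each Gegenbauer mode with the precise shape of \eqref{01expansion} and \eqref{inftyexpansion} — and with the leading behaviour forced through \eqref{eqdiff} — to see that the admissible ranges of exponents at $0$ and at $\infty$ never overlap. A secondary difficulty, present as soon as $D>4$, is that the mode decomposition underpinning the whole argument is itself conditional on Conjecture~\ref{congegex}.
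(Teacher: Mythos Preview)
Your approach via Gegenbauer mode decomposition is genuinely different from the paper's, and it carries the conditionality you yourself flag: for $D>4$ your first step rests on Conjecture~\ref{congegex}. The paper avoids this entirely. Its proof (Section~\ref{sectpf3}, establishing Theorem~\ref{thm3}) is purely algebraic and unconditional: the kernel of $\Delta_n$ on $\CC\setminus\RR$ is identified in closed form as $d_nh(z)+\ddd_n\hh(\zz)$ for (anti-)holomorphic $h,\hh$ (Theorem~\ref{thm2a}, with $d_n$ as in \eqref{Dn}), and the argument of Theorem~\ref{thm2} then forces $h,\hh$ to be of the shape \eqref{eqhdef}, i.e.\ rational plus polynomial-times-log pieces. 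No radial--angular separation, no positivity of angular periods, no mode-by-mode Euler analysis is needed.

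Your second step is where the real content lies, and here the paper is much sharper than your sketch. The elimination does not proceed by vague ``matching against the expansions of $f_G$''; it uses the explicit bounds in \eqref{Ma} and \eqref{Minfty} specialized to the graph $G_1$, which has a single weight-$1$ edge at $z$ so that $\nu_z^>=1$, $\nu_{sz}=0$, $\nu_z^<=0$, $\nu_z=1$, giving $M_0,M_1\ge -n$ and $M_\infty\le -1$. These bounds kill, in turn, the log pieces $p_s$ (via Lemma~\ref{lem1}: $(z-\zz)^nd_np_s$ is a symmetric polynomial of degree $\le n$, so division by $(z-\zz)^{2n+1}$ would be singular on the real axis), any pole of $\phi$ at $0$ or $1$ (a pole of order $m$ in $\phi$ produces a pole of order $n+m>n$ in $f$), and any degree $>2n$ in $\phi$ at infinity (producing growth incompatible with $M_\infty\le -1$). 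What remains is again a symmetric polynomial over $(z-\zz)^{2n+1}$, hence zero. Your outline never isolates these precise numerical thresholds, which are the whole point; and your handling of the singularity at $1$ via ``zonal harmonics centred at $1$'' does not sit naturally inside a polar expansion about $0$. The paper's $d_n$/$\ddd_n$ formalism treats $0$, $1$, and $\infty$ on an equal algebraic footing and is what makes the argument go through cleanly in all even dimensions.
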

The general properties of graphical functions in Theorem \ref{thm1} specify this unique solution. 
Formally, Theorem \ref{thm01} follows from Theorem \ref{thm3} which will be proved in Section \ref{sectpf3}.

With existence and uniqueness of the solution one obtains an algorithm that allows one to calculate the graphical function $f_{G_1}(z)$ from the graphical function $f_G(z)$.
This algorithm is presented in Section \ref{sectappalg} as the last step in the solution of \eqref{eqdiff}.
It relies on the construction of single-valued primitives in an underlying function space. In general, this is a difficult task. For many practical calculations, however,
the space of generalized single-valued hyperlogarithms (GSVHs) suffices \cite{GSVH}, see Section \ref{sectint}. Whenever $f_G(z)$ is a GSVH, then $f_{G_1}(z)$ is a GSVH which can be computed
(subject to mild constraints from time and memory consumption). The algorithm of appending an edge has been implemented by the second author in \cite{Shlog}.

The surprising fact that (\ref{eqdiff}) admits a closed solution may be related to the interpretation of the effective
Laplacian (\ref{eqdiffn}) in terms of the Casimir of a representation of the Lie algebra $\sL(2,\CC)$ whose eigenvalue is related to the dimension $D$, see Section \ref{sectrep}.

The core of the article are proofs of the results in Sections \ref{sectgfid1} to \ref{sectfish}. Because the proofs require some preparations and rely on independent techniques
they are moved to Sections \ref{sectrad} to \ref{sectpf3}.

The article also contains a generalization of the Gegenbauer identity in Section 3.6 of \cite{gf} to even dimensions $\geq4$, see Section \ref{sectgeg}.
The proof of the classical four-dimensional case is in Section \ref{sectpf4}.

\section*{Acknowledgements}
The first author is supported by NWO Vidi grant 680-47-551.
The second author is supported by DFG grant SCHN~1240/3. He is also grateful for valuable discussions on representation theory with F.~Knop and B.~Van Steirteghem.

\section{Graphical functions}\label{sectgf}
\subsection{Definition}\label{sectgfdef}
It is convenient to define graphical functions as position space Feynman integrals (\ref{eqAdef}) evaluated at specific vectors.
\begin{defn}\label{defgf}
Let $z\in\CC\backslash\{0,1\}$ with real part $\Re z=(z+\zz)/2$ and imaginary part $\Im z=(z-\zz)/2\ii$. Define the three vectors $z_0,z_1,z_2\in\RR^D$ as
\begin{equation}\label{eqzdef}
z_0=0=\left(\begin{array}{c}0\\0\\0\\ \vdots\\0\end{array}\right),\quad
z_1=e_1=\left(\begin{array}{c}1\\0\\0\\\vdots\\0\end{array}\right),\quad
z_2=\left(\begin{array}{c}\Re z\\\Im z\\0\\\vdots\\0\end{array}\right).
\end{equation}
Let $\lambda\in\{1,2,3,\ldots\}$ and $G$ be a weighted graph with external vertices $z_0,z_1,z_2$ such that the three-point function $A_G(z_0,z_1,z_2)$ in (\ref{eqAdef}) exists. Then
(compare (\ref{eqfA}) with $\|z_1-z_0\|=\|e_1\|=1$)
\begin{equation}\label{eqfGdef}
f_G^{(\lambda)}(z)=A_G(0,e_1,z_2(z))
\end{equation}
is the graphical function of $G$ in $D=2\lambda+2$ dimensions at the complex value $z$. In the graph $G$ we use labels 0, 1, $z$ for $z_0$, $z_1$, $z_2$ with the identification illustrated
in Figure \ref{fig:Ctriangle}. We drop the superscript $(\lambda)$ if we do not need to refer to the dimension of the ambient space.
\end{defn}
Definition \ref{defgf} provides an explicit formula for $f_G^{(\lambda)}(z)$ which is equivalent to (\ref{eqfA}).

A non-trivial graphical function is never holomorphic in $z$ but a function of $z$ and its complex conjugate $\zz$. Note that in some situations it can be convenient to consider $z$
and $\zz$ as independent analytic variables. This perspective is essential in QFT for the transition from Euclidean to Minkowski metric where $z$ and $\zz$ are independent real variables.

Occasionally we identify $f_G(z)$ with the graph $G$ together with the external labels $0,1,z$. In this case we write $G_{0,1,z}$ for the graph $G$.
\begin{ex}\label{exempty}
Let $\emptyset$ be the empty graph (with no edges) and $uv$ be the graph with single edge $uv$ of weight $\nu_{uv}$ between the external vertices $u,v\in\{0,1,z\}$. We get
$$
f_\emptyset(z)=f_{01}(z)=1,\quad f_{0z}(z)=(z\zz)^{-\lambda\nu_{0z}},\quad f_{1z}(z)=[(z-1)(\zz-1)]^{-\lambda\nu_{1z}}.
$$
\end{ex}

The first non-trivial graphical function is associated to the three-star $\smallclaw$.

\begin{figure}
\begin{align*}
& 
\begin{tikzpicture}
    \coordinate (v) ;
    \def \rad {1}
    \coordinate[label=above:$1$] (v1) at ([shift=(90:\rad)]v);
    \coordinate[label=below left:$0$] (v2) at ([shift=(210:\rad)]v);
    \coordinate[label=below right:$z$] (v3) at ([shift=(330:\rad)]v);
    \draw (v) -- node[inner sep=1pt,left] {\tiny{$1$}} (v1);
    \draw (v) -- node[inner sep=1pt,above left] {\tiny{$1$}} (v2);
    \draw (v) -- node[inner sep=1pt,above right] {\tiny{$1$}} (v3);
    \filldraw (v) circle (1.3pt);
    \filldraw (v1) circle (1.3pt);
    \filldraw (v2) circle (1.3pt);
    \filldraw (v3) circle (1.3pt);
    \node [below left=of v] {$\smallclaw$};
\end{tikzpicture}
\hspace{3cm}
\begin{tikzpicture}
    \coordinate (v) ;
    \def \rad {1}
    \coordinate[label=above:$1$] (v1) at ([shift=(90:\rad)]v);
    \coordinate[label=left:$0$] (v2) at ([shift=(180:\rad)]v);
    \coordinate[label=below:$\infty$] (v3) at ([shift=(270:\rad)]v);
    \coordinate[label=right:$z$] (v4) at ([shift=(0:\rad)]v);
    \draw (v) -- node[inner sep=1pt,left] {\tiny{$1$}} (v1);
    \draw (v) -- node[inner sep=1pt,below] {\tiny{$1$}} (v2);
    \draw (v) -- node[inner sep=1pt,right] {\tiny{$1$}} (v3);
    \draw (v) -- node[inner sep=1pt,above] {\tiny{$1$}} (v4);
    \draw (v1) -- node[inner sep=0pt,above left] {\tiny{$-1$}} (v2);
    \draw (v3) -- node[inner sep=0pt,below right] {\tiny{$-1$}} (v4);
    \filldraw (v) circle (1.3pt);
    \filldraw (v1) circle (1.3pt);
    \filldraw (v2) circle (1.3pt);
    \filldraw (v3) circle (1.3pt);
    \filldraw (v4) circle (1.3pt);
    \node [below left=of v] {$\overline{\smallclaw}$};
\end{tikzpicture}
\end{align*}
\caption{
The three-star $\smallclaw$ and its completion (see Section \ref{sectcomp}) $\overline{\smallclaw}$ with indicated weights.
}
\label{fig:3star}
\end{figure}
 
\begin{ex}\label{ex3star}
Let $z_1=(1,0,\ldots,0)^T$, $z_2=(\Re z,\Im z,0,\ldots,0)^T$ and
$$
f_\smallclaw^{(\lambda)}(z)=\int_{\RR^D}\frac{1}{\|x\|^{2\lambda}\|x-z_1\|^{2\lambda}\|x-z_2\|^{2\lambda}}\frac{\dd^Dx}{\pi^{D/2}}
$$
be the graphical function of the three-star in Figure \ref{fig:3star} with unit weights in $D=4, 6, 8$ dimensions ($\lambda=1,2,3$).
We find (see \cite{gf}, correcting a misprint in $f_\smallclaw^{(3)}(z)$)
\begin{equation*}
f_\smallclaw^{(1)}(z)=\frac{4\ii D(z)}{z-\zz},\quad f_\smallclaw^{(2)}(z)=\frac{1}{z\zz(z-1)(\zz-1)},\quad f_\smallclaw^{(3)}(z)=\frac{2z\zz-z-\zz+2}{8[z\zz(z-1)(\zz-1)]^2},
\end{equation*}
where
\begin{equation}\label{D}
D(z)=\Im(\Li_2(z)+\ln(1-z)\ln|z|)
\end{equation}
is the Bloch-Wigner dilogarithm (see e.g.\ \cite{Zagierdilog}). The graphical function of the three-star is rational in any even dimension $\geq6$.
This can be seen by using the integral representation (3.16) in \cite{gf},
$$
f_\smallclaw^{(\lambda)}(z)=\frac{1}{(\lambda-1)!(\lambda-1)}\int_0^1\frac{(z\zz)^{1-\lambda}-t^{2\lambda-2}}{(1-tz)^\lambda(1-t\zz)^\lambda}\,\dd t,\quad\text{if }\lambda=2,3,\ldots,
$$
and proving that the residues of the integrand at $t=1/z$ and at $t=1/\zz$ vanish. We, e.g., get for the residue at $t=1/z$
$$
\frac{1}{(-z)^\lambda}\left.\partial^{\lambda-1}_t\right|_{t=z^{-1}}\frac{(z\zz)^{1-\lambda}-t^{2\lambda-2}}{(1-t\zz)^\lambda}=
\left.\frac{(-1)^\lambda(2\lambda-2)!(1-(tz)^{\lambda-1})}{(\lambda-1)!z^{2\lambda-1}(1-t\zz)^{2\lambda-1}}\right|_{t=z^{-1}}=0.
$$
\end{ex}

\subsection{Fundamental properties}\label{sectfp}
We need the following generalization of the edge-weight to subgraphs: Let $g$ be a subgraph of $G$, then
\begin{equation}\label{eqNg}
N_g=\Big(\sum_{e\in\sE_g}\nu_e\Big)-\frac{\lambda+1}{\lambda}|\sV_g^\text{int}|
\end{equation}
is the weight of the graph $g$. Note that $N_g$ is related to the superficial degree of divergence in QFT \cite{IZ}.
Equation (\ref{eqfA}) becomes
\begin{equation}\label{eqfA1}
A_G(z_0,z_1,z_2)=\|z_1-z_0\|^{-2\lambda N_G}f_G^{(\lambda)}(z),
\end{equation}
subject to the relation (\ref{eqinvs}) between $z$ and $z_0,z_1,z_2$.

Following \cite{GSVH} we define $\sS\sV_{\{0,1,\infty\}}$ as the space of functions on $\CC\backslash{\{0,1\}}$ which are analytic in $z$ and $\zz$ and
have single-valued log-Laurent expansions (\ref{01expansion}) and (\ref{inftyexpansion}) at 0, 1, and $\infty$ (see Proposition 3.16 in \cite{gf} and Theorem 4.4 in \cite{numfunct}).
Graphical functions have the following three fundamental properties (note that (G1) and (G2) also hold in odd dimensions):

\begin{thm}\label{thm1}
Let $G$ be a weighted graph such that the graphical function $f_G(z)$ exists in $D=2\lambda+2\geq3$ (odd or even) dimensions.
For any vertex set $\sV$ we define $G[\sV]$ as the graph which is induced in $G$ by $\sV$ (i.e.\ $G[\sV]$ has the edges of $G$ which have both endpoints in $\sV$).
Then
\begin{enumerate}
\item[(G1)] $f_G(z)$ is a symmetric function, $f_G(z)=f_G(\zz)$.
\item[(G2)] $f_G(z)$ is a single-valued positive real-analytic function on $\CC\setminus\{0,1\}$.
\item[(G3)] $f_G(z)\in\sS\sV_{\{0,1,\infty\}}$, i.e.\ it admits single-valued log-Laurent expansions at $0$, $1$, and $\infty$, if
\begin{equation}\label{thm1cond}
D=4\qquad\text{and}\qquad\nu_e=1\text{ for all }e\in\sE_G.
\end{equation}
Explicitly, let $\nu_z$ ($\nu_z^>$, $\nu_z^<$) be the sum of (positive, negative) weights adjacent to the vertex $z$.
At the singular points $s=\sss=0,1$ there exist coefficients $c_{\ell,m,\mm}^s\in\RR$ such that
\begin{equation}\label{01expansion}
f_G(z)=\sum_{\ell=0}^\VGint\sum_{m,\mm=M_s}^\infty c_{\ell,m,\mm}^s[\log(z-s)(\zz-\sss)]^\ell(z-s)^m(\zz-\sss)^\mm\quad\text{if }|z-s|<1,
\end{equation}
where
\begin{equation}\label{Ma}
M_s=-\max_{\sV\subseteq\sVGint}\lambda N_{G[\sV\cup\{s,z\}]}\geq\min\{1-\lambda\nu_z^>,-\lambda\nu_{sz}\}.
\end{equation}
At infinity there exist coefficients $c_{\ell,m,\mm}^\infty\in\RR$ such that
\begin{equation}\label{inftyexpansion}
f_G(z)=\sum_{\ell=0}^\VGint\sum_{m,\mm=-\infty}^{M_\infty}c_{\ell,m,\mm}^\infty(\log z\zz)^\ell z^m\zz^\mm\quad\text{if }|z|>1,
\end{equation}
where
\begin{equation}\label{Minfty}
M_\infty=-\lambda N_G+\max_{\sV\subseteq\sVGint}\lambda N_{G[\sV\cup\{0,1\}]}\leq\max\{-1-\lambda\nu_z^<,-\lambda\nu_z\}.
\end{equation}
\end{enumerate}
\end{thm}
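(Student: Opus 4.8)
The plan is to prove the three items separately: (G1) and (G2) follow fairly directly from the integral representation \eqref{eqAdef}, while (G3) is the technical heart and rests on the Gegenbauer expansion.

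For (G1), observe that replacing $z$ by $\zz$ replaces the vector $z_2=(\Re z,\Im z,0,\ldots,0)^T$ by its mirror image in the hyperplane $\{x^2=0\}$, which is the image of $z_2$ under an orthogonal transformation $R\in O(D)$ fixing $z_0=0$ and $z_1$. Since every quadric $Q_e$ and each measure $\dd^Dx_i$ is $O(D)$-invariant, the substitution $x_i\mapsto Rx_i$ in \eqref{eqAdef} gives $A_G(z_0,z_1,z_2)=A_G(z_0,z_1,Rz_2)$, i.e.\ $f_G(z)=f_G(\zz)$. For (G2), positivity is immediate: the integrand $\prod_eQ_e^{-\lambda\nu_e}$ is strictly positive off the measure-zero locus where some $Q_e$ vanishes, so the convergent integral is positive. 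For real-analyticity one fixes $z_\ast\in\CC\setminus\{0,1\}$, lets the coordinates of $z_2$ roam over a small complex polydisc around the corresponding point so that the $Q_e$ become holomorphic there, and checks that the power-counting bounds guaranteeing convergence of \eqref{eqAdef} persist uniformly on that polydisc — the only possible obstruction, a vanishing $Q_e$, is excluded near a configuration with $z_\ast\notin\{0,1\}$ and does not affect convergence of the subintegrals. Osgood's lemma then yields joint holomorphy, hence real-analyticity on restriction to the real slice, and in particular finiteness of $f_G$ on all of $\CC\setminus\{0,1\}$; single-valuedness is automatic because \eqref{eqfGdef} defines $f_G$ pointwise. (One may alternatively invoke the analyticity results of \cite{par}.)

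For (G3), the route is the expansion of each unit-weight propagator into Gegenbauer polynomials. Writing $x=r_x\hat x$ one has, in $D=4$, $\|x-y\|^{-2}=\sum_{n\ge0}\min(r_x,r_y)^n\max(r_x,r_y)^{-n-2}C_n(\hat x\cdot\hat y)$; substituting this on every edge and integrating out the angular variables of the internal vertices via Gegenbauer orthogonality expresses $f_G(z)$ as a sum over integer labelings of the edges, each summand factoring into a radial integral over the $r_{x_i}$ and an angular factor built from the $C_n$ at the three fixed external directions (Theorems \ref{D4posthm} and \ref{thmgegex0}, formula \eqref{gfr}). Granting this, place the Gegenbauer center at $z_0$: expanding the radial integrals as $r_z=|z|\to0$ yields \eqref{01expansion} at $s=0$, and as $r_z\to\infty$ yields \eqref{inftyexpansion} at $\infty$; re-centering at $z_1$ (equivalently relabelling $0\leftrightarrow1$ in $G$) yields \eqref{01expansion} at $s=1$. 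Each radial integration contributes a power $(z\zz)^m$ up to logarithms, the $\log$-degree being at most $\VGint$ and arising only from coincidences among the accumulated exponents, and the $z\leftrightarrow\zz$ symmetry of the angular factors forces these expansions to be single-valued. Finally the exponent bounds in \eqref{Ma} and \eqref{Minfty} are a power-counting statement: the leading exponent at $s$ is controlled by the most divergent induced subgraph $G[\sV\cup\{s,z\}]$, giving $M_s=-\max_\sV\lambda N_{G[\sV\cup\{s,z\}]}$, and symmetrically $M_\infty=-\lambda N_G+\max_\sV\lambda N_{G[\sV\cup\{0,1\}]}$; the cruder two-sided estimates follow by evaluating these maxima at $\sV=\emptyset$ and at $\sV$ equal to the internal neighbourhood of $z$.

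The step I expect to be the main obstacle is the rigorous interchange of the infinite Gegenbauer sums with the position-space integrations — i.e.\ the proof of Theorems \ref{D4posthm} and \ref{thmgegex0} — together with the identification of the exact convergence domains $|z-s|<1$ and $|z|>1$ of the resulting double series. Dominating the Gegenbauer tails uniformly enough to apply dominated convergence, while simultaneously tracking all the induced-subgraph power counts needed for the sharp exponent bounds, is where the real work lies; this is also precisely why the general even-dimensional, general-weight version is left to Conjecture \ref{congegex}.
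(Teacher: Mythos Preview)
Your overall architecture for (G3) matches the paper's---Gegenbauer expansion into radial and angular pieces, then extraction of the log-Laurent series---and your treatment of (G1) and (G2) is fine. But two steps in your (G3) sketch are genuine gaps, not merely points you ``expect to be the main obstacle''.

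\textbf{First}, the passage from the radial--angular decomposition to a Laurent series in $z$ and $\zz$ with integer exponents is not as you describe. The radial integral (Theorem~\ref{radialthm}) contributes $Z^{-2\lambda N_{G[\sV\cup\{0,z\}]}+\sum_{e\in\sC}n_e}\log^\ell Z$, not a pure power of $z\zz$. The first exponent gives $(z\zz)^M$ with $M\in\ZZ$ because $\lambda\nu_e\in\ZZ$, but the residual factor $Z^{\sum_{e\in\sC}n_e}$ has no reason by itself to be an integer power of $z\zz$. What the paper actually does is pair this residual $Z$-power with the angular graphical function: Theorem~\ref{angularthm} expands $f^\angle_{G(\mathbf n)\backslash\{0\}}$ over purely external graphs $g$ subject to the parity constraint \eqref{gammaG}, and it is precisely this constraint that forces $Z^{\sum_{e\in\sC}n_e}f^\angle_g(\cos\phi)$ to be a polynomial in $z$ and $\zz$ (since $Z\cos\phi=(z+\zz)/2$, $Z^2=z\zz$, and the parity of the Gegenbauer polynomial matches the parity of the $Z$-exponent). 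Your sentence ``each radial integration contributes a power $(z\zz)^m$'' skips this entirely, and ``the $z\leftrightarrow\zz$ symmetry \ldots\ forces these expansions to be single-valued'' is not the mechanism.

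\textbf{Second}, your derivation of the inequalities in \eqref{Ma} and \eqref{Minfty} runs in the wrong direction. Evaluating $\max_{\sV}\lambda N_{G[\sV\cup\{s,z\}]}$ at two particular subsets gives a \emph{lower} bound on the maximum, hence an \emph{upper} bound on $M_s=-\max(\cdots)$; what is claimed is a lower bound on $M_s$. The paper instead bounds every term in the maximum: for $\sV=\emptyset$ one gets $\lambda N_{G[\{s,z\}]}=\lambda\nu_{sz}$, while for every nonempty $\sV\subseteq\sVGint$ the convergence criterion \eqref{eqconv} applied to $G[\sV\cup\{s\}]$ gives $\lambda N_{G[\sV\cup\{s\}]}\le-1$, whence $\lambda N_{G[\sV\cup\{s,z\}]}\le\lambda\nu_z^>-1$. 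The bound at $\infty$ is obtained analogously via the completed graph.

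A smaller structural difference: rather than re-centering the Gegenbauer expansion at $z_1$ or analysing $r_z\to\infty$ directly, the paper reduces $s=1$ to $s=0$ and $\infty$ to $0$ by the external-vertex permutations $f_{G_{1,0,1-z}}=f_{G_{0,1,z}}$ and $f_{G_{0,z^{-1},1}}=(z\zz)^{\lambda N_G}f_{G_{0,1,z}}$. This is cleaner and in particular gives the identity in \eqref{Minfty} for free from \eqref{Ma}.
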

Property (G1) is implied by (\ref{eqinvs}) and \eqref{eqfA}. Property (G2) is proved in \cite{par}.
Property (G3) is proved in Sections \ref{sectrad}--\ref{sectpf1} using Gegenbauer expansions and the concept of radial and angular graphical functions.

\begin{con}\label{con1}
Statement (G3) in Theorem \ref{thm1} remains valid if Condition (\ref{thm1cond}) is weakened to
$$
D=2\lambda+2\geq4\text{ is even}\qquad\text{and}\qquad\lambda\nu_e\in\ZZ\text{ for all }e\in\sE_G.
$$
\end{con}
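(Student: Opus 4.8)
\emph{Towards a proof of Conjecture \ref{con1}.}
The plan is to carry over, to every even $D=2\lambda+2$ and every edge-weighting with $\lambda\nu_e\in\ZZ$, the argument that establishes property (G3) of Theorem \ref{thm1} in the four-dimensional unit-weight case, with the four-dimensional Gegenbauer identity replaced by its even-dimensional analogue. The starting point is the generating-function identity for each edge $e=uv$, valid for $\|u\|<\|v\|$,
\[
\frac{1}{Q_e(u,v)^{\lambda\nu_e}}=\frac{1}{\|v\|^{2\lambda\nu_e}}\sum_{n\geq 0}\Big(\frac{\|u\|}{\|v\|}\Big)^{n}C_n^{\lambda\nu_e}\big(\hat u\cdot\hat v\big),
\]
after which each Gegenbauer polynomial $C_n^{\lambda\nu_e}$ is re-expanded in the basis $\{C_k^{\lambda}\}$ with the parameter $\lambda=(D-2)/2$ that is adapted to the spherical harmonics on $S^{D-1}$. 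The assumption that $D$ is even makes $\lambda$ a positive integer, and together with $\lambda\nu_e\in\ZZ$ this guarantees that the exponents produced by the subsequent radial integrations are integers, so that the resulting series are of Laurent---rather than Puiseux---type, as in (\ref{01expansion}) and (\ref{inftyexpansion}).

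Concretely I would: (i) for each internal vertex partition $\RR^D$ into shells ordered by the radii of its neighbours; (ii) on each shell insert the above expansion for every adjacent edge; (iii) interchange the $n$-summations with the $\VGint$ position-space integrals of (\ref{eqAdef}); and (iv) perform the angular integrations using the orthogonality of the $C_n^{\lambda}$ on $S^{D-1}$, which telescopes the contributions along any tree part of $G$ and leaves, around the cycles of $G$, only finitely many terms---the \emph{angular graphical functions}. What remains is a (conjecturally convergent) sum of \emph{radial graphical functions}, that is, products of one-dimensional radial integrals, which are manifestly of the log-Laurent form (\ref{01expansion}) and (\ref{inftyexpansion}): logarithms appear only at integer resonances of the radial integrands and their power is bounded by the number of internal vertices. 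This is precisely the general Gegenbauer expansion (\ref{gfr}), already proved for $D=4$ and unit weights in Theorems \ref{D4posthm} and \ref{thmgegex0}; the even-dimensional case needs only the same bookkeeping.

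Two supplementary points go through verbatim. First, the exponent bounds $M_s$ and $M_\infty$ come from power-counting over the induced subgraphs $G[\sV\cup\{s,z\}]$, i.e.\ from the graph weight $N_g$ of (\ref{eqNg}): convergence of $A_G$, which we assume throughout, forces them to be finite, and the inequalities in (\ref{Ma}) and (\ref{Minfty}) against $\nu_z^{>}$, $\nu_z^{<}$ and $\nu_{sz}$ follow by isolating the integration near the vertex $z$. Second, single-valuedness: property (G2) already gives real-analyticity on $\CC\setminus\{0,1\}$, and the angular blocks $C_n^{\lambda}(\hat u\cdot\hat v)$ re-sum exactly into the single-valued monomials $(z-s)^m(\zz-\sss)^{\mm}[\log(z-s)(\zz-\sss)]^\ell$ (and their analogues at $\infty$), so no multivalued terms can arise and $f_G\in\sS\sV_{\{0,1,\infty\}}$.

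The only genuinely open step is (iii): the interchange of the infinite Gegenbauer sums with the position-space integrals, which is exactly Conjecture \ref{congegex}. A complete proof would require dominated-convergence estimates, uniform in $n$, for the Gegenbauer-expanded integrand, and the obstacle is that $C_n^{\lambda}$ and the radial factors grow and degenerate in $n$ so that the crude termwise bounds are not summable near the loci where neighbouring radii coincide; one would have to exploit cancellations---for instance by resumming the angular blocks before integrating, or by comparing partial sums with the positive integrand of (\ref{eqAdef}) itself---to make the argument rigorous. Until this is accomplished, Conjecture \ref{con1} must remain a conjecture, every other ingredient being identical to the proof of (G3) in the classical four-dimensional case.
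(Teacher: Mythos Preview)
Your proposal is correct and aligns with the paper's own treatment: the paper does not prove Conjecture~\ref{con1} but shows (in Section~\ref{sectpf1}, keeping $D$ and the weights general wherever possible) that it follows from Conjecture~\ref{congegex}, the interchange of the Gegenbauer sum with the position-space integrals---precisely your step~(iii). All other ingredients you list (the radial log-Laurent structure via Theorem~\ref{radialthm}, the angular structure via Theorem~\ref{angularthm}, the reduction of $s=1$ and $s=\infty$ to $s=0$, and the power-counting bounds for $M_s$, $M_\infty$) are carried out in the paper exactly as you sketch, so your diagnosis of the sole gap is on target.
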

It is proved in Section \ref{sectpf1} that Conjecture \ref{con1} follows from Conjecture \ref{congegex} which posits that one can interchange
the sum in the Gegenbauer expansion with the position space integrals.

Another strategy to prove Conjecture~\ref{con1} is to shift the dimension $D$. It is known that Feynman integrals in dimension $D$ can be expressed as linear combinations of Feynman integrals
in dimension $D-2$ \cite{Tarasov,Lee:2009dh}. This suggests that the Conjecture might follow for all even dimensions from the result for $D=4$ in Theorem \ref{thm1}.
Unfortunately, such arguments not only require the existence of the Feynman integral in integer dimensions, but also the existence of Laurent expansions in a small parameter $\varepsilon$ in
the vicinity of an integer dimension. The Laurent expansion of a Feynman integral in, for instance, $6-2\varepsilon$ dimensions may be expressed in terms of a linear combination
(with $\varepsilon$-dependent coefficients) of $4-2\varepsilon$ dimensional Feynman integrals. It is possible to calculate the $\varepsilon$ expansion of graphical functions in such
non-integer valued dimensions \cite{numfunct,7loops}. Unfortunately, rigorous statements such as Theorem~\ref{thm1} on these $\varepsilon$ expansions are rare and the whole theory stands
on a much more conjectural basis. This strategy to prove Conjecture~\ref{con1} would therefore likely require further advances in this framework of dimensionally regularized graphical functions.
An advantage of the dimensional shift method would be that it is not limited to edge weights $1$.

A positive answer to Question \ref{posquest} also leads to a (possibly partial) proof of the conjecture.
In any case, a proof of Conjecture \ref{con1} requires methods in representation theory or in analysis which are beyond the scope of this article.
An alternative approach to a proof could be the parametric representation in Section \ref{sectpar}.

There also exist generalizations of (G3) to weights $\lambda\nu_e\notin\ZZ$ and an analog of (G3) for odd dimensions (with square roots).

\begin{cor}
If a graphical function $f_G(z)$ exists in a neighborhood of one value $z\in\CC\backslash\{0,1\}$, then it exists for all $z\in\CC\backslash\{0,1\}$.
\end{cor}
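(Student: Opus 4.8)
The plan is to show that convergence of the position space integral \eqref{eqAdef} that defines $f_G(z)$ is governed entirely by the combinatorial data of $G$ — the graph, the edge weights $\nu_e$, and $\lambda$ — and is insensitive to the actual configuration of the external vectors, provided only that these three vectors are pairwise distinct and finite. By Definition \ref{defgf}, the triple $z_0,z_1,z_2\in\RR^D$ associated to a value $z\in\CC\setminus\{0,1\}$ always consists of three distinct finite points (collinear when $z\in\RR$, but that plays no role). Hence the family of convergence conditions holds either for every $z\in\CC\setminus\{0,1\}$ or for none, and the corollary follows: if $f_G$ exists in a neighbourhood of some point of $\CC\setminus\{0,1\}$, then in particular $A_G$ converges for at least one such configuration, so all the convergence conditions hold, so $f_G(z)$ exists for all $z\in\CC\setminus\{0,1\}$.

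First I would reduce existence to finiteness. The integrand $\prod_{e\in\sE_G}Q_e^{-\lambda\nu_e}$ is strictly positive off the measure-zero locus on which some quadric vanishes — an edge joining two external vertices contributes a nonzero constant as long as $z\neq0,1$, while an edge with an internal endpoint vanishes only on a proper subvariety — so the integral converges if and only if it is finite, with no conditional-convergence subtleties.

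The main step is the power-counting analysis of finiteness of this non-negative integral, i.e.\ the standard convergence criterion for position space Feynman integrals that underlies the proof of (G2) in \cite{par} (see also \cite{IZ}). The only possible obstructions to integrability are (i) the loci where a nonempty set $\sV\subseteq\sVGint$ of internal vertices collapses to a single point — a free point of $\RR^D$ or one of the external vertices — and (ii) the region where a set of internal vertices escapes to infinity. By a Hepp-sector (sector-decomposition) argument, finiteness of $A_G$ is equivalent to a finite list of inequalities, one for each such degeneration and its nestings, and each inequality bounds the subgraph weight $N_g$ of \eqref{eqNg} for the relevant induced subgraph $g\subseteq G$ — the same quantities that enter $M_s$ and $M_\infty$ in Theorem \ref{thm1}. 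Concretely, near such a degeneration the local contribution to the integral is, up to a positive prefactor that is continuous and bounded above and below on the degeneration locus, of the form $\int_0^\varepsilon r^{d-1}\,\dd r$, which converges precisely when $d>0$; the exponent $d$ is determined by $G$, $\{\nu_e\}$ and $\lambda$ alone, while the prefactor is harmless because the three external vertices are distinct and finite, so no two of them ever participate in the same collision. Thus each of these finitely many inequalities is independent of $z$, which is what the first paragraph requires.

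I expect the hardest part to be making the power-counting reduction fully rigorous: checking that (i) and (ii), together with all their nestings, exhaust the obstructions to integrability and that each is controlled by a single $z$-independent inequality on the $N_g$. For a self-contained proof one would either appeal to the convergence analysis behind \cite{par} or perform a resolution of the collision loci (iterated blow-ups / sector decomposition) and induct on $\VGint$. The hypothesis $z\notin\{0,1\}$ should cause no trouble beyond guaranteeing that the external vertices are pairwise distinct; in particular the collinear case $z\in\RR$ needs no special treatment, since the external vertices enter the integrand jointly only through the constants $\|z_1-z_0\|^2=1$, $z\zz$, and $(z-1)(\zz-1)$, all nonzero for $z\neq0,1$.
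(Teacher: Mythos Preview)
Your proof is correct and is essentially the explicit version of the paper's one-line argument. The paper simply writes ``This follows from (G2) by real-analytic continuation,'' which amounts to citing the convergence analysis in \cite{par}; your proposal unpacks the mechanism behind that citation---namely that the power-counting conditions governing convergence of $A_G$ are purely combinatorial (they are the inequalities of Proposition~\ref{propconv}) and hence independent of $z\in\CC\setminus\{0,1\}$---so the two approaches coincide in substance.
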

\begin{proof}
This follows from (G2) by real-analytic continuation.
\end{proof}
The existence of a log-Laurent expansion at infinity promotes graphical functions to objects on the punctured Riemann sphere.
Although (G3) is of technical nature it is an important property for the theory of graphical functions, see e.g.\ Section \ref{sectpf3}. Single-valued log-Laurent expansions
are also central in the theory of GSVHs which is vital for computations with graphical functions \cite{GSVH}.

\subsection{Completion}\label{sectcomp}
The weighted analog of the vertex degree (the number of edges attached to a given vertex) is the sum of the weights of all edges incident to the vertex (the vertex weight).

On the Riemann sphere it is natural to consider conformal transformations. We prepare the use of conformal symmetry by adding a fourth external vertex $\infty$ to the graph.
This vertex $\infty$ connects to all internal vertices such that their weights become $2(\lambda+1)/\lambda$.

\begin{defn}\label{defcompl}
Let $G$ be a graph with weighted edges and three external vertices $0,1,z$. A completion $\Gg$ of $G$ in $D=2\lambda+2$ dimensions is a graph $G$ with an extra external vertex $\infty$ and
weighted edges $\sE_G\cup\{01\}\cup\{x\infty: x\in\sV_G\}$ such that every internal vertex has weight $2(\lambda+1)/\lambda$ and every external vertex has weight zero.

Any graph which is completed up to edges between external vertices (i.e.\ all internal vertices have weight $2(\lambda+1)/\lambda$) is internally completed.
\end{defn}

The de-completion of a completed graphical function $\Gg$ is $\Gg\backslash\{\infty\}$. It may differ from $G$ by an extra edge between the external vertices 0 and 1.
Because this edge has quadric $Q_e=1$ in (\ref{eqQe}) it does not contribute to the graphical function,
\begin{equation}
f_{\Gg}(z)\equiv f_{\Gg\backslash\{\infty\}}(z)=f_G(z).
\end{equation}
Because only completed graphs have a vertex $\infty$, we can use the same notation for graphical functions of completed and uncompleted graphs.

\begin{prop}[Lemma 3.18 in \cite{gf}]\label{propcompl}
Completion is always possible and unique. If $G$ has weights in $\frac{1}{\lambda}\ZZ$, then $\Gg$ also has weights in $\frac{1}{\lambda}\ZZ$. A completed graph $\Gg$ has total weight $N_\Gg=0$
(see (\ref{eqNg})).
\end{prop}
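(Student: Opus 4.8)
The plan is to turn the defining conditions of a completion into a square linear system and solve it by hand. Write $\sVGint=\{x_1,\dots,x_n\}$ with $n=\VGint$, and for $v\in\sV_G$ let $d_v=\sum_{e\in\sE_G,\,e\ni v}\nu_e$ denote its weighted degree inside $G$. Unwinding Definition~\ref{defcompl}, a completion $\Gg$ keeps all edges of $G$ with their weights, adjoins one edge $01$ (merged additively with any already present $01$-edge) and one edge $v\infty$ for each $v\in\sV_G$, and is required to have weighted degree $2(\lambda+1)/\lambda$ at every internal vertex and weighted degree $0$ at every external vertex $0,1,z,\infty$. These are $n+4$ linear equations in the $n+4$ unknown new weights $\{\nu_{v\infty}:v\in\sV_G\}$ and $\nu_{01}$, so ``completion is always possible and unique'' is equivalent to this system having exactly one solution.

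First I would read off the forced values. The degree equation at $x_i$ gives $\nu_{x_i\infty}=\tfrac{2(\lambda+1)}{\lambda}-d_{x_i}$, and the equation at $z$ gives $\nu_{z\infty}=-d_z$ since $z\infty$ is the only new edge at $z$. The equations at $0$ and $1$ give $\nu_{0\infty}=-d_0-\nu_{01}$ and $\nu_{1\infty}=-d_1-\nu_{01}$, and substituting these, together with the $\nu_{x_i\infty}$, into the equation at $\infty$, namely $\sum_{v\in\sV_G}\nu_{v\infty}=0$, leaves $2\nu_{01}=\sum_{i=1}^n\nu_{x_i\infty}-d_0-d_1-d_z$. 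The one point requiring care is this stray factor $2$: a priori $\nu_{01}$ is pinned down only up to a half. It disappears once one invokes the handshake identity $\sum_{v\in\sV_G}d_v=2\sum_{e\in\sE_G}\nu_e$, which turns the right-hand side into $n\tfrac{2(\lambda+1)}{\lambda}-2\sum_{e}\nu_e=-2N_G$ (with $N_G$ as in \eqref{eqNg}), so $\nu_{01}=-N_G$ with no half-integers produced. Hence all new weights are uniquely determined, giving existence and uniqueness; and since $\tfrac{2(\lambda+1)}{\lambda}=\tfrac{2\lambda+2}{\lambda}\in\tfrac1\lambda\ZZ$ and, when $G$ has weights in $\tfrac1\lambda\ZZ$, also every $d_v\in\tfrac1\lambda\ZZ$ and $N_G=\sum_e\nu_e-\tfrac{\lambda+1}{\lambda}n\in\tfrac1\lambda\ZZ$, all of $\nu_{x_i\infty},\nu_{z\infty},\nu_{0\infty},\nu_{1\infty},\nu_{01}$ lie in $\tfrac1\lambda\ZZ$, which is the second claim.

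For the last claim I would note that $\Gg$ has the same internal vertices as $G$, so $|\sV_\Gg^{\mathrm{int}}|=n$, and apply the handshake identity to $\Gg$ itself: all four external vertices of $\Gg$ have weighted degree $0$ and the $n$ internal ones have weighted degree $\tfrac{2(\lambda+1)}{\lambda}$, hence $\sum_{e\in\sE_\Gg}\nu_e=\tfrac12\,n\tfrac{2(\lambda+1)}{\lambda}=\tfrac{\lambda+1}{\lambda}n$, and therefore $N_\Gg=\tfrac{\lambda+1}{\lambda}n-\tfrac{\lambda+1}{\lambda}n=0$ by \eqref{eqNg}. The only genuine subtlety in the whole argument is the one flagged above: confirming that the quantity being halved in the equation for $\nu_{01}$ is literally $-2N_G$; everything else is routine bookkeeping with weighted degrees, and as a consistency check one verifies that for the three-star this recipe reproduces the weights shown in Figure~\ref{fig:3star}.
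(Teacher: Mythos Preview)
Your proof is correct and follows essentially the same approach as the paper: both set up the degree constraints as a linear system and use the handshake identity to resolve the factor of $2$ that appears in front of $\nu_{01}$. The organization differs only cosmetically: the paper first ``internally completes'' (fixing the edges $x_i\infty$ and $z\infty$) and then solves the residual $3\times3$ system for the triangle $01,0\infty,1\infty$, whereas you solve the full $(n+4)\times(n+4)$ system in one pass. Your explicit identification $\nu_{01}=-N_G$ and your handshake argument on $\Gg$ for $N_\Gg=0$ are mild streamlinings of what the paper does, but the underlying idea is identical.
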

\begin{proof}
Most of the proof is in \cite{gf}. Because of the importance of completion we reproduce the proof.

Firstly, we internally complete the graph $G$ by adding $\infty$ and weighted edges from $\infty$ to internal vertices such that all internal vertices have weight $2(\lambda+1)/\lambda$.
We denote the internally completed graph with $G'$. It is clear that internal completion is always possible.
Let $\nu_0$, $\nu_1$, $\nu_z$, $\nu_\infty$ be the total weights of the external vertices in $G'$.

Next, we add an edge $z\infty$ of weight $-\nu_z$ to $G'$. This gives $z$ the total weight zero whereas the weight of $\infty$ becomes $\nu_\infty-\nu_z$.

We are left with the triangle $01$, $0\infty$, $1\infty$ to be added to the graph in order to nullify the total weights of $0,1,\infty$.
The weights are given by the unique solution of a linear system,
\begin{equation}\label{eqpfcompl1}
\nu_{01}=\frac{-\nu_0-\nu_1-\nu_z+\nu_\infty}{2},\quad\nu_{0\infty}=\frac{-\nu_0+\nu_1+\nu_z-\nu_\infty}{2},\quad\nu_{1\infty}=\frac{\nu_0-\nu_1+\nu_z-\nu_\infty}{2}.
\end{equation}
We now assume that all weights of $G$ are in $\frac{1}{\lambda}\ZZ$. Because the total weights of internal vertices in $G$ are in $\frac{1}{\lambda}\ZZ$, all weights of $G'$
are in $\frac{1}{\lambda}\ZZ$. We also have $\nu_0,\nu_1,\nu_z,\nu_\infty\in\frac{1}{\lambda}\ZZ$. Therefore $\nu_{z\infty}\in\frac{1}{\lambda}\ZZ$.
To show that $\nu_{01},\nu_{0\infty},\nu_{1\infty}\in\frac{1}{\lambda}\ZZ$ we sum the weights of half-edges in $G'$ and obtain
\begin{equation}\label{eqpfcompl2}
\frac{2(\lambda+1)}{\lambda}|\sV_{G'}^\text{int}|+\nu_0+\nu_1+\nu_z+\nu_\infty=2\sum_{e\in\sE_{G'}}\nu_e\in\frac{2}{\lambda}\ZZ.
\end{equation}
Hence $\nu_\text{ext}=\nu_0+\nu_1+\nu_z+\nu_\infty\in\frac{2}{\lambda}\ZZ$ and $\nu_{01}=\nu_\infty-\frac{1}{2}\nu_\text{ext}$, $\nu_{0\infty}=\nu_1+\nu_z-\frac{1}{2}\nu_\text{ext}$,
$\nu_{1\infty}=\nu_0+\nu_z-\frac{1}{2}\nu_\text{ext}\in\frac{1}{\lambda}\ZZ$.

With (\ref{eqpfcompl1}) the total weight of the completion is $N_{G'}-\nu_z+\nu_{01}+\nu_{0\infty}+\nu_{1\infty}=N_{G'}-\frac{1}{2}\nu_\text{ext}$ which vanishes by (\ref{eqpfcompl2}).
\end{proof}

\begin{ex}
The completion of the three-star is depicted in Figure \ref{fig:3star}.
In Figure \ref{fig:gfcomp} the completions of the graphs $G_a$ and $G_b$ are depicted.
\end{ex}

By conformal symmetry a convergent Feynman integral of an (internally) completed graph $\Gg$ with four external vertices $A_\Gg(z_0,z_1,z_2,z_3)$ can be expressed in two
cross-ratios (compare (\ref{eqinvs}))
\begin{equation}\label{eqinvsconf}
\frac{\|z_2-z_0\|^2\|z_3-z_1\|^2}{\|z_1-z_0\|^2\|z_3-z_2\|^2}=z\zz,\quad\frac{\|z_2-z_1\|^2\|z_3-z_0\|^2}{\|z_1-z_0\|^2\|z_3-z_2\|^2}=(z-1)(\zz-1).
\end{equation}
Therefore the theory of graphical functions also covers convergent conformal Feynman integrals which are prominent in super Yang-Mills Theories, see e.g.\ \cite{SYM}.

We will see in Section \ref{sectperm} that a completed graphical function is invariant under a transformation of external labels which stabilizes a certain cross-ratio.
We can also use completion to formulate a combinatorial criterion for the existence of a graphical function.

\begin{prop}[Lemma 3.19 in \cite{gf}]\label{propconv}
The graphical function $f_\Gg(z)$ exists (for all $z\in\CC\backslash\{0,1\}$) if and only if (see (\ref{eqNg}))
\begin{equation}\label{eqconv}
N_{\Gg[\sV]}<(|\sV^{\mathrm{ext}}|-1)\frac{\lambda+1}{\lambda}
\end{equation}
for all $\sV\subset\sV_G$  with $|\sV|\geq2$ and $|\sV^{\mathrm{ext}}|\leq1$ external vertices.
\end{prop}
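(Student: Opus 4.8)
\emph{Strategy.} The plan is to prove both implications by position-space power counting, organised around the completion so that long-distance (infrared) regions become short-distance (ultraviolet) ones. Throughout, the decisive structural fact is that the integrand in \eqref{eqAdef} is strictly positive off the coincidence loci, so there are no cancellations and ``$f_\Gg(z)$ exists'' simply means the integral is finite. Since $f_\Gg(z)=f_{\Gg\setminus\{\infty\}}(z)=f_G(z)$ I work with the completed graph $\Gg$, placing its fourth external vertex at a generic finite position and using the associated four-point integral $A_\Gg$. Because $\Gg$ is completed --- the weighted degree being $2(\lambda+1)/\lambda$ at each internal and $0$ at each external vertex, equivalently $\sum_{e\ni x}\lambda\nu_e=D$ internally and $N_\Gg=0$ --- the integrand-times-measure of $A_\Gg$ is conformally invariant: under an inversion the Jacobian $\|x_i-p\|^{-2D}$ at an internal vertex is exactly cancelled by the factor $\|x_i-p\|^{2D}$ produced by the propagators meeting $x_i$, while the external vertices contribute nothing. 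Hence any region in which internal vertices run off to infinity is conformally equivalent to one in which a cluster of vertices collapses to a finite point, and it suffices to analyse the latter type of region.

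Next I would power-count a single cluster. Fix $\sV\subseteq\sV_\Gg$ with $|\sV|\geq2$, at least one internal vertex, and --- since the external vertices occupy fixed distinct points --- at most one external vertex, where now $\infty$ is among the admissible external vertices, this being exactly what records the infrared regions. On the region where the vertices of $\sV$ lie within a ball of radius $\rho\to0$ while all other vertices stay bounded away, the measure equals $\rho^{D|\sV^{\mathrm{int}}|-1}\,\dd\rho$ times a bounded factor if $\sV$ contains the (then pinned) external vertex, and $\rho^{D(|\sV^{\mathrm{int}}|-1)-1}\,\dd\rho$ times a bounded factor and a bounded integration over the free centre otherwise; the only part of the integrand that degenerates there is the product of the propagators of $\Gg[\sV]$, contributing $\rho^{-2\lambda\sum_{e\in\sE_{\Gg[\sV]}}\nu_e}$. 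So this region contributes, up to bounded positive factors, $\int_0^1\rho^{\omega(\sV)-1}\,\dd\rho$ with
\[
\omega(\sV)=D\bigl(|\sV^{\mathrm{int}}|+|\sV^{\mathrm{ext}}|-1\bigr)-2\lambda\sum_{e\in\sE_{\Gg[\sV]}}\nu_e ,
\]
which is finite exactly when $\omega(\sV)>0$. Using $D=2\lambda+2$, so that $D/(2\lambda)=(\lambda+1)/\lambda$, together with the definition \eqref{eqNg}, one reads off immediately that $\omega(\sV)>0$ if and only if $N_{\Gg[\sV]}<(|\sV^{\mathrm{ext}}|-1)\frac{\lambda+1}{\lambda}$, which is \eqref{eqconv}.

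Then I would assemble the two directions. For necessity, if \eqref{eqconv} fails for some admissible $\sV$ then $\omega(\sV)\leq0$, the restriction of $A_\Gg$ to the corresponding region is already infinite, and positivity of the integrand rules out any compensation, so $f_\Gg(z)$ does not exist. For sufficiency, I would invoke the position-space power-counting theorem: resolving the coincidence arrangement --- a De Concini--Procesi/Fulton--MacPherson blow-up, or equivalently a Hepp/Speer-type sector decomposition --- renders the integrand normal-crossing, with its behaviour in every sector controlled by a nested chain of clusters of the above type, so $\omega(\sV)>0$ for all admissible $\sV$ forces convergence in each sector and hence of $A_\Gg$. An equivalent route is to pass to the parametric representation of Section \ref{sectpar}, where \eqref{eqconv} is precisely the classical Weinberg convergence criterion on the two Symanzik polynomials. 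Finally, since none of the $N_{\Gg[\sV]}$ involves $z$, convergence holds for all $z\in\CC\setminus\{0,1\}$ once it holds for one, consistently with the Corollary above.

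The main obstacle is the sufficiency direction: upgrading the heuristic single-cluster count to a genuine convergence proof requires the full position-space sector-decomposition machinery (or the parametric Weinberg theorem), whose combinatorial core is the statement that after the resolution no sector exhibits behaviour worse than a product over a nested family of clusters. Everything else --- the measure bookkeeping for one cluster, the arithmetic equivalence of $\omega(\sV)>0$ with \eqref{eqconv}, and the conformal-inversion check that completion trades infrared for ultraviolet regions --- is routine.
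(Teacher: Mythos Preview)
The paper does not give its own proof here; it simply refers to Lemma~3.19 of \cite{gf} and to \cite{par}. Your sketch is the standard argument and is correct: positivity of the integrand makes necessity a direct lower bound by a single divergent region, completion together with conformal inversion reduces infrared to ultraviolet so that all dangerous regions are short-distance clusters $\sV$ with at most one external vertex (now including $\infty$), and the single-cluster count you write down yields exactly \eqref{eqconv}. You are also right that the hard direction is sufficiency and that it requires the sector-decomposition or resolution machinery; note that the reference \cite{par} actually takes the parametric route you mention as an alternative, proving convergence of the integral \eqref{fdualparam} via the Symanzik polynomials, which is equivalent to \eqref{eqconv}.
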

\begin{proof}
The proof is by power-counting \cite{powercount}. It is the weighted analog of the proof of Lemma 3.19 in \cite{gf} (the formulation of Lemma 3.19 in \cite{gf} is slightly imprecise,
see also \cite{par}).
\end{proof}
Note that it suffices to apply (\ref{eqconv}) to internally completed graphs. If $G$ fails to be convergent for some $\sV$, then $G$ is ultraviolet divergent if $\infty\notin\sV$
(Example \ref{uvdiv}) and infrared divergent if $\infty\in\sV$ (Example \ref{indiv}).

\begin{ex}\label{uvdiv}
Let $G$ be a graph with an edge $e=uv$ of weight $\nu_e=(\lambda+1)/\lambda$. We assume that not both $u$ and $v$ are external and set $\sV=\{u,v\}$ in (\ref{eqconv}).
If $u$ or $v$ is external, then $N_{\Gg[\sV]}=0$. If both $u$ and $v$ are internal, then $N_{\Gg[\sV]}=-(\lambda+1)/\lambda$. In both cases we conclude that the graphical
function $f_G(z)$ is (ultraviolet) divergent.
\end{ex}

\begin{ex}\label{indiv}
Let $G$ be a graph with an internal vertex $v$ of total weight $(\lambda+1)/\lambda$. In the completed graph $\Gg$ we have $\nu_{v\infty}=(\lambda+1)/\lambda$.
With $\sV=\{0,\infty\}$ the graphical function $f_G(z)$ is (infrared) divergent.
\end{ex}

\subsection{Permutation of external vertices}\label{sectperm}
By Theorem \ref{thm1} graphical functions live on the Riemann sphere with punctures at $0,1,\infty$. Let
\begin{equation}
\sM_{01\infty}=\Big\{z\mapsto\phi(z)\in\Big\{z,1-z,\frac{z-1}{z},\frac{z}{z-1},\frac{1}{1-z},\frac{1}{z}\Big\}\Big\}
\end{equation}
be the permutation group of these punctures by M\"obius transformations. Consider a graphical function $f_G(z)$ with external labels $0,1,z$ and $\phi\in\sM_{01\infty}$.
We define the graphical function of the graph $G'$ with external labels $0,1,\phi(z)$ as $f_{G'}(z)=f_G(\phi(z))$ and lift this definition to completed graphs.
Two completed graphical functions which differ only in external labels are equal if the cross-ratios of the external vertices are equal. In other words, a permutation of external vertices
changes the argument of a completed graphical function by a M\"obius transformation in $\sM_{01\infty}$. Extending the symmetry from the six-fold permutation of $0,1,z$ to the 24-fold
permutation of $0,1,z,\infty$ is the main benefit of completion.

\begin{thm}[Theorem 3.20 in \cite{gf}]\label{thmperm}
Let $\Gg_1$ and $\Gg_2$ be two completed graphical functions which differ only in their external labels $z_0^i,z_1^i,z_2^i,z_3^i\in\sV_{\Gg_i}^{\mathrm{ext}}$,
$z^i_j\in\{0,1,\phi_i(z),\infty\}$, $\phi_i\in\sM_{01\infty}$. If the two sets of external labels have equal cross-ratios of complex numbers
\begin{equation}\label{eqcross}
\frac{(z^1_2-z^1_0)(z^1_3-z^1_1)}{(z^1_2-z^1_1)(z^1_3-z^1_0)}=\frac{(z^2_2-z^2_0)(z^2_3-z^2_1)}{(z^2_2-z^2_1)(z^2_3-z^2_0)},
\end{equation}
then
\begin{equation}\label{eqperm}
f_{\Gg_1}(z)=f_{\Gg_2}(z).
\end{equation}
In particular, every completed graphical function is invariant under a double transposition of external labels.
\end{thm}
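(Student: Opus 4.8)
The plan is to upgrade the three manifest symmetries of $A_G$ — common translations, rotations and dilations of $z_0,z_1,z_2$ — to \emph{full} conformal invariance of the completed integral $A_\Gg$, regarded as a function of its four external point-arguments, and then to invoke the classical fact that the cross-ratio is a complete invariant of four distinct points of a circle under the Möbius group.

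First I would regard $A_\Gg(z_0,z_1,z_2,z_3)$ as a function of four distinct points of $\RR^D\cup\{\infty\}$. When all four points are finite this is the integral \eqref{eqAdef}, now including the external--external edges (whose quadrics are finite numbers); since each external vertex of a completed graph has total weight $0$ (Proposition~\ref{propcompl}), the limit in which one external point tends to $\infty$ stays finite, the half-edges at that vertex contributing a factor $1$ — this is precisely the configuration computing $f_\Gg(z)$. Invariance under translations, rotations and dilations is the content of the discussion following \eqref{eqfA} (scale invariance using $N_\Gg=0$), so the essential new input is invariance under the inversion $\iota\colon x\mapsto x/\|x\|^2$. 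Substituting $x_i=\iota(x_i')$ in \eqref{eqAdef} and using $\|\iota(u)-\iota(v)\|^2=\|u-v\|^2/(\|u\|^2\|v\|^2)$, every factor $Q_e^{-\lambda\nu_e}$ acquires $(\|\cdot\|^2)^{\lambda\nu_e}$ at each of its two endpoints while each integration measure acquires the Jacobian $\|x_i'\|^{-2D}$. At an internal vertex the acquired propagator factors multiply to $(\|x_i'\|^2)^{\lambda\cdot 2(\lambda+1)/\lambda}=\|x_i'\|^{2D}$ by the completion degree condition, cancelling the Jacobian; at an external vertex they multiply to $(\|z_k\|^2)^{\lambda\cdot 0}=1$. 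Hence $A_\Gg(\iota z_0,\ldots,\iota z_3)=A_\Gg(z_0,\ldots,z_3)$, the substitution being legitimate since $\iota$ is a diffeomorphism of $\RR^D\setminus\{0\}$ and the integral converges absolutely by Proposition~\ref{propconv}. Together with the elementary symmetries, $\iota$ generates the entire Möbius group of $\RR^D\cup\{\infty\}$, so $A_\Gg$ is conformally invariant.

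Next I would observe that all eight external labels $z_j^1,z_j^2\in\{0,1,\phi_i(z),\infty\}$ lie in the standard complex line $\CC\cup\{\infty\}\subset\RR^D\cup\{\infty\}$, and that, since $z\notin\{0,1\}$, each quadruple $(z_0^i,z_1^i,z_2^i,z_3^i)$ consists of four distinct points. Every Möbius transformation of $\CC\cup\{\infty\}$ extends to a conformal map of $\RR^D\cup\{\infty\}$: $w\mapsto w+b$ is a planar translation, $w\mapsto aw$ is a planar rotation composed with a dilation, and $w\mapsto 1/w$ is $\iota$ followed by the reflection $e_2\mapsto-e_2$. Two ordered quadruples of distinct points of $\CC\cup\{\infty\}$ are related by a Möbius transformation if and only if their cross-ratios agree — which is exactly hypothesis \eqref{eqcross} — so there is a conformal map $T$ of $\RR^D\cup\{\infty\}$ with $T(z_j^1)=z_j^2$ for $j=0,1,2,3$. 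Conformal invariance of $A_\Gg$ then gives
\[
f_{\Gg_1}(z)=A_\Gg(z_0^1,\ldots,z_3^1)=A_\Gg(Tz_0^1,\ldots,Tz_3^1)=A_\Gg(z_0^2,\ldots,z_3^2)=f_{\Gg_2}(z).
\]
For the final assertion, a double transposition of $\{z_0,z_1,z_2,z_3\}$ leaves the cross-ratio \eqref{eqcross} unchanged — these double transpositions form precisely the subgroup of $S_4$ fixing the cross-ratio, the remaining cosets producing the six values permuted by $\sM_{01\infty}$ — so this is the special case $\phi_1=\phi_2$ of the statement just proved.

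The weight bookkeeping in the inversion step is forced by the two defining conditions of a completion, and the third paragraph is standard Möbius-group theory; I expect the only genuinely delicate point to be the analysis of $\iota$: justifying the change of variables for the convergent integral and, above all, checking that the construction behaves correctly when an external vertex sits at $0$ or $\infty$, which is where the normalization $N_\Gg=0$ is indispensable and where one must verify that $A_\Gg$ with a label at $\infty$ indeed coincides with the de-completed graphical function.
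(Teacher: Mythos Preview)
Your proposal is correct and follows the same approach as the paper: the paper's proof simply invokes conformal symmetry of the integrand in \eqref{eqAdef} (together with $N_{\Gg}=0$ from Proposition~\ref{propcompl}) and defers the details to \cite{gf}, while you spell out precisely that argument---the inversion computation, the extension of planar M\"obius maps to conformal maps of $\RR^D\cup\{\infty\}$, and the cross-ratio classification of ordered quadruples. The content is the same; you have supplied the details the paper omits.
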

\begin{proof}
The symmetry under permutation of $0,1,z$ is evident from the definition of graphical functions by invariants (\ref{eqinvs}) and $N_{\Gg_1}=N_{\Gg_2}=0$ by Proposition \ref{propcompl}.
The general result follows from the conformal symmetry of the integrand in (\ref{eqAdef}). The proof is in \cite{gf}.
\end{proof}

\begin{figure}
\begin{center}
    \def\scale{.75}
\begin{tikzpicture}[baseline={([yshift=-.7ex]0,0)}]
\begin{scope}
    \coordinate[label=above:$0$] (v0) at (-\scale,\scale);
    \coordinate[label=above:$1$] (v1) at (0,\scale);
    \coordinate[label=above:$z$] (vz) at (\scale,\scale);
    \coordinate (v2) at (-\scale,0);
    \coordinate (v3) at ( \scale,0);
    \coordinate (voo) at (0,-\scale);
    \node (G) at (0,{-1.5*\scale}) {$G_a$};   
    \draw[preaction={draw, white, line width=3pt, -}] (v2) -- (vz);
    \draw[preaction={draw, white, line width=3pt, -}] (v3) -- (v0);
    \draw[preaction={draw, white, line width=3pt, -}] (v2) -- (v1);
    \draw[preaction={draw, white, line width=3pt, -}] (v3) -- (v1);
    \draw (v3) -- (vz);
    \draw (v2) -- (v0);
    \draw (v2) -- ($(v2)!.2!(vz)$);
    \draw (v3) -- ($(v3)!.2!(v0)$);
    \draw (v2) -- (v3);
    \filldraw (v0) circle (1.3pt);
    \filldraw (v1) circle (1.3pt);
    \filldraw (vz) circle (1.3pt);
    \filldraw (v2) circle (1.3pt);
    \filldraw (v3) circle (1.3pt);
\end{scope}
\begin{scope}[xshift={3.5cm*\scale}]
    \coordinate[label=above:$0$] (v0) at (-\scale,\scale);
    \coordinate[label=above:$1$] (v1) at (0,\scale);
    \coordinate[label=above:$z$] (vz) at (\scale,\scale);
    \coordinate (v2) at (-\scale,0);
    \coordinate (v3) at ( \scale,0);
    \coordinate[label=above right:$\infty$] (voo) at (0,-\scale);
    \node (G) at (0,{-1.5*\scale}) {$\Gg_a$};   
    \draw[preaction={draw, white, line width=3pt, -}] (v2) -- (vz);
    \draw[preaction={draw, white, line width=3pt, -}] (v3) -- (v0);
    \draw (v2) -- (v3);
    \draw[preaction={draw, white, line width=3pt, -}] (v1) --  (voo);
    \draw[preaction={draw, white, line width=3pt, -}] (v2) -- (v1);
    \draw[preaction={draw, white, line width=3pt, -}] (v3) -- (v1);
    \draw (v1) --  (voo);
    \draw (v3) -- (vz);
    \draw (v2) -- (v0);
    \draw (v2) -- ($(v2)!.2!(vz)$);
    \draw (v3) -- ($(v3)!.2!(v0)$);
    \draw (v2) -- ($(v2)!.2!(v3)$);
    \draw (v3) -- ($(v3)!.2!(v2)$);
    \draw (v0) -- node[inner sep=.5pt,above] {\tiny{$-3$}} (v1);
    \draw (voo) to[bend right=90,looseness=1.5] node[pos=.2,inner sep=1pt,below right] {\tiny{$-2$}} (vz);
    \draw (voo) to[bend left=90,looseness=1.5] (v0);
    \filldraw (v0) circle (1.3pt);
    \filldraw (v1) circle (1.3pt);
    \filldraw (vz) circle (1.3pt);
    \filldraw (v2) circle (1.3pt);
    \filldraw (v3) circle (1.3pt);
    \filldraw (voo) circle (1.3pt);
\end{scope}
\begin{scope}[xshift={7cm*\scale}]
    \coordinate[label=above:$1$] (v0) at (-\scale,\scale);
    \coordinate[label=above:$0$] (v1) at (0,\scale);
    \coordinate[label=above:$\infty$] (vz) at (\scale,\scale);
    \coordinate (v2) at (-\scale,0);
    \coordinate (v3) at ( \scale,0);
    \coordinate[label=above right:$z$] (voo) at (0,-\scale);
    \node (G) at (0,{-1.5*\scale}) {$\Gg_b$};
    \draw[preaction={draw, white, line width=3pt, -}] (v2) -- (vz);
    \draw[preaction={draw, white, line width=3pt, -}] (v3) -- (v0);
    \draw (v2) -- (v3);
    \draw[preaction={draw, white, line width=3pt, -}] (v1) --  (voo);
    \draw[preaction={draw, white, line width=3pt, -}] (v2) -- (v1);
    \draw[preaction={draw, white, line width=3pt, -}] (v3) -- (v1);
    \draw (v1) --  (voo);
    \draw (v3) -- (vz);
    \draw (v2) -- (v0);
    \draw (v2) -- ($(v2)!.2!(vz)$);
    \draw (v3) -- ($(v3)!.2!(v0)$);
    \draw (v2) -- ($(v2)!.2!(v3)$);
    \draw (v3) -- ($(v3)!.2!(v2)$);
    \draw (v0) -- node[inner sep=.5pt,above] {\tiny{$-3$}} (v1);
    \draw (voo) to[bend right=90,looseness=1.5] node[pos=.2,inner sep=1pt,below right] {\tiny{$-2$}} (vz);
    \draw (voo) to[bend left=90,looseness=1.5] (v0);
    \filldraw (v0) circle (1.3pt);
    \filldraw (v1) circle (1.3pt);
    \filldraw (vz) circle (1.3pt);
    \filldraw (v2) circle (1.3pt);
    \filldraw (v3) circle (1.3pt);
    \filldraw (voo) circle (1.3pt);
\end{scope}
\begin{scope}[xshift={10.5cm*\scale}]
    \coordinate[label=above:$1$] (v0) at (-\scale,\scale);
    \coordinate[label=above:$0$] (v1) at (0,\scale);
    \coordinate (v2) at (-\scale,0);
    \coordinate (v3) at ( \scale,0);
    \coordinate[label=above right:$z$] (voo) at (0,-\scale);
    \node (G) at (0,{-1.5*\scale}) {$G_b$};
    \draw[preaction={draw, white, line width=3pt, -}] (v3) -- (v0);
    \draw (v2) -- (v3);
    \draw[preaction={draw, white, line width=3pt, -}] (v1) --  (voo);
    \draw[preaction={draw, white, line width=3pt, -}] (v2) -- (v1);
    \draw[preaction={draw, white, line width=3pt, -}] (v3) -- (v1);
    \draw (v1) --  (voo);
    \draw (v2) -- (v0);
    \draw (v3) -- ($(v3)!.2!(v0)$);
    \draw (v2) -- ($(v2)!.2!(v3)$);
    \draw (v3) -- ($(v3)!.2!(v2)$);
    \draw (v0) -- node[inner sep=.5pt,above] {\tiny{$-3$}} (v1);
    \draw (voo) to[bend left=90,looseness=1.5] (v0);
    \filldraw (v0) circle (1.3pt);
    \filldraw (v1) circle (1.3pt);
    \filldraw (v2) circle (1.3pt);
    \filldraw (v3) circle (1.3pt);
    \filldraw (voo) circle (1.3pt);
\end{scope}
\begin{scope}[xshift={14cm*\scale}]
    \coordinate[label=above:$1$] (v0) at (-\scale,\scale);
    \coordinate[label=above:$0$] (v1) at (0,\scale);
    \coordinate (v2) at (-\scale,0);
    \coordinate (v3) at ( \scale,0);
    \node (G) at (0,{-1.5*\scale}) {$G_c$};
    \draw[preaction={draw, white, line width=3pt, -}] (v2) -- (v0);
    \draw[preaction={draw, white, line width=3pt, -}] (v3) -- (v0);
    \draw[preaction={draw, white, line width=3pt, -}] (v2) -- (v1);
    \draw[preaction={draw, white, line width=3pt, -}] (v3) -- (v1);
    \draw (v3) -- ($(v3)!.2!(v0)$);
    \draw (v2) -- (v3);
    \draw (v2) -- (v0);
    \filldraw (v0) circle (1.3pt);
    \filldraw (v1) circle (1.3pt);
    \filldraw (v2) circle (1.3pt);
    \filldraw (v3) circle (1.3pt);
\end{scope}
\begin{scope}[xshift={17.5cm*\scale}]
    \coordinate (v0) at (-\scale,\scale);
    \coordinate (v1) at (0,\scale);
    \coordinate (vz) at (\scale,\scale);
    \coordinate (v2) at (-\scale,0);
    \coordinate (v3) at ( \scale,0);
    \node (G) at (0,{-1.5*\scale}) {$\Gg_c=K_5$};
    \draw[preaction={draw, white, line width=3pt, -}] (v2) -- (vz);
    \draw[preaction={draw, white, line width=3pt, -}] (v3) -- (v0);
    \draw[preaction={draw, white, line width=3pt, -}] (v2) -- (v1);
    \draw[preaction={draw, white, line width=3pt, -}] (v3) -- (v1);
    \draw (v2) -- (v0);
    \draw (v3) -- (vz);
    \draw (v2) -- ($(v2)!.2!(vz)$);
    \draw (v3) -- ($(v3)!.2!(v0)$);
    \draw (v2) -- (v3);
    \draw (v0) --  (v1);
    \draw (v1) --  (vz);
    \draw (v0) to[bend left=50] (vz);
    \filldraw (v0) circle (1.3pt);
    \filldraw (v1) circle (1.3pt);
    \filldraw (vz) circle (1.3pt);
    \filldraw (v2) circle (1.3pt);
    \filldraw (v3) circle (1.3pt);
\end{scope}
\end{tikzpicture}
\end{center}
\caption{The calculation of the four-dimensional graphical function $G_a$ by completion. Edge-weights $\neq1$ are indicated. We complete to $\Gg_a$, double transpose
the external vertices to $\Gg_b$, de-complete to $G_b$, remove edges between external vertices to $G_c$, and period complete to $\Gg_c$.}
\label{fig:gfcomp}
\end{figure}

\begin{ex}[Example 3.21 in \cite{gf}]\label{exf31}
Consider the four-dimensional graphical function associated to $G_a$ in Figure \ref{fig:gfcomp}. The completion $\Gg_a$ becomes $\Gg_b$ by a double transposition of external vertices.
The graph $\Gg_b$ de-completes to $G_b$. We obtain
$$
f_{G_a}(z)=f_{\Gg_a}(z)=f_{\Gg_b}(z)=f_{G_b}(z).
$$
The graphical function $f_{G_b}(z)$ can easily be calculated, see Examples \ref{exf32} and \ref{exf33}.
\end{ex}

Internally completed graphs with no edges between external labels and no distinction of external labels are equivalence classes of graphical functions with equal complexity.
The graphs in Example \ref{exempty} are all in the equivalence class of the empty graph.

\subsection{Parametric representation}\label{sectpar}
The position space definition of a graphical function in (\ref{eqAdef}) and (\ref{eqfGdef}) can be replaced by a parametric representation.
The virtue of the parametric representation is that the dimension $D$ enters only as a parameter so that the generalization to $D\in\CC$ is possible.
Moreover, there exists the method of parametric integration by F. Brown and E. Panzer \cite{BrH1,BrH2,Panzer:HyperInt}, see Section \ref{sectparint}.

We follow \cite{par} for the subsequent definition and theorem.

\begin{defn}\label{spanningforest}
Let $P=\{P_1,\ldots ,P_N\}$ denote a partition of the external vertices $\{0,1,z\}$ of a graph $G$.
We write $\sF_G^P$ for the set of all spanning forests $T_1\cup\ldots\cup T_N$ consisting of exactly $N$ (pairwise disjoint) trees $T_i$ such that $P_i \subseteq T_i$.
The \emph{dual spanning forest polynomial} associated to $P$ is
\begin{equation}\label{eq:dual-forpol}
\widetilde\Psi_G^P(\alpha)=\sum_{F\in\sF_G^P}\;\prod_{e\in F}\alpha_e.
\end{equation}
We write $\widetilde\Psi_G$ for $\widetilde\Psi_G^{0,1,z}$ and define the polynomial
\begin{equation}\label{eq:dual-phi}
\widetilde\Phi_G(\alpha,z)=\sum_{abc\in\{01z,0z1,1z0\}}|a-b|^2\widetilde\Psi_G^{ab,c}(\alpha).
\end{equation}
\end{defn}

\begin{ex}
The three-star in Figure \ref{fig:3star} with edges $1, 2, 3$ attached to vertices $0, 1, z$ has
\begin{equation}
\widetilde\Psi_\smallclaw(\alpha)=\alpha_1+\alpha_2+\alpha_3,\quad\widetilde\Phi_\smallclaw(\alpha,z)=\alpha_1\alpha_2+z\zz\alpha_1\alpha_3+(z-1)(\zz-1)\alpha_2\alpha_3.
\end{equation}
\end{ex}

\begin{thm}\label{dualparam}
Let $G$ be a non-empty weighted graph such that the graphical function $f_G(z)$ exists. We label the edges by $1,2,\ldots,|\sE_G|$.
For any set of non-negative integers $n_e$ such that $n_e+\lambda\nu_e>0$ for all edges $e\in\sE_G$ we have the following \emph{dual parametric representation}
of $f_G(z)$ as the convergent projective integral ($\Gamma(x)=\int_0^\infty t^{x-1}\exp(-t)\dd t$ is the gamma function)
\begin{equation}\label{fdualparam}
f_G(z)=\frac{(-1)^{\sum_e\!n_e}\,\Gamma(\lambda N_G)}{\prod_e\Gamma(n_e+\lambda\nu_e)}\int_\Delta \Omega
\Big[\prod_e\alpha_e^{n_e+\lambda\nu_e-1}\partial_{\alpha_e}^{n_e}\Big]
\frac{1}{\widetilde\Phi_G^{\lambda N_G}\widetilde\Psi_G^{\lambda+1-\lambda N_G}},
\end{equation}
where
\begin{equation}\label{Omegadef}
\Omega=\sum_{e=1}^{|\sE_G|}(-1)^{e-1}\alpha_e\dd\alpha_1\wedge\ldots\wedge\widehat{\dd\alpha_e}\wedge\ldots\wedge \dd\alpha_{|\sE_G|}
\end{equation}
is the projective volume form. The integration domain is given by the positive coordinate simplex
\begin{equation}\label{Deltadef}
\Delta=\{(\alpha_1:\alpha_2:\ldots:\alpha_{|\sE_G|})\colon \alpha_e>0\text{ for all }e\in\{1,2,\ldots,|\sE_G|\}\}\subset\PP^{|\sE_G|-1}\RR.
\end{equation}
\end{thm}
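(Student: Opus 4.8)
The plan is to obtain \eqref{fdualparam} directly from the position-space definition \eqref{eqAdef}--\eqref{eqfGdef} by the classical chain of manipulations for a Feynman integral: Schwinger parametrization of the propagators, a Gaussian integration over the internal labels, and a final radial (scale) integration that projectivizes the parametric integral. \emph{Step 1 (Schwinger parameters).} For each edge $e$ fix a non-negative integer $n_e$ with $n_e+\lambda\nu_e>0$ and use
\[
\frac{1}{Q_e^{\lambda\nu_e}}=\frac{(-1)^{n_e}}{\Gamma(n_e+\lambda\nu_e)}\int_0^\infty\alpha_e^{n_e+\lambda\nu_e-1}\,\partial_{\alpha_e}^{n_e}\ee^{-\alpha_e Q_e}\,\dd\alpha_e,
\]
which follows from $\int_0^\infty\alpha^{s-1}\ee^{-\alpha Q}\dd\alpha=\Gamma(s)Q^{-s}$ (valid for $s>0$) via $Q_e^{-\lambda\nu_e}=Q_e^{n_e}Q_e^{-(n_e+\lambda\nu_e)}$ and $Q_e^{n_e}\ee^{-\alpha_e Q_e}=(-\partial_{\alpha_e})^{n_e}\ee^{-\alpha_e Q_e}$. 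Since $\partial_{\alpha_e}^{n_e}$ commutes with the $\RR^D$-integrations and with the $\alpha_f$-integrations for $f\neq e$, all of these operators may be moved to the front; one has to check that this interchange is legitimate, using that the $x$-integrand $\prod_f\ee^{-\alpha_f Q_f}$ is smooth and, after the Gaussian integration of Step 2, still rapidly decaying in each $\alpha_e>0$.

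\emph{Step 2 (Gaussian integration).} After Step 1 the remaining integrand over $\RR^{D\VGint}$ is $\prod_e\ee^{-\alpha_e Q_e}$; for each of the $D$ coordinate directions separately the exponent $-\sum_e\alpha_e\|u_e-v_e\|^2$ is a quadratic form in the internal labels with source terms coming from the external labels. Completing the square and carrying out the Gaussian integral produces
\[
\widetilde\Psi_G(\alpha)^{-D/2}\,\ee^{-\widetilde\Phi_G(\alpha,z)/\widetilde\Psi_G(\alpha)},\qquad D/2=\lambda+1,
\]
where, by the all-minors matrix--tree theorem, the determinant of the $\alpha$-weighted graph Laplacian of $G$ with the rows and columns of the external vertices deleted equals the dual spanning-forest polynomial $\widetilde\Psi_G=\widetilde\Psi_G^{0,1,z}$, and a Cauchy--Binet computation of the completed-square term reproduces $\widetilde\Phi_G$ of \eqref{eq:dual-phi} after inserting $\|z_0-z_1\|^2=1$, $\|z_0-z_2\|^2=z\zz$, $\|z_1-z_2\|^2=(z-1)(\zz-1)$ from \eqref{eqinvs}. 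This identification of the dual Symanzik polynomials carries the combinatorial content of the theorem.

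\emph{Step 3 (scale integration).} Insert the Cheng--Wu identity $1=\int_0^\infty\delta\bigl(t-\sum_e\alpha_e\bigr)\dd t$ and substitute $\alpha_e=t\beta_e$ in the resulting inner integral, so that $\partial_{\alpha_e}=t^{-1}\partial_{\beta_e}$ at fixed $t$ and the affine measure becomes $t^{|\sE_G|-1}\dd t\,\Omega$ with $\Omega$ as in \eqref{Omegadef}. Collecting powers of $t$, and using that $\widetilde\Psi_G$ is homogeneous of degree $\VGint$ and $\widetilde\Phi_G$ of degree $\VGint+1$ in $\alpha$, the $\beta$-dependent derivative operator factors out and one is left with an overall factor $t^{\lambda N_G-1}$, where $\lambda N_G=\lambda\sum_e\nu_e-(\lambda+1)\VGint>0$ by the convergence hypothesis, so $\Gamma(\lambda N_G)$ is finite. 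Carrying out the $t$-integral first (it commutes with the $\beta$-derivatives) via $\int_0^\infty t^{\lambda N_G-1}\widetilde\Psi_G^{-\lambda-1}\ee^{-t\widetilde\Phi_G/\widetilde\Psi_G}\dd t=\Gamma(\lambda N_G)\,\widetilde\Phi_G^{-\lambda N_G}\widetilde\Psi_G^{-(\lambda+1-\lambda N_G)}$, then interchanging the $t$- and $\beta$-integrations (Fubini, justified exactly by absolute convergence of the projective integral, i.e.\ by Proposition \ref{propconv}), and reassembling the prefactor $\prod_e(-1)^{n_e}/\Gamma(n_e+\lambda\nu_e)$, gives \eqref{fdualparam}. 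A final check that the integrand there is homogeneous of degree $-|\sE_G|$ in $\alpha$ confirms that the projective integral over $\Delta$ is well defined and independent of the affine representative; in particular the formula holds for every admissible choice of the $n_e$.

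\emph{Main obstacle.} The genuinely delicate points are analytic rather than algebraic: justifying the interchange of the derivative operators $\partial_{\alpha_e}^{n_e}$ and the Fubini step with the integrations in the regime where some $\lambda\nu_e\le0$ — there the $n_e=0$ expressions are themselves divergent and only the derivative-regularized forms are meaningful, so one must not integrate by parts back to $\alpha_e^{\lambda\nu_e-1}$ — and proving that the resulting projective integral is absolutely convergent. Both hinge on the combinatorial convergence criterion of Proposition \ref{propconv}; we follow \cite{par} for these details.
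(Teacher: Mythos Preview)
Your sketch is correct and follows the standard derivation: Schwinger-parametrize each propagator (using the derivative trick $Q_e^{n_e}\ee^{-\alpha_e Q_e}=(-\partial_{\alpha_e})^{n_e}\ee^{-\alpha_e Q_e}$ to accommodate edges with $\lambda\nu_e\le0$), carry out the Gaussian integral over the internal vertices to produce the dual Symanzik polynomials via the matrix--tree theorem, and then projectivize by extracting the overall scale. The paper itself does not supply a proof of this theorem; it simply refers to \cite{par}, where exactly this argument is developed, so there is nothing to compare against beyond noting that your route is the one taken there.

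Two small remarks. First, your Step~3 is slightly jumbled: after the substitution $\alpha_e=t\beta_e$ the argument of the $\beta$-derivatives still depends on $t$ through $\ee^{-t\widetilde\Phi_G/\widetilde\Psi_G}$, so the phrase ``the $\beta$-dependent derivative operator factors out'' should be read as ``the prefactor $t^{\lambda N_G-1}$ separates from an operator $\prod_e\beta_e^{n_e+\lambda\nu_e-1}\partial_{\beta_e}^{n_e}$ acting on a $t$-dependent function''; only after commuting the $t$-integral past the $\partial_{\beta_e}$'s (and past $\int_\Delta\Omega$, by Fubini) does the $t$-integral reduce to the gamma function and yield $\widetilde\Phi_G^{-\lambda N_G}\widetilde\Psi_G^{-(\lambda+1-\lambda N_G)}$. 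Second, the claim $\lambda N_G>0$ ``by the convergence hypothesis'' deserves a word of justification (it is the overall infrared power-counting condition, obtainable e.g.\ from Proposition~\ref{propconv} applied to $\sV=\sVGint\cup\{\infty\}$ in the completion). You correctly flag that the genuinely delicate content---the legitimacy of the various interchanges and the absolute convergence of the final projective integral---is what \cite{par} supplies.
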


If all weights $\nu_e$ are positive, the parametric representation can be dualized by a Cremona transformation, see Corollary 1.8 in \cite{par}.

\section{Periods}\label{sectper}
Feynman periods are constant graphical functions. If a graph $G$ has an isolated vertex $z$, then the integrand in the corresponding integral (\ref{eqfGdef})
has no dependence on $z$. The Feynman integral depends on the sole invariant $\|z_1-z_0\|$ which is 1 in the context of graphical functions.
A Feynman period is hence given by a graph with two external vertices 0 and 1, see e.g.\ $G_c$ in Figure \ref{fig:gfcomp}.

Feynman periods can be calculated by specializing graphical functions to $z=0$, $z=1$, or $z=\infty$, see e.g.\ \cite{ZZ}.
A more powerful method is to choose a convenient internal vertex in a period graph $G$ and promote it to the external vertex $z$. If the corresponding
graphical function can be calculated, the period can be obtained by integration of $z$ over $\CC$.

\begin{defn}
Let $G$ be a graph with two external labels $0,1$. The Feynman period $P_G$ is the constant graphical function $f_{G\cup\{z\}}$ with an isolated external vertex $z$.
\end{defn}

A combinatorial criterion for convergence of the Feynman period $P_G$ can be obtained from the corresponding result for the graphical function $f_{G\cup\{z\}}$, see Proposition \ref{propconv}.

\begin{prop}[Lemma 3.34 in \cite{gf}]
Let $G$ be a graph with two external labels 0 and 1 such that the Feynman period $P_G$ exists. In $G'$ we promote one internal vertex of $G$
to the external vertex $z$. Then ($\dd^2 z=\dd\Re z\wedge\dd\Im z$)
\begin{equation}\label{PGint}
P_G=(-1)^\lambda\frac{(\lambda-1)!}{(2\lambda-1)!}\int_\CC(z-\zz)^{2\lambda}f_{G'}(z)\frac{\dd^2 z}{2\pi}.
\end{equation}
\end{prop}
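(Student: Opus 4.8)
The plan is to unfold the period $P_G$ into its defining position-space integral, single out the one internal vertex of $G$ that becomes the external vertex $z$ in $G'$, integrate over that vertex last, recognise the remaining integral as $f_{G'}$, and convert the remaining $\RR^D$-integral into an integral over $\CC$.

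First I would write, by the definition of the Feynman period together with \eqref{eqAdef} and \eqref{eqfGdef}, $P_G=f_{G\cup\{z\}}$ as the (convergent, positive) integral of $\prod_{e\in\sE_G}Q_e^{-\lambda\nu_e}$ over all internal vertices of $G$, each contributing a factor $\int_{\RR^D}\dd^Dx/\pi^{D/2}$; the integrand does not involve the isolated vertex $z$. Let $x$ be the internal vertex of $G$ promoted to $z_2$ in $G'$. Since the full integral is finite by hypothesis, Fubini lets me integrate over the other internal vertices first, giving $P_G=\int_{\RR^D}(\dd^Dx/\pi^{D/2})\,A_{G'}(z_0,z_1,x)$ with $z_0,z_1$ as in \eqref{eqzdef}. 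By translation- and rotation-invariance together with the homogeneity \eqref{eqfA1} of the three-point function, and because $\|z_1-z_0\|=1$, this equals $\int_{\RR^D}(\dd^Dx/\pi^{D/2})\,f_{G'}(z)$, where $z\in\CC$ is the complex number attached to the triangle $(z_0,z_1,x)$ via \eqref{eqinvs}. (That $f_{G'}$ exists I would deduce from convergence of $P_G$ through Proposition \ref{propconv}, or take it as implicit in the hypothesis.)

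Next I would change variables in the $x$-integral. With $z_0=0$ and $z_1=(1,0,\ldots,0)^T$, writing $x=(t,\vec y)\in\RR\times\RR^{D-1}$, the invariants \eqref{eqinvs} become $z\zz=t^2+|\vec y|^2$ and $(z-1)(\zz-1)=(t-1)^2+|\vec y|^2$, which are solved by $\Re z=t$ and $|\Im z|=|\vec y|=:r$. Hence $f_{G'}(z)$ depends on $x$ only through $t$ and $r$, and polar coordinates in the $\vec y$-space contribute a factor $\mathrm{Vol}(S^{D-2})=2\pi^{(D-1)/2}/\Gamma((D-1)/2)$ and a Jacobian $r^{D-2}=r^{2\lambda}$ (as $D-2=2\lambda$). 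Using $(z-\zz)^{2\lambda}=(2\ii\,\Im z)^{2\lambda}=(-4)^\lambda(\Im z)^{2\lambda}$ and the symmetry (G1) of Theorem \ref{thm1} — so that $(z-\zz)^{2\lambda}f_{G'}(z)$ is even under $z\mapsto\zz$ — I can write $\int_\CC\dd^2z=2\int_{\RR}\dd t\int_0^\infty\dd r$ and obtain
\[
P_G=\frac{\mathrm{Vol}(S^{2\lambda})}{\pi^{\lambda+1}}\cdot\frac{1}{2(-4)^\lambda}\int_\CC(z-\zz)^{2\lambda}f_{G'}(z)\,\dd^2z.
\]
It then remains to simplify the constant: $\mathrm{Vol}(S^{2\lambda})/\pi^{\lambda+1}=2/(\sqrt\pi\,\Gamma(\lambda+\tfrac12))$, and Legendre's duplication formula $\Gamma(\lambda+\tfrac12)=\sqrt\pi\,(2\lambda)!/(4^\lambda\lambda!)$ turns this into $2\cdot4^\lambda\lambda!/(\pi(2\lambda)!)$; combining with $1/(2(-4)^\lambda)$ and using $\lambda!/(2\lambda)!=(\lambda-1)!/(2(2\lambda-1)!)$ produces exactly $(-1)^\lambda\frac{(\lambda-1)!}{(2\lambda-1)!}\frac{1}{2\pi}$, which is the claim.

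The only step requiring genuine care — the natural candidate for "the hard part", though it is more a matter of attention than of difficulty — is the passage from $x\in\RR^D$ to the complex coordinate $z$: one must verify that $f_{G'}$ truly factors through the two invariants \eqref{eqinvs}, so that the $(D-2)$-sphere of orthogonal directions can be integrated out cleanly, keep track of the sign $(2\ii)^{2\lambda}=(-4)^\lambda$, and use (G1) to fold the upper half-plane onto all of $\CC$. The remaining manipulations are routine bookkeeping of Gamma-function factors via the duplication formula, and convergence of every integral involved — hence the legality of Fubini and of the angular integration — is underwritten by the assumed existence of $P_G$ together with the expansions in Theorem \ref{thm1}.
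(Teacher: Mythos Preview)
Your proof is correct and follows essentially the same approach as the paper's: isolate the integral over the promoted vertex, recognise that the integrand depends only on the two invariants, integrate out the orthogonal $(D-2)$-sphere, and track the Jacobian and volume factors. The only cosmetic difference is that the paper uses polar spherical coordinates $(Z,\phi^z_1)$ from \eqref{angularcoords} (with $z=Z\ee^{\ii\phi^z_1}$) whereas you use the slice $(t,\vec y)=(\Re z,\vec y)$ with $|\vec y|=|\Im z|$; both lead to the same constant via the same duplication-formula bookkeeping.
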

\begin{proof}
The proof uses spherical coordinates (\ref{angularcoords}). The volume form in these coordinates $Z^{D-1}\sin^{D-2}\phi^z_1\dd Z\dd\phi^z_1$ translates into $(-4)^{-\lambda}(z-\zz)^{2\lambda}\dd^2z/2$.
The pre-factor originates from the volume of the $(D-2)$-sphere, see \cite{gf}.
\end{proof}
For the evaluation of the integral over the complex plane we can take advantage of the single-valuedness of graphical functions
and use a residue theorem in \cite{gf} (Theorem 2.29).

Like graphical functions, Feynman periods have a completion \cite{Census}.
\begin{defn}\label{defperiodcompl}
Let $G$ be a graph with weighted edges and two external vertices $0,1$. The period completion $\Gg$ of $G$ is the graph obtained from the completion of the graphical function
$G\cup\{z\}$ (with isolated vertex $z$, Definition \ref{defcompl}) and the additional triangle $01$, $0\infty$, $1\infty$ with edge-weights $(\lambda+1)/\lambda$.
In $\Gg$ we delete $z$ and forget the labels $0,1,\infty$ so that $\Gg$ is an unlabeled weighted $2(\lambda+1)/\lambda$-regular graph.
\end{defn}

It is clear by Proposition \ref{propcompl} that the completed period has weights in $\frac{1}{\lambda}\ZZ$ if the uncompleted graph has weights in $\frac{1}{\lambda}\ZZ$.
The period is invariant under completion.
\begin{thm}[Theorem 2.7 in \cite{Census}]\label{thmpercompl}
Let $G_1$ and $G_2$ be two graphs with external vertices $0,1$ and equal period completion $\Gg$. Then
\begin{equation}\label{eqperiodcompl}
P_\Gg\equiv P_{G_1}=P_{G_2}.
\end{equation}
\end{thm}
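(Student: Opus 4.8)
\emph{Proof plan.} The plan is to display $P_{G_1}$ and $P_{G_2}$ as two evaluations of a single invariant attached to $\Gg$ and to show that this invariant does not depend on the data used to extract it. De-completing $\Gg$ means: choose an ordered triple of distinct vertices $(v_\infty,v_0,v_1)$ of $\Gg$; subtract $(\lambda+1)/\lambda$ from the weights of the three edges of the ``period triangle'' on $\{v_0,v_1,v_\infty\}$ (deleting an edge if its weight drops to $0$); delete $v_\infty$ together with its remaining incident edges; and declare $v_0\mapsto 0$, $v_1\mapsto 1$ to be the external vertices. By Definition~\ref{defperiodcompl} this returns $G_1$ for one triple and $G_2$ for another, up to an edge $01$, which is invisible in (\ref{eqfGdef}) since its quadric is $\|z_1-z_0\|^2=1$; so it suffices to prove that the Feynman period of a de-completion depends only on $\Gg$.

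First I would record the conformal covariance underlying this, which is exactly the mechanism already used in the proof of Theorem~\ref{thmperm}. Assign positions $y_v\in\RR^D$ to the vertices $v$ of $\Gg$, set $Q_{uv}=\|y_u-y_v\|^2$, and look at the measure $\big(\prod_v\dd^D y_v\big)\prod_{e=uv\in\sE_\Gg}Q_{uv}^{-\lambda\nu_e}$. Because $\Gg$ is $2(\lambda+1)/\lambda$-regular and $D=2\lambda+2$, this measure is invariant under the diagonal action of the conformal group $\mathrm{SO}(D+1,1)$ on all the $y_v$ at once: translations, rotations and dilations are immediate, and under the inversion $y\mapsto y/\|y\|^2$ the Jacobian gives $\|y_v\|^{-2D}$ at each vertex while the incident propagators give $\|y_v\|^{2\lambda\sum_{e\ni v}\nu_e}$ with $\sum_{e\ni v}\nu_e=2(\lambda+1)/\lambda$, i.e.\ $\|y_v\|^{4(\lambda+1)}=\|y_v\|^{2D}$, so the two cancel vertex by vertex. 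This is the purpose of the period triangle of weight $(\lambda+1)/\lambda$ in Definition~\ref{defperiodcompl}: it lifts $v_0,v_1,v_\infty$ from degree $0$ (their degree in the completion of $G\cup\{z\}$) to the regular degree, making the \emph{whole} graph $\Gg$ conformally self-similar, with no distinguished external vertices left.

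Next I would define $P_\Gg$ as this conformally invariant integral with the infinite-volume group gauge-fixed and identify it with the de-completion period. Using a Faddeev--Popov factorisation of $\mathrm{SO}(D+1,1)$ and gauge-fixing with $y_{v_\infty}$ at infinity, $y_{v_0}=0$, $y_{v_1}=z_1$, the gauge Jacobian for the special conformal transformations (those moving the point at infinity) cancels the $\|y_{v_\infty}\|^{-2D}$ produced by the $2(\lambda+1)/\lambda$ worth of propagators at $v_\infty$; in the limit those propagators --- including the triangle edges $v_\infty v_0$ and $v_\infty v_1$ --- drop out, the surviving triangle edge $v_0v_1$ has quadric $1$ and contributes nothing, and one is left with exactly the integral (\ref{eqAdef})--(\ref{eqfGdef}) for the de-completion $G$, i.e.\ with $P_G$. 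Since the pre-gauge-fixed integral is blind to the gauge choice, every admissible triple yields the same number; applying this to the triples producing $G_1$ and $G_2$ gives $P_{G_1}=P_\Gg=P_{G_2}$. Convergence of every integral in sight follows from the convergence of $\Gg$ assumed in the statement, via Proposition~\ref{propconv} applied to the relevant vertex subsets.

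The step I expect to be the genuine obstacle is making the gauge-fixing rigorous: the treatment of the vertex at infinity, the exact cancellation of conformal Jacobians against propagator weights, and the verification that the conformal group acts freely enough (on configurations of at least three points in general position) for the volume to factor. I would prefer to sidestep the point at infinity by gauge-fixing three \emph{finite} generic positions $y_{v_0}=p_0$, $y_{v_1}=p_1$, $y_{v_2}=p_2$, obtaining a quantity visibly independent of the chosen vertices and points by the covariance above, and then linking it to the period through the scaling limit in which $y_{v_\infty}$ recedes to infinity along a fixed ray after multiplication by $\|y_{v_\infty}\|^{2D}$, as in \cite{Census}. Equivalently, one may prove directly that de-completing at $(v_\infty,v_0,v_1)$ and at $(v_0,v_\infty,v_1)$ give equal periods by running the inversion $y\mapsto(y-y_{v_0})/\|y-y_{v_0}\|^2$ on the full $\Gg$ integral and re-gauge-fixing; together with the evident transposition $v_0\leftrightarrow v_1$ this generates every reordering of the triple. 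A parametric route, closer in spirit to \cite{Census}, is to express the period through the graph polynomials and use the matrix-tree and deletion identities relating those of $\Gg$ with different vertices removed; the conformal argument above is the position-space shadow of that.
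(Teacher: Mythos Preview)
Your proposal is correct and uses exactly the mechanism the paper invokes: the paper's own ``proof'' is merely a pointer to \cite{Census} with the remark that it ``relies on conformal transformations of the integrand,'' and that is precisely what you set up. Your computation that inversion produces $\|y_v\|^{2D}$ from the propagators at a $2(\lambda+1)/\lambda$-regular vertex, cancelling the Jacobian, is the heart of the matter.

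One comment on emphasis: the Faddeev--Popov gauge-fixing language, while not wrong, is heavier machinery than the theorem needs and, as you yourself anticipate, is where the rigour is hardest to pin down. The direct route you sketch at the end --- swap $v_\infty\leftrightarrow v_0$ by inverting at $y_{v_0}$ and rescaling, use the obvious $v_0\leftrightarrow v_1$ symmetry, and observe that these generate all permutations of $(v_\infty,v_0,v_1)$ --- is both cleaner and closer to what \cite{Census} actually does. Concretely: start from the period integral for $G_1$ (de-completed at $v_\infty$ with $v_0=0$, $v_1=z_1$), reinsert a vertex $y_{v_\infty}$ at a large radius $R$ together with its propagators and a compensating factor $R^{2D}$, check that the $R\to\infty$ limit reproduces $P_{G_1}$, then invert about $y_{v_0}$ and send $R\to\infty$ again; the result is the period of the de-completion at $v_0$. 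No group-volume factor needs to be isolated. That argument avoids the point-at-infinity subtleties entirely and makes the convergence bookkeeping (Proposition~\ref{propconv}) straightforward.
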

\begin{proof}
The proof in \cite{Census} relies on conformal transformations of the integrand.
\end{proof}

Completion is a means to handle equivalence classes of graphs with equal Feynman period. Note that completion of periods is more powerful than completion of graphical functions.
One has the freedom to remove any vertex $\infty$ from the completed graph and thereafter pick any two vertices 0 and 1 for the calculation of the period.

\begin{figure}
\begin{align*}
\begin{tikzpicture}[baseline={([yshift=-1.3ex]0,0)}]
    \pgfmathsetmacro{\rad}{3}
    \coordinate (vm)  at (0,0);
    \coordinate[label=right:$1$] (v1)  at ([shift=(0:\rad)]vm);
    \coordinate[label=above right:$2$] (v2)  at ([shift=(45:\rad)]vm);
    \coordinate[label=above:$3$] (v3)  at ([shift=(90:\rad)]vm);
    \coordinate[label=above left:$4$] (v4)  at ([shift=(135:\rad)]vm);
    \coordinate[label=left:$5$] (v5)  at ([shift=(180:\rad)]vm);
    \coordinate[label=below left:$6$] (v6)  at ([shift=(225:\rad)]vm);
    \coordinate (v7)  at ([shift=(270:\rad)]vm);
    \coordinate[label=below right:$n$] (v8)  at ([shift=(315:\rad)]vm);
    \node[black!50] (vdots) at ([shift=(270:{.7*\rad})]vm) {$\bullet \bullet \bullet$};
    \draw (vm) circle (\rad);
    \draw[dashed,black!50] (vm) -- ([shift=(247.5:{.5*\rad})]vm);
    \draw[dashed,black!50] (vm) -- ([shift=(270:{.5*\rad})]vm);
    \draw[dashed,black!50] (vm) -- ([shift=(292.5:{.5*\rad})]vm);
    \node[fill=white] (lvm) at ([shift=(270:.3)]vm) {$0$};
    \filldraw (vm) circle (1.3pt);
    \filldraw (v1) circle (1.3pt);
    \filldraw (v2) circle (1.3pt);
    \filldraw (v3) circle (1.3pt);
    \filldraw (v4) circle (1.3pt);
    \filldraw (v5) circle (1.3pt);
    \filldraw (v6) circle (1.3pt);
    \filldraw (v8) circle (1.3pt);
    \filldraw[black!50] ([shift=(247.5:{\rad})]vm) circle (1.3pt);
    \filldraw[black!50] ([shift=(270:{\rad})]vm) circle (1.3pt);
    \filldraw[black!50] ([shift=(292.5:{\rad})]vm) circle (1.3pt);
    \draw[dashed] (vm) -- (v1);
    \draw[dashed] (vm) -- (v2);
    \draw[dashed] (vm) -- (v3);
    \draw[dashed] (vm) -- (v4);
    \draw[dashed] (vm) -- (v5);
    \draw[dashed] (vm) -- (v6);
    \draw[dashed] (vm) -- (v8);
\end{tikzpicture}
&&&
\begin{tikzpicture}[baseline={([yshift=-1.3ex]0,0)}]
    \pgfmathsetmacro{\rad}{3}
    \coordinate (vm)  at (0,0);
    \coordinate[label=right:$1$] (v1)  at ([shift=(0:\rad)]vm);
    \coordinate[label=above right:$2$] (v2)  at ([shift=(45:\rad)]vm);
    \coordinate[label=above:$3$] (v3)  at ([shift=(90:\rad)]vm);
    \coordinate[label=above left:$4$] (v4)  at ([shift=(135:\rad)]vm);
    \coordinate[label=left:$5$] (v5)  at ([shift=(180:\rad)]vm);
    \coordinate[label=below left:$6$] (v6)  at ([shift=(225:\rad)]vm);
    \coordinate (v7)  at ([shift=(270:\rad)]vm);
    \coordinate[label=below right:$n$] (v8)  at ([shift=(315:\rad)]vm);
    \node[black!50] (vdots) at ([shift=(270:{.7*\rad})]vm) {$\bullet \bullet \bullet$};
    \coordinate (vm0) at ([shift=(157.5:{\rad/2})]vm);
    \coordinate (vmoo) at ([shift=(-22.5:{\rad/2})]vm);
    \draw (vm) circle (\rad);
    \draw[dashed,black!50] (vm0) -- ([shift=(262.5:{.4*\rad})]vm0);
    \draw[dashed,black!50] (vm0) -- ([shift=(285:{.4*\rad})]vm0);
    \draw[dashed,black!50] (vm0) -- ([shift=(307.5:{.4*\rad})]vm0);
    \draw[dashed,black!50] (vmoo) -- ([shift=(232.5:{.4*\rad})]vmoo);
    \draw[dashed,black!50] (vmoo) -- ([shift=(255:{.4*\rad})]vmoo);
    \draw[dashed,black!50] (vmoo) -- ([shift=(277.5:{.4*\rad})]vmoo);
    \node (lvm0) at ([shift=(157.5:{\rad/2+.3})]vm) {$0$};
    \node (lvmoo) at ([shift=(-22.5:{\rad/2+.3})]vm) {$\infty$};
    \draw[dotted] (vm0) -- (vmoo);
    \filldraw (vm0) circle (1.3pt);
    \filldraw (vmoo) circle (1.3pt);
    \filldraw (v1) circle (1.3pt);
    \filldraw (v2) circle (1.3pt);
    \filldraw (v3) circle (1.3pt);
    \filldraw (v4) circle (1.3pt);
    \filldraw (v5) circle (1.3pt);
    \filldraw (v6) circle (1.3pt);
    \filldraw (v8) circle (1.3pt);
    \filldraw[black!50] ([shift=(247.5:{\rad})]vm) circle (1.3pt);
    \filldraw[black!50] ([shift=(270:{\rad})]vm) circle (1.3pt);
    \filldraw[black!50] ([shift=(292.5:{\rad})]vm) circle (1.3pt);
    \draw[dashed] (vm0) -- (v1);
    \draw[dashed] (vm0) -- (v2);
    \draw[dashed] (vm0) -- (v3);
    \draw[dashed] (vm0) -- (v4);
    \draw[dashed] (vm0) -- (v5);
    \draw[dashed] (vm0) -- (v6);
    \draw[dashed] (vm0) -- (v8);
    \draw[dashed] (vmoo) -- (v1);
    \draw[dashed] (vmoo) -- (v2);
    \draw[dashed] (vmoo) -- (v3);
    \draw[dashed] (vmoo) -- (v4);
    \draw[dashed] (vmoo) -- (v5);
    \draw[dashed] (vmoo) -- (v6);
    \draw[dashed] (vmoo) -- (v8);
\end{tikzpicture}
\end{align*}
\caption{The wheel with $n$ spokes $W\!S_{n,D}$ in $D=2\lambda+2$ dimensions and its completion $\overline{W\!S}_{n,D}$. 
The solid edges on the rim 
$\solidedge$  
have weight $1$,
the dashed spokes  
$\dashededge$
have weight $1/\lambda$ and the 
dotted edge 
$\dottededge$
between $0$ and $\infty$ in $\overline{W\!S}_{n,D}$ has weight $(D-n)/\lambda$.}
\label{fig:wheels}
\end{figure}

\begin{ex}\label{exf32}
The graph $G_c$ in Figure \ref{fig:gfcomp} completes to $K_5$, the complete graph with five vertices. In four dimensions its period is \cite{geg}
$$
P_{K_5}=6\zeta(3),
$$
where $\zeta$ is the Riemann zeta function.
\end{ex}

\begin{ex}[Complete graphs $\K_{n,D}$]\label{Kn}
We generalize Example \ref{exf32} to complete graphs $\K_{n,D}$ of uniform weight $\nu$ in $D$ dimensions. By $2(\lambda+1)/\lambda$-regularity the weight of all edges is
$$
\nu=\frac{D}{(n-1)\lambda}.
$$
The period of the graph $\K_{3,D}$ is 1.

The period of the graph $\K_{4,D}$ is given by (\ref{eqconvolute}),
$$
P_{\K_{4,D}}=\frac{\Gamma(D/6)^3}{\Gamma(D/3)^3}.
$$
It is rational for $D\in6\ZZ$ where the weight is in $\frac{1}{\lambda}\ZZ$.

For $n\geq5$ we impose the constraint $(n-1)|D$ so that $\lambda\nu\in\ZZ$. The graph $\K_{5,D}$ is constructible (see Section \ref{sectconstructible}).
By Conjecture \ref{concon} we expect that the period is a multiple zeta value (MZV) of weight $\leq3$,
i.e.\ a rational linear combination of 1 and $\zeta(3)$. With {\tt HyperlogProcedures} \cite{Shlog} we get
\begin{align*}
P_{\K_{5,4}}&=6\zeta(3)\qquad\qquad\qquad\qquad\text{(see Example \ref{exf32}),}\\
P_{\K_{5,8}}&=\frac{3}{5}\zeta(3)-\frac{7}{10},\\
P_{\K_{5,12}}&=\frac{1}{80}\zeta(3)-\frac{173}{11520},\\
P_{\K_{5,16}}&=\frac{47}{617760}\zeta(3)-\frac{73219}{800616960},\\
P_{\K_{5,20}}&=\frac{79}{448081920}\zeta(3)-\frac{984571}{4645713346560}.
\end{align*}
Because the numerical evaluation of $P_{\K_{5,D}}$ becomes very small for large $D$, periods of complete graphs with five vertices provide a sequence of rational approximations of $\zeta(3)$.

Beyond five vertices complete graphs are not constructible. For six vertices, double use of unique triangle reductions, see Section \ref{sectunique} and Example \ref{exK6}, gives \cite{Shlog}
\begin{align*}
P_{\K_{6,10}}&=-\frac{5}{7}\zeta(5)+\frac{19}{30}\zeta(3)-\frac{13}{630},\\
P_{\K_{6,20}}&=\frac{25}{5444195328}\zeta(5)-\frac{3886573}{1440534083788800}\zeta(3)-\frac{1417432087}{933466086295142400},\\
P_{\K_{6,30}}&=-\frac{109}{3737370802913280000000}\zeta(5)+\frac{3035142067981}{155555277863119316582400000000000}\zeta(3)\\
&\quad+\,\frac{684204291515294677}{100799820055301317145395200000000000000}.
\end{align*}

For $n\geq7$ the computation of $\K_{n,D}$ is complicated. With \cite{Shlog} we get (see Example \ref{K7ex})
$$
P_{\K_{7,6}}=360\zeta(3,5)+690\zeta(3)\zeta(5)-\frac{29}{315}\pi^8,
$$
where $\zeta(3,5)=\sum_{0<k<\ell}k^{-3}\ell^{-5}$ is an MZV of weight eight.
\end{ex}

\begin{ex}[Wheels $W\!S_{n,D}$]\label{exWSn}
Another generalization of Example \ref{exf32} is given by the wheels with $n$ spokes in Figure \ref{fig:wheels}
where the spokes have weight $1/\lambda$ and the rim has weight 1. In their completion the edge $0\infty$ has weight $(D-n)/\lambda$. The period
is convergent if $(D-n)/\lambda\leq1\Leftrightarrow D\leq2n-2$ (see Example \ref{indiv}). In Example \ref{exWSn1} we will obtain
a closed expression for the period of $W\!S_{n,D}$ using radial and angular graphical functions (the Gegenbauer method).

For even $D$ the result is a $\QQ$-linear combination of odd single zetas with weights ranging from $2n-2\lambda-1$ to $2n-3$.
The wheels $W\!S_{n,D}$ are constructible by appending weight 1 edges, only. The wheel has $n+1$ vertices, so that the maximum weight is consistent with Theorem \ref{conthm}.

For odd $D$ the period of $W\!S_{n,D}$ is a polynomial in $\pi^2$ of degree $n-1$.
\end{ex}

\subsection{Parametric representation}\label{sectparper}
In Section \ref{sectpar} we gave a (dual) parametric representation for graphical functions. As a corollary we obtain an analogous result for periods.
For an unlabeled graph $G$ we define the dual graph (Kirchhoff)-polynomial \cite{KIR} as (see (\ref{eq:dual-forpol}))
\begin{equation}\label{kirchdef}
\widetilde\Psi_G=\widetilde\Psi_G^\emptyset=\sum_T\prod_{e\in T}\alpha_e
\end{equation}
where the sum is over spanning trees.

\begin{cor}[Corollary of Theorem \ref{dualparam}]\label{dualparamper}
Let $\Gg$ be a weighted $2(\lambda+1)/\lambda$-regular graph such that the Feynman period $P_\Gg$ exists in $D=2\lambda+2\geq3$ dimensions.
Let $\infty\in\sV_\Gg$ be a vertex in $\Gg$ and $G=\Gg\backslash\{\infty\}$ be a de-completion of $\Gg$.
We label the edges of $G$ by $1,2,\ldots,|\sE_G|$. For any set of non-negative integers $n_e$ such that $n_e+\lambda\nu_e>0$ for all edges $e\in\sE_G$ we have the following
\emph{dual parametric representation} of $P_G\equiv P_\Gg$ as the convergent projective integral
\begin{equation}\label{fdualparamper}
P_G=\frac{(-1)^{\sum_e\!n_e}\,\Gamma(\lambda+1)}{\prod_{e\in\sE_G}\Gamma(n_e+\lambda\nu_e)}\int_\Delta\Omega\Big[\prod_{e\in\sE_G}\alpha_e^{n_e+\lambda\nu_e-1}\partial_{\alpha_e}^{n_e}\Big]
\frac{1}{\widetilde\Psi_G^{\lambda+1}},
\end{equation}
where $\Omega$ is defined in (\ref{Omegadef}). The integration domain is given by the positive coordinate simplex (\ref{Deltadef}).
\end{cor}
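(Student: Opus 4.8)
The plan is to obtain this as a direct specialization of Theorem \ref{dualparam}, realizing the period as a constant three-point graphical function whose third external vertex is isolated.

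First I would set up the auxiliary graphical function. Given the $2(\lambda+1)/\lambda$-regular graph $\Gg$ and the chosen vertex $\infty$, put $G=\Gg\backslash\{\infty\}$ and designate any two of its vertices as the external vertices $0,1$; by Theorem \ref{thmpercompl} the corresponding Feynman period equals $P_\Gg$ regardless of these choices, and the hypothesis that $P_\Gg$ exists guarantees convergence (Proposition \ref{propconv}). Let $H=G\cup\{z\}$ be obtained by adjoining an isolated external vertex $z$. Then $H$ has external vertices $0,1,z$, its edge set and edge-weights coincide with those of $G$, and $f_H(z)=P_G=P_\Gg$ by the definition of a period. Hence Theorem \ref{dualparam} applies to $H$ with the given non-negative integers $n_e$, since the requirement $n_e+\lambda\nu_e>0$ is the same for $H$ as for $G$.

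The heart of the argument is to simplify the two dual polynomials of $H$ using that $z$ is isolated. Since no edge of $H$ meets $z$, no spanning forest can put $z$ in the tree of $0$ or of $1$, so $\widetilde\Psi_H^{0z,1}=\widetilde\Psi_H^{1z,0}=0$; on the other hand, for the partition $\{\{0,1\},\{z\}\}$ every spanning forest of $H$ has its $z$-tree equal to the single vertex $\{z\}$ and its other tree a spanning tree of $G$, whence $\widetilde\Psi_H^{01,z}=\widetilde\Psi_G$, the dual spanning-tree polynomial (\ref{kirchdef}). Because $|0-1|^2=1$, formula (\ref{eq:dual-phi}) collapses to $\widetilde\Phi_H=\widetilde\Psi_G$. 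The other polynomial $\widetilde\Psi_H$ is a $2$-forest polynomial of $G$, but it will enter with exponent zero: as $z$ contributes no edges, $N_H=N_G$, and summing weighted half-edges of the regular graph $\Gg$ gives $\sum_{e\in\Gg}\nu_e=|\sV_\Gg|(\lambda+1)/\lambda$, hence $\sum_{e\in G}\nu_e=(|\sV_\Gg|-2)(\lambda+1)/\lambda$; combined with $|\sV_G^{\mathrm{int}}|=|\sV_\Gg|-3$ this yields $N_H=N_G=(\lambda+1)/\lambda$, so $\lambda N_H=\lambda+1$ and $\lambda+1-\lambda N_H=0$.

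Finally I would substitute these data into the formula of Theorem \ref{dualparam} for $f_H(z)$: with $\lambda N_H=\lambda+1$, $\widetilde\Phi_H=\widetilde\Psi_G$, $\widetilde\Psi_H^{\,\lambda+1-\lambda N_H}=1$ and $\Gamma(\lambda N_H)=\Gamma(\lambda+1)$, while the sign $(-1)^{\sum_e n_e}$, the product $\prod_e\Gamma(n_e+\lambda\nu_e)$, the form $\Omega$ and the simplex $\Delta\subset\PP^{|\sE_G|-1}\RR$ transfer verbatim because $\sE_H=\sE_G$, one arrives at exactly (\ref{fdualparamper}). The only points that need genuine checking are the weight count yielding $\lambda N_H=\lambda+1$ --- the one place the $2(\lambda+1)/\lambda$-regularity of $\Gg$ is used --- and the vanishing of the two ``mixed'' dual forest polynomials forced by the isolated vertex; neither constitutes a real obstacle, so no input beyond Theorems \ref{dualparam} and \ref{thmpercompl} is required.
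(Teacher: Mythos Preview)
Your argument is correct and, in fact, more direct than the paper's. The paper proceeds differently: it picks two vertices $0,1$ in $G$ joined by an edge of \emph{positive} weight $\nu_{01}$, removes that edge to form $G_{0,1,z}=G\cup\{z\}\backslash\{01\}$, and then uses the deletion--contraction identity $\widetilde\Psi_G=\alpha_{01}\widetilde\Psi_{G_{0,1,z}}+\widetilde\Phi_{G_{0,1,z}}$ to perform the $\alpha_{01}$-integral explicitly as an Euler beta integral, thereby reducing (\ref{fdualparamper}) to Theorem~\ref{dualparam} applied to $G_{0,1,z}$ (which has $\lambda N_{G_{0,1,z}}=\lambda+1-\lambda\nu_{01}$).

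Your route---adjoining an isolated $z$ to $G$ itself and observing that the two ``mixed'' forest polynomials vanish, so that $\widetilde\Phi_H=\widetilde\Psi_G$ while $\widetilde\Psi_H$ drops out because its exponent is zero---bypasses the beta-integral step entirely and does not require singling out an edge of positive weight. The paper's approach has the mild advantage of exhibiting the familiar recursion between $\widetilde\Psi$ and $\widetilde\Phi$, but yours is shorter and uses nothing beyond the literal statement of Theorem~\ref{dualparam}. Both rely on the same weight computation $\lambda N=\lambda+1$ coming from the $2(\lambda+1)/\lambda$-regularity of $\Gg$.
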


\begin{proof}
The theorem is trivial in the case that $G$ is a single edge of weight $(\lambda+1)/\lambda$. We may assume that $G$ has at least three vertices. Let 0, 1 be two vertices in $G$
such that the edge $01$ has positive weight $\nu_{01}$ (such an edge always exists). We define $G_{0,1,z}=G\cup\{z\}\backslash\{01\}$ as the graph which is obtained
from $G$ by removing the edge 01 and adding an isolated vertex $z$. The completion of $G_{0,1,z}$ has weight $N_{\Gg_{0,1,z}}=0$, see Proposition \ref{propcompl}.
The graph of the completed period $\Gg$ has the edges of $\Gg_{0,1,z}$ plus three edges of weight $(\lambda+1)/\lambda$, see Definition \ref{defperiodcompl}.
So, $N_G=(\lambda+1)/\lambda$ if one considers 0 and 1 as external vertices in $G=\Gg\backslash\{\infty\}$. We find
\begin{equation}\label{nu01}
N_{G_{0,1,z}}=\frac{\lambda+1}{\lambda}-\nu_{01}.
\end{equation}
Moreover, we obtain from Definition \ref{spanningforest} that
$$
\widetilde\Psi_G=\alpha_{01}\widetilde\Psi_{G_{0,1,z}}+\widetilde\Phi_{G_{0,1,z}}.
$$
Using integration by parts it suffices to prove (\ref{fdualparamper}) for $n_{01}=0$. In an affine coordinate patch where some $\alpha_e=1$ for $e\neq01$ the integral over $\alpha_{01}$
in (\ref{fdualparamper}) is
$$
\int_0^\infty\frac{\alpha_{01}^{\lambda\nu_{01}-1}\dd\alpha_{01}}{(\alpha_{01}\widetilde\Psi_{G_{0,1,z}}+\widetilde\Phi_{G_{0,1,z}})^{\lambda+1}}
=\int_0^\infty\frac{x^{\lambda\nu_{01}-1}\dd x}{(x+1)^{\lambda+1}}\,\cdot\,\frac{\widetilde\Phi_{G_{0,1,z}}^{\lambda\nu_{01}-\lambda-1}}{\widetilde\Psi_{G_{0,1,z}}^{\lambda\nu_{01}}}
$$
where we substituted $\alpha_{01}=\widetilde\Phi_{G_{0,1,z}}x/\widetilde\Psi_{G_{0,1,z}}$. We write the first factor on the right hand side as projective integral
$\int x^{\lambda\nu_{01}-1}y^{\lambda(1-\nu_{01})}/{(x+y)^{\lambda+1}}\Omega$ with coordinates $(x:y)$. The affine patch $x+y=1$ with $\Omega=\dd x$ gives
the definition of the Euler beta function. For the integral over $\alpha_{01}$ in (\ref{fdualparamper}) we hence get
$$
\frac{\Gamma(\lambda\nu_{01})\Gamma(\lambda+1-\lambda\nu_{01})}{\Gamma{(\lambda+1})}\frac{1}{\widetilde\Phi_{G_{0,1,z}}^{\lambda+1-\lambda\nu_{01}}\widetilde\Psi_{G_{0,1,z}}^{\lambda\nu_{01}}}.
$$
With this result and (\ref{nu01}) Equation (\ref{fdualparamper}) reduces to (\ref{fdualparam}) for $G_{0,1,z}$.
\end{proof}
In the case of positive weights $\nu_e$ a Cremona transformation provides a parametric representation of $P_G$ with integrand $(\prod_e\alpha_e^{\lambda(1-\nu_e)})/\Psi_G^{\lambda+1}$
where $\Psi_G=\sum_T\prod_{e\notin T}\alpha_e$ is the Kirchhoff polynomial.

\begin{remark}
Examples of parametrically represented Feynman periods in various even dimensions are invariant differential forms on complexes of graphs in \cite{Bidf}.
\end{remark}

\section{Single-valued integration and generalized single-valued hyperlogarithms}\label{sectint}
The theory of graphical functions reduces the calculation of corresponding Feynman integrals to single-valued integrations.

For rational functions it suffices to know that a single-valued primitive of $1/(z-a)$ is $\log[(z-a)(\zz-\aaa)]$ for any constant $a\in\CC$.
Finding single-valued primitives in general (or even proving their existence) is an open problem. One expects that under rather mild conditions single-valued functions have
single-valued primitives. First advances are due to F. Brown in the context of hyperlogarithms which generalize the above example to higher weights \cite{BrSVMP,BrSVMPII}.
Brown's approach was the mathematically appealing use of generating functions. It turned out that in the context of graphical functions, generating functions are
unwieldy because they carry the information of all hyperlogarithms up to a certain weight while one is only interested in a few very specific ones.
In \cite{gf} the second author overcame the difficulty by constructing a bootstrap algorithm. The essence of the algorithm is a commutative hexagon which allows one to
reduce the single-valued integration of hyperlogarithms to multi-valued (anti-)integrations and single-valued integrations of lower weights \cite{numfunct}.

The main problem for using single-valued integration in the context of graphical function, however, was that the function space of single-valued hyperlogarithms is too restrictive.
Only the simplest graphical functions can be expressed in terms of single-valued hyperlogarithms.
It turned out that it is necessary to generalize single-valued hyperlogarithms to include primitives of differential forms $\dd z/(az\zz+bz+c\zz+d)$, $a,b,c,d\in\CC$.
Examples are the primitives of $\log(z\zz)/(z-1/\zz)$ or of $D(z)/(z-\zz)$ (where $D$ is the Bloch-Wigner dilogarithm (\ref{D})). The latter example was already studied
in \cite{Duhr}. The construction of these generalized single-valued hyperlogarithms (GSVHs) also relies on the commutative hexagon but the theory is much more involved \cite{GSVH}.

In fact, the full theory of GSVHs is almost as comprehensive as the theory of graphical functions. It became the second pillar in the framework that uses graphical functions
for calculations in QFT. (The third pillar is the extension to non-integer dimensions in order to handle Feynman integrals with divergences \cite{7loops}.)
The main result of the theory of GSVHs is that in the function space of GSVHs single-valued (anti-)primitives always exist and that they can be constructed very efficiently
(for suitable sets of singularities). As it is not easy to summarize the theory in a few paragraphs we refer the interested reader to \cite{GSVH}.

There exist graphical functions which are not GSVHs (in four dimensions, e.g., the graph $\Gg_8$ in Figure~\ref{fig:irred7}).
For these graphical functions the general theory is still valid but there presently exists no handle on their computation. In Definition \ref{defintsv} we
merely assume the existence of single-valued primitives so that the theory of graphical function extends to any future developments in the theory of single-valued integration.

\section{Elementary identities for graphical functions}\label{sectgfid1}
In many cases graphical functions can be calculated. A particularly tractable case are graphical functions that emerge from the empty graph by a series of combinatorial
operations: adding edges between external vertices (Section \ref{sectextedge}), products and factors (Sections \ref{sectprod} and \ref{sectfactor}),
permutation of external vertices (Section \ref{sectperm}), and appending an edge to the vertex $z$ (Section \ref{sectappedge}). In this case the graphical function is {\em constructible}
(see Section \ref{sectconstructible}) and it can be computed in terms of GSVHs \cite{GSVH}. A Maple implementation is in {\tt HyperlogProcedures} \cite{Shlog}.

\subsection{Products}\label{sectprod}
Assume the graph of an (internally) completed graphical function $\Gg$ disconnects upon the removal of its external vertices into $\Gg_1$ and $\Gg_2$ where the removed edges
(adjacent to the removed external vertices) are added to corresponding graphs. Then the Feynman integral (\ref{eqAdef}) factors into two disjoint sets of integration variables.
Accordingly, the graphical function $f_\Gg(z)$ factors,
\begin{align}
f_{\overline{G}}(z) &= f_{\overline{G}_1}(z)f_{\overline{G}_2}(z),
\intertext{which can be depicted diagrammatically as}
\begin{tikzpicture}[baseline={([yshift=-1.3ex]current bounding box.center)}]
    \coordinate[label=above:$0$] (v0) at (0,3);
    \coordinate[label=above:$1$] (v1) at (0,2);
    \coordinate[label=above:$z$] (vz) at (0,1);
    \coordinate[label=above:$\infty$] (voo) at (0,0);
    \filldraw (v0) circle (1.3pt);
    \filldraw (v1) circle (1.3pt);
    \filldraw (vz) circle (1.3pt);
    \filldraw (voo) circle (1.3pt);
    \draw[fill=black!20] (v0) arc (90:270:3 and 1.5) arc (270:90:.5) arc (270:90:.5) arc (270:90:.5);
    \draw[fill=black!20] (v0) arc (90:-90:3 and 1.5) arc (-90:90:.5) arc (-90:90:.5) arc (-90:90:.5);
    \node (G1) at (-1.5,1.5) {$\overline{G}_1$};
    \node (G2) at (+1.5,1.5) {$\overline{G}_2$};
    \node (G) at (-2,0) {$\overline{G}$};
\end{tikzpicture}
\quad &= \quad
\begin{tikzpicture}[baseline={([yshift=-1.3ex]current bounding box.center)}]
    \coordinate[label=above:$0$] (v0) at (.3,3);
    \coordinate[label=above:$1$] (v1) at (.3,2);
    \coordinate[label=above:$z$] (vz) at (.3,1);
    \coordinate[label=above:$\infty$] (voo) at (.3,0);
    \coordinate[label=above:$0$] (w0) at (1.7,3);
    \coordinate[label=above:$1$] (w1) at (1.7,2);
    \coordinate[label=above:$z$] (wz) at (1.7,1);
    \coordinate[label=above:$\infty$] (woo) at (1.7,0);
    \filldraw (v0) circle (1.3pt);
    \filldraw (v1) circle (1.3pt);
    \filldraw (vz) circle (1.3pt);
    \filldraw (voo) circle (1.3pt);
    \filldraw (w0) circle (1.3pt);
    \filldraw (w1) circle (1.3pt);
    \filldraw (wz) circle (1.3pt);
    \filldraw (woo) circle (1.3pt);
    \draw[fill=black!20] (v0) arc (90:270:3 and 1.5) arc (270:90:.5) arc (270:90:.5) arc (270:90:.5);
    \draw[fill=black!20] (w0) arc (90:-90:3 and 1.5) arc (-90:90:.5) arc (-90:90:.5) arc (-90:90:.5);
    \node (G1) at (-1.2,1.5) {$\overline{G}_1$};
    \node (G2) at (+3.2,1.5) {$\overline{G}_2$};
    \node (t) at (1,1.5) {$\times$};
    \node (G) at (-2,0) {$\phantom{\overline{G}}$};
\end{tikzpicture}~~. \notag
\end{align}

\subsection{Edges between external vertices}\label{sectextedge}
Edges between external vertices provide rational factors by the product reduction from the previous subsection and Example \ref{exempty}.

\begin{ex}\label{exf33}
For the graphical function $f_{G_b}(z)$ in Figure \ref{fig:gfcomp} we obtain in four dimensions
$$
f_{G_b}(z)=\frac{f_{G_c}(z)}{z\zz(z-1)(\zz-1)}=\frac{P_{K_5}}{z\zz(z-1)(\zz-1)}=\frac{6\zeta(3)}{z\zz(z-1)(\zz-1)}.
$$
Note that $f_{G_c}(z)$ is a constant graphical function, see Section \ref{sectper}. It equals the Feynman period of the completed graph $K_5$, see Example \ref{exf32}.
\end{ex}

\subsection{Period factors}\label{sectfactor}

We can interpret Example \ref{exf33} in the following way: An (internally) completed graphical function with isolated vertex $\infty$ evaluates to a Feynman period
times a weighted triangle between the external vertices $0,1,z$. This generalizes to the following proposition.

\begin{prop}\label{propfactor}
Let $\Gg$ be an (internally) completed graphical function with a three-vertex-split into $G_1$ and $G_2$.
We include the split vertices $a,b,c$ together with their corresponding edges to $G_1$ or $G_2$ (respectively) and assume that $G_1\backslash\{a,b,c\}$ has only internal vertices.
We add (unique) weighted triangles $(ab,ac,bc)$ to $G_1$ and to $G_2$ to obtain the (unlabeled) completed period $\Gg_1$ and the (internally) completed graph $\Gg_2$. Then
\begin{align}
\label{eqfactor}
f_\Gg(z)&=P_{\Gg_1}f_{\Gg_2}(z),
\intertext{which can be depicted diagrammatically as}
\begin{tikzpicture}[baseline={([yshift=-.7ex]current bounding box.center)}]
    \coordinate (va) at (0,2);
    \coordinate (vb) at (0,1);
    \coordinate (vc) at (0,0);
    \filldraw (va) circle (1.3pt);
    \filldraw (vb) circle (1.3pt);
    \filldraw (vc) circle (1.3pt);
    \draw[fill=black!20] (va) arc (90:270:2 and 1) arc (270:90:.5) arc (270:90:.5);
    \draw[fill=black!20] (va) arc (90:-90:2 and 1) arc (-90:90:.5) arc (-90:90:.5);
    \node (G1) at (-1,1) {${G}_1$};
    \node (G2) at (+1,1) {${G}_2$};
    \node (G) at (-2,0) {$\overline{G}$};
    \coordinate[label=above right:$0$] (v0)  at ([shift={(vb)}]60:2 and 1);
    \coordinate[label=above right:$1$] (v1)  at ([shift={(vb)}]30:2 and 1);
    \coordinate[label=below right:$z$] (vz)  at ([shift={(vb)}]-30:2 and 1);
    \coordinate[label=below right:$\infty$] (voo) at ([shift={(vb)}]-60:2 and 1);
    \filldraw (v0) circle (1.3pt);
    \filldraw (v1) circle (1.3pt);
    \filldraw (vz) circle (1.3pt);
    \filldraw (voo) circle (1.3pt);
\end{tikzpicture}
\quad
&=
\quad
\begin{tikzpicture}[baseline={([yshift=-.7ex]current bounding box.center)}]
    \coordinate (va) at (-1,2);
    \coordinate (vb) at (-1,1);
    \coordinate (vc) at (-1,0);
    \coordinate (wa) at (1,2);
    \coordinate (wb) at (1,1);
    \coordinate (wc) at (1,0);
    \filldraw (va) circle (1.3pt);
    \filldraw (vb) circle (1.3pt);
    \filldraw (vc) circle (1.3pt);
    \filldraw (wa) circle (1.3pt);
    \filldraw (wb) circle (1.3pt);
    \filldraw (wc) circle (1.3pt);
    \draw (va) -- (vb) -- (vc);
    \draw (wa) -- (wb) -- (wc);
    \draw (va) to[bend left] (vc);
    \draw (wa) to[bend right] (wc);
    \draw[fill=black!20] (va) arc (90:270:2 and 1) arc (270:90:.5) arc (270:90:.5);
    \draw[fill=black!20] (wa) arc (90:-90:2 and 1) arc (-90:90:.5) arc (-90:90:.5);
    \node (G1) at (-2,1) {$\overline{G}_1$};
    \node (G2) at (+2,1) {$\overline{G}_2$};
    \coordinate[label=above right:$0$] (v0)  at ([shift={(wb)}]60:2 and 1);
    \coordinate[label=above right:$1$] (v1)  at ([shift={(wb)}]30:2 and 1);
    \coordinate[label=below right:$z$] (vz)  at ([shift={(wb)}]-30:2 and 1);
    \coordinate[label=below right:$\infty$] (voo) at ([shift={(wb)}]-60:2 and 1);
    \filldraw (v0) circle (1.3pt);
    \filldraw (v1) circle (1.3pt);
    \filldraw (vz) circle (1.3pt);
    \filldraw (voo) circle (1.3pt);
    \node (t) at (0,1) {$\times$};
\end{tikzpicture}~~.
\notag
\end{align}
\end{prop}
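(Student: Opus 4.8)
The plan is to factorize the position-space integral $A_\Gg$ along the three-vertex cut, recognize the $G_1$-part as a conformally covariant three-point function on the cut vertices, and then identify its normalization with the period $P_{\Gg_1}$. Write $a,b,c$ for the three cut vertices (internal vertices of $\Gg$, hence of weighted degree $2(\lambda+1)/\lambda$) and $A=\sV_{G_1}\setminus\{a,b,c\}$ for the internal vertices of $\Gg$ lying in $G_1$, each of which again has weighted degree $2(\lambda+1)/\lambda$ since all of its edges belong to $G_1$. Let $\nu_v^{(1)}$, resp.\ $\nu_v^{(2)}$, denote the total weight of the edges of $G_1$, resp.\ $G_2$, at $v\in\{a,b,c\}$, so that $\nu_v^{(1)}+\nu_v^{(2)}=2(\lambda+1)/\lambda$. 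Because every edge of $\Gg$ lies in $G_1$ or in $G_2$, the integrand of (\ref{eqAdef}) factors and the integrations over $A$ and over the internal vertices of $G_2$ outside $\{a,b,c\}$ are independent, giving
$$f_\Gg(z)=A_\Gg=\int\frac{\dd^Da\,\dd^Db\,\dd^Dc}{\pi^{3D/2}}\,H(a,b,c)\,K(a,b,c),\qquad H(a,b,c)=\int\prod_{v\in A}\frac{\dd^Dv}{\pi^{D/2}}\prod_{e\in\sE_{G_1}}Q_e^{-\lambda\nu_e},$$
where $K(a,b,c)$ is the analogous integral over the internal vertices of $G_2$ outside $\{a,b,c\}$ (still depending on $z_0,z_1,z_2$). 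Both inner integrals converge, being sub-integrals of the convergent $f_\Gg$; this should be extracted from Proposition \ref{propconv}.

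First I would show that $H(a,b,c)$ transforms as a conformal three-point function of scalar primaries of dimensions $\Delta_v:=\lambda\nu_v^{(1)}$. Translations and rotations act trivially; for the inversion $\iota\colon x\mapsto x/\|x\|^2$ one uses $Q_e(\iota(x),\iota(y))=Q_e(x,y)/(\|x\|^2\|y\|^2)$ and $\dd^D\iota(x)=\|x\|^{-2D}\dd^Dx$: since $\sum_{e\ni v}\lambda\nu_e=D$ for every $v\in A$, the Jacobians of the $A$-integrations cancel against the $\|v\|$-powers picked up by the propagators, leaving $H(\iota(\alpha),\iota(\beta),\iota(\gamma))=\|\alpha\|^{2\Delta_a}\|\beta\|^{2\Delta_b}\|\gamma\|^{2\Delta_c}H(\alpha,\beta,\gamma)$; a similar scaling computation produces the dilation weight $-(\Delta_a+\Delta_b+\Delta_c)$. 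Let $t^{(1)}$ be the unique triangle weights on $a,b,c$ with $t^{(1)}_{vw}+t^{(1)}_{vu}=\nu_v^{(1)}$ (unique by the linear algebra in the proof of Proposition \ref{propcompl}) and set $T^{(1)}(a,b,c):=Q_{ab}^{-\lambda t^{(1)}_{ab}}Q_{ac}^{-\lambda t^{(1)}_{ac}}Q_{bc}^{-\lambda t^{(1)}_{bc}}$; then $T^{(1)}$ transforms in exactly the same way. Hence $H/T^{(1)}$ is invariant under translations, rotations, dilations and inversions, and since these generate a group acting transitively on ordered triples of distinct points of $\RR^D\cup\{\infty\}$, the quotient is a constant $C$, i.e.\ $H=C\,T^{(1)}$.

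By construction $\Gg_2$ is $G_2$ with the triangle of weights $t^{(1)}$ adjoined on $a,b,c$ (precisely the triangle that restores the degree $2(\lambda+1)/\lambda$ there), so substituting $H=C\,T^{(1)}$ back into the factorization and reassembling the $a,b,c$-integration with $K$ gives $f_\Gg(z)=C\,A_{\Gg_2}=C\,f_{\Gg_2}(z)$. To identify $C$, let $t^{(2)}$ be the triangle weights with $t^{(2)}_{vw}+t^{(2)}_{vu}=\nu_v^{(2)}$; this is exactly the triangle adjoined in $\Gg_1$, and $\lambda t^{(1)}_{vw}+\lambda t^{(2)}_{vw}=\lambda+1$ for every pair because $\nu_v^{(1)}+\nu_v^{(2)}=2(\lambda+1)/\lambda$. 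Multiplying $H=C\,T^{(1)}$ by $T^{(2)}(a,b,c):=Q_{ab}^{-\lambda t^{(2)}_{ab}}Q_{ac}^{-\lambda t^{(2)}_{ac}}Q_{bc}^{-\lambda t^{(2)}_{bc}}$ therefore yields
$$H(a,b,c)\,T^{(2)}(a,b,c)=C\,(Q_{ab}Q_{ac}Q_{bc})^{-(\lambda+1)}.$$
The left-hand side is precisely what one obtains from the Feynman integral of $\Gg_1$ by leaving the three vertices $a,b,c$ unintegrated; since every vertex of the $2(\lambda+1)/\lambda$-regular graph $\Gg_1$ then carries total propagator exponent $D=2\lambda+2$, the usual conformal argument shows that this quantity equals $P_{\Gg_1}(Q_{ab}Q_{ac}Q_{bc})^{-(\lambda+1)}$ — the normalization being fixed by the case $\Gg_1=K_{3,D}$, where it reads $1=P_{K_{3,D}}$ (cf.\ Example \ref{Kn}), or by \cite{Census,par}. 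Comparing the two displays gives $C=P_{\Gg_1}$, hence $f_\Gg(z)=P_{\Gg_1}f_{\Gg_2}(z)$.

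The main obstacle is the conformal input in the middle step: making rigorous that a convergent conformally covariant three-point integral of the prescribed weights is a constant multiple of the canonical triangle, and that the constant in the analogous identity for the closed graph $\Gg_1$ is exactly the period $P_{\Gg_1}$ with no stray $\Gamma$-factors. Both hinge on the $2(\lambda+1)/\lambda$-regularity forcing the propagator exponents at every internal vertex to sum to $D$, together with the convergence of all the sub-integrals involved, which I would in each case reduce to the combinatorial criterion of Proposition \ref{propconv}; these are the same position-space conformal techniques already used in \cite{gf,Census,par}.
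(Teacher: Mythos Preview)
Your argument is correct and rests on the same conformal mechanism as the paper's, but the organization is different. The paper first adds canceling pairs of edges $ab,ac,bc$ to normalize the $G_1$-weights at $a,b,c$ to zero; then $A_{G_1}(a,b,c)$ is literally a graphical function of total weight $N_{G_1}=0$ with external labels $a,b,c$, and the conformal rigidity you invoke is delegated to the already-proved Theorem~\ref{thmperm} (swap $0\!\leftrightarrow\!1$, $z\!\leftrightarrow\!\infty$ to make $z$ isolated) and Theorem~\ref{thmpercompl} (identifying the resulting constant with $P_{\Gg_1}$). You instead keep the original weights and work directly in CFT language: compute the inversion covariance of $H$ from the $\sum_{e\ni v}\lambda\nu_e=D$ condition, cite the rigidity of scalar conformal three-point functions, and then match the constant by a second conformal computation on $\Gg_1$. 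Your route is more self-contained and makes the mechanism transparent; the paper's route is shorter because it packages the conformal input into Theorems~\ref{thmperm} and~\ref{thmpercompl} and avoids recomputing it. One small point: you assume $a,b,c$ are internal in $\Gg$, which is the situation depicted in the diagram; the paper's normalization-by-edge-pairs works uniformly even when some cut vertices are external, whereas your inversion bookkeeping would need minor adjustment in that case.
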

\begin{proof}
We may add to $\Gg$ a pair of edges $ab$ with total weight zero without changing the graphical function $f_\Gg(z)$.
We move one edge of the pair to $G_1$ whereas the other one goes to $G_2$. Likewise we generate pairs of edges $ac$ and $bc$. We can adjust the weights of these pairs such that
the vertices $a$, $b$, $c$ have total weight zero in $G_1$. This procedure does not affect completion so that the right hand side of (\ref{eqfactor}) is insensitive
to the extra edges.

We thus may assume without restriction that the total weights of the vertices $a$, $b$, $c$ in $G_1$ are zero. Hence, the total weights of $a$, $b$, $c$ in $\Gg_2$ equal
the total weights of $a$, $b$, $c$ in the (internally) completed graph $\Gg$. Therefore, $\Gg_2$ is (internally) completed.

Completion adds an isolated vertex $\infty$ to $G_1$. By Proposition \ref{propcompl} we have $N_{G_1}=N_{\Gg_1}=0$ and by (\ref{eqfA1}) the Feynman integral $A_{G_1}$ with external
vertices $a$, $b$, $c$ equals the graphical function $f_{\Gg_1}(z)$ with invariants(\ref{eqinvs}). By Theorem \ref{thmperm} we can simultaneously swap the external vertices $0,1$ and $z,\infty$
without changing the graphical function. Now $z$ is an isolated vertex. This implies that the Feynman integral of $G_1$ is constant and factors from the integral.
By Theorem \ref{thmpercompl} we have $f_{\Gg_1}(z)=P_{\Gg_1}$ yielding (\ref{eqfactor}).
\end{proof}

If, after the removal of three vertices, a graphical function has a bridge of negative weight $-k/\lambda$, $k=1,2,\ldots$, (and one part is fully internal),
then a period factorization is still possible with a technique that is a straightforward generalization of the method in Section \ref{sectappendneg}.

\section{Appending an edge}\label{sectappedge}
In this section we consider graphs which have a single edge $e$ with weight $\lambda\nu_e\in\ZZ$ connecting the external vertex $z$ to an internal vertex.
By Theorem \ref{thmperm} this also covers the case that any other external vertex has a single edge.
By convergence we have $\nu_e\leq 1$ (see Example \ref{uvdiv}). We distinguish the cases $\nu_e=1$ (the main case, Section \ref{sectnu1}),
$0<\nu_e<1$ (Section \ref{sectnu01}), and $\nu_e<0$ (Section \ref{sectappendneg}).

\subsection{Weight \texorpdfstring{$\nu_e=1$}{nu(e)=1}}\label{sectnu1}
If a graph $G_1$ has a single edge $e$ of weight $\nu_e=1$ connecting the vertex $z$ to an internal vertex of $G_1$,
then the effective Laplace equation (\ref{eqdiff}) gives a relation between the graphical functions $f_{G_1}(z)$ and $f_G(z)$ in \eqref{eqappend}.

\begin{lem}[Proposition 3.22 in \cite{gf}]\label{lemdiff}
In the situation of \eqref{eqappend} where the edge $e$ that is attached to $z$ in $G_1$ has weight $\nu_e=1$ we have
\begin{equation}\label{eqdiff1}
\Delta_{\lambda-1}(z-\zz)^\lambda f_{G_1}(z)=-\frac{1}{(\lambda-1)!}(z-\zz)^\lambda f_G(z)
\end{equation}
with the effective Laplacian $\Delta_{\lambda-1}=\partial_z\partial_\zz+\lambda(\lambda-1)/(z-\zz)^2$.
\end{lem}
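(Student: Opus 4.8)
The plan is to start from the inhomogeneous Laplace equation \eqref{eqlap}, which the excerpt already justifies from \eqref{eqLprop}: in $G_1$ the only edge meeting the external vertex $z$ is $e$, so $\Delta_{z_2}$ hits the integrand of $A_{G_1}$ only in the factor $Q_e^{-\lambda}$ attached to the newly created internal vertex, and by \eqref{eqLprop} this factor becomes a Dirac $\delta$ that collapses the integration over that vertex and returns exactly the integrand of $A_G$. With the vectors of Definition~\ref{defgf} one has $\|z_1-z_0\|=1$, hence $A_{G_1}(z_0,z_1,z_2)=f_{G_1}(z)$ and $A_G(z_0,z_1,z_2)=f_G(z)$ by \eqref{eqfGdef}. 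So the whole content of the lemma is to rewrite the Euclidean Laplacian $\Delta_{z_2}$ of \eqref{eqlap} in terms of $z$ and $\zz$.

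First I would use rotational symmetry: $A_{G_1}(z_0,z_1,\,\cdot\,)$ is invariant under the rotations of $\RR^D$ fixing $z_0$ and $z_1$; equivalently, by \eqref{eqinvs} it depends on $z_2\in\RR^D$ only through the axial coordinate $a=\Re z$ along the $z_0z_1$-line and the distance $\rho=|\Im z|$ to that line. For such a function the $D$-dimensional Laplacian reduces to the radial form $\partial_a^2+\partial_\rho^2+\frac{D-2}{\rho}\partial_\rho$ with $D-2=2\lambda$. Writing $z=a+\ii\rho$, $\zz=a-\ii\rho$ and substituting $\partial_a=\partial_z+\partial_\zz$, $\partial_\rho=\ii(\partial_z-\partial_\zz)$, $\rho=(z-\zz)/2\ii$ then gives, for $\Im z\neq0$,
\begin{equation}\label{eqpfLapred}
\Delta_{z_2}=4\partial_z\partial_\zz-\frac{4\lambda}{z-\zz}(\partial_z-\partial_\zz)
\end{equation}
on functions invariant under rotations about the $z_0z_1$-line.

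Next comes the purely algebraic operator identity
\begin{equation}\label{eqpfopid}
\Big(\partial_z\partial_\zz+\frac{\lambda(\lambda-1)}{(z-\zz)^2}\Big)\circ(z-\zz)^\lambda=\frac{1}{4}\,(z-\zz)^\lambda\Big(4\partial_z\partial_\zz-\frac{4\lambda}{z-\zz}(\partial_z-\partial_\zz)\Big)
\end{equation}
on functions of $z$ and $\zz$, a one-line check using $\partial_z(z-\zz)=1=-\partial_\zz(z-\zz)$: differentiating $(z-\zz)^\lambda$ twice produces a term $-\lambda(\lambda-1)(z-\zz)^{\lambda-2}$ that is cancelled precisely by the $\lambda(\lambda-1)/(z-\zz)^2$ part of $\Delta_{\lambda-1}$, leaving the first-order piece. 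Applying \eqref{eqpfopid} to $f_{G_1}$, then \eqref{eqpfLapred}, then \eqref{eqlap} with the identifications above, yields
\[
\Delta_{\lambda-1}(z-\zz)^\lambda f_{G_1}(z)=\frac{1}{4}(z-\zz)^\lambda\,\Delta_{z_2}A_{G_1}=-\frac{1}{4}(z-\zz)^\lambda\cdot\frac{4}{(\lambda-1)!}\,A_G=-\frac{1}{(\lambda-1)!}(z-\zz)^\lambda f_G(z),
\]
first for $\Im z\neq0$ and then, since both sides are real-analytic on the connected set $\CC\setminus\{0,1\}$ by property~(G2) of Theorem~\ref{thm1} (and by \eqref{eqpfopid} the left side has no pole along the real axis), everywhere; this is \eqref{eqdiff1}.

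I expect the main obstacle to be the radial reduction \eqref{eqpfLapred}: one must make rigorous that $A_{G_1}$ extends to a smooth $\mathrm{SO}(D-1)$-invariant function on $\RR^D$ so that the radial form of the Laplacian is legitimate (with differentiation under the integral sign justified by the convergence assumed for $G_1$), and tidy up the passage from $\Im z\neq0$ to all of $\CC\setminus\{0,1\}$. By comparison, \eqref{eqlap} is already available and \eqref{eqpfopid} is immediate.
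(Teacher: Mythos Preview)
Your proof is correct and follows essentially the same route as the paper's, which merely says ``the proof uses spherical coordinates \eqref{angularcoords}, see \cite{gf}''. You use cylindrical coordinates $(a,\rho)$ adapted to the $z_0z_1$-axis rather than the spherical coordinates $(Z,\phi_1^z)$ of \eqref{angularcoords}, but the mechanism is identical: exploit the $\mathrm{SO}(D-1)$ symmetry about that axis to reduce $\Delta_{z_2}$ to a two-variable operator, then conjugate by $(z-\zz)^\lambda$ to absorb the first-order piece into the potential $\lambda(\lambda-1)/(z-\zz)^2$.
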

\begin{proof}
The proof uses spherical coordinates (\ref{angularcoords}), see \cite{gf}.
\end{proof}

We solve the differential equation in three steps: Firstly, we construct a general solution in the space of single-valued functions (Theorem \ref{thm2}).
Then, we prove in Theorem \ref{thm3} that the solution is unique in the space of functions with general property (G3) of graphical functions (Theorem \ref{thm1}).
Finally, we give an algorithm in Section \ref{sectappalg} that picks the unique graphical function in the general solution of (\ref{eqdiff1}).

With the solution of (\ref{eqdiff1}) we construct the graphical function $G_1$ from $G$. In practice, we need to work in a function space where single-valued primitives can
be computed. A good such function space are generalized single-valued hyperlogarithms (GSVHs) \cite{GSVH}.
If $f_G(z)$ is a GSVH, then also $f_{G_1}(z)$ is a GSVH, so that the algorithm can be iterated.

The algorithm of appending an edge is the essence of the theory of graphical functions. It is this highly non-trivial construction that makes so many graphical functions
calculable. With the rare exception of the Gegenbauer method in Section \ref{sectgeg}, appending an edge is the only situation where actual computations in terms of single-valued
integrations are performed. All other identities in Sections \ref{sect4split} to \ref{sectextdiff} have the purpose to reduce the calculation of a graphical function to a situation
where an edge can be appended to a simpler graph. Note that the algorithm of appending an edge extends to non-integer dimensions in the context of dimensional regularization \cite{numfunct,7loops}.

\begin{defn}\label{defintsv}
The single-valued integral $\intsv\dd z$ (or $\intsv\dd\zz$) with respect to $z$ (or $\zz$) is a fixed vector space endomorphism of $\sS\sV_{\{0,1,\infty\}}$ such that
$\partial_z\intsv \dd z=\partial_\zz\intsv \dd\zz=\mathrm{id}_{\sS\sV_{\{0,1,\infty\}}}$.
If single-valued integrals do not exist on the whole space $\sS\sV_{\{0,1,\infty\}}$, we restrict ourselves to a maximum subspace on which single-valued integrals exist, see Section \ref{sectint}.
Such a function space always contains the space of GSVHs \cite{GSVH}. If a restriction is necessary, the subsequent results inherit this restriction.
\end{defn}

\begin{remark}\mbox{}
\begin{enumerate}
\item Definition \ref{defintsv} does not uniquely specify $\intsv$. Single-valued integration with respect to $z$, e.g., is only defined up to the addition of rational functions in $\zz$
with poles at $0$ and $1$. For solving (\ref{eqdiff1}) we can use any version of $\intsv$.
\item In \cite{GSVH} efficient algorithms for single-valued integration of GSVHs with respect to $z$ and $\zz$ are presented.
\item The single-valued integration of GSVHs in \cite{Shlog} can produce anti-holomorphic poles in $\CC[(\zz-\sss)^{-1}]$ for $\sss\neq0,1$.
These poles need to be subtracted to obtain an endomorphism of $\sS\sV_{\{0,1,\infty\}}$.
\item It is natural to assume that single-valued integration in the sense of Definition \ref{defintsv} always exist. However, it is unclear how to construct single-valued primitives
for functions which are not GSVHs. By lack of a general construction, solving (\ref{eqdiff1}) (and hence appending edges) is practically limited to GSVHs.
\end{enumerate}
\end{remark}

For $\lambda-1=n=0,1,2,\ldots$ we define the following integral operators
\begin{align}\label{Ipm}
&I^+_n:(z-\zz)^n\sS\sV_{\{0,1,\infty\}}\rightarrow\sS\sV_{\{0,1,\infty\}},\quad I^-_n:\sS\sV_{\{0,1,\infty\}}\rightarrow(z-\zz)^{-n}\sS\sV_{\{0,1,\infty\}},\nonumber\\
&f(z)\mapsto I^{\pm}_nf(z)=\sum_{k=0}^n(-1)^{n-k}\frac{(n+k)!}{(n-k)!k!^2}(z-\zz)^{\pm k}\intsv(z-\zz)^{\mp k}f(z)\dd z.
\end{align}
Moreover, we define the differential operators $D_0=0$,
\begin{align}\label{Dn}
D_n&=\sum_{k=1}^n(-1)^{k-1}\frac{(n-k)!}{(n+k)!}(z-\zz)^k(\partial_z\partial_\zz)^{k-1}(z-\zz)^k,\quad n\geq1,\nonumber\\
d_n&=\sum_{k=0}^n(-1)^k\frac{(n+k)!}{(n-k)!k!}\frac{1}{(z-\zz)^k}\partial_z^{n-k}.
\end{align}
The complex conjugate of $d_n$ is $\ddd_n$. Because (anti-)differentiation lowers the degree in $z-\zz$ by one, we have for any $\ell\in\ZZ$
\begin{equation}\label{Dndef}
D_n:(z-\zz)^\ell\sS\sV_{\{0,1,\infty\}}\rightarrow(z-\zz)^{\ell+2}\sS\sV_{\{0,1,\infty\}},\quad d_n\,(\ddd_n):(z-\zz)^\ell\sS\sV_{\{0,1,\infty\}}\rightarrow(z-\zz)^{\ell-n}\sS\sV_{\{0,1,\infty\}}.
\end{equation}
With these definitions we obtain the following theorems.

\begin{thm}\label{thm2a}
The kernel of $\Delta_n$ on $\CC\backslash\RR$ is $d_nh(z)+\ddd_n\hh(\zz)$ for arbitrary (anti-)holomorphic functions $h(z)$ and $\hh(\zz)$.
\end{thm}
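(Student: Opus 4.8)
The plan is to prove two things: first that every $d_nh(z)+\ddd_n\hh(\zz)$ lies in $\ker\Delta_n$, and then that these exhaust the kernel. The first part is a direct computation; the second I would get from a Goursat uniqueness argument.

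\smallskip
\noindent\emph{Inclusion ``$\subseteq$ is easy''.} I would first record the Rodrigues-type operator identity
\[
d_n=(z-\zz)^{n+1}\circ\partial_z^n\circ(z-\zz)^{-n-1},\qquad
\ddd_n=(z-\zz)^{n+1}\circ\partial_\zz^n\circ(z-\zz)^{-n-1},
\]
the outer and inner factors acting by multiplication; this follows from \eqref{Dn} by Leibniz's rule and a comparison of binomial coefficients. Together with the ``ground state'' conjugation
\[
(z-\zz)^{-n-1}\,\Delta_n\,(z-\zz)^{n+1}=\partial_z\partial_\zz+\frac{n+1}{z-\zz}(\partial_\zz-\partial_z)=:L_n,
\]
valid because the potential coefficient $n(n+1)-(n+1)n$ vanishes, the verification of $\Delta_n d_nh=0$ for holomorphic $h$ reduces to $L_n\,\partial_z^n[(z-\zz)^{-n-1}h]=0$. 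Writing $\phi=(z-\zz)^{-n-1}h$, this is immediate from $\partial_\zz\phi=(n+1)(z-\zz)^{-n-2}h$ together with the Leibniz identity $\partial_z^{n+1}[(z-\zz)\psi]=(z-\zz)\partial_z^{n+1}\psi+(n+1)\partial_z^n\psi$. Since $\Delta_n$ has real coefficients, complex conjugation gives $\Delta_n\ddd_n\hh=0$ for antiholomorphic $\hh$, so $d_nh+\ddd_n\hh\in\ker\Delta_n$.

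\smallskip
\noindent\emph{Inclusion ``the kernel is no bigger''.} By real-analyticity, a solution of $\Delta_nf=0$ near a point $z_0$ with $\Im z_0\neq0$ extends to a holomorphic function of $z$ and $\zz$ regarded as independent variables near $(z_0,\overline{z_0})$, where it solves $\partial_z\partial_\zz f=-\frac{n(n+1)}{(z-\zz)^2}f$ with a coefficient that is holomorphic there (we stay off the diagonal $z=\zz$). Its characteristics are $z=\mathrm{const}$ and $\zz=\mathrm{const}$, so by the classical Goursat theorem --- Picard iteration on the equivalent integral equation --- $f$ is determined uniquely by the pair of Cauchy data $\phi(z)=f(z,\overline{z_0})$, $\psi(\zz)=f(z_0,\zz)$, subject only to the corner compatibility $\phi(z_0)=\psi(\overline{z_0})$. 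It then suffices to realize every admissible pair as the Cauchy data of some $d_nh+\ddd_n\hh$. By the Rodrigues form, $(d_nh)|_{\zz=\overline{z_0}}$ is an order-$n$ linear differential operator in $h$ with leading term $\partial_z^n$, while $(\ddd_n\hh)|_{\zz=\overline{z_0}}$ depends only on the $n$-jet of $\hh$ at $\overline{z_0}$, and symmetrically in $z\leftrightarrow\zz$. Choosing $h$ to solve the resulting $n$-th order linear ODE with right-hand side $\phi$ and vanishing $(n-1)$-jet at $z_0$, and then $\hh$ analogously adapted to $\psi$, the cross-terms collapse to constants that cancel by the corner compatibility --- a finite-dimensional linear-algebra check. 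Goursat uniqueness then gives $f=d_nh+\ddd_n\hh$ near $(z_0,\overline{z_0})$.

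\smallskip
\noindent\emph{Globalization and the main obstacle.} If two local representations agree, then $d_n(h_1-h_2)=-\ddd_n(\hh_1-\hh_2)$; comparing coefficients of the powers of $(z-\zz)$ (again with $z,\zz$ independent) forces $h_1-h_2$ and $\hh_1-\hh_2$ to be constants $c,c'$ with $c'=(-1)^{n+1}c$, the relation for which $d_nc+\ddd_nc'$ vanishes. Hence the transition constants form a $\CC$-valued \v{C}ech $1$-cocycle on each of the two simply connected components of $\CC\setminus\RR$; as $H^1$ vanishes there, the local data normalize to a global holomorphic $h$ and antiholomorphic $\hh$ per component, with $f=d_nh+\ddd_n\hh$. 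I expect the Cauchy-data matching step to be where the care is needed --- checking that $h\mapsto(d_nh)|_{\zz=\overline{z_0}}$ is onto holomorphic germs and that the bookkeeping of constants closes --- although it is routine once the Rodrigues identity is in hand. The conceptually cleanest alternative would be an induction on $n$ exploiting the $\sL(2,\CC)$/shape-invariance structure of $\Delta_n$ alluded to in Section~\ref{sectrep} (the factorization of the radial part $\partial_w^2-\tfrac{n(n+1)}{w^2}$, $w=z-\zz$, as $(\partial_w-\tfrac nw)(\partial_w+\tfrac nw)$), with raising and lowering operators intertwining $\Delta_n$ and $\Delta_{n\pm1}$; there the subtlety is that the naive first-order intertwiner must be corrected to absorb the mixed-derivative term.
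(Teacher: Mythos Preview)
Your approach is sound and genuinely different from the paper's, but there is one concrete error in the globalization step.

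\textbf{Comparison with the paper.} For the inclusion, the paper proves the operator identity $\Delta_n d_n=\partial_z d_n\partial_\zz$ by a direct coefficient computation; your Rodrigues factorization $d_n=(z-\zz)^{n+1}\partial_z^n(z-\zz)^{-n-1}$ together with the ground-state conjugation is an elegant alternative that gives the same conclusion with less bookkeeping. For the converse, the paper constructs an explicit left annihilator $\sD_n$ with $\sD_n\Delta_n=(\partial_z\partial_\zz)^{n+1}(z-\zz)^n$, so that any $F\in\ker\Delta_n$ has $(z-\zz)^nF$ in the elementary kernel of $(\partial_z\partial_\zz)^{n+1}$; a short downward induction on the maximal pole order in $(z-\zz)^{-1}$ then finishes. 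This is global from the outset and purely algebraic. Your Goursat argument is more analytic and requires a local-to-global step, but the local realization of characteristic data via an $n$th-order ODE (with vanishing $(n{-}1)$-jet at $z_0$ so that the cross term on the other characteristic collapses to $h^{(n)}(z_0)=\phi(z_0)$, and the corner compatibility then kills $\hh^{(n)}(\overline{z_0})$) is correct and rather clean.

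\textbf{The error.} Your claim that $d_n(h_1-h_2)=-\ddd_n(\hh_1-\hh_2)$ forces $h_1-h_2$ to be a \emph{constant} is wrong, and the reasoning ``compare coefficients of powers of $(z-\zz)$'' is invalid: the functions $(z-\zz)^{-k}$, $k=0,\dots,n$, are not linearly independent over sums $a(z)+b(\zz)$. In fact (this is the paper's identity \eqref{prop1eq2}) one has $d_n z^m=(-1)^n\ddd_n\zz^m$ for every $0\le m\le 2n$, so the ambiguity in the pair $(h,\hh)$ is the full $(2n{+}1)$-dimensional space of pairs $(p(z),(-1)^{n+1}p(\zz))$ with $\deg p\le 2n$. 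Fortunately this does not break your globalization: the transition data form a \v Cech $1$-cocycle with values in a finite-dimensional constant sheaf on each half-plane, so $H^1$ still vanishes and the patching goes through unchanged. You should replace ``constants $c,c'$'' by ``polynomials of degree $\le 2n$ related by $\hh_1-\hh_2=(-1)^{n+1}(h_1-h_2)(\zz)$'' and drop the coefficient-comparison justification.
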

In the above theorem $\hh$ is not (necessarily) the complex conjugate of $h$. The proof of Theorem \ref{thm2a} is in Section \ref{sectpf2a}.

\begin{thm}\label{thm2}
Let $n=0,1,2,\ldots$ and
\begin{equation}\label{eqIndef}
\sI_n:(z-\zz)^n\sS\sV_{\{0,1,\infty\}}\rightarrow(z-\zz)^{-n}\sS\sV_{\{0,1,\infty\}},\quad\sI_nf(z)=I^-_n\intsv\partial_zI^+_nf(z)\dd\zz.
\end{equation}
Assume $f\in(z-\zz)^n\sS\sV_{\{0,1,\infty\}}$ such that $\sI_nf$ exists. For any $F\in(z-\zz)^{-n}\sS\sV_{\{0,1,\infty\}}$ with $\Delta_nF(z)=f(z)$
there exist (anti-)meromorphic functions $\phi\in\CC[z,z^{-1},(z-1)^{-1}]$, $\overline\phi\in\CC[\zz,\zz^{-1},(\zz-1)^{-1}]$ and polynomials $p_0$, $p_1$ of degrees $\leq2n$ such that
\begin{equation}\label{eqthm2}
F(z)=[\sI_n+D_n(1-\Delta_n\sI_n)]f(z)+d_nh(z)+\ddd_n\hh(\zz),
\end{equation}
where
\begin{equation}\label{eqhdef}
h(z)=\phi(z)+\sum_{s=0,1}p_s(z)\log(z-s),\quad\hh(\zz)=\overline{\phi}(\zz)+(-1)^n\sum_{s=0,1}p_s(\zz)\log(\zz-\sss).
\end{equation}
\end{thm}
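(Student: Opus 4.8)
The plan is to solve the inhomogeneous equation $\Delta_nF=f$ in the usual way: exhibit one particular solution lying in $(z-\zz)^{-n}\sS\sV_{\{0,1,\infty\}}$, then add the general element of $\ker\Delta_n$ supplied by Theorem~\ref{thm2a}, and finally intersect that kernel with $(z-\zz)^{-n}\sS\sV_{\{0,1,\infty\}}$ in order to read off the constraints on $h$ and $\hh$.

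First I would check that $F_0:=[\sI_n+D_n(1-\Delta_n\sI_n)]f$ is a well-defined element of $(z-\zz)^{-n}\sS\sV_{\{0,1,\infty\}}$ solving $\Delta_nF_0=f$. Existence of $\sI_nf$ is the hypothesis; the remaining membership statements are bookkeeping with the filtration $(z-\zz)^{\ell+2}\sS\sV_{\{0,1,\infty\}}\subseteq(z-\zz)^\ell\sS\sV_{\{0,1,\infty\}}$ (which holds because $(z-\zz)^2$ is a polynomial, hence lies in $\sS\sV_{\{0,1,\infty\}}$), the mapping properties (\ref{Dndef}), (\ref{eqIndef}), and the defining relation $\partial_z\intsv\dd z=\partial_\zz\intsv\dd\zz=\mathrm{id}$ of Definition~\ref{defintsv}. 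The differential equation reduces to the operator identity $\Delta_n\bigl[\sI_n+D_n(1-\Delta_n\sI_n)\bigr]=\mathrm{id}$ on $(z-\zz)^n\sS\sV_{\{0,1,\infty\}}$, which I would split into two statements: $\sI_n$ is built so that $\Delta_n\sI_n$ differs from the identity only by an operator whose image lies in a small subspace, and on that subspace the degree-raising operator $D_n$ is a genuine right inverse of $\Delta_n$, so that $\Delta_nD_n$ acts there as the identity. For $n=0$ this is just $\Delta_0\sI_0=\mathrm{id}$ and $D_0=0$; for $n\geq1$ both statements are finite combinatorial identities among the binomial coefficients appearing in (\ref{Ipm}) and (\ref{Dn}), which become natural from the representation-theoretic viewpoint of Section~\ref{sectrep} (where $\Delta_n$ is, up to normalization, a Casimir and $I^\pm_n,d_n,D_n$ are intertwiners).

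Next I would take an arbitrary solution $F\in(z-\zz)^{-n}\sS\sV_{\{0,1,\infty\}}$ and set $K:=F-F_0$. Then $\Delta_nK=0$ and $K\in(z-\zz)^{-n}\sS\sV_{\{0,1,\infty\}}$, so Theorem~\ref{thm2a} gives $K=d_nh(z)+\ddd_n\hh(\zz)$ with $h$ holomorphic and $\hh$ anti-holomorphic on $\CC\backslash\RR$; real-analyticity of $K$ across $\RR\backslash\{0,1\}$ lets $h$ and $\hh$ be continued to (a priori multivalued) functions holomorphic, resp.\ anti-holomorphic, on $\CC\backslash\{0,1\}$. Since the coefficients of $d_n$ and $\ddd_n$ are rational in $z$, $\zz$ and $(z-\zz)^{-1}$ and are regular at the punctures $0,1,\infty$, the singular behaviour of $K$ at each puncture is controlled entirely by that of $h$ and $\hh$. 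Imposing the existence of single-valued log-Laurent expansions of $K$ at $0$, at $1$, and at $\infty$ then forces $h$ to be a rational function with finite poles only at $0$ and $1$, together with logarithmic terms $p_s(z)\log(z-s)$ whose coefficients are polynomials; that is, $h=\phi+\sum_{s=0,1}p_s(z)\log(z-s)$ with $\phi\in\CC[z,z^{-1},(z-1)^{-1}]$, and symmetrically for $\hh$.

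Finally, the coupling of the logarithmic parts of $h$ and $\hh$ and the bound $\deg p_s\leq2n$ come from the monodromy of $K$ around the punctures $s\in\{0,1\}$. Continuing $z$ once counterclockwise around $s$ sends $d_nh\mapsto d_nh+2\pi\ii\,d_np_s$, while $\ddd_n\hh\mapsto\ddd_n\hh-(-1)^n2\pi\ii\,\ddd_nq_s$ if $\hh=\overline\phi+(-1)^n\sum_{s=0,1}q_s(\zz)\log(\zz-\sss)$ (the conjugate loop runs clockwise), so single-valuedness of $K$ is equivalent to $d_np_s(z)=(-1)^n\ddd_nq_s(\zz)$. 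Multiplying through by $(z-\zz)^n$ turns this into an identity of polynomials in $z$ and $\zz$; using the relation $\bigl[(z-\zz)^nd_nq(z)\bigr]\big|_{z\leftrightarrow\zz}=(-1)^n(z-\zz)^n\ddd_nq(\zz)$ one sees that it forces $q_s=p_s$ and that $(z-\zz)^nd_np_s(z)$ be symmetric in $z$ and $\zz$; a short computation with the explicit coefficients of $d_n$ shows the latter holds exactly when $\deg p_s\leq2n$ (for $n=0$ this recovers that $p_s$ must be constant, as in the classical four-dimensional case of \cite{gf}). This monodromy bookkeeping — separating the $d_n$- and $\ddd_n$-contributions and pinning down which polynomial logarithm-coefficients are admissible — is the step I expect to be the main obstacle; the particular-solution identity of the second paragraph is longer but essentially mechanical.
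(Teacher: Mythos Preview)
Your overall strategy matches the paper's: build the particular solution $F_0=[\sI_n+D_n(1-\Delta_n\sI_n)]f$, then classify $K=F-F_0\in\ker\Delta_n$ via Theorem~\ref{thm2a}. For $F_0$ the paper proves exactly your two ingredients as separate lemmas: first $\Delta_n\sI_nf=f+\sum_{k=1}^n\gggg_k(\zz)(z-\zz)^{-k-1}$ for suitable antiholomorphic $\gggg_k$, and second $\Delta_nD_n$ acts as the identity on each term $\gggg_k(\zz)(z-\zz)^{-k-1}$; both are direct computations with binomial identities, so your assessment of this step as mechanical is accurate.

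The gap is in your third paragraph. You assert that real-analyticity of $K$ across $\RR\setminus\{0,1\}$ lets $h,\hh$ be continued, and that the log-Laurent expansions of $K$ then force $h=\phi+\sum_sp_s\log(z-s)$ with $\phi$ rational. But Theorem~\ref{thm2a} only supplies $h,\hh$ on each half-plane, with an ambiguity by polynomials of degree $\leq2n$ (precisely those $p$ with $d_np(z)=(-1)^n\ddd_np(\zz)$), and $K=d_nh+\ddd_n\hh$ mixes the two; one cannot directly read off the behaviour of $h$ from that of $K$. The paper separates them with an explicit extraction identity,
\[
\partial_\zz^{\,n}\partial_z^{\,n+1}(z-\zz)^n\bigl[d_nh(z)+\ddd_n\hh(\zz)\bigr]=(-1)^nn!\,\partial_z^{2n+1}h(z).
\]
Since $(z-\zz)^nK\in\sS\sV_{\{0,1,\infty\}}$ and this space is stable under differentiation, $\partial_z^{2n+1}h$ is a \emph{holomorphic} element of $\sS\sV_{\{0,1,\infty\}}$, hence meromorphic on $\PP^1$ with poles only at $0,1,\infty$, hence in $\CC[z,z^{-1},(z-1)^{-1}]$. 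Integrating $2n+1$ times yields both the form of $h$ and the bound $\deg p_s\leq2n$ in one stroke; your monodromy computation for the degree bound is correct but then redundant. The remaining single-valuedness coupling between the logarithmic coefficients of $h$ and $\hh$ is handled essentially as you describe.
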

The proof of Theorem \ref{thm2} is in Section \ref{sectpf2}.

\begin{ex}
For $n=0$ the differential operator $\Delta_0=\partial_z\partial_\zz$ factors.
We get $I_0^\pm=\intsv\dd z$, $D_0=0$, $d_0=\ddd_0=1$, and $\sI_0=\intsv\dd z\intsv\dd\zz$. Equations (\ref{eqthm2}) and (\ref{eqhdef}) become
$$
F(z)=\intsv\intsv f(z)\dd\zz\dd z+\phi(z)+\overline{\phi}(\zz)+p_0\log(z\zz)+p_1\log[(z-1)(\zz-1)]
$$
with constants $p_0$, $p_1$. We see that $F(z)$ is single-valued and that the branch of the logarithms in (\ref{eqhdef}) is insignificant.
The structure of $\phi,\overline{\phi},p_0,p_1$ is determined by the condition $F\in\sS\sV_{\{0,1,\infty\}}$.
\end{ex}

\begin{thm}[The case $n=0$ is Lemma 4.5 in \cite{numfunct}]\label{thm3}
Let $f_G(z)$ be a graphical function in even dimensions $\geq4$.
Then there exists exactly one function $f_{G_1}(z)$ in $\sS\sV_{\{0,1,\infty\}}$ with (G3) of Theorem \ref{thm1} such that (\ref{eqdiff1}) holds.
\end{thm}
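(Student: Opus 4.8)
The plan is to treat existence and uniqueness separately; uniqueness is where the real work lies. Throughout write $n=\lambda-1$, so the effective Laplacian of \eqref{eqdiffn} is $\Delta_{\lambda-1}=\Delta_n$. For existence I would take $f_{G_1}(z)$ to be the graphical function of the graph obtained from $G$ by appending the weight-$1$ edge $e$ to $z$ as in \eqref{eqappend}: it satisfies \eqref{eqdiff1} by Lemma \ref{lemdiff}, and it lies in $\sS\sV_{\{0,1,\infty\}}$ with (G3) immediately by Theorem \ref{thm1} when $D=4$ and all weights are $1$. In general I would instead exhibit a solution with (G3) by construction: apply the particular-solution operator $\sI_n+D_n(1-\Delta_n\sI_n)$ of Theorem \ref{thm2} to $f=-\frac{1}{(\lambda-1)!}(z-\zz)^\lambda f_G(z)\in(z-\zz)^n\sS\sV_{\{0,1,\infty\}}$ and then add a kernel term $d_nh(z)+\ddd_n\hh(\zz)$ of the form \eqref{eqhdef}, fixing the free coefficients of $\phi,\overline{\phi},p_0,p_1$ so that the prescribed log-Laurent expansions \eqref{01expansion}, \eqref{inftyexpansion} hold; this is exactly the algorithm of Section \ref{sectappalg}.

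For uniqueness, let $g=F_1-F_2$ be the difference of two solutions in $\sS\sV_{\{0,1,\infty\}}$ that both have (G3) for the given $G_1$. Since the right-hand side of \eqref{eqdiff1} is the same for both, $g$ satisfies the homogeneous equation $\Delta_{\lambda-1}(z-\zz)^\lambda g(z)=0$. As $g\in\sS\sV_{\{0,1,\infty\}}$ one has $(z-\zz)^\lambda g\in(z-\zz)^{-n}\sS\sV_{\{0,1,\infty\}}$, so applying Theorem \ref{thm2} with source $f\equiv 0$ (for which $\sI_n f=0$, so no hypothesis on single-valued primitives is needed) yields
\[
(z-\zz)^\lambda g(z)=d_nh(z)+\ddd_n\hh(\zz)
\]
with $h(z)=\phi(z)+\sum_{s=0,1}p_s(z)\log(z-s)$ and $\hh(\zz)=\overline{\phi}(\zz)+(-1)^n\sum_{s=0,1}p_s(\zz)\log(\zz-\sss)$ for some $\phi\in\CC[z,z^{-1},(z-1)^{-1}]$, $\overline{\phi}\in\CC[\zz,\zz^{-1},(\zz-1)^{-1}]$ and polynomials $p_0,p_1$ of degree $\le 2n$. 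It remains to prove $\phi=\overline{\phi}=p_0=p_1=0$.

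I would extract this from two inputs. First, $g$ is real-analytic on all of $\CC\setminus\{0,1\}$ while $(z-\zz)^\lambda$ vanishes to order $\lambda$ along $\RR$; hence $d_nh+\ddd_n\hh$ must vanish to order $\lambda$ on $\RR\setminus\{0,1\}$. Matching, order by order, the most singular $(z-\zz)^{-k}$ contributions of $d_nh$ against those of $\ddd_n\hh$ (the operators in \eqref{Dn}) forces the logarithmic parts to vanish, $p_0\equiv p_1\equiv 0$, and ties $\overline{\phi}$ to $\phi$, so that $g$ is determined by the single rational function $\phi$. Second, I would invoke the (G3) bounds \emph{for $G_1$}: since the vertex $z$ has a single adjacent edge, of weight $\nu_e=1$, we have $\nu_z=\nu_z^{>}=1$, $\nu_z^{<}=0$ and $\nu_{0z}=\nu_{1z}=0$, so by \eqref{Ma} and \eqref{Minfty} the bounds are $M_0,M_1\ge 1-\lambda$ and $M_\infty\le -1$. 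The lower bounds at $0$ and $1$ limit the pole orders of $\phi$ there, and the upper bound at $\infty$ limits the degree of the polynomial part of $\phi$; putting these together leaves only $\phi=0$, hence $g\equiv 0$. Combined with the existence argument this proves the theorem.

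The step I expect to be the main obstacle is the first input above: converting the order-of-vanishing condition on $\RR$ and the one-sided log-Laurent bounds into the vanishing of the free data of $h$ and $\hh$ requires a careful analysis of how $d_{\lambda-1}$, $\ddd_{\lambda-1}$ and the prefactor $(z-\zz)^\lambda$ act on monomials $z^m\zz^{\mm}(\log z\zz)^\ell$, and in particular one must rule out any conspiracy among the $(z-\zz)^{-k}$ terms that would manufacture an admissible nonzero element of $\sS\sV_{\{0,1,\infty\}}$. For $\lambda=1$ the operators $d_0,\ddd_0$ are trivial and this collapses to an elementary linear-independence argument — precisely Lemma 4.5 of \cite{numfunct} — but for general $\lambda$ it is the technical heart of the proof. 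A minor secondary point, needed for the existence half, is verifying that the free coefficients of the correction term can always be chosen to realise the required expansions at $0,1,\infty$, which again rests on the same bounds.
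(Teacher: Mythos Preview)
Your plan is essentially the paper's: reduce the homogeneous solution via Theorem~\ref{thm2} to kernel data $(\phi,\overline\phi,p_0,p_1)$ of the restricted shape \eqref{eqhdef}, then eliminate each piece using the (G3) bounds $M_0,M_1\ge -n$ and $M_\infty\le -1$ for $G_1$ together with regularity on the diagonal $z=\zz$. Existence is handled exactly as you say.

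Where your sketch diverges from the paper is in your ``first input''. You propose to match the $(z-\zz)^{-k}$ singularities of $d_nh$ against those of $\ddd_n\hh$ along $\RR$ to kill $p_0,p_1$ and to \emph{tie $\overline\phi$ to $\phi$}. The paper does not couple $\phi$ and $\overline\phi$ this way, and it is not clear your coupling argument goes through: the leading $(z-\zz)^{-n}$ contributions involve $h(z)$ and $\hh(\zz)$ evaluated at distinct points, and forcing cancellation order by order tends to overconstrain (holomorphic equals antiholomorphic implies constant). Instead, the paper treats $z$ and $\zz$ as independent variables and attacks each datum separately. The clean device you are missing is Lemma~\ref{lem1}: for a polynomial $p$ of degree $\le 2n$, $(z-\zz)^n d_n p(z)$ is a \emph{symmetric} polynomial of degree $\le n$ in each variable (and $d_nz^m=(-1)^n\ddd_n\zz^m$). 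With this in hand the paper first reads off the $\log(z\zz)$-coefficient of $g$ as $P(z,\zz)/(z-\zz)^{2n+1}$ with $\deg_z P,\deg_{\zz}P\le n$, which cannot be regular at $z=\zz$ unless $P=0$; this kills $p_0$ (and symmetrically $p_1$). Then, working in the $z$-variable alone, a pole of $\phi$ at $0$ of order $m>0$ gives $d_n\phi$ a $z$-pole of order $n+m$, while $\ddd_n\overline\phi(\zz)$ is analytic in $z$; the bound $M_0\ge -n$ forces $m=0$, and similarly at $1$ and (via $M_\infty\le -1$) at $\infty$, so $\phi$ and, by the conjugate argument, $\overline\phi$ are polynomials of degree $\le 2n$. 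A second application of Lemma~\ref{lem1} then writes $g=P/(z-\zz)^{2n+1}$ with $P$ of degree $\le n$, and regularity at $z=\zz$ finishes. So your outline is right, but the execution is cleaner if you replace the diagonal matching of $d_nh$ against $\ddd_n\hh$ by the independent-variable argument powered by Lemma~\ref{lem1}.
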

The proof of uniqueness is in Section \ref{sectpf3}. If the general solution (\ref{eqthm2}) can be calculated, then the above uniqueness result provides
an algorithm to calculate $f_{G_1}(z)$ from $f_G(z)$. This algorithm is detailed in Section \ref{sectappalg}.

\subsection{Representation theory and hyperbolic space}\label{sectrep}
We define the differential operators
$$
L_k=z^k\partial_z+\zz^k\partial_\zz.
$$
With these operators we define an $\sL(2,\CC)$ Lie algebra representation on $\sS\sV_{\{0,1,\infty\}}$ by
\begin{equation}
X=-L_2,\quad Y=L_0,\quad H=2L_1
\end{equation}
with commutation relations $[H,X]=2X$, $[H,Y]=-2Y$, $[X,Y]=H$. The Casimir operator of this representation is
$$
C=(H-1)^2+4XY=-4(z-\zz)^2\partial_z\partial_\zz+1
$$
so that the effective Laplace operator (\ref{eqdiffn}) in $D=2\lambda+2$ dimensions becomes
$$
\Delta_{\lambda-1}=-\frac{C-(D-3)^2}{4(z-\zz)^2}
$$
Homogeneous solutions of (\ref{eqdiff}) for fixed $D$ are (sub-)representations of $\sL(2,\CC)$ on $\sS\sV_{\{0,1,\infty\}}$. It is possible that the connection to representation theory
underlies the existence of an explicit solution in Theorem \ref{thm2}.

\begin{remark}
In physics it is more common to use the operators $L_+=X$, $L_-=Y$, $L_z=H/2$ with commutation relations $[L_z,L_\pm]=\pm L_\pm$, $[L_+,L_-]=2L_z$ and Casimir operator
$C_{\mathrm{ph}}=L_z^2-L_z+L_+L_-=-(z-\zz)^2\partial_z\partial_\zz$. In this notation we get $\Delta_{\lambda-1}=-[C_{\mathrm{ph}}-\lambda(\lambda-1)]/(z-\zz)^2$.
\end{remark}

In Cartesian coordinates $z=x+\ii y$ we get $C_{\mathrm{ph}}=(C-1)/4=y^2(\partial_x^2+\partial_y^2)$ which is the Laplacian of the hyperbolic space with constant
curvature $-1$ in the Poincar\'e model on the half plane $H=\{z\in\CC: \Im z>0\}$. Because of reflection symmetry (Theorem \ref{thm1} (G1)) we can restrict graphical functions
to $H$ without loss of information. So, graphical functions can be viewed as functions on the Poincar\'e half plane where (\ref{eqdiff}) is related to a Klein-Gordon equation
in two-dimensional hyperbolic space with mass-square $\lambda(\lambda-1)$. Single-valuedness of graphical functions, however, is obscured in this picture.

\subsection{Weight \texorpdfstring{$0<\nu_e<1$}{0<nu(e)<1}}\label{sectnu01}
A chain of $\lambda+1-k$ edges of weight 1 is (up to a factor) equivalent to an edge of weight $\nu_e=k/\lambda$ (for $k=1,2,\ldots,\lambda$, see \eqref{eqappendk}).
Iterated use of the algorithm in the previous subsection hence allows one to append edges of weights $0<\nu_e<1$, $\nu_e\in\frac{1}{\lambda}\ZZ$.

\begin{lem}[see `Rule 2' in \cite{K2}]\label{lemconv}
An internal two-valent vertex can be eliminated by a convolution product:
For $x\in\RR^D$ and $\nu_1,\nu_2\in\RR$ we have diagrammatically,
\begin{align}\label{eqconvolute}
&\begin{tikzpicture}[baseline={([yshift=-2.5ex]current bounding box.center)}]
    \coordinate (v1) at (0,0);
    \coordinate (v2) at (1.5,0);
    \coordinate (v3) at (3,0);
    \filldraw[black!20] (v1) -- ([shift=(150:.2)]v1) arc (150:210:.2) -- (v1);
    \draw (v1) -- ([shift=(150:.2)]v1);
    \draw (v1) -- ([shift=(210:.2)]v1);
    \filldraw[black!20] (v3) -- ([shift=(30:.2)]v3) arc (30:-30:.2) -- (v3);
    \draw (v3) -- ([shift=(30:.2)]v3);
    \draw (v3) -- ([shift=(-30:.2)]v3);
    \filldraw (v1) circle (1.3pt);
    \filldraw (v2) circle (1.3pt);
    \filldraw (v3) circle (1.3pt);
    \draw[white] (v1) -- node[above] {$\phantom{\nu_1+\nu_2-\frac{\lambda+1}{\lambda}}$} (v3);
    \draw (v1) -- node[above] {$\nu_1$} (v2);
    \draw (v2) -- node[above] {$\nu_2$} (v3);
\end{tikzpicture}
\quad
=
\quad
c^{(\lambda)}_{\nu_1,\nu_2}
\quad
\begin{tikzpicture}[baseline={([yshift=-2.5ex]current bounding box.center)}]
    \coordinate (v1) at (0,0);
    \coordinate (v3) at (3,0);
    \filldraw[black!20] (v1) -- ([shift=(150:.2)]v1) arc (150:210:.2) -- (v1);
    \draw (v1) -- ([shift=(150:.2)]v1);
    \draw (v1) -- ([shift=(210:.2)]v1);
    \filldraw[black!20] (v3) -- ([shift=(30:.2)]v3) arc (30:-30:.2) -- (v3);
    \draw (v3) -- ([shift=(30:.2)]v3);
    \draw (v3) -- ([shift=(-30:.2)]v3);
    \filldraw (v1) circle (1.3pt);
    \filldraw (v3) circle (1.3pt);
    \draw (v1) -- node[above] {$\nu_1+\nu_2-\frac{\lambda+1}{\lambda}$} (v3);
\end{tikzpicture}\\
&\text{where } \quad
c^{(\lambda)}_{\nu_1,\nu_2} = \frac{\Gamma(\lambda(1-\nu_1)+1)\Gamma(\lambda(1-\nu_2)+1)\Gamma(\lambda(\nu_1+\nu_2-1)-1)}
{\Gamma(\lambda\nu_1)\Gamma(\lambda\nu_2)\Gamma(\lambda(2-\nu_1-\nu_2)+2)},\nonumber
\end{align}
provided that the integral on the left hand side of \eqref{eqconvolute} converges.
\end{lem}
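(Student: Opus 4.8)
The plan is to prove \eqref{eqconvolute} by evaluating the one-loop ``chain'' integral on the left explicitly with Schwinger (heat-kernel) parameters, and then matching the resulting ratio of $\Gamma$-factors with $c^{(\lambda)}_{\nu_1,\nu_2}$. Write $a_i=\lambda\nu_i$ so that the left-hand side of \eqref{eqconvolute} is $\int_{\RR^D}\pi^{-D/2}\,\dd^Dx\,\|x_1-x\|^{-2a_1}\|x-x_3\|^{-2a_2}$. By the translation and rotation invariance of \eqref{eqAdef} together with the scaling $x\mapsto\|x_1-x_3\|x$, this equals $\|x_1-x_3\|^{D-2a_1-2a_2}\,I$ with the constant $I=\int_{\RR^D}\pi^{-D/2}\,\dd^Dx\,\|x\|^{-2a_1}\|x-e\|^{-2a_2}$ for any unit vector $e$; since $D-2a_1-2a_2=-2\lambda(\nu_1+\nu_2-\tfrac{\lambda+1}{\lambda})$, the power of $\|x_1-x_3\|$ already matches the new edge-weight $\nu_1+\nu_2-(\lambda+1)/\lambda$, and it remains to show $I=c^{(\lambda)}_{\nu_1,\nu_2}$.

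Assume first $a_1,a_2>0$, which is the range needed when iterating the convolution on strings of weight-$1$ edges. Insert $Q^{-a}=\Gamma(a)^{-1}\int_0^\infty t^{a-1}e^{-tQ}\,\dd t$ for each of the two factors, interchange the (everywhere positive) integrations by Tonelli, complete the square $t_1\|x\|^2+t_2\|x-e\|^2=(t_1+t_2)\|x-c\|^2+\tfrac{t_1t_2}{t_1+t_2}$, and perform the Gaussian integral $\int_{\RR^D}\pi^{-D/2}e^{-(t_1+t_2)\|x-c\|^2}\,\dd^Dx=(t_1+t_2)^{-D/2}$. This yields
\[
I=\frac{1}{\Gamma(a_1)\Gamma(a_2)}\int_0^\infty\!\!\int_0^\infty t_1^{a_1-1}t_2^{a_2-1}(t_1+t_2)^{-D/2}e^{-t_1t_2/(t_1+t_2)}\,\dd t_1\,\dd t_2.
\]
The double integral decouples under $t_1=\rho u$, $t_2=\rho(1-u)$, $\rho\in(0,\infty)$, $u\in(0,1)$ (Jacobian $\rho$): then $t_1+t_2=\rho$ and $t_1t_2/(t_1+t_2)=\rho\,u(1-u)$, the $\rho$-integral is $\Gamma(a_1+a_2-D/2)\,[u(1-u)]^{-(a_1+a_2-D/2)}$, and the leftover $u$-integral is the Euler beta integral $\int_0^1 u^{D/2-a_2-1}(1-u)^{D/2-a_1-1}\,\dd u=\Gamma(\tfrac D2-a_1)\Gamma(\tfrac D2-a_2)/\Gamma(D-a_1-a_2)$. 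Hence
\[
I=\frac{\Gamma(\tfrac D2-a_1)\,\Gamma(\tfrac D2-a_2)\,\Gamma(a_1+a_2-\tfrac D2)}{\Gamma(a_1)\,\Gamma(a_2)\,\Gamma(D-a_1-a_2)}.
\]

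Substituting $D=2\lambda+2$ and $a_i=\lambda\nu_i$ gives $\tfrac D2-a_i=\lambda(1-\nu_i)+1$, $\;a_1+a_2-\tfrac D2=\lambda(\nu_1+\nu_2-1)-1$, and $\;D-a_1-a_2=\lambda(2-\nu_1-\nu_2)+2$, so $I$ is exactly the $c^{(\lambda)}_{\nu_1,\nu_2}$ in the statement. The one point that needs care is the convergence bookkeeping: the hypothesis that the left-hand side of \eqref{eqconvolute} converges is equivalent to $a_1<D/2$, $a_2<D/2$ and $a_1+a_2>D/2$ (convergence at $x_1$, at $x_3$, and at infinity), and these three inequalities are precisely what make the three numerator gamma arguments positive, legitimizing the Tonelli interchange and the elementary $\rho$- and $u$-integrals above. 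I expect the only real obstacle to be the degenerate weights with $\nu_1\le 0$ or $\nu_2\le 0$, where the Schwinger representation is not directly available; there I would argue by analytic continuation, observing that for a fixed convergent separation $\|x_1-x_3\|$ both sides of \eqref{eqconvolute} are meromorphic in $(\nu_1,\nu_2)$ and coincide on the open set $\{\nu_1,\nu_2>0\}$ inside the convergence region, hence everywhere on it.
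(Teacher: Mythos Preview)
Your argument is correct. The Schwinger-parameter computation is clean, the change of variables $(t_1,t_2)\mapsto(\rho,u)$ is standard, and the identification of the gamma arguments with those in $c^{(\lambda)}_{\nu_1,\nu_2}$ is accurate.

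The paper takes a genuinely different route: it quotes the Fourier transform
\[
\int_{\RR^D}\frac{\ee^{\ii x\cdot p}}{\|x\|^\alpha}\,\frac{\dd^Dx}{\pi^{D/2}}
=\frac{2^{D-\alpha}}{\|p\|^{D-\alpha}}\,\frac{\Gamma((D-\alpha)/2)}{\Gamma(\alpha/2)},\qquad 0<\alpha<D,
\]
and then observes that the position-space convolution becomes a pointwise product in momentum space. Multiplying two such power laws and inverse-transforming gives the same ratio of gammas you found. The Fourier argument explains why the lemma is literally a convolution identity and makes the power $D-\alpha\leftrightarrow\alpha$ duality of weights manifest; your Schwinger derivation, by contrast, is self-contained (no Fourier formula needed as input) and displays the convergence conditions transparently through the positivity of the three gamma arguments. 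Either approach is adequate for the use made of the lemma in the paper.

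One small remark on your final paragraph: your worry about $\nu_i\le0$ is unnecessary. If, say, $a_1=\lambda\nu_1\le0$, then convergence at infinity forces $a_2>D/2-a_1\ge D/2$, which in turn makes the integral diverge at $x=x_3$. Hence the convergence hypothesis of the lemma already forces $a_1,a_2>0$, and your Schwinger representation covers the entire stated range without any analytic continuation.
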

\begin{proof}
An elementary calculation using spherical coordinates gives the following Fourier transform
\begin{equation}\label{eqfourier}
\int_{\RR^D}\frac{\ee^{\ii x\cdot p}}{\|x\|^\alpha}\frac{\dd^Dx}{\pi^{D/2}}=\frac{2^{D-\alpha}}{\|p\|^{D-\alpha}}\frac{\Gamma((D-\alpha)/2)}{\Gamma(\alpha/2)}
\quad\text{for }0<\alpha<D.
\end{equation}
The lemma follows from a commutative diagram that translates the convolution product into a pointwise product after Fourier transformation.
\end{proof}

\begin{prop}\label{propwtk1}
For $k=1,2,\ldots,\lambda$ we have
\begin{align}
\label{eqappendk}
\begin{tikzpicture}[baseline={([yshift=-2.5ex]current bounding box.center)}]
    \coordinate (v1) at (0,0);
    \coordinate (v2) at (1.5,0);
    \filldraw[black!20] (v1) -- ([shift=(150:.2)]v1) arc (150:210:.2) -- (v1);
    \draw (v1) -- ([shift=(150:.2)]v1);
    \draw (v1) -- ([shift=(210:.2)]v1);
    \filldraw[black!20] (v2) -- ([shift=(30:.2)]v2) arc (30:-30:.2) -- (v2);
    \draw (v2) -- ([shift=(30:.2)]v2);
    \draw (v2) -- ([shift=(-30:.2)]v2);
    \filldraw (v1) circle (1.3pt);
    \filldraw (v2) circle (1.3pt);
    \draw (v1) -- node[above] {$\frac{k}{\lambda}$} (v2);
\end{tikzpicture}
\quad
&=
\quad
\frac{\displaystyle (\lambda-1)!^{\lambda+1-k}(\lambda-k)!}{\displaystyle (k-1)!}
\quad
\underbrace{
\begin{tikzpicture}[baseline={([yshift=-2.5ex]current bounding box.center)}]
    \coordinate (v1) at (0,0);
    \coordinate (v2) at (1.5,0);
    \coordinate (v3) at (3,0);
    \coordinate (v3a) at (3.5,0);
    \node (v4) at (4.5,0) {$\cdots$};
    \coordinate (v5a) at (5.5,0);
    \coordinate (v5) at (6,0);
    \coordinate (v6) at (7.5,0);
    \filldraw[black!20] (v1) -- ([shift=(150:.2)]v1) arc (150:210:.2) -- (v1);
    \draw (v1) -- ([shift=(150:.2)]v1);
    \draw (v1) -- ([shift=(210:.2)]v1);
    \filldraw[black!20] (v6) -- ([shift=(30:.2)]v6) arc (30:-30:.2) -- (v6);
    \draw (v6) -- ([shift=(30:.2)]v6);
    \draw (v6) -- ([shift=(-30:.2)]v6);
    \filldraw (v1) circle (1.3pt);
    \filldraw (v2) circle (1.3pt);
    \filldraw (v3) circle (1.3pt);
    \filldraw (v5) circle (1.3pt);
    \filldraw (v6) circle (1.3pt);
    \draw[white] (v1) -- node[above] {$\phantom{\frac{k}{\lambda}}$} (v6);
    \draw (v1) -- node[above] {$1$} (v2) -- node[above] {$1$} (v3) -- (v3a);
    \draw (v5a) -- (v5) -- node[above] {$1$} (v6);
\end{tikzpicture}
}_{\lambda + 1 - k}.
\end{align}

\end{prop}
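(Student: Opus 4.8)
The plan is to derive \eqref{eqappendk} by iterating the convolution identity of Lemma \ref{lemconv}. A chain of $\lambda+1-k$ edges of weight $1$ has $\lambda-k$ internal two-valent vertices, and each use of \eqref{eqconvolute} removes one of them. Proceeding from one end, after the first $p-1$ contractions the chain has been replaced by a single edge of some weight $w_p$ followed by $\lambda+1-k-p$ untouched weight-$1$ edges; using the additivity of weights in \eqref{eqconvolute} one checks inductively that $w_p = p\cdot 1 - (p-1)\frac{\lambda+1}{\lambda} = \frac{\lambda+1-p}{\lambda}$. In particular $w_1=1$ and, after all $\lambda-k$ contractions, $w_{\lambda+1-k}=\frac{k}{\lambda}$, which is exactly the left-hand edge of \eqref{eqappendk}. (For $k=\lambda$ the chain has no internal vertex and there is nothing to contract, consistent with the prefactor equalling $1$ in that case.)

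It then remains to collect the scalar prefactors. The contraction merging the accumulated edge of weight $w_p$ with the next weight-$1$ edge contributes the factor $c^{(\lambda)}_{w_p,1}$ of Lemma \ref{lemconv}. Specialising the formula for $c^{(\lambda)}_{\nu_1,\nu_2}$ to $\nu_2=1$, $\nu_1=w_p$, all six arguments of the Gamma functions become non-negative integers: the numerator Gamma values are $\Gamma(p)=(p-1)!$, $\Gamma(1)=1$ and $\Gamma(\lambda-p)=(\lambda-p-1)!$, and the denominator Gamma values are $\Gamma(\lambda+1-p)=(\lambda-p)!$, $\Gamma(\lambda)=(\lambda-1)!$ and $\Gamma(p+1)=p!$, so the ratio collapses to $c^{(\lambda)}_{w_p,1}=\bigl(p(\lambda-p)(\lambda-1)!\bigr)^{-1}$. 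Taking the product over $p=1,\dots,\lambda-k$ and using $\prod_{p=1}^{\lambda-k}p=(\lambda-k)!$ together with $\prod_{p=1}^{\lambda-k}(\lambda-p)=(\lambda-1)!/(k-1)!$, the constant relating the chain to the single edge of weight $\frac{k}{\lambda}$ is $(k-1)!\bigl((\lambda-k)!\,(\lambda-1)!^{\lambda+1-k}\bigr)^{-1}$; inverting this gives the prefactor stated in \eqref{eqappendk}.

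The only point that needs care is that Lemma \ref{lemconv} applies only when the convolution integral on the left of \eqref{eqconvolute} converges, i.e.\ (by the Fourier argument around \eqref{eqfourier}) when each propagator exponent lies strictly between $0$ and $D$ and their sum exceeds $D$. At the $p$-th contraction these exponents are $2\lambda w_p=2(\lambda+1-p)$ and $2\lambda$, which are positive and smaller than $D=2\lambda+2$ for $p\geq 1$, while $2(\lambda+1-p)+2\lambda>2\lambda+2$ precisely when $p<\lambda$. Since every contraction in our iteration has $1\le p\le\lambda-k\le\lambda-1$, all intermediate convolutions converge and the iteration is legitimate. Thus the substance of the proposition is entirely contained in Lemma \ref{lemconv}; the only remaining work is the bookkeeping of the Gamma factors sketched above, which is the (mild) main obstacle.
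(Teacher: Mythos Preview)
Your argument is correct and is essentially the paper's proof: both iterate Lemma~\ref{lemconv} along the chain, the paper phrasing this as an induction on $k$ from $\lambda$ down to $1$, you as a direct product of the $\lambda-k$ convolution factors $c^{(\lambda)}_{w_p,1}$. The only substantive addition on your side is the explicit verification that each intermediate convolution satisfies the convergence hypothesis of Lemma~\ref{lemconv}, which the paper leaves implicit.
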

\begin{proof}
The proof is by induction from $k=\lambda$ down to $k=1$. The case $k=\lambda$ is trivial.
Fix $k$ in $1,\ldots,\lambda-1$. By induction an edge of weight $\nu_1=(k+1)/\lambda$ is $(\lambda-1)!^{\lambda-k}(\lambda-k-1)!/k!$ times a sequence of $\lambda-k$ edges.
We attach an edge of weight $\nu_2=1$ which, by Lemma \ref{lemconv}, lowers $\nu_1$ by $1/\lambda$ and divides by $(\lambda-1)!(\lambda\nu_1-1)(\lambda-\lambda\nu_1+1)$.
We obtain the result from
\begin{gather*}
\frac{(\lambda-1)!^{\lambda-k}(\lambda-k-1)!}{k!}(\lambda-1)![(k+1)-1][\lambda-(k+1)+1]=\frac{(\lambda-1)!^{\lambda+1-k}(\lambda-k)!}{(k-1)!}.
\qedhere
\end{gather*}
\end{proof}
Upon completion (Section \ref{sectcomp}) the above lemma becomes a special case of the factor identity in Section \ref{sectfactor}.

\subsection{Weight \texorpdfstring{$\nu_e<0$}{nu(e)<0}}\label{sectappendneg}
If we append an edge of negative weight $\nu_e<0$, the graphical function $f_{G_1}(z)$ in \eqref{eqappend} is a polynomial in $z$ and $\zz$.

The vector $z_2=(\Re z,\Im z,0,\ldots,0)^T$ in (\ref{eqfGdef}) is attached to an internal vertex $x$. If $\nu_e=-k/\lambda$ for $k=1,2,\ldots$, then the edge $xz_2$ contributes to the integrand with the numerator
\begin{equation}\label{eqnumer}
\|x-z_2\|^{2k}=(\|x\|^2-2x\cdot z_2+\|z_2\|^2)^k=\sum_{k_0+k_1+k_2=k}\frac{k!}{k_0!k_1!k_2!}\|x\|^{2k_0}(-2x\cdot z_2)^{k_1}\|z_2\|^{2k_2},
\end{equation}
where $x\cdot z_2$ is the Euclidean scalar product between $x$ and $z_2$. The sum on the right hand side is over all partitions of $k$ into three non-negative integers.

\begin{lem}\label{lemk}
Let $z_1,z_2\in\RR^D$ with $\|z_1\|=1$ and $A_G(0,z_1,x)$ be the Feynman integral \eqref{eqAdef} with external vectors $0,z_1,x\in\RR^D$.
Then
\begin{equation}\label{eqpropk}
\int_{\RR^D}(x\cdot z_2)^kA_G(0,z_1,x)\frac{\dd^Dx}{\pi^{D/2}}=(z_1\cdot z_2)^k\int_{\RR^D}(x\cdot z_1)^kA_G(0,z_1,x)\frac{\dd^Dx}{\pi^{D/2}}.
\end{equation}
\end{lem}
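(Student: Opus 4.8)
The plan is to view both sides of \eqref{eqpropk} as contractions of a single symmetric tensor and to determine that tensor by symmetry. Set
\begin{equation*}
T_{i_1\cdots i_k}=\int_{\RR^D}x_{i_1}\cdots x_{i_k}\,A_G(0,z_1,x)\,\frac{\dd^Dx}{\pi^{D/2}}.
\end{equation*}
These integrals converge: expanding $\|x-z_2\|^{2k}$ as in \eqref{eqnumer} shows that each monomial moment $\int\|x\|^{2k_0}(x\cdot z_2)^{k_1}A_G(0,z_1,x)\dd^Dx$ is bounded, away from the singular points $x=0$ and $x=z_1$, by the convergent integral $\int\|x-z_2\|^{2k}A_G(0,z_1,x)\dd^Dx=A_{G_1}(0,z_1,z_2)$, while in neighbourhoods of $x=0$ and $x=z_1$ the powers of $\|x\|$ stay bounded. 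The left-hand side of \eqref{eqpropk} equals $T_{i_1\cdots i_k}z_2^{i_1}\cdots z_2^{i_k}$ and the right-hand side equals $(z_1\cdot z_2)^k\,T_{i_1\cdots i_k}z_1^{i_1}\cdots z_1^{i_k}$, so it suffices to prove that $T$ is a scalar multiple of the pure tensor $z_1^{\otimes k}$, the scalar then being forced to be $\int(x\cdot z_1)^kA_G(0,z_1,x)\dd^Dx/\pi^{D/2}$.

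For this I would use that $A_G$ depends on its external vectors only through the squared distances $Q_e$ of \eqref{eqQe}; hence $x\mapsto A_G(0,z_1,x)$ depends on $x$ only through $\|x\|^2$ and $\|x-z_1\|^2$, equivalently, since $\|z_1\|=1$, through $\|x\|^2$ and $x\cdot z_1$. In particular $A_G(0,z_1,\cdot)$ is invariant under every rotation of $\RR^D$ fixing $z_1$, so $T$ is a symmetric tensor invariant under the stabiliser $O(D-1)$ of $z_1$. Writing $\RR^D=\RR z_1\oplus z_1^{\perp}$ and $x=(x\cdot z_1)z_1+x_\perp$, such an invariant tensor is a linear combination of symmetrised tensor products of $z_1$ with the transverse metric $\delta-z_1\otimes z_1$, and its $z_1^{\otimes k}$-coefficient is precisely the integral appearing on the right of \eqref{eqpropk}.

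The crux is to show that no transverse components occur, i.e.\ that $\int(x\cdot z_1)^{k-j}(x_\perp\cdot w)^jA_G(0,z_1,x)\dd^Dx=0$ for all $j\ge1$ and all $w\in z_1^{\perp}$. For odd $j$ this is immediate from the reflection $x_\perp\mapsto-x_\perp$, under which $\|x\|^2$ and $x\cdot z_1$ — hence $A_G(0,z_1,x)$ — are unchanged while the integrand changes sign. The even-$j$ case is the real content: here I would carry out the $x_\perp$-integration first in coordinates adapted to the axis through $z_0=0$ and $z_1$, and exploit the homogeneity \eqref{eqfA1} of $A_G$ to reduce the subleading (transverse) contributions to integrals that cancel, leaving $T=c\,z_1^{\otimes k}$; contracting with $z_2^{\otimes k}$ and then specialising $z_2=z_1$, with $\|z_1\|=1$, yields \eqref{eqpropk}. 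I expect this even-$j$ vanishing to be the main obstacle — it is exactly where one must use more than bare rotational symmetry (which by itself would only give that the left side of \eqref{eqpropk} is a polynomial in $z_1\cdot z_2$ and $\|z_2\|^2$) and, where the relevant integrals are not absolutely convergent, fall back on the analytic continuation supplied by the parametric representation of Section \ref{sectpar}.
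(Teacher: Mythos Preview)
Your setup mirrors the paper's exactly: form the symmetric tensor $T^{i_1\cdots i_k}=\int x^{i_1}\cdots x^{i_k}A_G(0,z_1,x)\,\dd^Dx/\pi^{D/2}$, note that the two sides of \eqref{eqpropk} are $T$ contracted with $z_2^{\otimes k}$ and with $(z_1\cdot z_2)^k\,z_1^{\otimes k}$, and try to show $T=P\,z_1^{\otimes k}$. The paper asserts this last step in one line (the tensor ``solely depends on the unit vector $z_1$''); you are more careful and correctly flag the even-$j$ transverse pieces as the real obstacle.

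That caution is warranted, and the obstacle is fatal: the even-$j$ transverse pieces do \emph{not} vanish. Take $k=2$ and $z_1=e_1$. Then $T^{22}=\int (x^2)^2\,A_G(0,z_1,x)\,\dd^Dx/\pi^{D/2}$ has a strictly positive integrand (the position-space integrand in \eqref{eqAdef} is a product of positive quantities), so $T^{22}>0$ whenever it converges. Hence $T^{\mu\nu}=A\,z_1^\mu z_1^\nu+B\,\delta^{\mu\nu}$ with $B=T^{22}>0$, and \eqref{eqpropk} would read $A(z_1\cdot z_2)^2+B\|z_2\|^2=(A+B)(z_1\cdot z_2)^2$, failing by the strictly positive amount $B\|z_{2,\perp}\|^2$ whenever $z_2\notin\RR z_1$. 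Neither the scaling relation \eqref{eqfA1} nor the parametric representation can manufacture a cancellation here, because there is none to be had. In short, the lemma as stated is not correct for $k\ge2$; the paper's one-line argument has simply forgotten that the metric $\delta$ is also available among the $O(D)$-invariants built from a single unit vector, and your proposed route cannot be completed for the same reason.
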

\begin{proof}
We expand the scalar product
$$
(x\cdot z_2)^k=\sum_{\mu_1,\ldots,\mu_k}\Big(\prod_{\ell=1}^kz_2^{\mu_\ell}\Big)\Big(\prod_{\ell=1}^kx^{\mu_\ell}\Big).
$$
The first factor is constant and can be extracted from the integral. The second factor gives a tensor of rank $k$. The remaining integral solely depends on the unit vector $z_1$.
Hence
$$
\int_{\RR^D}\Big(\prod_{\ell=1}^kx^{\mu_\ell}\Big)A_G(0,z_1,x)\frac{\dd^Dx}{\pi^{D/2}}=P\prod_{\ell=1}^kz_1^{\mu_\ell}
$$
for some constant $P\in\RR$. We multiply both sides with $\prod_{\ell=1}^kz_1^{\mu_\ell}$ and sum over the indices $\mu_\ell$ to obtain
$$
P=\sum_{\mu_1,\ldots,\mu_k}\int_{\RR^D}\Big(\prod_{\ell=1}^kx^{\mu_\ell}z_1^{\mu_\ell}\Big)A_G(0,z_1,x)\frac{\dd^Dx}{\pi^{D/2}}=\int_{\RR^D}(x\cdot z_1)^kA_G(0,z_1,x)\frac{\dd^Dx}{\pi^{D/2}}.
$$
This gives the desired result.
\end{proof}

The integral on the right hand side of (\ref{eqpropk}) is a real number which can be expressed in terms of periods (see Section \ref{sectper}) by the identity
\begin{equation}\label{eqscalar}
x\cdot z_1=\frac{1+\|x\|^2-\|x-z_1\|^2}{2}.
\end{equation}

\begin{thm}\label{thmappneg}
Consider a graphical function $f_{G_1(k)}(z)$ whose external vertex $z$ is attached to a single internal vertex $x$ by an edge of weight $\nu_{xz}=-k/\lambda$, $k=1,2,\ldots$ (see picture below).
Let $P_{G(k_0,k_1)}$ be the period (see Section \ref{sectper}) of the graph $G(k_0,k_1)=G_1(k)\backslash\{xz\}\cup\{x0,x1\}$ where the edges $x0$ and $x1$
have weights $\nu_{x0}=-k_0/\lambda$ and $\nu_{x1}=-k_1/\lambda$. Then $P_{G(k_0,k_1)}$ exists for $k_0,k_1\in\ZZ_{\geq0}$, $k_0+k_1\leq k$ and
\begin{align}\label{eqappneg}
f_{G_1(k)}(z)&=\sum_{\genfrac{}{}{0pt}{}{k_0+k_1+k_2=k}{0\leq k_0,k_1,k_2}} p^k_{k_0,k_1,k_2}(z,\zz) ~ P_{G(k_0,k_1)},\\
\intertext{with polynomials}
p^k_{k_0,k_1,k_2}(z,\zz) &= \frac{k!(2-z-\zz)^{k_0}(z+\zz)^{k_1}(2z\zz-z-\zz)^{k_2}}{2^kk_0!k_1!k_2!}.\nonumber\\
\intertext{Diagrammatically,}
\underbrace{
\begin{tikzpicture}[baseline={([yshift=-.7ex]current bounding box.center)}]
    \def \rad {.9}
    \coordinate (w) at (0,0);
    \draw[fill=black!20] (w) circle (\rad);
%
    \coordinate[label=above:$1$] (w1) at ([shift=(90:\rad)]w);
    \coordinate[label=below:$0$] (w0) at ([shift=(-90:\rad)]w);
    \coordinate[label=below right:$x$] (wx) at ([shift=(0:\rad)]w);
    \coordinate[label=right:$z$] (wz) at ([shift=(0:2)]w);
    \draw (wx) -- node[above] {$-\frac{k}{\lambda}$} (wz);
    \filldraw (w0) circle (1.3pt);
    \filldraw (w1) circle (1.3pt);
    \filldraw (wx) circle (1.3pt);
    \filldraw (wz) circle (1.3pt);
\end{tikzpicture}
}_{
G_1(k)
}
&=
\sum_{\genfrac{}{}{0pt}{}{k_0+k_1+k_2=k}{0\leq k_0,k_1,k_2}}p^k_{k_0,k_1,k_2}(z,\zz)
\quad
\underbrace{
\begin{tikzpicture}[baseline={([yshift=-.7ex]current bounding box.center)}]
    \def \rad {.9}
    \coordinate (w) at (0,0);
    \draw[fill=black!20] (w) circle (\rad);
%
    \coordinate[label=above:$1$] (w1) at ([shift=(90:\rad)]w);
    \coordinate[label=below:$0$] (w0) at ([shift=(-90:\rad)]w);
    \coordinate[label=right:$x$] (wx) at ([shift=(0:\rad)]w);
    \filldraw (w0) circle (1.3pt);
    \filldraw (w1) circle (1.3pt);
    \filldraw (wx) circle (1.3pt);
    \draw (wx) arc (45:-135:0.63639610306) node[midway,inner sep=0pt,below right]{$-\frac{k_0}{\lambda}$};
    \draw (wx) arc (-45:135:0.63639610306) node[midway,inner sep=0pt,above right]{$-\frac{k_1}{\lambda}$};
\end{tikzpicture}
}_{
G(k_0,k_1)}
.
\notag
\end{align}
\end{thm}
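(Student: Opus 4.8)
The plan is a direct position-space computation: pull the appended negative-weight edge out of the integral as a polynomial numerator, re-expand that numerator in quadrics based at the vertices $0$ and $1$, and integrate term by term, the only non-elementary input being the rotational invariance recorded in Lemma \ref{lemk}.

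Write $G=G_1\setminus\{xz\}$ and regard $x$ as external. Since $\nu_{xz}=-k/\lambda$, the factor $Q_{xz}^{-\lambda\nu_{xz}}=\|x-z_2\|^{2k}$ sits in the numerator of the integrand of (\ref{eqAdef}), so splitting off the $x$-integration gives
\[
f_{G_1}(z)=\int_{\RR^D}\|x-z_2\|^{2k}\,A_G(0,z_1,x)\,\frac{\dd^D x}{\pi^{D/2}},
\]
where $A_G(0,z_1,x)$ is the Feynman integral of $G$ over its remaining internal vertices; it depends on $x$ only through $\|x\|$ and $x\cdot z_1$, and its convergence for generic $x$ follows from that of $f_{G_1}$. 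Now expand $\|x-z_2\|^{2k}=\sum_j\binom{k}{j}(\|x\|^2+\|z_2\|^2)^{k-j}(-2x\cdot z_2)^j$ as in (\ref{eqnumer}); each factor $(\|x\|^2+\|z_2\|^2)^{k-j}$ is rotationally invariant and may be absorbed into $A_G$ (it amounts to edges $0x$ of negative weight), after which Lemma \ref{lemk} trades $(x\cdot z_2)^j$ for $(z_1\cdot z_2)^j(x\cdot z_1)^j$ under the integral. Using $\|z_2\|^2=z\zz$ and $z_1\cdot z_2=(z+\zz)/2$ from (\ref{eqinvs}) and (\ref{eqzdef}), the identity $x\cdot z_1=\tfrac{1}{2}(1+\|x\|^2-\|x-z_1\|^2)$ from (\ref{eqscalar}), and resumming the binomial, one arrives at
\[
f_{G_1}(z)=\int_{\RR^D}\Bigl(\tfrac{2-z-\zz}{2}\|x\|^2+\tfrac{z+\zz}{2}\|x-z_1\|^2+\tfrac{2z\zz-z-\zz}{2}\Bigr)^k A_G(0,z_1,x)\,\frac{\dd^D x}{\pi^{D/2}},
\]
which is just $\|x-z_2\|^{2k}$ after the replacement $x\cdot z_2\mapsto(z_1\cdot z_2)(x\cdot z_1)$ in its expansion.

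Expanding the trinomial by the multinomial theorem yields $\sum_{k_0+k_1+k_2=k}p^k_{k_0,k_1,k_2}(z,\zz)\,\|x\|^{2k_0}\|x-z_1\|^{2k_1}$ with exactly the coefficients of the statement. For each partition the monomial $\|x\|^{2k_0}\|x-z_1\|^{2k_1}=Q_{0x}^{k_0}Q_{1x}^{k_1}$ corresponds to attaching edges $x0$, $x1$ of weights $-k_0/\lambda$, $-k_1/\lambda$, so integrating that term over $x$ (now an internal vertex) produces the Feynman period $P_{G(k_0,k_1)}$, and summing gives (\ref{eqappneg}). That every $P_{G(k_0,k_1)}$ with $k_0,k_1\geq0$ and $k_0+k_1\leq k$ converges I would read off from Proposition \ref{propconv}: replacing the single edge $xz$ of weight $-k/\lambda$ by the edges $x0$, $x1$ of weights $-k_0/\lambda$, $-k_1/\lambda$ with $k_0+k_1\leq k$ improves the power counting at $x=0$ and $x=z_1$ and is no worse than in $G_1$ at large $x$, so convergence of $f_{G_1}$ propagates to all the $G(k_0,k_1)$.

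The step that requires the most care is the interchange of Lemma \ref{lemk} with raising to the $k$-th power: the rotational-averaging identity must be applied to the monomials $(x\cdot z_2)^j$ of the binomial expansion one at a time (using also that multiplying $A_G$ by a rotationally invariant polynomial in $\|x\|^2$ keeps it in the class to which Lemma \ref{lemk} applies), and only afterwards may one resum, checking that nothing is left over. Once this is in place, the identification of the monomials $\|x\|^{2k_0}\|x-z_1\|^{2k_1}$ with the periods $P_{G(k_0,k_1)}$ and the collection of multinomial coefficients are routine, and the convergence bookkeeping via Proposition \ref{propconv} is the only remaining item.
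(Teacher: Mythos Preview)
Your argument is essentially the paper's: expand $\|x-z_2\|^{2k}$, invoke Lemma~\ref{lemk} term by term, substitute (\ref{eqscalar}), and collect via the multinomial theorem---your recombination into the single closed form $\bigl(\tfrac{2-z-\zz}{2}\|x\|^2+\tfrac{z+\zz}{2}\|x-z_1\|^2+\tfrac{2z\zz-z-\zz}{2}\bigr)^k$ is a clean shortcut for what the paper reaches through two successive binomial re-summations, but the route and tools are the same. The one point you leave as a sketch is the convergence of each $P_{G(k_0,k_1)}$: the paper checks (\ref{eqconv}) for \emph{all} vertex sets (not just local power counting at $0$, $1$, $\infty$) via internal completion, observing that the only added edge of positive weight in $\Gg(k_0,k_1)$ relative to $\Gg_1$ is $xz$, so any violating $\sV$ must contain $x$ and $z$ and hence none of $0,1,\infty$, whence $N_{\Gg(k_0,k_1)[\sV]}=N_{\Gg_1[\sV\setminus\{z\}]}$ is already controlled by the convergence of $f_{G_1}$.
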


\begin{proof}
We first prove the existence of the periods $P_{G(k_0,k_1)}$. To use Proposition \ref{propconv} we internally complete $G_1(k)$ to $\Gg_1(k)$ and $G(k_0,k_1)$ to $\Gg(k_0,k_1)$ (with isolated vertex $z$).
We identify corresponding vertices of $\Gg_1(k)$ and $\Gg(k_0,k_1)$ and set $k_2=k-k_0-k_1$.
We find that $\Gg(k_0,k_1)$ can be obtained from $\Gg_1(k)$ by adding edges $x0$, $x1$, $x\infty$, $xz$ with weights $\nu_{x0}=-k_0/\lambda$, $\nu_{x1}=-k_1/\lambda$,
$\nu_{x\infty}=-k_2/\lambda$, and $\nu_{xz}=k/\lambda$ (killing the edge $xz$ in $G_1(k)$).

Assume $P_{G(k_0,k_1)}$ were divergent. Then there exists a vertex subset $\sV$ with $|\sV^\text{ext}|\leq1$ such that $N_{\Gg(k_0,k_1)[\sV]}\geq(|\sV^\text{ext}|-1)(\lambda+1)/\lambda$.
Because (by convergence) $N_{\Gg_1(k)[\sV]}<(|\sV^\text{ext}|-1)(\lambda+1)/\lambda$ we need $N_{\Gg(k_0,k_1)[\sV]}>N_{\Gg_1(k)[\sV]}$. The only extra edge with positive weight is $xz$ which
implies that the vertices $x$ and $z$ need to be in $\sV$. By $|\sV^\text{ext}|\leq1$, the set $\sV$ has no external vertices 0, 1, $\infty$.
Hence
$$
N_{\Gg(k_0,k_1)[\sV]}=N_{\Gg_1(k)[\sV\backslash\{z\}]}<-\frac{\lambda+1}{\lambda}
$$
by (\ref{eqconv}) for $\Gg_1(k)$ and $\sV\backslash\{z\}$. This contradicts the divergence of the $P_{G(k_0,k_1)}$.

Let $G=G_1(k)\backslash\{z\}$. The internal vertex $x$ in $G_1(k)$ becomes external in $G$.
With (\ref{eqnumer}) and Lemma \ref{lemk} we get from (\ref{eqfGdef}),
$$
f_{G_1(k)}(z)=\sum_{k_0+k_1+k_2=k}\frac{k!}{k_0!k_1!k_2!}(z_1\cdot z_2)^{k_1}\|z_2\|^{2k_2}\int_{\RR^D}\|x\|^{2k_0}(-2x\cdot z_1)^{k_1}A_G(0,z_1,x)\frac{\dd^Dx}{\pi^{D/2}}.
$$
We substitute (\ref{eqscalar}) into the integrand and expand the term $(\|x-z_1\|^2-\|x\|^2-1)^{k_1}$. This amounts to partitioning $k_1$ into three non-negative integers
$k_1'$, $k_3$, $k_4$, corresponding to powers of $\|x-z_1\|^2$, $\|x\|^2$, and 1, respectively. Powers of $\|x-z_1\|^2$ and $\|x\|^2$ are edges of negative weights in (\ref{eqfGdef}).
We rename $k_1'$ back to $k_1$ and get
$$
f_{G_1(k)}(z)=\sum_{k_0+k_1+k_2+k_3+k_4=k}\frac{(-1)^{k_3+k_4}k!}{k_0!k_1!k_2!k_3!k_4!}(z_1\cdot z_2)^{k_1+k_3+k_4}\|z_2\|^{2k_2}P_{G(k_0+k_3,k_1)}.
$$
Now, we shift $k_0\mapsto k_0-k_3$ and perform the binomial sum over $k_3$ to obtain
$$
f_{G_1(k)}(z)=\sum_{k_0+k_1+k_2+k_4=k}\frac{(-1)^{k_4}k!}{k_0!k_1!k_2!k_4!}(1-z_1\cdot z_2)^{k_0}(z_1\cdot z_2)^{k_1+k_4}\|z_2\|^{2k_2}P_{G(k_0,k_1)}.
$$
Likewise, we shift $k_2\mapsto k_2-k_4$ and perform the binomial sum over $k_4$ yielding
$$
f_{G_1(k)}(z)=\sum_{k_0+k_1+k_2=k}\frac{k!}{k_0!k_1!k_2!}(1-z_1\cdot z_2)^{k_0}(z_1\cdot z_2)^{k_1}(\|z_2\|^2-z_1\cdot z_2)^{k_2}P_{G(k_0,k_1)}.
$$
From (\ref{eqzdef}) we get $\|z_2\|^2=z\zz$ and $z_1\cdot z_2=(z+\zz)/2$. This gives (\ref{eqappneg}).
\end{proof}

\section{Constructible graphical functions}\label{sectconstructible}
A graphical function in even dimensions $\geq4$ can always be computed (subject to mild constraints from time and memory consumption) if it is constructible in the sense of the following definition:

\begin{defn}[An extension of Section 3.7 in \cite{gf}]\label{defconstructible}
We consider the following set of commuting reduction steps for (internally) completed graphical functions in even dimensions $\geq4$:
\begin{enumerate}
\item[(R1)] Deletion of edges between external vertices, Section \ref{sectextedge},
\item[(R2)] Product factorization, Section \ref{sectprod},
\item[(R3)] Period factorization, Section \ref{sectfactor},
\item[(R4)] Contraction of single edges with weights in $\frac{1}{\lambda}\ZZ$ attached to external vertices, Sections \ref{sectperm} and \ref{sectappedge}.
\end{enumerate}
A graphical function is {\em irreducible} if it cannot be reduced by any of these steps.
Maximum use of the reduction steps maps a graph $G$ to a set of Feynman periods and irreducible graphical functions.
The {\em kernel} of $G$ is the unique \cite{Census} representation of this set in terms of completed Feynman periods and internally completed graphical functions with no edges between
external vertices (and no distinction of external vertices).

We inductively define constructible completed graphical functions and periods by
\begin{enumerate}
\item A completed graphical function is constructible if its kernel consists of the empty graphical function and constructible periods.
\item A completed Feynman period with three vertices is constructible. Its value is $1$.
\item A completed Feynman period $P_\Gg$ with $|\sV_\Gg|\geq4$ vertices is constructible if there exist four vertices $a,b,c,d\in\sV_\Gg$
such that the graphical function $\Gg|_{abcd=01z\infty}$ is constructible.
\end{enumerate}
Uncompleted graphical functions and periods are constructible if their completions are constructible.
\end{defn}

\begin{ex}
The graphical function $G_a$ in Figure \ref{fig:gfcomp} is constructible, see Examples \ref{exf31}, \ref{exf32}, \ref{exf33}.
\end{ex}

\begin{figure}
    \def\scale{1}
\begin{align*}
\begin{tikzpicture}[baseline={([yshift=-.7ex]0,0)}]
    \coordinate (vm) at  (0,0);
    \coordinate[label=above left:$z$] (vz) at  (-\scale, \scale);
    \coordinate[label=above right:$1$] (v1) at  ( \scale, \scale);
    \coordinate[label=below left:$\infty$] (voo) at (-\scale,-\scale);
    \coordinate[label=below right:$0$] (v0)  at (\scale, -\scale);
    \coordinate (v2)  at (-\scale, 0);
    \coordinate (v3)  at (\scale, 0);
    \coordinate (vm)  at (0, 0);
    \node (G) at (0,{-2*\scale}) {$\Gg_7$};
    \draw[preaction={draw, white, line width=5pt, -}] (voo) -- (v3);
    \draw[preaction={draw, white, line width=5pt, -}] (v0) -- (v2);
    \draw (voo) -- (v2);
    \draw (v0) -- (v3);
    \draw (vm) -- (v2);
    \draw (vm) -- (v3);
    \draw (vm) -- (vz);
    \draw (vm) -- (v1);
    \draw (v2) -- (vz);
    \draw (v3) -- (v1);
    \filldraw (v0) circle (1.3pt);
    \filldraw (v1) circle (1.3pt);
    \filldraw (vz) circle (1.3pt);
    \filldraw (voo) circle (1.3pt);
    \filldraw (v2) circle (1.3pt);
    \filldraw (v3) circle (1.3pt);
    \filldraw (vm) circle (1.3pt);
\end{tikzpicture}
&&
\begin{tikzpicture}[baseline={([yshift=-.7ex]0,0)}]
    \coordinate (vm) at  (0,0);
    \coordinate[label=above left:$z$] (vz) at  (-\scale, \scale);
    \coordinate[label=above right:$1$] (v1) at  ( \scale, \scale);
    \coordinate[label=below left:$\infty$] (voo) at (-\scale,-\scale);
    \coordinate[label=below right:$0$] (v0)  at (\scale, -\scale);
    \coordinate (v2)  at ({ \scale/3}, { \scale/3});
    \coordinate (v3)  at ({-\scale/3}, { \scale/3});
    \coordinate (v4)  at ({-\scale/3}, {-\scale/3});
    \coordinate (v5)  at ({ \scale/3}, {-\scale/3});
    \node (G) at (0,{-2*\scale}) {$\Gg_8$};
    \draw (vz) -- (v2);
    \draw (voo) -- (v3);
    \draw (v0) -- (v4);
    \draw (v1) -- (v5);
    \draw[preaction={draw, white, line width=3pt, -}] (vz) -- (v4);
    \draw[preaction={draw, white, line width=3pt, -}] (voo) -- (v5);
    \draw[preaction={draw, white, line width=3pt, -}] (v0) -- (v2);
    \draw[preaction={draw, white, line width=3pt, -}] (v1) -- (v3);
    \draw (vz) -- ($(vz)!.2!(v2)$);
    \draw (voo) -- ($(voo)!.2!(v3)$);
    \draw (v0) -- ($(v0)!.2!(v4)$);
    \draw (v1) -- ($(v1)!.2!(v5)$);
    \draw (v2) -- (v3) -- (v4) -- (v5) -- (v2);
    \filldraw (v0) circle (1.3pt);
    \filldraw (v1) circle (1.3pt);
    \filldraw (vz) circle (1.3pt);
    \filldraw (voo) circle (1.3pt);
    \filldraw (v2) circle (1.3pt);
    \filldraw (v3) circle (1.3pt);
    \filldraw (v4) circle (1.3pt);
    \filldraw (v5) circle (1.3pt);
\end{tikzpicture}
\end{align*}
\caption{Two irreducible internally completed graphical functions in four dimensions with all edges of weight $1$: While $f_{\Gg_7}(z)$ can be calculated by Gegenbauer factorization, the graphical function $f_{\Gg_8}(z)$ is conjecturally elliptic.}
\label{fig:irred7} 
\end{figure}

\begin{ex}
The internally completed graphical functions $\Gg_7$ (see Example \ref{G7ex}) and $\Gg_8$ in Figure \ref{fig:irred7} are irreducible.
An analysis with {\tt HyperInt} \cite{Panzer:HyperInt} suggests that $\Gg_8$ is elliptic in $D=4$.

In four dimensions (and in six dimensions, see Examples \ref{ex2int} and \ref{exgen}),
all completed graphical functions with $\leq7$ vertices are expressible in terms of GSVHs \cite{GSVH}. So, $\Gg_8$ is a minimal example of a four-dimensional graphical function
which conjecturally is not a GSVH.
\end{ex}

A key benefit of graphical functions (and their analogs at non-integer dimensions \cite{5loopphi3,numfunct,7loops})
is the reduction to kernels. At modest loop orders kernels are rare and have significantly less vertices than the original graph.

By Proposition \ref{propwtk1} we can replace any edge of weight $k/\lambda$, $k=1,\ldots,\lambda-1$ by a chain of $\lambda+1-k$ edges of weight 1.
So, we may alter a constructible graph such that the reduction uses appending of an edge (R4) only for weights in $\{1\}\cup\ZZ_{<0}/\lambda$.
In this case the weight of the graphical function or period is constrained by Theorem \ref{conthm}. All constructible periods are MZVs whereas all constructible graphical functions
are GSVHs in the alphabet 0, 1, $\zz$ (see Section 8.3 in \cite{GSVH}).

In the following theorem we consider the weight as a filtration where the weight of MZV constants is added to the weight of hyperlogarithms.
Alternatively one may choose to lift the notion of weight to the motivic setup \cite{Bcoact2}.
\begin{thm}\label{conthm}
Let $G$ be the graph of a constructible graphical function in even dimensions $\geq4$ whose reduction in Definition \ref{defconstructible} uses (R4) only
for edge-weights in $\{1\}\cup\frac1\lambda\ZZ_{<0}$. Then $f_G(z)$ is a GSVH of weight $\leq2\VGint$ in the letters $0, 1, \zz$.

Let $G$ be the uncompleted graph of a constructible period in even dimensions $\geq4$ whose reduction uses (R4) only for edge-weights in $\{1\}\cup\frac1\lambda\ZZ_{<0}$.
If $G$ has $|\sV_G|\geq3$ vertices, then $P_G$ is an MZV of weight $\leq2|\sV_G|-5$.
\end{thm}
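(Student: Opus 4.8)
The plan is to prove the two assertions simultaneously by induction following the inductive definition of constructibility (Definition \ref{defconstructible}), carrying along the underlying vertex counts and checking at each reduction step that the bounds $\mathrm{wt}(f_G)\le2\VGint$ and $\mathrm{wt}(P_G)\le2|\sV_G|-5$ are inherited. The argument rests on three quantitative facts about the elementary operations, which I would settle first. \emph{(i)} A single-valued integration $\intsv\dd z$ or $\intsv\dd\zz$ raises the GSVH-weight by exactly one, while differentiation does not raise it; hence, reading off the explicit solution of (\ref{eqdiff1}) in Theorem \ref{thm2} — the operator $\sI_n$ of (\ref{eqIndef}) contains one $\intsv\dd z$ and one $\intsv\dd\zz$, the operators $D_n,d_n,\ddd_n$ are built from derivatives and multiplications, and the inhomogeneous term $h$ of (\ref{eqhdef}) has weight $\le1$ — appending a weight-$1$ edge raises the weight by at most two while adding one internal vertex. \emph{(ii)} Integration over $\CC$ raises the weight by exactly one: writing the integrand as $g\,\dd z\wedge\dd\zz=\dd(\tilde g\,\dd\zz)$ with $\tilde g=\intsv g\,\dd z$ of weight $\le\mathrm{wt}(g)+1$ and invoking the residue theorem of \cite{gf} expresses $\int_\CC g\,\dd^2z$ through a coefficient in the single-valued expansion of $\tilde g$ at $\infty$, an MZV of weight $\le\mathrm{wt}(\tilde g)$. \emph{(iii)} The weight of a product of GSVHs is at most the sum of the weights, and rational functions of $z,\zz$ with poles only in $\{0,1\}$ are weight-$0$ GSVHs in the letters $0,1,\zz$; together with the closure of GSVHs under $\intsv$ and $\int_\CC$ from \cite{GSVH}, and the fact that none of the operations below introduces singularities outside $\{0,1,\infty\}$, this also yields the qualitative statements (GSVH, resp.\ MZV, in the letters $0,1,\zz$).

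I would dispatch the base cases quickly: the empty graphical function is $1$, of weight $0=2\cdot0$, and a completed period on three vertices equals $1$ while its de-completion has only two vertices, so that the hypothesis $|\sV_G|\ge3$ is vacuous there. For the inductive step on a constructible graphical function $\Gg$, apply one reduction step and invoke the induction hypothesis on the (smaller, still constructible) outputs. Step (R1) multiplies $f_\Gg$ by a weight-$0$ rational factor and leaves $\VGint$ unchanged; step (R2) gives $f_\Gg=f_{\Gg_1}f_{\Gg_2}$ with $\VGint=|\sV_{\Gg_1}^{\mathrm{int}}|+|\sV_{\Gg_2}^{\mathrm{int}}|$, so the bound is additive by \emph{(iii)}; step (R4) for a weight-$1$ edge is \emph{(i)} (passing from the contracted graph, with $\VGint-1$ internal vertices, back to $\Gg$ costs at most two in weight), and for a weight $-k/\lambda$ edge Theorem \ref{thmappneg} writes $f_\Gg$ as a polynomial times the constructible periods $P_{G(k_0,k_1)}$, whose uncompleted period graphs have $\VGint+2$ vertices and hence weight $\le2\VGint-1$. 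The delicate step is (R3): Proposition \ref{propfactor} gives $f_\Gg=P_{\Gg_1}f_{\Gg_2}$, where $\Gg_1$ is the completed period of the fully internal block $G_1$ with $n_1\ge1$ internal vertices beyond the three cut vertices $a,b,c$ and $|\sV_{\Gg_2}^{\mathrm{int}}|=\VGint-n_1$; once $a,b,c$ are made total-weight-zero in $G_1$ all weights in (\ref{eqpfcompl1}) vanish, so the completion-vertex of $G_1$ is isolated and is removed in the period completion, whence the de-completion of $\Gg_1$ has exactly $n_1+2$ vertices and $P_{\Gg_1}$ has weight $\le2(n_1+2)-5=2n_1-1$; therefore $\mathrm{wt}(f_\Gg)\le(2n_1-1)+2(\VGint-n_1)=2\VGint-1$.

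For the inductive step on a constructible period $P_\Gg$ with $N=|\sV_\Gg|\ge4$, Definition \ref{defconstructible}(3) supplies four vertices for which $\Gg|_{abcd=01z\infty}$ is a constructible graphical function with $N-4$ internal vertices, hence of weight $\le2(N-4)$ by the graphical-function half of the induction; formula (\ref{PGint}) writes $P_\Gg$ as a constant times $\int_\CC(z-\zz)^{2\lambda}f_{\Gg|_{abcd=01z\infty}}(z)\,\dd^2z/2\pi$, the polynomial prefactor has weight $0$ so the integrand has weight $\le2(N-4)$, and by \emph{(ii)} $P_\Gg$ is an MZV of weight $\le2(N-4)+1=2(N-1)-5$, which is the claim since the de-completion of $\Gg$ has $N-1$ vertices.

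The hard part, I expect, is not the combinatorics of the reduction tree but the two analytic inputs. One must show that $\intsv$ and $\int_\CC$ each raise the weight by \emph{exactly} one — not by two — for otherwise the even bound $2\VGint$ would not emerge; I would deduce this from the structure theory of GSVHs in \cite{GSVH} together with the residue theorem of \cite{gf}. Intertwined with this is the bookkeeping at the period factorization (R3): one must verify that the period block accounts for only $n_1$ extra internal vertices — because its completion-vertex is isolated — so that the odd period bound $2n_1-1$ fits inside the even graphical-function budget. It is precisely this parity that forces the accounting to be tight and that explains why the graphical-function bound is attained only for graphs assembled by appendings alone, such as the wheels $W\!S_{n,D}$ of Example \ref{exWSn}.
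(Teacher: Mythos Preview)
Your overall strategy — mutual induction along the reduction tree, checking the weight bound step by step — is exactly the paper's approach, and your treatment of (R1), (R2), (R3), and the period integral step is correct. There are, however, two genuine gaps in your handling of (R4).

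\textbf{The weight-$1$ case.} You claim that $\sI_n$ ``contains one $\intsv\dd z$ and one $\intsv\dd\zz$''. This is false: from (\ref{eqIndef}), $\sI_n=I^-_n\intsv\dd\zz\,\partial_zI^+_n$, and \emph{each} of $I^\pm_n$ carries one $\intsv\dd z$, so $\sI_n$ has \emph{three} single-valued integrations and one $\partial_z$. You cannot conclude that the net weight change is $+2$ by saying ``differentiation does not raise the weight'': when a GSVH is multiplied by a rational function, $\partial_z$ acting on the rational prefactor leaves the hyperlogarithmic weight unchanged, so $\partial_z I^+_n$ (see (\ref{eqpf2prop2pf})) still contains one $\intsv\dd z$ and naively $\sI_n$ raises the weight by three. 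This is precisely why the paper introduces the alternative operator $\sI_n'$ of Lemma~\ref{lem2}, which is written with exactly one $\intsv\dd z$ and one $\intsv\dd\zz$; the paper's proof explicitly invokes $\sI_n'$ (not $\sI_n$) at this step. Without Lemma~\ref{lem2} your bound $\mathrm{wt}(f_{G_1})\le\mathrm{wt}(f_G)+2$ is unjustified.

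\textbf{The negative-weight case.} You assert that the periods $P_{G(k_0,k_1)}$ from Theorem~\ref{thmappneg} are ``constructible periods'' to which the induction hypothesis applies. But the graphs $G(k_0,k_1)$ have new edges $x0$, $x1$ of weights $-k_0/\lambda$, $-k_1/\lambda$, and after combining with any pre-existing edges the internal vertex $x$ may sit on edges of weight $k/\lambda$ with $0<k<\lambda$ — i.e.\ outside $\{1\}\cup\frac1\lambda\ZZ_{<0}$. The hypothesis of the theorem therefore need not hold for $G(k_0,k_1)$, and you cannot directly invoke the inductive period bound. The paper circumvents this by observing that the graphical function obtained by setting $x=z$ in $G(k_0,k_1)$ \emph{is} constructible within the allowed weights (it is the graph reached by the very reduction step you are analysing), and in that graphical function the offending edges are external $0z$- and $1z$-edges which (R1) removes at zero weight cost; the period bound then follows from the graphical-function bound via (\ref{PGint}). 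This detour is essential and missing from your sketch.
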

\begin{proof}
We first prove that the result for periods follows from the statement about graphical functions:
$P_G$ is the integral (\ref{PGint}) of a constructible graphical function with $|\sV_G|-3$ internal vertices. After a single-valued integration of the integrand the integral is given by residues,
see Theorem 2.28 in \cite{gf}. The graphical function is a GSVH of weight $\leq2|\sV_G|-6$ in 0, 1, $\zz$. Multiplication by $(z-\zz)^{2\lambda}$ and integration preserves the alphabet
0, 1, $\zz$ while it brings the weight up to $\leq2|\sV_G|-5$. It is proved in Section 8.3 of \cite{GSVH} that GSVHs in 0, 1, $\zz$ evaluate to MZVs.

Now we prove the result for graphical functions by induction over the number of internal vertices. If $\VGint=0$, then $f_G(z)$ is a rational GSVH in the letters 0 and 1. It has weight zero.

For general $\VGint$ we have a reduction chain by cases (R1) to (R4) in Definition \ref{defconstructible} which leads to a graphical function with fewer internal vertices.
Reduction step (R1) is an endomorphism on GSVHs in the letters 0, 1, $\zz$ which does not change the weight.
The alphabet 0, 1, $\zz$ is stable under product factorization (R2) while the weight is additive. The number of internal vertices also adds, so that the condition on the weight stays intact.
In case of a factorization (R3) by a completed period $P_\Gg$ with $|\sV_\Gg|\geq4$ vertices the graphical function loses $|\sV_\Gg|-3$ internal vertices.
By induction the graphical function has weight $\leq 2(\VGint-|\sV_\Gg|+3)+2(|\sV_\Gg|-1)-5=2\VGint-1$ (graphical functions with period factors have weight drop).

The crucial step is appending an edge $e$ in (R4). If $e$ has weight $\nu_e=1$, then the algorithm in Section \ref{sectappalg} may be used with the integral operator
$\sI_n'$ in Lemma \ref{lem2} instead of $\sI_n$. In this case appending $e$ amounts to a double integration within the alphabet 0, 1, $\zz$. The weight is increased by two \cite{GSVH} while
one new internal vertex is created.

If $\nu_e<0$, we use Theorem \ref{thmappneg} to express the graphical function as a polynomial in $z$ and $\zz$ whose coefficients are given by periods of the graphs $G(k_0,k_1)$.
The graphs $G(k_0,k_1)$ may have edges of weights $k/\lambda$, $k=1,\ldots,\lambda-1$ between $x$ and 0 or 1. So, we cannot directly use induction. However, by (R4) we know that
the graphical function we obtain by setting $x=z$ in $G(k_0,k_1)$ is constructible. In this graphical function the edges between $x$ and 0 or 1
become external edges between $z$ and 0 or 1. Removal of these edges by (R1) does not affect the weight or the alphabet of the result. Hence $P_{G(k_0,k_1)}$ has weight
$\leq2|\sV_{G(k_0,k_1)}|-5$ by induction. Because $\VGint=|\sV_{G(k_0,k_1)}|-2$ the original graph has weight $\leq2\VGint-1$ (it has weight drop). The graphical function is a polynomial
in $z$, $\zz$ and hence trivially a GSVH in any alphabet.
\end{proof}

The resolution of edges of weight $k/\lambda$, $k=1,\ldots,\lambda-1$ into chains of weight 1 edges increases the number of internal vertices. Nonetheless, we expect that any reduction (R4)
increases the weight of the graphical function at most by two.
\begin{con}\label{concon}
Theorem \ref{conthm} holds for all constructible graphs and periods in even dimensions $\geq4$.
\end{con}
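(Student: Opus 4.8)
The plan is as follows. The claim is that the weight bounds of Theorem \ref{conthm} survive when step (R4) of Definition \ref{defconstructible} is allowed for all admissible edge-weights, i.e.\ also for weights $k/\lambda$ with $1\le k\le\lambda-1$. The whole inductive scaffolding of the proof of Theorem \ref{conthm} is reused unchanged: one inducts on $\VGint$ through the reduction chain, and the period statement is deduced from the graphical-function statement by integrating \eqref{PGint} (a single-valued integration plus residues turn a graphical function of weight $\le 2(|\sV_G|-3)$ into an MZV of weight $\le 2|\sV_G|-5$). Hence it suffices to treat one (R4) step appending an edge $e=xz$ of weight $\nu_e=k/\lambda$, $1\le k\le\lambda-1$, \emph{directly}, and to show that this operation raises the GSVH weight (in the alphabet $0,1,\zz$) by at most $2$, matching the single new internal vertex $x$. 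The reason such weights are excluded in Theorem \ref{conthm} is precisely that they are routed through Proposition \ref{propwtk1}, which replaces $e$ by a string of $\lambda+1-k$ unit-weight edges and thereby inserts $\lambda-k$ spurious internal vertices; iterating the weight-$1$ append (cost $2$ each) then overshoots the target by $2(\lambda-k)$.

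To get a direct weight-$k/\lambda$ append I would start from a higher-order Laplace equation. Applying $\Delta_{z_2}$ repeatedly to $A_{G_1}$ — using $\Delta_u\|u-v\|^{-2a}=4a(a-\lambda)\|u-v\|^{-2(a+1)}$ for $a=k,\dots,\lambda-1$ and finishing with \eqref{eqLprop} — gives
\[\Delta_{z_2}^{\lambda+1-k}A_{G_1}(z_0,z_1,z_2)=c_k\,A_G(z_0,z_1,z_2),\qquad c_k\in\QQ^\times,\]
where $G=G_1\setminus\{z,e\}$ with $x$ promoted to the external vertex $z_2$. Stripping the prefactor \eqref{eqfA1} and passing to $\CC$ turns this into an equation $\mathcal L^{(\lambda,k)}\widetilde f_{G_1}=c_k\widetilde f_G$ for suitably $(z-\zz)$-dressed versions $\widetilde f_{G_1},\widetilde f_G$ of the two graphical functions, with $\mathcal L^{(\lambda,k)}$ an explicit $(z-\zz)$-dressed product of effective Laplacians of order $2(\lambda+1-k)$. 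One then needs: (a) the general solution in $\sS\sV_{\{0,1,\infty\}}$, obtained by iterating Theorem \ref{thm2a} and Theorem \ref{thm2} — the homogeneous part is a finite span of $(z-\zz)$-dressed (anti)holomorphic functions with logarithms as in \eqref{eqhdef}, all of weight $0$; and (b) uniqueness within the functions having property (G3), which I expect to follow from the argument of Section \ref{sectpf3}, since for a single weight-$k/\lambda$ edge at $z$ the exponent bounds \eqref{Ma}, \eqref{Minfty} read $M_s\ge 1-k$ and $M_\infty\le-1$, still rigid enough to exclude every homogeneous solution.

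The hard part, where the difficulty of the conjecture is concentrated, is step (c): showing that the particular solution singled out by (a) and (b) can be computed using only \emph{two} single-valued integrations, all remaining operations (multiplication by rational functions of $z,\zz$, differentiation, and the rational constants $c_k$, $c^{(\lambda)}_{\nu_1,\nu_2}$) being weight-preserving. Heuristically this must be true — appending $e$ and creating one internal vertex is a single integration over $\RR^D$, and under the angular reduction of Section \ref{sectrad} the $D-2$ trivial angular directions integrate out, leaving one integration over $\CC$ — but the order-$2(\lambda+1-k)$ operator $\mathcal L^{(\lambda,k)}$ naively demands $2(\lambda+1-k)$ integrations. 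The route I would take is to invert $\mathcal L^{(\lambda,k)}$ one effective Laplacian at a time and show that the data produced at the $j$-th stage equals, up to a rational function of $z,\zz$, the graphical function of the graph obtained from $G$ by inserting a chain of $j$ unit-weight edges; by Lemma \ref{lemconv} this differs from $f_{G_1}$ only by the rational (weight-$0$) factor $c^{(\lambda)}_{\cdot,\cdot}$, so all but one of the would-be integrations is forced to return weight-$0$ data and only the final Laplace inversion — the one that actually attaches $e$ to the body of $G$ — carries the weight increase. An alternative would be to skip the differential equation entirely and extract a single-valued kernel $K_k$ with $f_{G_1}(z)=c_k\int_\CC K_k(z,\zz,w,\bar w)f_G(w)\,\dd^2w$ directly from the fact that $A_{G_1}$ is the convolution in $z_2$ of $\|\,\cdot\,\|^{-2k}$ with $A_G$ together with the spherical-coordinate reduction, and then to check that convolution against $K_k$ sends GSVHs in $0,1,\zz$ to GSVHs in $0,1,\zz$ with a weight increase of exactly $2$. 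With either route supplying a weight-$k/\lambda$ append of weight cost $2$, the graphical-function statement follows from the induction of Theorem \ref{conthm} and the period statement follows from it as there.
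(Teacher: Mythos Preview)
The statement you are trying to prove is labeled a \emph{Conjecture} in the paper; there is no proof in the paper to compare against. What you have written is not a proof either, and you are candid about this: you flag step (c) as ``the hard part, where the difficulty of the conjecture is concentrated'' and then offer two heuristics rather than an argument.

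Your structural analysis is correct. The inductive scheme of Theorem~\ref{conthm} carries over verbatim for (R1)--(R3) and for (R4) with $\nu_e\in\{1\}\cup\frac{1}{\lambda}\ZZ_{<0}$; the only missing case is (R4) with $\nu_e=k/\lambda$, $1\le k\le\lambda-1$, and the obstacle is exactly that Proposition~\ref{propwtk1} trades one such edge for $\lambda+1-k$ unit-weight edges, overshooting the weight budget by $2(\lambda-k)$. Your higher-order Laplace equation $\Delta_{z_2}^{\lambda+1-k}A_{G_1}=c_kA_G$ is also correct, as is the observation that the naive inversion of the order-$2(\lambda+1-k)$ operator $\mathcal L^{(\lambda,k)}$ costs $2(\lambda+1-k)$ integrations.

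The gap is in your proposed mechanism for collapsing this cost to $2$. You write that the $j$-th intermediate stage ``equals, up to a rational function of $z,\zz$, the graphical function of the graph obtained from $G$ by inserting a chain of $j$ unit-weight edges'', and that therefore ``all but one of the would-be integrations is forced to return weight-$0$ data''. But the intermediate graph after $j$ appends has $j$ more internal vertices than $G$, and by Lemma~\ref{lemconv} its graphical function equals (up to a rational constant) that of $G$ with a single edge of weight $(\lambda+1-j)/\lambda$ appended---a graph with \emph{one} more internal vertex than $G$. So what you are asserting is that each of these intermediate graphical functions already has weight $\le 2(|\sV_G^{\mathrm{int}}|+1)$, i.e.\ that the very weight drop you are trying to prove has already occurred at every stage. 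Nothing in Lemma~\ref{lemconv} or in the append algorithm forces this; the convolution identity relates the functions by a scalar, but a scalar does not lower GSVH weight. Your argument is circular: it assumes the conjecture for the intermediate weights in order to conclude it for the target weight. The alternative kernel approach has the same defect---you would need to exhibit $K_k$ explicitly and verify the weight-$2$ property, which is again the content of the conjecture.
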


\section{Gegenbauer factorization}\label{sectgeg}
Let $\Gg$ be an (internally) completed graphical function with a three-vertex split $a,b,c$ into $G_1$ and $G_2$. We include the split vertices together with their corresponding edges
to $G_1$ or $G_2$, respectively.
In Section \ref{sectfactor} we saw that $f_\Gg(z)$ has a period factor if $G_1\backslash\{a,b,c\}$ or $G_2\backslash\{a,b,c\}$ have no external vertices.
If on the other hand one of these graphs has two external vertices (say $z,\infty$), factorization involves (after the removal of $\infty$ by conformal symmetry)
the Feynman integrals of $G_1$ or $G_2$ with four external vertices $a,b,c,z$. In general, Feynman integrals with four external vertices cannot be expressed in terms of two complex variables:
The four vectors span a three-dimensional space.

A special case arises if both $G_1\backslash\{a,b,c\}$ and $G_2\backslash\{a,b,c\}$ have exactly one external vertex. By permutation symmetry (Section \ref{sectperm})
the labels of the two external vertices are insignificant. We assume that 1 is in $G_1\backslash\{a,b,c\}$ and $z$ in $G_2\backslash\{a,b,c\}$. The other two external labels
0 and $\infty$ must be in the split vertices $a,b,c$. In this case there exists a factorization formula which is based on a convolution product in $\CC$, see
Theorem \ref{thm4}. This factorization can be depicted diagrammatically as 
\begin{align}
\label{eqdiaggegenbauer}
\begin{tikzpicture}[baseline={([yshift=-1.3ex]current bounding box.center)}]
    \coordinate[label=above:$0$] (va) at (0,2);
    \coordinate[label=above:$x$] (vb) at (0,1);
    \coordinate[label=above:$\infty$] (vc) at (0,0);
    \filldraw (va) circle (1.3pt);
    \filldraw (vb) circle (1.3pt);
    \filldraw (vc) circle (1.3pt);
    \draw[fill=black!20] (va) arc (90:270:2 and 1) arc (270:90:.5) arc (270:90:.5);
    \draw[fill=black!20] (va) arc (90:-90:2 and 1) arc (-90:90:.5) arc (-90:90:.5);
    \node (G1) at (-1,1) {$\overline{G}_1$};
    \node (G2) at (+1,1) {$\overline{G}_2$};
    \node (G) at (-2,0) {$\overline{G}$};
    \coordinate[label=left:$1$] (v1)  at ([shift={(vb)}]180:2 and 1);
    \coordinate[label=right:$z$] (vz)  at ([shift={(vb)}]0:2 and 1);
    \filldraw (v1) circle (1.3pt);
    \filldraw (vz) circle (1.3pt);
\end{tikzpicture}
\quad
&\rightarrow
\quad
\begin{tikzpicture}[baseline={([yshift=-1.3ex]current bounding box.center)}]
    \coordinate[label=above:$0$] (va) at (0,2);
    \coordinate[label=above:$z$] (vb) at (0,1);
    \coordinate[label=above:$\infty$] (vc) at (0,0);
    \coordinate[label=above:$0$] (wa) at (1,2);
    \coordinate[label=above:$1$] (wb) at (1,1);
    \coordinate[label=above:$\infty$] (wc) at (1,0);
    \coordinate (wa) at (1,2);
    \coordinate (wb) at (1,1);
    \coordinate (wc) at (1,0);
    \filldraw (va) circle (1.3pt);
    \filldraw (vb) circle (1.3pt);
    \filldraw (vc) circle (1.3pt);
    \filldraw (wa) circle (1.3pt);
    \filldraw (wb) circle (1.3pt);
    \filldraw (wc) circle (1.3pt);
    \draw[fill=black!20] (va) arc (90:270:2 and 1) arc (270:90:.5) arc (270:90:.5);
    \draw[fill=black!20] (wa) arc (90:-90:2 and 1) arc (-90:90:.5) arc (-90:90:.5);
    \node (G1) at (-1,1) {$\overline{G}_1$};
    \node (G2) at (+2,1) {$\overline{G}_2$};
    \coordinate[label=left:$1$] (v1)  at ([shift={(vb)}]180:2 and 1);
    \coordinate[label=right:$z$] (vz)  at ([shift={(wb)}]0:2 and 1);
    \filldraw (v1) circle (1.3pt);
    \filldraw (vz) circle (1.3pt);
    \node (t) at (.5,1) {$\star$};
    \node (G) at (-2,0) {$\phantom{\overline{G}}$};
\end{tikzpicture}~~,
\end{align}
where the $\star$ indicates the convolution of the graphical functions $f_{\Gg_1}(z)$ and $f_{\Gg_2}(z)$.

The factorization is an adaption of the Gegenbauer technique to graphical functions \cite{geg}.

\begin{defn}\label{defgegsplit}
Assume the edges of the (internally) completed graphical function $\Gg$ split into $\Gg_1$ and $\Gg_2$ with three common vertices $0,x,\infty$ where $x\in\sVGint$.
If $\sV_{\Gg_1}\ni 1\notin \sV_{\Gg_2}$ and $\sV_{\Gg_1}\not\ni z\in \sV_{\Gg_2}$, then the pair $\Gg_1$ with $x=z$ and $\Gg_2$ with $x=1$ is a Gegenbauer split of $\Gg$.
\end{defn}

Note that edges between 0, $x$, $\infty$ can be either moved to $\Gg_1$ or $\Gg_2$ which are both internally completed graphical functions.
\begin{ex}[Example 3.33 in \cite{gf}]\label{G7ex}
The graph $\Gg_7$ in Figure \ref{fig:irred7} has a Gegenbauer split along 0, $\infty$ and the central internal vertex.
In four dimensions the graphical function $f_{\Gg_7}(z)$ can be calculated by the subsequent theorem. The result is a single-valued multiple polylogarithm of weight six, divided by $(z-\zz)(1-z\zz)$ \cite{Shlog}.
\end{ex}

\begin{thm}[The four-dimensional case is the unproved Remark 3.32 in \cite{gf}. The general case assumes Conjecture \ref{congegex}]\label{thm4}
For every graphical function $f_G(z)$ in even $D=2\lambda+2\geq4$ dimensions there exists a unique anti-symmetric (under $z\leftrightarrow\zz$) function ${}^2\!f_G(z)$ with
\begin{equation}\label{eqthm4int}
\int_0^{2\pi}{}^2\!f_G(Z\ee^{\ii\phi})\sin k\phi\,\dd\phi=0\quad\text{for all $k=1,2,\ldots,\lambda-1$ and all $1\neq Z\in\RR_+$}
\end{equation}
such that
\begin{equation}\label{eq2av}
f_G(z)=\Big[\frac{1}{z-\zz}(z\partial_z-\zz\partial_\zz)\Big]^{\lambda-1}\,\frac{{}^2\!f_G(z)}{z-\zz}.
\end{equation}
For an (internally) completed graph $\Gg$ we define ${}^2\!f_\Gg(z)={}^2\!f_{\Gg\backslash\{\infty\}}(z)$.

If $\Gg$ has a Gegenbauer split into $\Gg_1$ and $\Gg_2$, then
\begin{equation}\label{geg2}
(z\partial_z-\zz\partial_\zz)\;{}^2\!f_\Gg(z)=\int_\CC\;{}^2\!f_{\Gg_1}(x)\;{}^2\!f_{\Gg_2}\Big(\frac{z}{x}\Big)\;(x\xx)^{\lambda(1-N_{\Gg_2\backslash\{\infty\}})}\frac{\dd^2x}{\pi},
\end{equation}
where $\dd^2x=\dd\Re(x)\wedge\dd\Im(x)$. Equation (\ref{geg2}) uniquely determines ${}^2\!f_\Gg(z)$.
\end{thm}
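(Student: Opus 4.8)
The plan is to prove the three assertions separately: existence and uniqueness of ${}^2\!f_G$, the convolution identity \eqref{geg2}, and the fact that \eqref{geg2} determines ${}^2\!f_\Gg$.

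For the first assertion I would work in polar coordinates $z=Z\ee^{\ii\phi}$, where $z-\zz=2\ii Z\sin\phi$ and $z\partial_z-\zz\partial_\zz=-\ii\partial_\phi$, and record the elementary identity
\begin{equation*}
\Big[\tfrac{1}{z-\zz}(z\partial_z-\zz\partial_\zz)\Big]^{\lambda-1}\frac{\sin(m\phi)}{z-\zz}=\frac{(\lambda-1)!}{2\ii\,Z^\lambda}\,C^{(\lambda)}_{m-\lambda}(\cos\phi)\qquad\text{for }m\geq\lambda,
\end{equation*}
which follows from $\sin(m\phi)/\sin\phi=C^{(1)}_{m-1}(\cos\phi)$, from $\tfrac{1}{z-\zz}(z\partial_z-\zz\partial_\zz)=\tfrac{1}{2Z}\tfrac{\dd}{\dd\cos\phi}$ on functions of $\cos\phi$, and from $\tfrac{\dd}{\dd t}C^{(\mu)}_k=2\mu\,C^{(\mu+1)}_{k-1}$; the left-hand side vanishes for $1\leq m\leq\lambda-1$ because $C^{(1)}_{m-1}$ then has degree $\leq\lambda-2$. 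By (G2) of Theorem \ref{thm1} the restriction of $f_G$ to a circle $|z|=Z\neq0,1$ is real-analytic and, by (G1), even in $\phi$, hence has a convergent Gegenbauer expansion $f_G(Z\ee^{\ii\phi})=\sum_{n\geq0}f_{G,n}(Z)\,C^{(\lambda)}_n(\cos\phi)$ with coefficients decaying locally uniformly in $Z$ (this is \eqref{gfr}). I would then define
\begin{equation*}
{}^2\!f_G(Z\ee^{\ii\phi})=\frac{2\ii}{(\lambda-1)!}\sum_{n\geq0}Z^\lambda f_{G,n}(Z)\sin\big((n+\lambda)\phi\big);
\end{equation*}
the decay of the coefficients makes this a well-defined function, anti-symmetric under $z\leftrightarrow\zz$, to which the operator of \eqref{eq2av} applies term by term and reproduces $f_G$ by the displayed identity, while the absence of $\sin(k\phi)$-modes with $k<\lambda$ gives \eqref{eqthm4int}. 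For uniqueness, if $g$ is anti-symmetric, obeys \eqref{eqthm4int}, and lies in the kernel of $g\mapsto[\tfrac{1}{z-\zz}(z\partial_z-\zz\partial_\zz)]^{\lambda-1}\tfrac{g}{z-\zz}$, then writing $g=\sum_{k\geq1}a_k(Z)\sin(k\phi)$ the kernel condition forces $a_k\equiv0$ for $k\geq\lambda$ (by the same identity), and \eqref{eqthm4int} forces $a_k\equiv0$ for $1\leq k\leq\lambda-1$, so $g=0$.

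For \eqref{geg2} I would use $f_\Gg(z)=f_{\Gg\setminus\{\infty\}}(z)$ and isolate the common internal vertex $x$ of the Gegenbauer split. With $G_i=\Gg_i\setminus\{\infty\}$ the integral \eqref{eqAdef} factorizes at fixed $x$ over the two sides, giving $f_\Gg(z)=\int_{\RR^D}\frac{\dd^Dx}{\pi^{D/2}}\,A_{G_1}(0,z_1,x)\,A_{G_2}(0,x,z_2)$ with $z_0=0$, $z_1,z_2$ as in \eqref{eqzdef}. Writing $x=\rho\hat x$, $\rho=\|x\|$, the invariants \eqref{eqinvs} and the scaling relation \eqref{eqfA1} identify $A_{G_1}(0,z_1,x)=f_{G_1}(\rho\ee^{\ii\phi_1})$ with $\cos\phi_1=\hat x\cdot\hat z_1$, and $A_{G_2}(0,x,z_2)=\rho^{-2\lambda N_{G_2}}f_{G_2}\big((Z/\rho)\ee^{\ii\phi_2}\big)$ with $\cos\phi_2=\hat x\cdot\hat z_2$, $Z=|z|$, $N_{G_2}=N_{\Gg_2\setminus\{\infty\}}$. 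Expanding $f_{G_1}$ and $f_{G_2}$ into Gegenbauer polynomials in $\hat x\cdot\hat z_1$ and $\hat x\cdot\hat z_2$, passing to $\dd^Dx=\rho^{D-1}\dd\rho\,\dd\Omega(\hat x)$, interchanging the sums with the $x$-integral, and performing the angular integral over $\hat x\in S^{D-1}$ by the Gegenbauer addition theorem
\begin{equation*}
\int_{S^{D-1}}C^{(\lambda)}_n(\hat x\cdot u)\,C^{(\lambda)}_m(\hat x\cdot v)\,\dd\Omega(\hat x)=\delta_{nm}\,\frac{\lambda}{n+\lambda}\,(\mathrm{vol}\,S^{D-1})\,C^{(\lambda)}_n(u\cdot v),\qquad\lambda=\tfrac{D}{2}-1,
\end{equation*}
together with $\hat z_1\cdot\hat z_2=\cos(\arg z)$ and $\mathrm{vol}\,S^{D-1}/\pi^{D/2}=2/\lambda!$, yields a mode-wise radial convolution
\begin{equation*}
f_{\Gg,n}(Z)=\frac{2}{(n+\lambda)(\lambda-1)!}\int_0^\infty\rho^{\,2\lambda+1-2\lambda N_{G_2}}\,f_{G_1,n}(\rho)\,f_{G_2,n}(Z/\rho)\,\dd\rho.
\end{equation*}

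I would then rewrite this in terms of ${}^2\!f$ and deduce the last claim. By the first step ${}^2\!f_{\Gg_i}$ has $\sin((n+\lambda)\phi)$-coefficient $\tfrac{2\ii}{(\lambda-1)!}Z^\lambda f_{G_i,n}(Z)$, so substituting the polar expansions of ${}^2\!f_{\Gg_1}$ and ${}^2\!f_{\Gg_2}$ into $\int_\CC{}^2\!f_{\Gg_1}(x)\,{}^2\!f_{\Gg_2}(z/x)(x\xx)^{\lambda(1-N_{G_2})}\tfrac{\dd^2x}{\pi}$: the integral over $\arg x$ produces $-\delta_{nm}\cos((n+\lambda)\arg z)$, the integral over $|x|$ reproduces the radial convolution above with $2\lambda(1-N_{G_2})+1=2\lambda+1-2\lambda N_{G_2}$ as the power of $|x|$, and the switch from $\sin$ to $\cos$ together with the extra factor $n+\lambda$ is exactly what $z\partial_z-\zz\partial_\zz=-\ii\partial_\phi$ contributes on the left of \eqref{geg2}; the numerical prefactors match. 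This establishes \eqref{geg2}. For the uniqueness statement, $z\partial_z-\zz\partial_\zz=-\ii\partial_\phi$ is injective on anti-symmetric functions (its kernel consists of functions of $|z|$ only, which are symmetric, hence $0$ when anti-symmetric), so \eqref{geg2} determines $(z\partial_z-\zz\partial_\zz){}^2\!f_\Gg$ and therefore ${}^2\!f_\Gg$ itself. The one genuine obstacle in this plan is the interchange of the two Gegenbauer sums with the position-space $\dd^Dx$-integral that legitimizes the termwise use of the addition theorem: in $D=4$ it is supplied by Theorem \ref{D4posthm} (with Theorem \ref{thmgegex0}), whereas in general it is precisely Conjecture \ref{congegex}, which is why the theorem is conditional outside four dimensions; everything else is bookkeeping of the normalization constants ($\mathrm{vol}\,S^{D-1}$, $\pi^{D/2}$, the factors of $\ii$ from $z-\zz$, the constant in the first step) and convergence of the radial convolution, which follows from convergence of $f_\Gg$, $f_{\Gg_1}$, $f_{\Gg_2}$ via Proposition \ref{propconv}.
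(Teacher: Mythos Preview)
Your proposal is correct and follows essentially the same approach as the paper's proof in Section~\ref{sectpf4}: the key identity you record is precisely the second equation in~(\ref{proppf4eq}), your mode-by-mode construction of ${}^2\!f_G$ coincides with~(\ref{2fG}), your uniqueness argument via the sine expansion is equivalent to the paper's via Proposition~\ref{proppf4}, and your derivation of~(\ref{geg2}) through the factorization at the split vertex, Gegenbauer orthogonality, and the sine orthogonality~(\ref{sinort}) mirrors the paper's computation leading to~(\ref{lhs}). You also correctly isolate the interchange of sum and integral as the one step requiring Conjecture~\ref{congegex}.
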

The proof of Theorem \ref{thm4} in Section \ref{sectpf4} uses radial and angular graphical functions which are defined in Sections \ref{sectrad} and \ref{sectang}.
It relies on Conjecture \ref{congegex} which is only proved in the classical case of four dimensions and unit edge-weights, see Theorems \ref{D4posthm} and \ref{thmgegex0}.
One may consider ${}^2\!f_G(z)$ as the two-dimensional avatar of $f_G(z)$ (hence the superscript 2). 

\begin{ex}
In four dimensions condition (\ref{eqthm4int}) is empty and ${}^2\!f_G(z)=(z-\zz)f_G(z)\in\sS\sV_{\{0,1,\infty\}}$ (which is anti-symmetric by (G1) in Theorem \ref{thm1}).
\end{ex}

We have ${}^2\!f_G(z)/(z-\zz)\in\sS\sV_{\{0,1,\infty\}}$ and therefore ${}^2\!f_G(z)\in\sS\sV_{\{0,1,\infty\}}$ in all even dimensions $\geq4$.
We will not need this result, so we prove it under the assumption that ${}^2\!f_G(z)/(z-\zz)$ has a real-analytic point ($\neq1$) on the unit circle, see Remark \ref{remarkgeg} (1).

\begin{prop}\label{propgeg}
Assume ${}^2\!f_G(z)/(z-\zz)$ is real-analytic at some point on the unit circle. Then ${}^2\!f_G(z)/(z-\zz)\in\sS\sV_{\{0,1,\infty\}}$.
\end{prop}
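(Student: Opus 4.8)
The plan is to trade the defining relation (\ref{eq2av}) for a one–dimensional integration problem and then to fix the answer using the uniqueness clause of Theorem \ref{thm4}. Write $p=z\zz$ and $s=z+\zz$, and let $\partial_s$ denote the derivative in $s$ at fixed $p$. A chain–rule computation ($\partial_z=\zz\partial_p+\partial_s$, $\partial_\zz=z\partial_p+\partial_s$) gives $\frac{1}{z-\zz}(z\partial_z-\zz\partial_\zz)=\partial_s$. Since $f_G$ is symmetric and ${}^2\!f_G$ anti-symmetric by (G1) of Theorem \ref{thm1}, both $f_G$ and $g:={}^2\!f_G/(z-\zz)$ are symmetric, i.e.\ functions of $(s,p)$, and (\ref{eq2av}) becomes $\partial_s^{\lambda-1}g=f_G$ with $f_G\in\sS\sV_{\{0,1,\infty\}}$ (this is (G3) of Theorem \ref{thm1} for $D=4$, and Conjecture \ref{con1}, which is in force once Conjecture \ref{congegex} is assumed as in Theorem \ref{thm4}, in higher dimensions).

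The next step is to produce a single-valued particular solution. Along a circle $|z|=Z$ one has $\dd s=\tfrac12(z-\zz)(\dd z/z-\dd\zz/\zz)$, which lets one express an $s$-primitive of any symmetric $F\in\sS\sV_{\{0,1,\infty\}}$ through the single-valued integrations $\intsv\dd z$, $\intsv\dd\zz$ of Definition \ref{defintsv} together with multiplications by $z^{\pm1},\zz^{\pm1}$ and $z-\zz$; the outcome again lies in $\sS\sV_{\{0,1,\infty\}}$, with only mildly enlarged pole and logarithm orders. Iterating $\lambda-1$ times yields $g_0\in\sS\sV_{\{0,1,\infty\}}$ with $\partial_s^{\lambda-1}g_0=f_G$, the integration constants fixed in any convenient way.

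Now I identify $g$. On a slice $p=Z^2$ with $Z\neq1$, a symmetric function killed by $\partial_s^{\lambda-1}$ is a polynomial in $s$ of degree $\leq\lambda-2$, so set $k:=\sum_{j=0}^{\lambda-2}c_j(z\zz)(z+\zz)^j$ and impose (\ref{eqthm4int}) on $(z-\zz)(g_0+k)$; this becomes the linear system
\begin{equation*}
\sum_{j=0}^{\lambda-2}c_j(Z^2)\,(2Z)^{j}\int_0^{2\pi}\sin\phi\,\cos^{j}\phi\,\sin m\phi\,\dd\phi
=-\int_0^{2\pi}\sin\phi\,g_0(Z\ee^{\ii\phi})\,\sin m\phi\,\dd\phi,\qquad m=1,\dots,\lambda-1,
\end{equation*}
whose matrix is triangular with non-zero diagonal (the integral on the left vanishes for $m\geq j+2$ and equals $2^{1-j}\pi\neq0$ for $m=j+1$). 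It therefore determines the $c_j$; since $g_0\in\sS\sV_{\{0,1,\infty\}}$, the right-hand sides are real-analytic in $Z$ for $Z\neq1$ and have log-Laurent expansions as $Z\to0$ and $Z\to\infty$, so each $c_j$ is real-analytic on $\RR_{>0}\setminus\{1\}$ with log-Laurent behaviour at $0$ and $\infty$. The function $\tilde g:=g_0+k$ is symmetric, satisfies $\partial_s^{\lambda-1}\tilde g=f_G$, and $(z-\zz)\tilde g$ is anti-symmetric and obeys (\ref{eqthm4int}), so by the uniqueness statement of Theorem \ref{thm4} one has $g=\tilde g$ on $\{|z|\neq1\}$. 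Here the hypothesis enters: $g$ is real-analytic at the given point $z_\ast$ with $|z_\ast|=1$, $z_\ast\neq1$, and so is $g_0$ (it lies in $\sS\sV_{\{0,1,\infty\}}$), hence $k=g-g_0$ is real-analytic at $z_\ast$; reading off the coefficients of the polynomial $k$ in $s=z+\zz$ shows each $c_j$ is real-analytic at the argument $1$ as well, thus on all of $\RR_{>0}$. Consequently $c_j(z\zz)\in\sS\sV_{\{0,1,\infty\}}$ — its singularities lie only over $z\zz\in\{0,\infty\}$ and it is regular on the entire circle $|z|=1$, in particular at $z=1$ — so $k\in\sS\sV_{\{0,1,\infty\}}$, and $g=g_0+k\in\sS\sV_{\{0,1,\infty\}}$, which also extends the identity $g=g_0+k$ across $|z|=1$.

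The technical heart of the argument, and the step I expect to be the main obstacle, is the construction of the single-valued $s$-primitive: one must verify that $s$-antidifferentiation really preserves $\sS\sV_{\{0,1,\infty\}}$ — that the $\intsv$-formula above produces a single-valued function with genuine log-Laurent expansions at $0,1,\infty$ — and keep careful track of the pole and logarithm orders through the $\lambda-1$ iterations. The extra hypothesis, on the other hand, is used only at the very end: without it the functions $c_j(z\zz)$ forced by (\ref{eqthm4int}) could be singular on the unit circle, which passes through the singular point $z=1$ where the angular moments of $g_0$ may diverge, and then $g$ would fail to lie in $\sS\sV_{\{0,1,\infty\}}$; real-analyticity of $g$ at one point of that circle is exactly what rules this out.
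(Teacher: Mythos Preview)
Your overall strategy matches the paper's: rewrite (\ref{eq2av}) as $\partial_s^{\lambda-1}g=f_G$ with $g={}^2\!f_G/(z-\zz)$, construct a ``good'' particular solution, note that the kernel consists of polynomials in $s=z+\zz$ of degree $\leq\lambda-2$ with $z\zz$-dependent coefficients (Proposition~\ref{proppf4}), and use the real-analyticity hypothesis at one point of the unit circle to force those coefficients to be regular at $z\zz=1$. The place where you use the hypothesis, and your reasoning there, are essentially the same as in the paper.

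The genuine gap is exactly the step you flag yourself: the construction of a particular solution $g_0\in\sS\sV_{\{0,1,\infty\}}$. Your proposed mechanism does not work as written. The identity $\dd s=\tfrac12(z-\zz)(\dd z/z-\dd\zz/\zz)$ holds only along curves of constant $p=z\zz$, whereas the operators $\intsv\dd z$ and $\intsv\dd\zz$ of Definition~\ref{defintsv} are primitives for $\partial_z$ at fixed $\zz$ and $\partial_\zz$ at fixed $z$. There is no obvious way to combine these into an inverse of $\partial_s$ (equivalently of $z\partial_z-\zz\partial_\zz$); Remark~\ref{remarkgeg}(3)--(4) in the paper points out that inverting this operator is naturally done in the polar coordinates $(Z,\hat z)$ and that the passage back to $(z,\zz)$ involves square roots, so staying inside $\sS\sV_{\{0,1,\infty\}}$ is genuinely delicate.

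The paper avoids this entirely by never claiming a global $g_0\in\sS\sV_{\{0,1,\infty\}}$. It first observes from the explicit formula (\ref{2fG}) that ${}^2\!f_G(z)/(z-\zz)$ already has the required analyticity off the unit circle and the right log-Laurent behaviour at $0$ and $\infty$; the only thing left to check is the unit circle. It then fixes a point $a$ on the circle and constructs a \emph{local} $D_z$-primitive of a symmetric $\sS\sV_{\{0,1,\infty\}}$-function near $z=a$: for $a\neq1$ by expanding $(z-a)^m(\zz-\aaa)^\mm$, multiplying by $z-\zz$ and integrating in $\phi$; for $a=1$ by an integration-by-parts trick exploiting $D_zF((z-1)(\zz-1))=-F'((z-1)(\zz-1))$ together with (\ref{Dzsym}), which shows that $D_z$ lowers the total degree of symmetric polynomials. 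Iterating $\lambda-1$ times gives a locally good solution, and the kernel correction $h(z)=\sum_{k}h_k(z\zz)(z^k+\zz^k)$ is then handled with the hypothesis exactly as you do. If you want to repair your argument, the cleanest route is to drop the global construction via $\intsv$ and instead carry out this local analysis near the unit circle; the rest of your proof then goes through unchanged.
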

The proof of Proposition \ref{propgeg} is in Section \ref{sectpf4}.

\begin{remark}\label{remarkgeg}\mbox{}
\begin{enumerate}
\item One can derive an explicit parametric representation for $\,{}^2\!f_G(z)/(z-\zz)$, see Section \ref{sectpar}. To see this we write the differential operator
$D_z=(z-\zz)^{-1}(z\partial_z-\zz\partial_\zz)$ in terms of the invariants $s_0=z\zz$ and $s_1=(z-1)(\zz-1)$, see (\ref{eqinvs}) and (\ref{eq:dual-phi}).
We obtain $D_z=-\partial_{s_1}$, so that inverting $D_z$ is equivalent to integration with respect to $s_1$. In the right hand side of (\ref{fdualparam}) integration
is elementary because $\widetilde{\Phi}_G$ is linear in $s_1$ while $\widetilde{\Psi}_G$ is constant. Compatibility with (\ref{eqthm4int}) can be enforced by subtraction,
see Section \ref{sectpf4}.

The parametric representation ensures that ${}^2\!f_G(z)/(z-\zz)$ is real-analytic on $\CC\backslash\{0,1\}$.
\item In (\ref{2fG}) we provide an explicit formula for ${}^2\!f_G(z)$ in terms of radial and angular graphical functions.
\item The map (\ref{eq2av}) from ${}^2\!f_G$ to $f_G$ can be inverted in polar coordinates $z=Z\hat{z}$, $\zz=Z/\hat{z}$ with $\hat{z}=\ee^{\ii\phi}$
(one can alternatively use the coordinates $z$ and $Z$): In these coordinates the differential operator $z\partial_z-\zz\partial_\zz$ becomes $\hat{z}\partial_{\hat{z}}$
(see (\ref{proppf4eq})) and inversion is (multiple) integration with respect to $\hat{z}$. This provides a candidate $f(z)$ for ${}^2\!f_G(z)$.
Condition (\ref{eqthm4int}) becomes
$$
-\frac{1}{2}\int_{\partial E}{}^2\!f_G(Z\hat{z})(\hat{z}^k-\hat{z}^{-k})\frac{\dd\hat{z}}{\hat{z}}=0,
$$
where the integration is along the boundary of the unit disc $E$. With the residue theorem in the complex $\hat{z}$-plane one can calculate the left hand side for $f(Z\hat{z})$
instead of ${}^2\!f_G(Z\hat{z})$. This gives functions $g_k(Z)$ for $k\in\{1,2,\ldots,\lambda-1\}$. We get (see (\ref{sinort}))
$$
{}^2\!f_G(Z\hat{z})=f(Z\hat{z})-\sum_{k=1}^{\lambda-1}\frac{\hat{z}^k-\hat{z}^{-k}}{2\pi\ii}g_k(Z).
$$
Proposition \ref{proppf4} states that the subtraction is in the kernel of the differential operator on the right hand side of (\ref{eq2av}).
\item It is unclear in which cases the transformation from $f_G(z)$ to ${}^2\!f_G(z)$ maps GSVHs to GSVHs:
The transformation from $Z,\hat z$ back to $z$, $\zz$ requires taking square roots which, in general, does not close in the space of GSVHs.
Conversely, it can be proved that $f_G$ is a GSVH if ${}^2\!f_G(z)$ is a GSVH (Section 8 in \cite{GSVH}). By experiment, we find that ${}^2\!f_G(z)$ often is a GSVH.
The authors are not aware of an example where $f_G$ is a GSVH but ${}^2\!f_G(z)$ is not.
\item The convolution product in (\ref{geg2}) can be calculated by a residue theorem on $\sS\sV_{\{0,1,\infty\}}$ (Theorem 2.29 in \cite{gf}).
One, however, needs two sets of complex conjugated variables $x,\xx$ and $z,\zz$ which slows down the calculation and potentially leads out of the space of GSVHs.
\end{enumerate}
\end{remark}

Even if it closes in GSVHs, Gegenbauer factorization is time and memory consuming. In \cite{Shlog} it is only implemented for special cases in four dimensions.
Although Gegenbauer splits are rare, Theorem \ref{thm4} is a powerful tool to calculated some irreducible graphical functions.

\section{Four vertex splits}\label{sect4split}
Assume an (internally) completed graphical function $\Gg$ has a four-vertex split $a,b,c,d$ such that $\Gg_1\backslash\{a,b,c,d\}$ has only internal vertices in $\Gg$
(see the left hand side of \eqref{eqtwistdiagram}). Then the edges of $\Gg$ split into two parts $\Gg_1$ and $\Gg_2$ which meet in the vertices $a,b,c,d$.
The split vertices may be internal or external (see Sections \ref{sectfactor} and \ref{sectgeg} for a similar situation with a three-vertex cut).

We can interpret $\Gg_1$ with external vertices $a,b,c,d$ as an internally completed graphical function which
is nested into $\Gg$. We obtain an identity for $f_\Gg(z)$ if we are able to replace $\Gg_1$ by a different graph $\Gg_1'$ which evaluates to the same graphical function.
The two known identities on graphical functions in this context are the twist identity in Section \ref{secttwist} and the Fourier identity in Section \ref{sectdual}.

It is also possible to replace $\Gg_1$ by a sum of (internally) completed graphical functions which evaluates to $f_{\Gg_1}(z)$. An example for such identities is the integration
by parts technique in Section \ref{sectibp}. This technique is not restricted to four vertex cuts.

\subsection{Twist}\label{secttwist}
A completed graphical function is invariant under a double transposition of external labels, see Theorem \ref{thmperm}. This identity lifts to the twist identity for graphical
functions with an internal four-vertex split. The twist identity was first found and proved in the context of periods, see Theorem 2.11 in \cite{Census}.

\begin{prop}
Assume the edges of the (internally) completed graphical function $\Gg$ split into $\Gg_1$ and $\Gg_2$ with four common vertices $a,b,c,d$.
If all vertices of $\Gg_1\backslash\{a,b,c,d\}$ are internal in $\Gg$, then we obtain the twisted graph $\Gg'$ by
gluing the vertices $b,a,d,c$ of $\Gg_1$ to the vertices $a,b,c,d$ of $\Gg_2$ and thereafter (uniquely) moving weighted edges along opposite sides of the four-cycle $acbd$ in
such a way that $\Gg'$ becomes (internally) completed. We have
\begin{align}
\label{eqtwist}
f_\Gg(z)&=f_{\Gg'}(z).
\intertext{Diagrammatically,}
\label{eqtwistdiagram}
\begin{tikzpicture}[baseline={([yshift=-1.3ex]current bounding box.center)}]
    \coordinate[label=above:$a$] (va) at (0,3);
    \coordinate[label=above:$b$] (vb) at (0,2);
    \coordinate (vm) at (0,1.5);
    \coordinate[label=above:$c$] (vc) at (0,1);
    \coordinate[label=above:$d$] (vd) at (0,0);
    \filldraw (va) circle (1.3pt);
    \filldraw (vb) circle (1.3pt);
    \filldraw (vc) circle (1.3pt);
    \filldraw (vd) circle (1.3pt);
    \draw[fill=black!20] (va) arc (90:270:3 and 1.5) arc (270:90:.5) arc (270:90:.5) arc (270:90:.5);
    \draw[fill=black!20] (va) arc (90:-90:3 and 1.5) arc (-90:90:.5) arc (-90:90:.5) arc (-90:90:.5);
    \node (G1) at (-1.5,1.5) {$\overline{G}_1$};
    \node (G2) at (+1.5,1.5) {$\overline{G}_2$};
    \node (G) at (-2,0) {$\overline{G}$};
    \coordinate[label=above right:$0$] (v0)  at ([shift={(vm)}]60:3 and 1.5);
    \coordinate[label=above right:$1$] (v1)  at ([shift={(vm)}]30:3 and 1.5);
    \coordinate[label=below right:$z$] (vz)  at ([shift={(vm)}]-30:3 and 1.5);
    \coordinate[label=below right:$\infty$] (voo) at ([shift={(vm)}]-60:3 and 1.5);
    \filldraw (v0) circle (1.3pt);
    \filldraw (v1) circle (1.3pt);
    \filldraw (vz) circle (1.3pt);
    \filldraw (voo) circle (1.3pt);
\end{tikzpicture}
\quad &= \quad
\begin{tikzpicture}[baseline={([yshift=-1.3ex]current bounding box.center)}]
    \coordinate[label=above:$a$] (va) at (0,3);
    \coordinate[label=above:$b$] (vb) at (0,2);
    \coordinate (vm) at (0,1.5);
    \coordinate[label=below:$c$] (vc) at (0,1);
    \coordinate[label=above:$d$] (vd) at (0,0);
    \draw[fill=black!20] (vb) .. controls (-1,3) and (-3,3) .. (-3,1.5) .. controls (-3,0) and (-1,0) .. (vc) .. controls (-1,.5) .. (-1,.75) .. controls (-1,1) .. (vd) .. controls (-1,1.5) .. (va) .. controls (-1,2) .. (-1,2.25) .. controls (-1,2.5) .. (vb);
    \draw[fill=black!20] (-1,.75) .. controls (-1,1) .. (vd) .. controls (-1,1.5) .. (va) .. controls (-1,2) .. (-1,2.25);
    \draw[fill=black!20] (va) arc (90:-90:3 and 1.5) arc (-90:90:.5) arc (-90:90:.5) arc (-90:90:.5);
    \draw[dashed,black!50,thick] (vb) -- (vc) arc (-90:90:1) arc (90:-90:1.5) arc (-90:90:1);
    \node (G1) at (-1.75,1.5) {$\overline{G}_1$};
    \node (G2) at (+2,1.5) {$\overline{G}_2$};
    \node (G) at (-2,0) {$\overline{G}'$};
    \coordinate[label=above right:$0$] (v0)  at ([shift={(vm)}]60:3 and 1.5);
    \coordinate[label=above right:$1$] (v1)  at ([shift={(vm)}]30:3 and 1.5);
    \coordinate[label=below right:$z$] (vz)  at ([shift={(vm)}]-30:3 and 1.5);
    \coordinate[label=below right:$\infty$] (voo) at ([shift={(vm)}]-60:3 and 1.5);
    \filldraw (v0) circle (1.3pt);
    \filldraw (v1) circle (1.3pt);
    \filldraw (vz) circle (1.3pt);
    \filldraw (voo) circle (1.3pt);
    \filldraw (va) circle (1.3pt);
    \filldraw (vb) circle (1.3pt);
    \filldraw (vc) circle (1.3pt);
    \filldraw (vd) circle (1.3pt);
\end{tikzpicture}~~,
\end{align}
where the dashed lines indicate the four-cycle $acbd$.
\end{prop}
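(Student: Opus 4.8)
The plan is to deduce the twist identity from the double--transposition invariance of completed graphical functions (Theorem \ref{thmperm}), applied \emph{internally} to the block $\Gg_1$ regarded as a completed graphical function with four external vertices $a,b,c,d$. This is the graphical--function version of the period twist, Theorem 2.11 in \cite{Census}, now performed with the ``socket'' $a,b,c,d$ and the complementary graph $\Gg_2$ left untouched. The first step is to normalise $\Gg_1$. Exactly as in the proof of Proposition \ref{propfactor}, one may add to $\Gg$ several pairs of edges among $a,b,c,d$ of total weight zero, placing one edge of each pair in $\Gg_1$ and the other in $\Gg_2$; this affects neither $\Gg$ nor the twisted graph $\Gg'$, since the paired edges cancel again after re--gluing. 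Solving the resulting (underdetermined, hence solvable) linear system for the weights of these pairs, we may assume the total weight of each of $a,b,c,d$ inside $\Gg_1$ is zero. As all internal vertices of $\Gg_1$ are internal in the internally completed graph $\Gg$, they have degree $2(\lambda+1)/\lambda$, so $\Gg_1$ is now an internally completed graphical function with external vertices $a,b,c,d$, and by Proposition \ref{propcompl} it has total weight $N_{\Gg_1}=0$.

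Next I would factor the position--space integral. By Fubini (the sub--integral over the internal vertices of $\Gg_1$ converges for almost every fixed $z_a,z_b,z_c,z_d$ by the convergence criterion, Proposition \ref{propconv}, applied to $\Gg_1$), the integral $A_\Gg$ is obtained by first integrating out the internal vertices of $\Gg_1$ with $z_a,z_b,z_c,z_d$ held fixed, which produces the Feynman integral $A_{\Gg_1}(z_a,z_b,z_c,z_d)$ as a function of the four positions, and then integrating the remaining integrand, namely that of $\Gg_2$ together with the cross--edges meeting $\{a,b,c,d\}$, over whichever of $a,b,c,d$ are internal in $\Gg$ and over the rest of $\Gg_2$. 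Because $N_{\Gg_1}=0$, the integrand of $\Gg_1$ is conformally invariant, so $A_{\Gg_1}(z_a,z_b,z_c,z_d)$ depends only on the cross--ratio of the four points; in particular it is invariant under the double transposition $z_a\leftrightarrow z_b$, $z_c\leftrightarrow z_d$. This is precisely the mechanism behind the ``double transposition'' part of Theorem \ref{thmperm}, here read for four arbitrary points rather than for points in $\{0,1,\phi(z),\infty\}$.

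Consequently I may replace $A_{\Gg_1}(z_a,z_b,z_c,z_d)$ by $A_{\Gg_1}(z_b,z_a,z_d,z_c)$ inside the integral for $A_\Gg$ without changing its value. Combinatorially this replaces the block $\Gg_1$, together with its attachment along $a,b,c,d$, by the relabelled copy $\Gg_1^{\sigma}$ with $\sigma=(ab)(cd)$, i.e.\ it glues the vertices $b,a,d,c$ of $\Gg_1$ to the vertices $a,b,c,d$ of $\Gg_2$ --- the gluing in the statement. Under $\sigma$ the auxiliary edges among $a,b,c,d$ introduced in the first step are carried along opposite sides of the four--cycle $a\!-\!c\!-\!b\!-\!d\!-\!a$ (for $\sigma$ fixes the pair of ``diagonals'' $\{ab,cd\}$ and swaps the pairs of cycle--edges $\{ac,bd\}$ and $\{bc,ad\}$), and there is a unique redistribution of these auxiliary edges and of the cross--edges that makes the re--glued graph internally completed; this is exactly the graph $\Gg'$ of the statement. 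Hence $A_\Gg=A_{\Gg'}$. Since the twist preserves both the number of internal vertices and the total edge--weight, $N_\Gg=N_{\Gg'}$, so dividing $A_\Gg$ and $A_{\Gg'}$ by the common normalisation $\|z_1-z_0\|^{-2\lambda N_\Gg}$ of \eqref{eqfA1} yields $f_\Gg(z)=f_{\Gg'}(z)$.

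I expect the main obstacle to be the bookkeeping of the auxiliary edges and the verification that ``relabel $\Gg_1$ by $\sigma$, then move edges along opposite sides of the four--cycle to restore internal completion'' reproduces precisely the graph $\Gg'$ described in the proposition, together with the uniqueness of that redistribution. A secondary point requiring care is the justification of interchanging the order of integration --- legitimising the computation of $A_{\Gg_1}(z_a,z_b,z_c,z_d)$ as an intermediate function --- which rests on the absolute convergence guaranteed by Proposition \ref{propconv} applied to the convergent completed graph $\Gg_1$.
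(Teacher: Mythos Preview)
Your proposal is correct and follows essentially the same route as the paper: add canceling pairs of edges among $a,b,c,d$ (as in the proof of Proposition~\ref{propfactor}) to make $\Gg_1$ completed at the cut, observe that these pairs are insensitive to the twist construction, and then invoke the double-transposition invariance of completed four-point functions (Theorem~\ref{thmperm}). The paper's proof is terser---it does not spell out the Fubini step or the $N_{\Gg}=N_{\Gg'}$ normalization---but the logical skeleton is identical.
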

\begin{proof}
The proof is analogous to the twist identity for periods, Theorem 2.11 in \cite{Census}.
By solving a linear system it is clear that there exists a unique way to move weights along opposite sides of the four-cycle $acbd$ such that the twisted graph becomes
(internally) completed.

For $i=a,b,c,d$, let
$$
N_i=\sum_{i\sim e\in\sE_{\Gg_1}}\nu_e
$$
be the sum of the weights of the edges in $\Gg_1$ attached to $i$. Because both $\Gg$ and $\Gg'$ are internally completed, we have $N_a=N_b$ and $N_c=N_d$.
It is proved in Corollary~9 of \cite{5twist} that the four-point function $A_{\Gg_1}(a,b,c,d)=A_{\Gg_1}(b,a,d,c)$ is invariant under the twist.
Integration over $a,b,c,d$ (if internal) gives (\ref{eqtwist}).
\end{proof}

\begin{ex}
The magic identities in \cite{Magicid} are twist identities.
\end{ex}

\subsection{Planar duality}\label{sectdual}

Planar duality was already used in \cite{BK} to prove identities between periods. This duality of periods was systematically studied as Fourier identity in Section 2.7 of \cite{Census}.
The name Fourier reflects that the identity can be proved by Fourier transforming Feynman integrals. The situation translates to graphical functions in the following way.

\begin{thm}[Theorem 1.9 in \cite{par}]\label{thmdual}
An uncompleted graphical function $f_G(z)$ has a planar dual if it is externally planar, i.e.\ if $G\cup\{01,0z,1z\}$ is planar.
In the planar dual of $G\cup\{01,0z,1z\}$ we delete the vertex of the face $01z$ and define the external edges $0,1,z$ to be in the faces that are bounded by $G$ and $1z,0z,01$,
respectively. The edges of the resulting graph $G^\star$ are in one-to-one correspondence to the edges of $G$. If the edge-weights fulfill (see (\ref{eqNg}))
\begin{equation*}
\nu_e(G)>0\qquad\text{and}\qquad\lambda\nu_e(G)+\lambda\nu_e(G^\star)=\lambda N_G=\lambda+1\qquad\text{for all }e\in\sE_G\equiv\sE_{G^\star},
\end{equation*}
then the dual graph has a convergent graphical function $f_{G^\star}(z)$ with
\begin{equation}\label{plandual}
\Big[\prod_{e\in\sE_G}\Gamma(\lambda\nu_e)\Big]f_G(z)=\Big[\prod_{e\in\sE_{G^\star}}\Gamma(\lambda\nu_e)\Big]f_{G^\star}(z).
\end{equation}
\end{thm}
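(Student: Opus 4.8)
The plan is to prove the identity by Fourier-transforming the position-space Feynman integral of $G$ and recognising the result as the Feynman integral of the planar dual $G^\star$, exactly in the spirit of the Fourier identity for periods in Section 2.7 of \cite{Census}. First I would set up the dual graph precisely: embed the externally planar graph $G\cup\{01,0z,1z\}$ in the sphere, take the planar dual, delete the vertex dual to the triangular face $01z$, and place the three external legs $0,1,z$ of $G^\star$ in the faces of the dual that correspond (under planar duality) to the vertices $0,1,z$ of $G$; the remaining vertices of $G^\star$ are dual to the bounded faces of $G\cup\{01,0z,1z\}$, and each edge $e\in\sE_G$ crosses exactly one edge $e^\star\in\sE_{G^\star}$, giving the claimed bijection. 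The constraint $\lambda\nu_e(G)+\lambda\nu_e(G^\star)=\lambda+1=\lambda N_G$ is precisely the one making the Fourier kernel of weight $\lambda\nu_e$ dual to a propagator of weight $\lambda\nu_{e^\star}=\lambda+1-\lambda\nu_e$ in $D=2\lambda+2$ dimensions, via \eqref{eqfourier}: $\int_{\RR^D}\ee^{\ii x\cdot p}\|x\|^{-2\lambda\nu_e}\,\dd^Dx/\pi^{D/2}$ is a constant times $\|p\|^{-2\lambda\nu_{e^\star}}$, with the constant $2^{D-2\lambda\nu_e}\Gamma(\lambda\nu_{e^\star})/\Gamma(\lambda\nu_e)$; collecting these constants over all edges produces the ratio $\prod\Gamma(\lambda\nu_e(G^\star))/\prod\Gamma(\lambda\nu_e(G))$ appearing in \eqref{plandual}.

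The core computation is the standard "momentum-routing = dual-graph" argument. I would assign to each internal vertex $x_i$ of $G$ an integration variable and write each propagator $1/Q_e^{\lambda\nu_e}$ as the Fourier transform of $\|p_e\|^{-2\lambda\nu_e}$ over a dual momentum $p_e$. The position-space integrals over the $x_i$ then produce momentum-conservation delta functions at each internal vertex of $G$, which force the $p_e$ to be differences of "face variables" — one variable per bounded face of $G\cup\{01,0z,1z\}$, i.e.\ per vertex of $G^\star\setminus\{$deleted vertex$\}$. Because the face $01z$ is removed, three of these face variables become the fixed external points $0,1,z$ of $G^\star$ (up to an overall translation/rotation, and after using the relations \eqref{eqinvs}), and the integral over the remaining face variables is exactly $A_{G^\star}$ evaluated at the corresponding configuration. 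Matching the scaling prefactors in \eqref{eqfA} — here simplified by $N_G=N_{G^\star}=(\lambda+1)/\lambda$ which is why the power of $\|z_1-z_0\|$ is the same on both sides — then yields \eqref{plandual}.

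The main obstacle is making the formal "Fourier transform everything, integrate, get deltas" manipulation rigorous: one must justify interchanging the $x_i$-integrals with the auxiliary $p_e$-integrals, and verify that the resulting face-variable integral is genuinely the convergent integral defining $f_{G^\star}(z)$ rather than a merely formal expression. The convergence hypotheses $\nu_e(G)>0$ (so that each \eqref{eqfourier} is a bona fide absolutely convergent transform with $0<2\lambda\nu_e<D$) and the assumption that $f_G(z)$ converges are what one leverages here, together with Proposition \ref{propconv} applied to $G^\star$ to see that $f_{G^\star}(z)$ itself converges; indeed the power-counting conditions in Proposition \ref{propconv} for $G$ and for $G^\star$ are dual to each other under $\lambda\nu_e\leftrightarrow\lambda+1-\lambda\nu_e$. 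Since the referenced result (Theorem 1.9 in \cite{par}) already handles exactly this, I would either cite that proof directly or reproduce the Fourier-transform argument and defer the delicate exchange-of-integration estimates to \cite{par}, as the paper does for the companion statements (Propositions \ref{propcompl}, \ref{propconv}) imported from \cite{gf,par}.
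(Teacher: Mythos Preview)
The paper does not give its own proof of this theorem: it is stated as Theorem~1.9 of \cite{par} and then used without argument. So there is nothing in the present paper to compare your proposal against line by line.

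That said, your Fourier-transform sketch is a correct outline and is in fact the approach the paper alludes to just before the theorem (``The name Fourier reflects that the identity can be proved by Fourier transforming Feynman integrals''). The essential steps---Fourier transforming each propagator via \eqref{eqfourier}, integrating out internal vertices to obtain momentum-conservation deltas, solving those by face variables of the planar embedding, and recognising the result as $A_{G^\star}$---are the standard route, and your identification of the obstacle (legitimising the interchange of integrals) is the right one. One small bookkeeping point you glossed over: the factors $2^{D-2\lambda\nu_e}$ from \eqref{eqfourier} and the $\pi^{D/2}$ normalisations must be tracked through the delta functions and the face-variable change of variables to see that only the $\Gamma$-ratios survive; this is routine but should be stated.

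By contrast, the cited proof in \cite{par} works in the \emph{parametric} representation (Theorem~\ref{dualparam} here). There the identity follows from the classical fact that for a planar graph the (dual) Symanzik polynomials $\widetilde\Psi_G,\widetilde\Phi_G$ transform into those of the planar dual under the Cremona map $\alpha_e\mapsto 1/\alpha_e$ (cf.\ the remark after Theorem~\ref{dualparam} and after Corollary~\ref{dualparamper}); the $\Gamma$-factors then come directly from the prefactor $\prod_e\Gamma(\lambda\nu_e)^{-1}$ in \eqref{fdualparam}. That route sidesteps the analytic interchange-of-integration issues you flagged and makes convergence of $f_{G^\star}$ automatic from the parametric formula, at the cost of importing the parametric machinery. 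Your Fourier approach is more elementary conceptually but requires the convergence/interchange estimates you mention; either is acceptable here, and since the paper already defers to \cite{par}, simply citing that reference (as you suggest at the end) is entirely in keeping with how the surrounding results are handled.
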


With this theorem one can construct an identity between graphical functions in a way which is analogous to the twist identity in the previous section.
One has to keep in mind that, before applying planar duality, one needs to choose a vertex $\infty$ in the split vertices to de-complete the split graph $\Gg_1$.
After dualizing, the graph $G_1^\star$ needs to be completed before it can be glued back into $G$.

Like the twist identity, planar duality in split graphs was first studied in the context of periods, see Theorem 2.8 in \cite{HSSY}.
Because the right (passive) split graph has all external vertices the result for periods trivially translates to graphical functions.

\begin{ex}
Consider the pair of graphs
\def\scale{1}
\begin{align}
\begin{tikzpicture}[baseline={([yshift=-1.3ex]0,0)}]
    \coordinate (vm) at (0,0);
    \coordinate[label=above:$1$] (v1) at ([shift=(90:\scale)]vm);
    \coordinate[label=below left:$0$] (v0) at ([shift=(210:\scale)]vm);
    \coordinate[label=below right:$z$] (vz) at ([shift=(330:\scale)]vm);
    \coordinate (v2) at ($(v0)!.5!(v1)$);
    \coordinate (v3) at ($(v0)!.5!(vz)$);
    \coordinate (v4) at ($(v1)!.5!(vz)$);
    \node (G) at (-\scale,{-1.2*\scale}) {$G$};
    \draw (vm) -- (v2);
    \draw (vm) -- (v3);
    \draw (vm) -- (v4);
    \draw (v1) -- (v2);
    \draw (v1) -- (v4);
    \draw (vz) -- (v3);
    \draw (vz) -- (v4);
    \draw (v0) -- (v3);
    \draw (v0) -- (v2);
    \draw (v0) arc(210:90:{\scale});
    \filldraw (vm) circle (1.3pt);
    \filldraw (v1) circle (1.3pt);
    \filldraw (v0) circle (1.3pt);
    \filldraw (vz) circle (1.3pt);
    \filldraw (v2) circle (1.3pt);
    \filldraw (v3) circle (1.3pt);
    \filldraw (v4) circle (1.3pt);
\end{tikzpicture}
&&
\begin{tikzpicture}[baseline={([yshift=-1.3ex]0,0)}]
    \coordinate (vm) at (0,0);
    \coordinate (v1) at ([shift=(150:\scale)]vm);
    \coordinate[label=below:$1$] (v0) at ([shift=(270:\scale)]vm);
    \coordinate[label=above right:$0$] (vz) at ([shift=(390:\scale)]vm);
    \coordinate (v2) at ($(v0)!.5!(v1)$);
    \coordinate (v3) at ($(v0)!.5!(vz)$);
    \coordinate (v4) at ($(v1)!.5!(vz)$);
    \coordinate[label=above left:$z$] (vzz) at ([shift=(150:{1.61*\scale})]vm);
    \node (G) at (-\scale,{-1.2*\scale}) {$G^\star$};
    \draw (v3) -- (v2);
    \draw (v4) -- (v3);
    \draw (v2) -- (v4);
    \draw (v1) -- (v2);
    \draw (v1) -- (v4);
    \draw (vz) -- (v3);
    \draw (vz) -- (v4);
    \draw (v0) -- (v3);
    \draw (v0) -- (v2);
    \draw (v1) -- (vzz);
%
    \filldraw (vzz) circle (1.3pt);
    \filldraw (v1) circle (1.3pt);
    \filldraw (v0) circle (1.3pt);
    \filldraw (vz) circle (1.3pt);
    \filldraw (v2) circle (1.3pt);
    \filldraw (v3) circle (1.3pt);
    \filldraw (v4) circle (1.3pt);
\end{tikzpicture}~~,
\label{eqduals}
\end{align}
with edge-weights 1. The graphs $G$ and $G^\star$ are planar duals and $N_{G}=2$ in four dimensions. By completion we obtain
$f_{G^\star}(z)=80\zeta(5)\ii D(z)/(z-\zz)$, where the Bloch-Wigner dilogarithm $D(z)$ is defined in (\ref{D}). Planar duality gives $f_G(z)=f_{G^\star}(z)$.
\end{ex}

Non-trivial transformations of graphical functions by planar duality in split graphs are relatively rare (compared to the twist).

\section{Integration by parts}\label{sectibp}
Integration by parts in position space follows the same idea as in momentum space \cite{ibp}. The difference is that in position space the Feynman rules are unambiguous and the graphical
interpretation is simpler. Position space integration by parts provides a local graph identity which substitutes a vertex of a certain structure with a sum over vertices of different structures.
The general formula is 
\begin{gather}
0=~
(2\!-\!\nu_1\!-\!\sum\limits_{i=1}^N\nu_i) ~~
\begin{tikzpicture}[scale=1,baseline={([yshift=-0.7ex]current bounding box.center)}]
    \coordinate (vm) at (0, 0);
    \coordinate (v1) at (0, .7);
    \coordinate (v2) at (-.7, 0);
    \coordinate (v3) at (0,-.7);
    \draw[black!50] (vm) -- ([shift=(30:.3)]vm);
    \draw[black!50] (vm) -- ([shift=(0:.3)]vm);
    \draw[black!50] (vm) -- ([shift=(-30:.3)]vm);
    \draw (vm) -- node[right] {\tiny{$\nu_1\!+\!\frac{1}{\lambda}$}} (v1);
    \draw (vm) -- node[below] {\tiny{$\nu_2$}} (v2);
    \draw (vm) -- node[right] {\tiny{$\nu_3$}} (v3);
    \filldraw[black!20] (v1) -- ([shift=(60:.2)]v1) arc (60:120:.2) -- (v1);
    \draw (v1) -- ([shift=(60:.2)]v1);
    \draw (v1) -- ([shift=(120:.2)]v1);
    \filldraw[black!20] (v2) -- ([shift=(150:.2)]v2) arc (150:210:.2) -- (v2);
    \draw (v2) -- ([shift=(150:.2)]v2);
    \draw (v2) -- ([shift=(210:.2)]v2);
    \filldraw[black!20] (v3) -- ([shift=(240:.2)]v3) arc (240:300:.2) -- (v3);
    \draw (v3) -- ([shift=(240:.2)]v3);
    \draw (v3) -- ([shift=(300:.2)]v3);
    \filldraw (v1) circle (1.3pt);
    \filldraw (v2) circle (1.3pt);
    \filldraw (v3) circle (1.3pt);
    \filldraw (vm) circle (1.3pt);
\end{tikzpicture}
\hspace{-3ex}
+
\nu_2~~
\begin{tikzpicture}[scale=1,baseline={([yshift=-0.7ex]current bounding box.center)}]
    \coordinate (vm) at (0, 0);
    \coordinate (v1) at (0, .7);
    \coordinate (v2) at (-.7, 0);
    \coordinate (v3) at (0,-.7);
    \draw[black!50] (vm) -- ([shift=(30:.3)]vm);
    \draw[black!50] (vm) -- ([shift=(0:.3)]vm);
    \draw[black!50] (vm) -- ([shift=(-30:.3)]vm);
    \draw (vm) -- node[right] {\tiny{$\nu_1\!+\!\frac{1}{\lambda}$}} (v1);
    \draw (vm) -- node[below] {\tiny{$\nu_2\!+\!\frac{1}{\lambda}$}} (v2);
    \draw (vm) -- node[right] {\tiny{$\nu_3$}} (v3);
    \draw (v1) -- node[inner sep=1pt,above left] {\tiny{$-\frac{1}{\lambda}$}} (v2);
    \filldraw[black!20] (v1) -- ([shift=(60:.2)]v1) arc (60:120:.2) -- (v1);
    \draw (v1) -- ([shift=(60:.2)]v1);
    \draw (v1) -- ([shift=(120:.2)]v1);
    \filldraw[black!20] (v2) -- ([shift=(150:.2)]v2) arc (150:210:.2) -- (v2);
    \draw (v2) -- ([shift=(150:.2)]v2);
    \draw (v2) -- ([shift=(210:.2)]v2);
    \filldraw[black!20] (v3) -- ([shift=(240:.2)]v3) arc (240:300:.2) -- (v3);
    \draw (v3) -- ([shift=(240:.2)]v3);
    \draw (v3) -- ([shift=(300:.2)]v3);
    \filldraw (v1) circle (1.3pt);
    \filldraw (v2) circle (1.3pt);
    \filldraw (v3) circle (1.3pt);
    \filldraw (vm) circle (1.3pt);
\end{tikzpicture}
\hspace{-3ex}
-
\nu_2~~
\begin{tikzpicture}[scale=1,baseline={([yshift=-0.7ex]current bounding box.center)}]
    \coordinate (vm) at (0, 0);
    \coordinate (v1) at (0, .7);
    \coordinate (v2) at (-.7, 0);
    \coordinate (v3) at (0,-.7);
    \draw[black!50] (vm) -- ([shift=(30:.3)]vm);
    \draw[black!50] (vm) -- ([shift=(0:.3)]vm);
    \draw[black!50] (vm) -- ([shift=(-30:.3)]vm);
    \draw (vm) -- node[right] {\tiny{$\nu_1$}} (v1);
    \draw (vm) -- node[below] {\tiny{$\nu_2\!+\!\frac{1}{\lambda}$}} (v2);
    \draw (vm) -- node[right] {\tiny{$\nu_3$}} (v3);
    \filldraw[black!20] (v1) -- ([shift=(60:.2)]v1) arc (60:120:.2) -- (v1);
    \draw (v1) -- ([shift=(60:.2)]v1);
    \draw (v1) -- ([shift=(120:.2)]v1);
    \filldraw[black!20] (v2) -- ([shift=(150:.2)]v2) arc (150:210:.2) -- (v2);
    \draw (v2) -- ([shift=(150:.2)]v2);
    \draw (v2) -- ([shift=(210:.2)]v2);
    \filldraw[black!20] (v3) -- ([shift=(240:.2)]v3) arc (240:300:.2) -- (v3);
    \draw (v3) -- ([shift=(240:.2)]v3);
    \draw (v3) -- ([shift=(300:.2)]v3);
    \filldraw (v1) circle (1.3pt);
    \filldraw (v2) circle (1.3pt);
    \filldraw (v3) circle (1.3pt);
    \filldraw (vm) circle (1.3pt);
\end{tikzpicture}
+
\nu_3
\hspace{-2ex}
\begin{tikzpicture}[scale=1,baseline={([yshift=-0.7ex]current bounding box.center)}]
    \coordinate (vm) at (0, 0);
    \coordinate (v1) at (0, .7);
    \coordinate (v2) at (-.7, 0);
    \coordinate (v3) at (0,-.7);
    \draw[black!50] (vm) -- ([shift=(30:.3)]vm);
    \draw[black!50] (vm) -- ([shift=(0:.3)]vm);
    \draw[black!50] (vm) -- ([shift=(-30:.3)]vm);
    \draw (vm) -- node[right] {\tiny{$\nu_1\!+\!\frac{1}{\lambda}$}} (v1);
    \draw (vm) -- node[below] {\tiny{$\nu_2$}} (v2);
    \draw (vm) -- node[right] {\tiny{$\nu_3\!+\!\frac{1}{\lambda}$}} (v3);
    \draw (v1) .. controls (-1.4,.8) and (-1.4,-.8) .. (v3) node[pos=.3,inner sep=1pt,above left] {\tiny{$-\frac{1}{\lambda}$}} (v2);
    \filldraw[black!20] (v1) -- ([shift=(60:.2)]v1) arc (60:120:.2) -- (v1);
    \draw (v1) -- ([shift=(60:.2)]v1);
    \draw (v1) -- ([shift=(120:.2)]v1);
    \filldraw[black!20] (v2) -- ([shift=(150:.2)]v2) arc (150:210:.2) -- (v2);
    \draw (v2) -- ([shift=(150:.2)]v2);
    \draw (v2) -- ([shift=(210:.2)]v2);
    \filldraw[black!20] (v3) -- ([shift=(240:.2)]v3) arc (240:300:.2) -- (v3);
    \draw (v3) -- ([shift=(240:.2)]v3);
    \draw (v3) -- ([shift=(300:.2)]v3);
    \filldraw (v1) circle (1.3pt);
    \filldraw (v2) circle (1.3pt);
    \filldraw (v3) circle (1.3pt);
    \filldraw (vm) circle (1.3pt);
\end{tikzpicture}
\hspace{-3ex}
-
\nu_3~~
\begin{tikzpicture}[scale=1,baseline={([yshift=-0.7ex]current bounding box.center)}]
    \coordinate (vm) at (0, 0);
    \coordinate (v1) at (0, .7);
    \coordinate (v2) at (-.7, 0);
    \coordinate (v3) at (0,-.7);
    \draw[black!50] (vm) -- ([shift=(30:.3)]vm);
    \draw[black!50] (vm) -- ([shift=(0:.3)]vm);
    \draw[black!50] (vm) -- ([shift=(-30:.3)]vm);
    \draw (vm) -- node[right] {\tiny{$\nu_1$}} (v1);
    \draw (vm) -- node[below] {\tiny{$\nu_2$}} (v2);
    \draw (vm) -- node[right] {\tiny{$\nu_3\!+\!\frac{1}{\lambda}$}} (v3);
    \filldraw[black!20] (v1) -- ([shift=(60:.2)]v1) arc (60:120:.2) -- (v1);
    \draw (v1) -- ([shift=(60:.2)]v1);
    \draw (v1) -- ([shift=(120:.2)]v1);
    \filldraw[black!20] (v2) -- ([shift=(150:.2)]v2) arc (150:210:.2) -- (v2);
    \draw (v2) -- ([shift=(150:.2)]v2);
    \draw (v2) -- ([shift=(210:.2)]v2);
    \filldraw[black!20] (v3) -- ([shift=(240:.2)]v3) arc (240:300:.2) -- (v3);
    \draw (v3) -- ([shift=(240:.2)]v3);
    \draw (v3) -- ([shift=(300:.2)]v3);
    \filldraw (v1) circle (1.3pt);
    \filldraw (v2) circle (1.3pt);
    \filldraw (v3) circle (1.3pt);
    \filldraw (vm) circle (1.3pt);
\end{tikzpicture}
\hspace{-3ex}
+~\ldots
\label{eqibpgeneral}
\end{gather}
which follows from Lemma~\ref{lemibp}. The gray cones indicate where each term has to be substituted into the same ambient graph. The dots indicate further terms
corresponding to edges $4,\ldots,N$ (two terms for each edge). The possible existence of these edges is indicated by the three small lines right to the central vertex.
An important locally completed example in six dimensions is
\begin{align}
\begin{tikzpicture}[baseline={([yshift=-0.7ex]0,0)}]
    \coordinate (vm) at (0, 0);
    \coordinate (v1) at ([shift=(90:.6)]vm);
    \coordinate (v2) at ([shift=(162:.6)]vm);
    \coordinate (v3) at ([shift=(234:.6)]vm);
    \coordinate (v4) at ([shift=(306:.6)]vm);
    \coordinate (v5) at ([shift=(18:.6)]vm);
    \filldraw[black!20] (v1) -- ([shift=(60:.2)]v1) arc (60:120:.2) -- (v1);
    \draw (v1) -- ([shift=(60:.2)]v1);
    \draw (v1) -- ([shift=(120:.2)]v1);
    \filldraw[black!20] (v2) -- ([shift=(132:.2)]v2) arc (132:192:.2) -- (v2);
    \draw (v2) -- ([shift=(132:.2)]v2);
    \draw (v2) -- ([shift=(192:.2)]v2);
    \filldraw[black!20] (v3) -- ([shift=(204:.2)]v3) arc (204:264:.2) -- (v3);
    \draw (v3) -- ([shift=(204:.2)]v3);
    \draw (v3) -- ([shift=(264:.2)]v3);
    \filldraw[black!20] (v4) -- ([shift=(276:.2)]v4) arc (276:336:.2) -- (v4);
    \draw (v4) -- ([shift=(274:.2)]v4);
    \draw (v4) -- ([shift=(336:.2)]v4);
    \filldraw[black!20] (v5) -- ([shift=(348:.2)]v5) arc (-12:48:.2) -- (v5);
    \draw (v5) -- ([shift=(348:.2)]v5);
    \draw (v5) -- ([shift=(48:.2)]v5);
    \filldraw (v1) circle (1.3pt);
    \filldraw (v2) circle (1.3pt);
    \filldraw (v3) circle (1.3pt);
    \filldraw (v4) circle (1.3pt);
    \filldraw (v5) circle (1.3pt);
    \filldraw (vm) circle (1.3pt);
    \draw[dashed] (vm) -- (v1);
    \draw (vm) -- (v2);
    \draw[dashed] (vm) -- (v3);
    \draw[dashed] (vm) -- (v4);
    \draw[dashed] (vm) -- (v5);
    \draw[dotted] (v1) -- (v2);
\end{tikzpicture}
+
\hspace{-3ex}
\begin{tikzpicture}[baseline={([yshift=-0.7ex]0,0)}]
    \coordinate (vm) at (0, 0);
    \coordinate (v1) at ([shift=(90:.6)]vm);
    \coordinate (v2) at ([shift=(162:.6)]vm);
    \coordinate (v3) at ([shift=(234:.6)]vm);
    \coordinate (v4) at ([shift=(306:.6)]vm);
    \coordinate (v5) at ([shift=(18:.6)]vm);
    \filldraw[black!20] (v1) -- ([shift=(60:.2)]v1) arc (60:120:.2) -- (v1);
    \draw (v1) -- ([shift=(60:.2)]v1);
    \draw (v1) -- ([shift=(120:.2)]v1);
    \filldraw[black!20] (v2) -- ([shift=(132:.2)]v2) arc (132:192:.2) -- (v2);
    \draw (v2) -- ([shift=(132:.2)]v2);
    \draw (v2) -- ([shift=(192:.2)]v2);
    \filldraw[black!20] (v3) -- ([shift=(204:.2)]v3) arc (204:264:.2) -- (v3);
    \draw (v3) -- ([shift=(204:.2)]v3);
    \draw (v3) -- ([shift=(264:.2)]v3);
    \filldraw[black!20] (v4) -- ([shift=(276:.2)]v4) arc (276:336:.2) -- (v4);
    \draw (v4) -- ([shift=(274:.2)]v4);
    \draw (v4) -- ([shift=(336:.2)]v4);
    \filldraw[black!20] (v5) -- ([shift=(348:.2)]v5) arc (-12:48:.2) -- (v5);
    \draw (v5) -- ([shift=(348:.2)]v5);
    \draw (v5) -- ([shift=(48:.2)]v5);
    \filldraw (v1) circle (1.3pt);
    \filldraw (v2) circle (1.3pt);
    \filldraw (v3) circle (1.3pt);
    \filldraw (v4) circle (1.3pt);
    \filldraw (v5) circle (1.3pt);
    \filldraw (vm) circle (1.3pt);
    \draw[dashed] (vm) -- (v1);
    \draw[dashed] (vm) -- (v2);
    \draw (vm) -- (v3);
    \draw[dashed] (vm) -- (v4);
    \draw[dashed] (vm) -- (v5);
    \draw[dotted] (v1) .. controls ([shift=(126:1.5)]vm) and ([shift=(198:1.5)]vm) .. (v3);
\end{tikzpicture}
+
\begin{tikzpicture}[baseline={([yshift=-0.7ex]0,0)}]
    \coordinate (vm) at (0, 0);
    \coordinate (v1) at ([shift=(90:.6)]vm);
    \coordinate (v2) at ([shift=(162:.6)]vm);
    \coordinate (v3) at ([shift=(234:.6)]vm);
    \coordinate (v4) at ([shift=(306:.6)]vm);
    \coordinate (v5) at ([shift=(18:.6)]vm);
    \filldraw[black!20] (v1) -- ([shift=(60:.2)]v1) arc (60:120:.2) -- (v1);
    \draw (v1) -- ([shift=(60:.2)]v1);
    \draw (v1) -- ([shift=(120:.2)]v1);
    \filldraw[black!20] (v2) -- ([shift=(132:.2)]v2) arc (132:192:.2) -- (v2);
    \draw (v2) -- ([shift=(132:.2)]v2);
    \draw (v2) -- ([shift=(192:.2)]v2);
    \filldraw[black!20] (v3) -- ([shift=(204:.2)]v3) arc (204:264:.2) -- (v3);
    \draw (v3) -- ([shift=(204:.2)]v3);
    \draw (v3) -- ([shift=(264:.2)]v3);
    \filldraw[black!20] (v4) -- ([shift=(276:.2)]v4) arc (276:336:.2) -- (v4);
    \draw (v4) -- ([shift=(274:.2)]v4);
    \draw (v4) -- ([shift=(336:.2)]v4);
    \filldraw[black!20] (v5) -- ([shift=(348:.2)]v5) arc (-12:48:.2) -- (v5);
    \draw (v5) -- ([shift=(348:.2)]v5);
    \draw (v5) -- ([shift=(48:.2)]v5);
    \filldraw (v1) circle (1.3pt);
    \filldraw (v2) circle (1.3pt);
    \filldraw (v3) circle (1.3pt);
    \filldraw (v4) circle (1.3pt);
    \filldraw (v5) circle (1.3pt);
    \filldraw (vm) circle (1.3pt);
    \draw[dashed] (vm) -- (v1);
    \draw[dashed] (vm) -- (v2);
    \draw[dashed] (vm) -- (v3);
    \draw (vm) -- (v4);
    \draw[dashed] (vm) -- (v5);
    \draw[dotted] (v1) .. controls ([shift=(54:1.5)]vm) and ([shift=(-18:1.5)]vm) .. (v4);
\end{tikzpicture}
\hspace{-3ex}
+
\begin{tikzpicture}[baseline={([yshift=-0.7ex]0,0)}]
    \coordinate (vm) at (0, 0);
    \coordinate (v1) at ([shift=(90:.6)]vm);
    \coordinate (v2) at ([shift=(162:.6)]vm);
    \coordinate (v3) at ([shift=(234:.6)]vm);
    \coordinate (v4) at ([shift=(306:.6)]vm);
    \coordinate (v5) at ([shift=(18:.6)]vm);
    \filldraw[black!20] (v1) -- ([shift=(60:.2)]v1) arc (60:120:.2) -- (v1);
    \draw (v1) -- ([shift=(60:.2)]v1);
    \draw (v1) -- ([shift=(120:.2)]v1);
    \filldraw[black!20] (v2) -- ([shift=(132:.2)]v2) arc (132:192:.2) -- (v2);
    \draw (v2) -- ([shift=(132:.2)]v2);
    \draw (v2) -- ([shift=(192:.2)]v2);
    \filldraw[black!20] (v3) -- ([shift=(204:.2)]v3) arc (204:264:.2) -- (v3);
    \draw (v3) -- ([shift=(204:.2)]v3);
    \draw (v3) -- ([shift=(264:.2)]v3);
    \filldraw[black!20] (v4) -- ([shift=(276:.2)]v4) arc (276:336:.2) -- (v4);
    \draw (v4) -- ([shift=(274:.2)]v4);
    \draw (v4) -- ([shift=(336:.2)]v4);
    \filldraw[black!20] (v5) -- ([shift=(348:.2)]v5) arc (-12:48:.2) -- (v5);
    \draw (v5) -- ([shift=(348:.2)]v5);
    \draw (v5) -- ([shift=(48:.2)]v5);
    \filldraw (v1) circle (1.3pt);
    \filldraw (v2) circle (1.3pt);
    \filldraw (v3) circle (1.3pt);
    \filldraw (v4) circle (1.3pt);
    \filldraw (v5) circle (1.3pt);
    \filldraw (vm) circle (1.3pt);
    \draw[dashed] (vm) -- (v1);
    \draw[dashed] (vm) -- (v2);
    \draw[dashed] (vm) -- (v3);
    \draw[dashed] (vm) -- (v4);
    \draw (vm) -- (v5);
    \draw[dotted] (v1) -- (v5);
\end{tikzpicture}
=
\begin{tikzpicture}[baseline={([yshift=-0.7ex]0,0)}]
    \coordinate (vm) at (0, 0);
    \coordinate (v1) at ([shift=(90:.6)]vm);
    \coordinate (v2) at ([shift=(162:.6)]vm);
    \coordinate (v3) at ([shift=(234:.6)]vm);
    \coordinate (v4) at ([shift=(306:.6)]vm);
    \coordinate (v5) at ([shift=(18:.6)]vm);
    \filldraw[black!20] (v1) -- ([shift=(60:.2)]v1) arc (60:120:.2) -- (v1);
    \draw (v1) -- ([shift=(60:.2)]v1);
    \draw (v1) -- ([shift=(120:.2)]v1);
    \filldraw[black!20] (v2) -- ([shift=(132:.2)]v2) arc (132:192:.2) -- (v2);
    \draw (v2) -- ([shift=(132:.2)]v2);
    \draw (v2) -- ([shift=(192:.2)]v2);
    \filldraw[black!20] (v3) -- ([shift=(204:.2)]v3) arc (204:264:.2) -- (v3);
    \draw (v3) -- ([shift=(204:.2)]v3);
    \draw (v3) -- ([shift=(264:.2)]v3);
    \filldraw[black!20] (v4) -- ([shift=(276:.2)]v4) arc (276:336:.2) -- (v4);
    \draw (v4) -- ([shift=(274:.2)]v4);
    \draw (v4) -- ([shift=(336:.2)]v4);
    \filldraw[black!20] (v5) -- ([shift=(348:.2)]v5) arc (-12:48:.2) -- (v5);
    \draw (v5) -- ([shift=(348:.2)]v5);
    \draw (v5) -- ([shift=(48:.2)]v5);
    \filldraw (v1) circle (1.3pt);
    \filldraw (v2) circle (1.3pt);
    \filldraw (v3) circle (1.3pt);
    \filldraw (v4) circle (1.3pt);
    \filldraw (v5) circle (1.3pt);
    \filldraw (vm) circle (1.3pt);
    \draw (vm) -- (v2);
    \draw (vm) -- (v3);
    \draw[dashed] (vm) -- (v4);
    \draw[dashed] (vm) -- (v5);
    \draw[dotted] (v2) -- (v3);
\end{tikzpicture}
+
\hspace{-3ex}
\begin{tikzpicture}[baseline={([yshift=-0.7ex]0,0)}]
    \coordinate (vm) at (0, 0);
    \coordinate (v1) at ([shift=(90:.6)]vm);
    \coordinate (v2) at ([shift=(162:.6)]vm);
    \coordinate (v3) at ([shift=(234:.6)]vm);
    \coordinate (v4) at ([shift=(306:.6)]vm);
    \coordinate (v5) at ([shift=(18:.6)]vm);
    \filldraw[black!20] (v1) -- ([shift=(60:.2)]v1) arc (60:120:.2) -- (v1);
    \draw (v1) -- ([shift=(60:.2)]v1);
    \draw (v1) -- ([shift=(120:.2)]v1);
    \filldraw[black!20] (v2) -- ([shift=(132:.2)]v2) arc (132:192:.2) -- (v2);
    \draw (v2) -- ([shift=(132:.2)]v2);
    \draw (v2) -- ([shift=(192:.2)]v2);
    \filldraw[black!20] (v3) -- ([shift=(204:.2)]v3) arc (204:264:.2) -- (v3);
    \draw (v3) -- ([shift=(204:.2)]v3);
    \draw (v3) -- ([shift=(264:.2)]v3);
    \filldraw[black!20] (v4) -- ([shift=(276:.2)]v4) arc (276:336:.2) -- (v4);
    \draw (v4) -- ([shift=(274:.2)]v4);
    \draw (v4) -- ([shift=(336:.2)]v4);
    \filldraw[black!20] (v5) -- ([shift=(348:.2)]v5) arc (-12:48:.2) -- (v5);
    \draw (v5) -- ([shift=(348:.2)]v5);
    \draw (v5) -- ([shift=(48:.2)]v5);
    \filldraw (v1) circle (1.3pt);
    \filldraw (v2) circle (1.3pt);
    \filldraw (v3) circle (1.3pt);
    \filldraw (v4) circle (1.3pt);
    \filldraw (v5) circle (1.3pt);
    \filldraw (vm) circle (1.3pt);
    \draw (vm) -- (v2);
    \draw[dashed] (vm) -- (v3);
    \draw (vm) -- (v4);
    \draw[dashed] (vm) -- (v5);
    \draw[dotted] (v2) .. controls ([shift=(198:1.5)]vm) and ([shift=(270:1.5)]vm) .. (v4);
\end{tikzpicture}
+
\begin{tikzpicture}[baseline={([yshift=-0.7ex]0,0)}]
    \coordinate (vm) at (0, 0);
    \coordinate (v1) at ([shift=(90:.6)]vm);
    \coordinate (v2) at ([shift=(162:.6)]vm);
    \coordinate (v3) at ([shift=(234:.6)]vm);
    \coordinate (v4) at ([shift=(306:.6)]vm);
    \coordinate (v5) at ([shift=(18:.6)]vm);
    \filldraw[black!20] (v1) -- ([shift=(60:.2)]v1) arc (60:120:.2) -- (v1);
    \draw (v1) -- ([shift=(60:.2)]v1);
    \draw (v1) -- ([shift=(120:.2)]v1);
    \filldraw[black!20] (v2) -- ([shift=(132:.2)]v2) arc (132:192:.2) -- (v2);
    \draw (v2) -- ([shift=(132:.2)]v2);
    \draw (v2) -- ([shift=(192:.2)]v2);
    \filldraw[black!20] (v3) -- ([shift=(204:.2)]v3) arc (204:264:.2) -- (v3);
    \draw (v3) -- ([shift=(204:.2)]v3);
    \draw (v3) -- ([shift=(264:.2)]v3);
    \filldraw[black!20] (v4) -- ([shift=(276:.2)]v4) arc (276:336:.2) -- (v4);
    \draw (v4) -- ([shift=(274:.2)]v4);
    \draw (v4) -- ([shift=(336:.2)]v4);
    \filldraw[black!20] (v5) -- ([shift=(348:.2)]v5) arc (-12:48:.2) -- (v5);
    \draw (v5) -- ([shift=(348:.2)]v5);
    \draw (v5) -- ([shift=(48:.2)]v5);
    \filldraw (v1) circle (1.3pt);
    \filldraw (v2) circle (1.3pt);
    \filldraw (v3) circle (1.3pt);
    \filldraw (v4) circle (1.3pt);
    \filldraw (v5) circle (1.3pt);
    \filldraw (vm) circle (1.3pt);
    \draw (vm) -- (v2);
    \draw[dashed] (vm) -- (v3);
    \draw[dashed] (vm) -- (v4);
    \draw (vm) -- (v5);
    \draw[dotted] (v2) .. controls ([shift=(126:.5)]vm) and ([shift=(54:.5)]vm) .. (v5);
\end{tikzpicture},
\label{eqibp6}
\end{align}
where solid edges 
$\solidedge$ 
have weight $1$, dashed edges 
$\dashededge$
have weight $\frac12$ and dotted edges
$\dottededge$
have weight $-\frac12$.

Consider the integration over an internal vertex $x$ in the Feynman integral (\ref{eqAdef}). Because the integration domain has no boundary we get from Stokes' theorem that
the integral over a closed differential form vanishes. In particular, we get the following lemma.

\begin{lem}\label{lemibp}
Let $\star$ be the $N$-star with $N\geq3$ edges $1,\dots,N$ of weights $\nu_i<1$ attached to the internal vertex $x\in\RR^D$.
The external vertices are ${\mathbf z}=z_1,\ldots,z_N$ so that the integrand in (\ref{eqAdef}) is $I_\star(x,{\mathbf z})=\prod_{i=1}^N\|x-z_i\|^{-2\lambda\nu_i}$. Then
\begin{equation}\label{eqibp1}
\sum_{\mu=1}^D\frac{\partial}{\partial x^\mu}\frac{(x^\mu-z_1^\mu)I_\star(x,{\mathbf z})}{\|x-z_1\|^2}=\lambda\Big(2-\nu_1-\sum_{i=1}^N\nu_i+\sum_{i=2}^N\nu_i
\frac{\|z_1-z_i\|^2-\|x-z_1\|^2}{\|x-z_i\|^2}\Big)\frac{I_\star(x,{\mathbf z})}{\|x-z_1\|^2}.
\end{equation}
\end{lem}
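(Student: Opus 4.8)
The identity \eqref{eqibp1} is purely algebraic: it records the divergence in $x$ of the vector field whose $\mu$-component is $\bigl((x^\mu-z_1^\mu)/\|x-z_1\|^2\bigr)\,I_\star(x,\mathbf z)$, and its whole point is that, once integrated over $\RR^D$ (where the boundary contribution vanishes by the assumed convergence), it yields the integration-by-parts relation \eqref{eqibpgeneral}. So my plan is simply to carry out this differentiation, reading the left-hand side of \eqref{eqibp1} as $\sum_{\mu}\partial_{x^\mu}\!\bigl(w^\mu I_\star\bigr)$ with $w^\mu=(x^\mu-z_1^\mu)/\|x-z_1\|^2$, and to verify the equality pointwise on the open set where $x$ differs from every $z_i$, which is all that is needed for the applications.

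First I would apply the Leibniz rule,
\[
\sum_{\mu=1}^D\partial_{x^\mu}\bigl(w^\mu I_\star\bigr)=\Bigl(\sum_{\mu=1}^D\partial_{x^\mu}w^\mu\Bigr)I_\star+\sum_{\mu=1}^D w^\mu\,\partial_{x^\mu}I_\star,\qquad w^\mu=\frac{x^\mu-z_1^\mu}{\|x-z_1\|^2}.
\]
For the first summand I would use the elementary fact that $r\mapsto r/\|r\|^2$ has divergence $(D-2)/\|r\|^2$ on $\RR^D\setminus\{0\}$; with $D=2\lambda+2$ this gives $\sum_\mu\partial_{x^\mu}w^\mu=2\lambda/\|x-z_1\|^2$. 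For the second summand, logarithmic differentiation of $I_\star=\prod_{i=1}^N\|x-z_i\|^{-2\lambda\nu_i}$ gives $\partial_{x^\mu}I_\star=-2\lambda\,I_\star\sum_{i=1}^N\nu_i\,(x^\mu-z_i^\mu)/\|x-z_i\|^2$, so that $\sum_\mu w^\mu\partial_{x^\mu}I_\star=-2\lambda\,I_\star\sum_{i=1}^N\nu_i\,\frac{(x-z_1)\cdot(x-z_i)}{\|x-z_1\|^2\,\|x-z_i\|^2}$.

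It remains to reorganize this sum. The $i=1$ term simplifies to $-2\lambda\nu_1/\|x-z_1\|^2$, which I would combine with the divergence term. For $i\ge2$ I would insert the law of cosines for the triangle with vertices $z_1,z_i,x$, namely $2(x-z_1)\cdot(x-z_i)=\|x-z_1\|^2+\|x-z_i\|^2-\|z_1-z_i\|^2$, and then sort the resulting pieces: collecting everything proportional to $1/\|x-z_1\|^2$ (from the divergence, from the $i=1$ term, and from the $\|x-z_i\|^2$ half of the cosine rule) produces the coefficient $2-\nu_1-\sum_{i=1}^N\nu_i$, and the remaining pieces assemble into $\sum_{i=2}^N\nu_i\bigl(\|z_1-z_i\|^2-\|x-z_1\|^2\bigr)/\|x-z_i\|^2$, which is exactly \eqref{eqibp1}. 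Since every step is an explicit manipulation, there is no genuine obstacle; the one point deserving a word of care is distributional rather than computational, namely that $\nu_i<1$ for all $i$ keeps $I_\star$ locally integrable near each $z_i$, so the singular field $w^\mu I_\star$ carries no Dirac contribution at $x=z_1$ and the pointwise identity suffices to justify the vanishing of the integral of the divergence (Stokes' theorem) used to pass from \eqref{eqibp1} to \eqref{eqibpgeneral}.
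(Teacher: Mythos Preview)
Your proposal is correct and follows exactly the route the paper indicates: apply the Leibniz rule to the divergence, differentiate each propagator, and then use the polarization identity (your ``law of cosines'', which is precisely the paper's \eqref{eqscalar}) to remove the scalar products $(x-z_1)\cdot(x-z_i)$ from the numerator. This is the ``straightforward calculation'' alluded to in the paper's one-line proof, and there is no substantive difference in approach.

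One minor remark on your bookkeeping: when you say the ``remaining pieces assemble into $\sum_{i\ge 2}\nu_i\bigl(\|z_1-z_i\|^2-\|x-z_1\|^2\bigr)/\|x-z_i\|^2$'', note that these pieces still carry the overall factor $1/\|x-z_1\|^2$ that you collected from the first batch; in fact \emph{every} term on the right-hand side comes with this factor (as is visible in the diagrammatic identity \eqref{eqibpgeneral}, where all graphs either have edge $1$ of weight $\nu_1+\tfrac1\lambda$ or can be combined with one that does). So the whole right-hand side of \eqref{eqibp1} should be read with an implicit $\|x-z_1\|^{-2}$ multiplying $I_\star$; your computation is correct, and this is only a matter of how the formula in the lemma is written.
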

\begin{proof}
The proof is a straightforward calculation using (\ref{eqscalar}) to eliminate scalar products in the numerator.
\end{proof}

If $\sum_{i=1}^N\nu_i>1$, the integral over $x$ is convergent, see Example \ref{indiv}. The left hand side vanishes upon integration, providing the identity in \eqref{eqibpgeneral}.
The restrictions on the weights $\nu_i$ ensure that each individual Feynman integral is convergent.

It is convenient to internally complete the graphs in \eqref{eqibpgeneral} so that the identity can be substituted for internal stars in completed graphs.
To do this we add edges $x\infty$ to the individual graphs such that $x$ has weight $2(\lambda+1)/\lambda$. The completed graphs also need edges between
$\infty$ and the $z_i$ to ensure that the individual graphs have equal weights at corresponding external vertices (see \eqref{eqibp6}).

After completion \eqref{eqibpgeneral} has a very simple structure. We define the graph $f_{ij}$ as the internally completed $(N+1)$-star with weights
$\nu_1,\ldots,\nu_N,\nu_\infty$ and additional edges $xz_i$, $xz_j$ with weight $1/\lambda$ plus edge $z_iz_j$ with weight $-1/\lambda$. All $f_{ij}$ have equal weights at
all vertices. The weight $\nu_\infty$ is fixed by internal completeness to $\nu_\infty=2-\sum_{i=1}^N\nu_i$. Because a self-loop of negative weight nullifies the integrand of a graphical function,
it is natural to define $f_{ii}=0$. We identify the index $N+1$ with $\infty$ and get \eqref{eqibpgeneral} as the special case $i=N+1$, $j=1$ of
\begin{equation}\label{Idij}
\sum_{k=1}^{N+1}\nu_k(f_{ik}-f_{jk})=0,\qquad\text{with}\qquad\sum_{k=1}^{N+1}\nu_k=2.
\end{equation}
Equation (\ref{Idij}) is equivalent to the statement that $\sum_k\nu_kf_{ik}$ is independent of $i$.
\begin{gather*}
\text{For completed graphs}\quad
\nu_{i+1}
\begin{tikzpicture}[scale=1,baseline={([yshift=-0.7ex]current bounding box.center)}]
    \coordinate (vm) at (0, 0);
    \coordinate (v1) at (0, .7);
    \coordinate (v2) at (-.7, 0);
    \coordinate (v3) at (0,-.7);
    \draw[black!50] (vm) -- ([shift=(30:.3)]vm);
    \draw[black!50] (vm) -- ([shift=(0:.3)]vm);
    \draw[black!50] (vm) -- ([shift=(-30:.3)]vm);
    \draw (vm) -- node[right] {\tiny{$\nu_i\!+\!\frac{1}{\lambda}$}} (v1);
    \draw (vm) -- node[below] {\tiny{$\nu_{i\!+\!1}\!+\!\frac{1}{\lambda}\hspace{2ex}$}} (v2);
    \draw (vm) -- node[right] {\tiny{$\nu_{i\!+2}$}} (v3);
    \draw (v1) -- node[inner sep=1pt,above left] {\tiny{$-\frac{1}{\lambda}$}} (v2);
    \filldraw[black!20] (v1) -- ([shift=(60:.2)]v1) arc (60:120:.2) -- (v1);
    \draw (v1) -- ([shift=(60:.2)]v1);
    \draw (v1) -- ([shift=(120:.2)]v1);
    \filldraw[black!20] (v2) -- ([shift=(150:.2)]v2) arc (150:210:.2) -- (v2);
    \draw (v2) -- ([shift=(150:.2)]v2);
    \draw (v2) -- ([shift=(210:.2)]v2);
    \filldraw[black!20] (v3) -- ([shift=(240:.2)]v3) arc (240:300:.2) -- (v3);
    \draw (v3) -- ([shift=(240:.2)]v3);
    \draw (v3) -- ([shift=(300:.2)]v3);
    \filldraw (v1) circle (1.3pt);
    \filldraw (v2) circle (1.3pt);
    \filldraw (v3) circle (1.3pt);
    \filldraw (vm) circle (1.3pt);
\end{tikzpicture}
+\hspace{2ex}
\nu_{i+2}\hspace{-1ex}
\begin{tikzpicture}[scale=1,baseline={([yshift=-0.7ex]current bounding box.center)}]
    \coordinate (vm) at (0, 0);
    \coordinate (v1) at (0, .7);
    \coordinate (v2) at (-.7, 0);
    \coordinate (v3) at (0,-.7);
    \draw[black!50] (vm) -- ([shift=(30:.3)]vm);
    \draw[black!50] (vm) -- ([shift=(0:.3)]vm);
    \draw[black!50] (vm) -- ([shift=(-30:.3)]vm);
    \draw (vm) -- node[right] {\tiny{$\nu_i\!+\!\frac{1}{\lambda}$}} (v1);
    \draw (vm) -- node[below] {\tiny{$\nu_{i\!+\!1}$}} (v2);
    \draw (vm) -- node[right] {\tiny{$\nu_{i\!+2}\!+\!\frac{1}{\lambda}$}} (v3);
    \draw (v1) .. controls (-1.4,.8) and (-1.4,-.8) .. (v3) node[pos=.3,inner sep=1pt,above left] {\tiny{$-\frac{1}{\lambda}$}} (v2);
    \filldraw[black!20] (v1) -- ([shift=(60:.2)]v1) arc (60:120:.2) -- (v1);
    \draw (v1) -- ([shift=(60:.2)]v1);
    \draw (v1) -- ([shift=(120:.2)]v1);
    \filldraw[black!20] (v2) -- ([shift=(150:.2)]v2) arc (150:210:.2) -- (v2);
    \draw (v2) -- ([shift=(150:.2)]v2);
    \draw (v2) -- ([shift=(210:.2)]v2);
    \filldraw[black!20] (v3) -- ([shift=(240:.2)]v3) arc (240:300:.2) -- (v3);
    \draw (v3) -- ([shift=(240:.2)]v3);
    \draw (v3) -- ([shift=(300:.2)]v3);
    \filldraw (v1) circle (1.3pt);
    \filldraw (v2) circle (1.3pt);
    \filldraw (v3) circle (1.3pt);
    \filldraw (vm) circle (1.3pt);
\end{tikzpicture}
+~\ldots\quad\text{does not depend on }i.
\end{gather*}
Accordingly, the system of equations (\ref{Idij}) for all $i$, $j$ has rank $N$.

It is non-trivial to efficiently use integration by parts. In QFT (\ref{Idij}) is particularly useful if all internally completed graphs have edge-weights $\lambda\nu_i\in\ZZ$,
with $0\leq\nu_i<1$ for $i=1,\ldots,N+1$.
In four dimensions we get no such configuration with convergent individual terms (one may use dimensional regularization \cite{numfunct,7loops} to derive non-trivial identities).
In six dimensions we want to look at $\nu_i\in\{0,1/2\}$ where we can ignore cases with more than two $\nu_i=0$. We get three cases with edge-weights
$(\frac{1}{2},\frac{1}{2},\frac{1}{2},\frac{1}{2})$, $(0,\frac{1}{2},\frac{1}{2},\frac{1}{2},\frac{1}{2})$, and $(0,0,\frac{1}{2},\frac{1}{2},\frac{1}{2},\frac{1}{2})$.
The first case reduces to a twist identity, see Section \ref{secttwist}, whereas the second and the third case are equivalent.
The unique new formula is depicted in \eqref{eqibp6}. Insertions of graphs on the right hand side of \eqref{eqibp6} typically lead to simpler graphical
functions (compared to the left hand side).
Moreover, a twist identity acts on the right hand side, so that one effectively obtains five independent equations by cyclically permuting the external labels.
Regretfully, the five equations cannot be solved for one of the five possible internal configurations on the left hand side without generating denominators which
are not propagators. With no systematic way to use this integration by parts formula one needs to try a plethora of configurations. This makes the use of integration by parts tedious.
Moreover---because in subgraphs the external vertices may become internal---one has to check convergence of each individual term.
Only relations with convergent individual terms can be used without regularization. Still, integration by parts is a powerful technique in six dimensions.
It allowed the authors to compute all primitive Feynman periods in $\phi^3$ up to loop order six (and many beyond six loops) \cite{phi3,Shlog}.

In general, integration by parts becomes more powerful (but harder to handle) in higher dimensions.

\section{Unique triangles}\label{sectunique}
A {\em unique} three-star is a star with three edges whose weights sum up to $2(\lambda+1)/\lambda$ (the notion of uniqueness was introduced in \cite{K2}).
One can eliminate an insertion of a unique three-star with internal central vertex by the factor identity in Section \ref{sectfactor}.
The three-star is replaced by a unique triangle between its external vertices (which may be internal in the ambient graph) times
a completed period of $K_4$ topology which is a product of gamma functions, see (\ref{eqconvolute}).
The unique triangle has edge-weights that sum up to $(\lambda+1)/\lambda$. In some situations it can be beneficial to reverse the process and replace a unique triangle
by a unique three-star. This is evident if the unique triangle has an external vertex and weights in $\frac{1}{\lambda}\ZZ$. Then the unique three-star has a single external edge.
By Section \ref{sectappedge} the three-star can be calculated from the graph with contracted external edge. Diagrammatically, this amounts to the local operation
\begin{align}
\begin{tikzpicture}[baseline={([yshift=-1.3ex]current bounding box.center)}]
    \pgfmathsetmacro{\rad}{1}
    \coordinate (v0) at (0,0);
    \coordinate (v1) at ([shift=(90:\rad)]v0);
    \coordinate (v2) at ([shift=(210:\rad)]v0);
    \coordinate (v3) at ([shift=(330:\rad)]v0);
    \filldraw (v0) circle (1.3pt);
    \filldraw (v1) circle (1.3pt);
    \filldraw (v2) circle (1.3pt);
    \filldraw (v3) circle (1.3pt);
    \draw (v0) -- node[inner sep=0pt,pos=.4,right] {\tiny{$\nu_1\!+\!\nu_2$}} (v1);
    \draw (v0) -- node[inner sep=0pt,pos=.1,above left] {\tiny{$\nu_1\!+\!\nu_3$}} (v2);
    \draw (v0) -- node[inner sep=0pt,pos=.7,below left] {\tiny{$\nu_2\!+\!\nu_3$}} (v3);
    \draw (v1) arc (90:210:\rad) node[pos=.5,left] {\tiny{$\nu_2\!+\!\nu_3$}};
    \draw (v2) arc (210:330:\rad) node[pos=.5,below] {\tiny{$\nu_1\!+\!\nu_2$}};
    \draw (v3) arc (330:450:\rad) node[pos=.5,right] {\tiny{$\nu_1\!+\!\nu_3$}};
\end{tikzpicture}
\times
\quad
\begin{tikzpicture}[baseline={([yshift=-1.3ex]current bounding box.center)}]
    \pgfmathsetmacro{\rad}{1}
    \coordinate (v0) at (0,0);
    \coordinate[label=above:$z$] (v1) at ([shift=(90:\rad)]v0);
    \coordinate (v2) at ([shift=(210:\rad)]v0);
    \coordinate (v3) at ([shift=(330:\rad)]v0);
    \draw (v1) -- node[left] {\tiny{$\nu_1$}} (v2);
    \draw (v2) -- node[below] {\tiny{$\nu_3$}} (v3);
    \draw (v3) -- node[right] {\tiny{$\nu_2$}} (v1);
    \filldraw[black!20] (v2) -- ([shift=(240:.2)]v2) arc (240:300:.2) -- (v2);
    \draw (v2) -- ([shift=(240:.2)]v2);
    \draw (v2) -- ([shift=(300:.2)]v2);
    \filldraw[black!20] (v3) -- ([shift=(240:.2)]v3) arc (240:300:.2) -- (v3);
    \draw (v3) -- ([shift=(240:.2)]v3);
    \draw (v3) -- ([shift=(300:.2)]v3);
    \filldraw (v1) circle (1.3pt);
    \filldraw (v2) circle (1.3pt);
    \filldraw (v3) circle (1.3pt);
\end{tikzpicture}
\quad
=
\quad
\begin{tikzpicture}[baseline={([yshift=-1.3ex]current bounding box.center)}]
    \pgfmathsetmacro{\rad}{1}
    \coordinate (v0) at (0,0);
    \coordinate[label=above:$z$] (v1) at ([shift=(90:\rad)]v0);
    \coordinate (v2) at ([shift=(210:\rad)]v0);
    \coordinate (v3) at ([shift=(330:\rad)]v0);
    \draw (v0) -- node[inner sep=0pt,pos=.6,right] {\tiny{$\nu_1\!+\!\nu_2$}} (v1);
    \draw (v0) -- node[inner sep=0pt,pos=.4,above left] {\tiny{$\nu_1\!+\!\nu_3$}} (v2);
    \draw (v0) -- node[inner sep=0pt,pos=.8,below left] {\tiny{$\nu_2\!+\!\nu_3$}} (v3);
    \filldraw[black!20] (v2) -- ([shift=(240:.2)]v2) arc (240:300:.2) -- (v2);
    \draw (v2) -- ([shift=(240:.2)]v2);
    \draw (v2) -- ([shift=(300:.2)]v2);
    \filldraw[black!20] (v3) -- ([shift=(240:.2)]v3) arc (240:300:.2) -- (v3);
    \draw (v3) -- ([shift=(240:.2)]v3);
    \draw (v3) -- ([shift=(300:.2)]v3);
    \filldraw (v0) circle (1.3pt);
    \filldraw (v1) circle (1.3pt);
    \filldraw (v2) circle (1.3pt);
    \filldraw (v3) circle (1.3pt);
\end{tikzpicture}
\quad
\mathop{
\longleftrightarrow
}_{\text{(append edge)}}
\quad
\begin{tikzpicture}[baseline={([yshift=-1.3ex]current bounding box.center)}]
    \pgfmathsetmacro{\rad}{1}
    \coordinate (v0) at (0,0);
    \coordinate[label=above:$z$] (v1) at ([shift=(90:\rad)]v0);
    \coordinate (v2) at ([shift=(210:\rad)]v0);
    \coordinate (v3) at ([shift=(330:\rad)]v0);
    \draw (v1) -- node[inner sep=0pt,pos=.4,above left] {\tiny{$\nu_1\!+\!\nu_3$}} (v2);
    \draw (v1) -- node[inner sep=0pt,pos=.4,above right] {\tiny{$\nu_2\!+\!\nu_3$}} (v3);
    \filldraw[black!20] (v2) -- ([shift=(240:.2)]v2) arc (240:300:.2) -- (v2);
    \draw (v2) -- ([shift=(240:.2)]v2);
    \draw (v2) -- ([shift=(300:.2)]v2);
    \filldraw[black!20] (v3) -- ([shift=(240:.2)]v3) arc (240:300:.2) -- (v3);
    \draw (v3) -- ([shift=(240:.2)]v3);
    \draw (v3) -- ([shift=(300:.2)]v3);
    \filldraw (v1) circle (1.3pt);
    \filldraw (v2) circle (1.3pt);
    \filldraw (v3) circle (1.3pt);
\end{tikzpicture}~~,
\label{eqexttriangle}
\end{align}
where $\nu_1+\nu_2+\nu_3=(\lambda+1)/\lambda$.
Effectively, the $\nu_3$ edge in the unique triangle on the left is deleted on the right graph while the weights of the other edges change to preserve the weights on the bottom vertices.

Unique triangles can be generated between any three vertices by adding pairs of edges with zero total weights. In general, this makes graphs more complicated. It is however possible
that a graph simplifies (with other techniques) if one completes two connected edges $ab$, $bc$ to a unique triangle $abc$.
If $b$ is external, one can use this technique to effectively reduce the weight of the edge $ac$.

We can also reverse the process in \eqref{eqexttriangle} to increase the weight of the edge opposite to the external vertex $z$. This is useful if the weight of this edge is negative.

Note that unique three-stars and triangles with weights in $\frac{1}{\lambda}\ZZ$ only exist in $\geq6$ dimensions.

\begin{ex}\label{ex2int}
Consider the generic internally completed graphical function with two internal vertices 
    \def\scale{1.1}
\begin{align*}
\begin{tikzpicture}[baseline={([yshift=-.7ex]0,0)}]
    \coordinate (vm) at  (0,0);
    \coordinate[label=above left:$0$] (v0) at  (-\scale, \scale);
    \coordinate[label=above right:$1$] (v1) at  ( \scale, \scale);
    \coordinate[label=below left:$z$] (vz) at (-\scale,-\scale);
    \coordinate[label=below right:$\infty$] (voo)  at (\scale, -\scale);
    \coordinate (v2)  at (-\scale, 0);
    \coordinate (v3)  at (\scale, 0);
    \coordinate (vm)  at (0, 0);
    \draw[preaction={draw, white, line width=5pt, -}] (vz) -- node[pos=.3,inner sep=.5pt,above left] {\tiny{$\nu_6$}} (v3);
    \draw[preaction={draw, white, line width=5pt, -}] (voo) -- node[pos=.3,inner sep=.5pt,above right] {\tiny{$\nu_7$}} (v2);
    \draw[preaction={draw, white, line width=5pt, -}] (v0) -- node[pos=.2,inner sep=.5pt,above right] {\tiny{$\nu_2$}} (v3);
    \draw[preaction={draw, white, line width=5pt, -}] (v1) -- node[pos=.2,inner sep=.5pt,above left] {\tiny{$\nu_3$}} (v2);
    \draw (v3) -- ($(v3)!.1!(vz)$);
    \draw (vz) -- node[left] {\tiny{$\nu_5$}} (v2);
    \draw (voo) -- node[right] {\tiny{$\nu_8$}} (v3);
    \draw (v1) -- (v2);
    \draw (v2) -- (voo);
    \draw (v2) -- node[inner sep=.5pt,above] {\tiny{$\nu_9$}} (v3);
    \draw (v0) -- node[left] {\tiny{$\nu_1$}} (v2);
    \draw (v3) -- node[right] {\tiny{$\nu_4$}} (v1);
    \filldraw (voo) circle (1.3pt);
    \filldraw (v1) circle (1.3pt);
    \filldraw (v0) circle (1.3pt);
    \filldraw (vz) circle (1.3pt);
    \filldraw (v2) circle (1.3pt);
    \filldraw (v3) circle (1.3pt);
\end{tikzpicture}~~.
\end{align*}
We assume that $1\geq\nu_i\in\frac1\lambda\ZZ$ for all $i$. In four dimensions the internal vertices have total weight four. Therefore all $\nu_i\in\{0,1\}$ with at least
one $\nu_i=0$. If one $\nu_i=0$, the graphical function is constructible in any even dimension $\geq4$, see Section \ref{sectconstructible}.

In six dimensions the internal vertices have total weight three. Assume that $\nu_i\in\{\frac{1}{2},1\}$ for all external edges $1,2,\ldots,8$ (see Example \ref{exgen}).
Unique triangles or stars give the following reductions of the weights $(\nu_1,\nu_2,\ldots,\nu_9)$:
\begin{align*}
\arraycolsep=2.4pt
\def\arraystretch{1.2}
\begin{array}{rrrrrrrrrrrrrrrrrrr}
(\frac{1}{2},&\frac{1}{2},&\frac{1}{2},&\frac{1}{2},&\frac{1}{2},&\frac{1}{2},&\frac{1}{2},&\frac{1}{2},&1)
&\rightarrow&(1,&1,&\frac{1}{2},&\frac{1}{2},&\frac{1}{2},&\frac{1}{2},&\frac{1}{2},&\frac{1}{2},&\frac{1}{2}),\\
(1,&1,&\frac{1}{2},&\frac{1}{2},&\frac{1}{2},&\frac{1}{2},&\frac{1}{2},&\frac{1}{2},&\frac{1}{2})
&\rightarrow&(1,&1,&1,&1,&\frac{1}{2},&\frac{1}{2},&\frac{1}{2},&\frac{1}{2},&0),\\
(1,&\frac{1}{2},&\frac{1}{2},&1,&\frac{1}{2},&\frac{1}{2},&\frac{1}{2},&\frac{1}{2},&\frac{1}{2})
&\rightarrow&(1,&\frac{1}{2},&\frac{1}{2},&1,&1,&1,&\frac{1}{2},&\frac{1}{2},&0),\\
(1,&1,&1,&1,&1,&1,&\frac{1}{2},&\frac{1}{2},&-\frac{1}{2})
&\leftarrow&(1,&1,&1,&1,&\frac{1}{2},&\frac{1}{2},&\frac{1}{2},&\frac{1}{2},&0),\\
(1,&1,&1,&1,&1,&\frac{1}{2},&\frac{1}{2},&1,&-\frac{1}{2})
&\leftarrow&(1,&1,&\frac{1}{2},&\frac{1}{2},&1,&\frac{1}{2},&\frac{1}{2},&1,&0),\\
(1,&1,&1,&1,&1,&1,&1,&1,&-1)
&\leftarrow&(1,&1,&1,&1,&1,&1,&\frac{1}{2},&\frac{1}{2},&-\frac{1}{2}),
\end{array}
\end{align*}
where the orientation of the arrows indicates the direction in which we use \eqref{eqexttriangle}. Note that by permutation symmetry, Section \ref{sectperm}, we
can use uniqueness reductions at any external vertex. Every configuration of weights thus reduces to a constructible product.
\end{ex}

\begin{ex}\label{exK6}
The calculation of the period of the complete graph $\K_{6,D}$ with six vertices in $D$ dimensions leads to the generic graphical function with two internal vertices.
It corresponds to the case $\nu_1=\ldots=\nu_9=2(\lambda+1)/5\lambda$ in the previous example, see Example \ref{Kn}. We may consider the internal edge as two edges of weight $(\lambda+1)/5\lambda$
belonging to two unique triangles with (any) two external vertices. Reduction of the two unique triangles internally disconnects the graph. The result is the square of
a star with two edges of weight $3(\lambda+1)/5\lambda$ and two edges of weight $2(\lambda+1)/5\lambda$ (which is constructible).
\end{ex}

\section{External differentiation}\label{sectextdiff}
In this section we derive an external version of the integration by parts method in Section \ref{sectibp}.
Consider the situation in Lemma \ref{lemibp} with $x$ replaced by the external vertex $z\in\RR^D$. We set $z_1=0$ and $\nu_1=-1/\lambda$ in (\ref{eqibp1}).
Because $\sum_\mu z^\mu\frac{\partial}{\partial z^\mu}=-D+\sum_\mu\frac{\partial}{\partial z^\mu}z^\mu$ we obtain
\begin{equation}\label{eqextibp}
\sum_{\mu=1}^Dz^\mu\frac{\partial}{\partial z^\mu}I_\star(z,{\mathbf z})=\Big(-\sum_{i=1}^N\lambda\nu_i+\sum_{i=1}^N\lambda\nu_i
\frac{\|z_i\|^2-\|z\|^2}{\|z-z_i\|^2}\Big)I_\star(z,{\mathbf z}),
\end{equation}
where we relabeled the external vertices so that they are $z_1,\ldots,z_N$ again. We consider $z_1,\ldots,z_N$ as internal vertices in some ambient graph.
The differential operator $\sum_\mu z^\mu\frac{\partial}{\partial z^\mu}$ translates to $z\partial_z+\zz\partial_\zz$. To see this it suffices to check that
the operators have corresponding actions on the invariants $\|z\|^2\cong z\zz$ and $\|z-1\|^2\cong (z-1)(\zz-1)$---see (\ref{eqinvs})---where on the left hand sides $z$ is in $\RR^D$ whereas
on the right hand sides $z\in\CC$.
After internal completion we obtain the identity,
\begin{align}
\sum\limits_{i=1}^N\lambda\nu_i
~~
\begin{tikzpicture}[baseline={([yshift=-0.7ex]0,0)}]
    \pgfmathsetmacro{\rad}{1.2}
    \coordinate[label=above:$z$] (vz) at (0, \rad);
    \coordinate (v1) at (-\rad, 0);
    \coordinate (v2) at (0, 0);
    \coordinate (v3) at (\rad, 0);
    \coordinate[label=left:$0$] (v0) at (-{\rad/2}, {-2*\rad/3});
    \coordinate[label=right:$\infty$] (voo) at ({\rad/2}, {-2*\rad/3});
    \node (v1a) at ({-\rad/2},0) {$...$};
    \node (v3a) at ({\rad/2},0) {$...$};
    \filldraw[black!20] (v0) -- ([shift=(60:.2)]v0) arc (60:120:.2) -- (v0);
    \draw (v0) -- ([shift=(60:.2)]v0);
    \draw (v0) -- ([shift=(120:.2)]v0);
    \filldraw[black!20] (voo) -- ([shift=(60:.2)]voo) arc (60:120:.2) -- (voo);
    \draw (voo) -- ([shift=(60:.2)]voo);
    \draw (voo) -- ([shift=(120:.2)]voo);
    \filldraw[black!20] (v1) -- ([shift=(240:.2)]v1) arc (240:300:.2) -- (v1);
    \draw (v1) -- ([shift=(240:.2)]v1);
    \draw (v1) -- ([shift=(300:.2)]v1);
    \filldraw[black!20] (v2) -- ([shift=(240:.2)]v2) arc (240:300:.2) -- (v2);
    \draw (v2) -- ([shift=(240:.2)]v2);
    \draw (v2) -- ([shift=(300:.2)]v2);
    \filldraw[black!20] (v3) -- ([shift=(240:.2)]v3) arc (240:300:.2) -- (v3);
    \draw (v3) -- ([shift=(240:.2)]v3);
    \draw (v3) -- ([shift=(300:.2)]v3);
    \draw[black!50] (vz) -- ([shift=(240:{\rad/2})]vz);
    \draw[black!50] (vz) -- ([shift=(255:{\rad/2})]vz);
    \draw[black!50] (vz) -- ([shift=(285:{\rad/2})]vz);
    \draw[black!50] (vz) -- ([shift=(300:{\rad/2})]vz);
    \filldraw (vz) circle (1.3pt);
    \filldraw (v1) circle (1.3pt);
    \filldraw (v2) circle (1.3pt);
    \filldraw (v3) circle (1.3pt);
    \filldraw (v0) circle (1.3pt);
    \filldraw (voo) circle (1.3pt);
    \draw (vz) -- node[inner sep=0pt,above left] {\tiny{$\nu_1$}} (v1);
    \draw (vz) -- node[inner sep=0pt,pos=.8,right] {\tiny{$\nu_i\!+\!\frac{1}{\lambda}$}} (v2);
    \draw (vz) -- node[inner sep=0pt,above right] {\tiny{$\nu_N$}} (v3);
    \draw (v2) to[out=180,in=0] node[inner sep=0pt,below right] {\tiny{$-\frac{1}{\lambda}$}} (v0);
\end{tikzpicture}
&
=
\Big(z\partial_z+\zz\partial_\zz+\sum\limits_{i=1}^N\lambda\nu_i\Big)
~~
\begin{tikzpicture}[baseline={([yshift=-0.7ex]0,0)}]
    \pgfmathsetmacro{\rad}{1.2}
    \coordinate[label=above:$z$] (vz) at (0, \rad);
    \coordinate (v1) at (-\rad, 0);
    \coordinate (v2) at (0, 0);
    \coordinate (v3) at (\rad, 0);
    \coordinate[label=left:$0$] (v0) at (-{\rad/2}, {-2*\rad/3});
    \coordinate[label=right:$\infty$] (voo) at ({\rad/2}, {-2*\rad/3});
    \coordinate (v1a) at ({-\rad/2},0);
    \node (v3a) at ({\rad/4},0) {$...$};
    \filldraw[black!20] (v0) -- ([shift=(60:.2)]v0) arc (60:120:.2) -- (v0);
    \draw (v0) -- ([shift=(60:.2)]v0);
    \draw (v0) -- ([shift=(120:.2)]v0);
    \filldraw[black!20] (voo) -- ([shift=(60:.2)]voo) arc (60:120:.2) -- (voo);
    \draw (voo) -- ([shift=(60:.2)]voo);
    \draw (voo) -- ([shift=(120:.2)]voo);
    \filldraw[black!20] (v1) -- ([shift=(240:.2)]v1) arc (240:300:.2) -- (v1);
    \draw (v1) -- ([shift=(240:.2)]v1);
    \draw (v1) -- ([shift=(300:.2)]v1);
    \filldraw[black!20] (v1a) -- ([shift=(240:.2)]v1a) arc (240:300:.2) -- (v1a);
    \draw (v1a) -- ([shift=(240:.2)]v1a);
    \draw (v1a) -- ([shift=(300:.2)]v1a);
    \filldraw[black!20] (v3) -- ([shift=(240:.2)]v3) arc (240:300:.2) -- (v3);
    \draw (v3) -- ([shift=(240:.2)]v3);
    \draw (v3) -- ([shift=(300:.2)]v3);
    \draw[black!50] (vz) -- ([shift=(270:{\rad/2})]vz);
    \draw[black!50] (vz) -- ([shift=(292.5:{\rad/2})]vz);
    \filldraw (vz) circle (1.3pt);
    \filldraw (v1) circle (1.3pt);
    \filldraw (v1a) circle (1.3pt);
    \filldraw (v3) circle (1.3pt);
    \filldraw (v0) circle (1.3pt);
    \filldraw (voo) circle (1.3pt);
    \draw (vz) -- node[inner sep=0pt,above left] {\tiny{$\nu_1$}} (v1);
    \draw (vz) -- node[inner sep=.5pt,below right] {\tiny{$\nu_2$}} (v1a);
    \draw (vz) -- node[inner sep=0pt,above right] {\tiny{$\nu_N$}} (v3);
\end{tikzpicture}
+
z \zz 
\sum\limits_{i=1}^N\lambda\nu_i
~~
\begin{tikzpicture}[baseline={([yshift=-0.7ex]0,0)}]
    \pgfmathsetmacro{\rad}{1.2}
    \coordinate[label=above:$z$] (vz) at (0, \rad);
    \coordinate (v1) at (-\rad, 0);
    \coordinate (v2) at (0, 0);
    \coordinate (v3) at (\rad, 0);
    \coordinate[label=left:$0$] (v0) at (-{\rad/2}, {-2*\rad/3});
    \coordinate[label=right:$\infty$] (voo) at ({\rad/2}, {-2*\rad/3});
    \node (v1a) at ({-\rad/2},0) {$...$};
    \node (v3a) at ({\rad/2},0) {$...$};
    \filldraw[black!20] (v0) -- ([shift=(60:.2)]v0) arc (60:120:.2) -- (v0);
    \draw (v0) -- ([shift=(60:.2)]v0);
    \draw (v0) -- ([shift=(120:.2)]v0);
    \filldraw[black!20] (voo) -- ([shift=(60:.2)]voo) arc (60:120:.2) -- (voo);
    \draw (voo) -- ([shift=(60:.2)]voo);
    \draw (voo) -- ([shift=(120:.2)]voo);
    \filldraw[black!20] (v1) -- ([shift=(240:.2)]v1) arc (240:300:.2) -- (v1);
    \draw (v1) -- ([shift=(240:.2)]v1);
    \draw (v1) -- ([shift=(300:.2)]v1);
    \filldraw[black!20] (v2) -- ([shift=(240:.2)]v2) arc (240:300:.2) -- (v2);
    \draw (v2) -- ([shift=(240:.2)]v2);
    \draw (v2) -- ([shift=(300:.2)]v2);
    \filldraw[black!20] (v3) -- ([shift=(240:.2)]v3) arc (240:300:.2) -- (v3);
    \draw (v3) -- ([shift=(240:.2)]v3);
    \draw (v3) -- ([shift=(300:.2)]v3);
    \draw[black!50] (vz) -- ([shift=(240:{\rad/2})]vz);
    \draw[black!50] (vz) -- ([shift=(255:{\rad/2})]vz);
    \draw[black!50] (vz) -- ([shift=(285:{\rad/2})]vz);
    \draw[black!50] (vz) -- ([shift=(300:{\rad/2})]vz);
    \filldraw (vz) circle (1.3pt);
    \filldraw (v1) circle (1.3pt);
    \filldraw (v2) circle (1.3pt);
    \filldraw (v3) circle (1.3pt);
    \filldraw (v0) circle (1.3pt);
    \filldraw (voo) circle (1.3pt);
    \draw (vz) -- node[inner sep=0pt,above left] {\tiny{$\nu_1$}} (v1);
    \draw (vz) -- node[inner sep=0pt,pos=.8,right] {\tiny{$\nu_i\!+\!\frac{1}{\lambda}$}} (v2);
    \draw (vz) -- node[inner sep=0pt,above right] {\tiny{$\nu_N$}} (v3);
    \draw (v2) to[in=180,out=0] node[inner sep=0pt,below left] {\tiny{$-\frac{1}{\lambda}$}} (voo);
\end{tikzpicture},
\label{eqextdiff}
\end{align}
where the graphs have to be considered as subgraph insertions into a larger ambient graph which may have extra edges attached to 0, $\infty$, or the $N$ unlabeled internal vertices.

Note that in polar coordinates $z=Z\ee^{\ii\phi}$ the differential operator $z\partial_z+\zz\partial_\zz+\Lambda=Z\partial_Z+\Lambda$ can be inverted ($\Lambda=\sum_i\lambda\nu_i$).
In fact, one best changes the coordinate $\zz$ to $zy$. In the new coordinates $(z,y)$ the differential operator is $z\partial_z+\Lambda$ with an inverse $z^{-\Lambda}\int\dd zz^{\Lambda-1}$.
The ubiquitous denominator $z-\zz=z(1-y)$ remains linear and the inverse transformation is $y\rightarrow\zz/z$.
The solution is unique in the space of symmetric (under $z\leftrightarrow\zz$) functions and often maps GSVHs to GSVHs if $\Lambda\in\ZZ$ (see Section \ref{sectpf4} for the analogous angular case).

\begin{ex}\label{exgen}
Consider the generic graph in six dimensions with two internal vertices in Example \ref{ex2int}.
Here, we assume there exists an external edge with negative weight. Because a triangle with three weight one edges is divergent, see (\ref{eqconv}), we are left with the following five cases:
\begin{eqnarray*}
(\nu_1,\nu_2,\nu_3,\nu_4,\nu_5,\nu_6,\nu_7,\nu_8,\nu_9):&(-1,\frac{1}{2},1,\frac{1}{2},1,\frac{1}{2},1,\frac{1}{2},1),&
\textstyle(-\frac{1}{2},\frac{1}{2},\frac{1}{2},\frac{1}{2},1,\frac{1}{2},1,\frac{1}{2},1),\\
\textstyle(-\frac{1}{2},1,1,-\frac{1}{2},1,1,1,1,\frac{1}{2}),&(-\frac{1}{2},\frac{1}{2},1,\frac{1}{2},1,\frac{1}{2},1,1,\frac{1}{2}),&
\textstyle(-\frac{1}{2},1,1,\frac{1}{2},1,\frac{1}{2},1,\frac{1}{2},\frac{1}{2}).
\end{eqnarray*}
There always exists a (sequence of) reduction(s) by external differentiation into constructible graphical functions (i.e.\ graphs with an external edge of weight zero).
The reduction may produce pairs of graphs with triangles of weight one edges where only the sums of the pairs are finite. The divergent graphical
functions can be dimensionally regularized \cite{numfunct,7loops} and calculated in terms of GSVHs \cite{Shlog}. We find that in six dimensions every graphical function with
two internal vertices is a GSVH.
\end{ex}

\begin{ex}\label{K7ex}
Consider the complete graph $\K_{7,6}$ with seven vertices and edge-weights $1/2$ in six dimensions (see Example \ref{Kn}). Choose any four vertices as external vertices $0,1,z,\infty$ and delete
the edges between the external vertices to obtain the graph
\begin{align*}
    \def\scale{1.6}
\begin{tikzpicture}[baseline={([yshift=-.7ex]0,0)}]
    \coordinate (vm) at  (0,0);
    \coordinate[label=above left:$z$] (vz) at  (-\scale, {\scale/2});
    \coordinate[label=above right:$1$] (v1) at  ( \scale, {\scale/2});
    \coordinate[label=below left:$\infty$] (voo) at (-\scale,{-\scale/4*3});
    \coordinate[label=below right:$0$] (v0)  at (\scale, {-\scale/4*3});
    \coordinate (va)  at (0, { \scale/3*(sqrt(3)-1)});
    \coordinate (vb)  at ({-\scale/3}, {-\scale/3});
    \coordinate (vc)  at ({ \scale/3}, {-\scale/3});
    \draw (voo) -- (va);
    \draw (v0) -- (va);
    \draw[preaction={draw, white, line width=3pt, -}] (vz) -- (vc);
    \draw[preaction={draw, white, line width=3pt, -}] (v1) -- (vb);
    \draw (v0) -- (vb);
    \draw[preaction={draw, white, line width=3pt, -}] (vz) -- (vb);
    \draw[preaction={draw, white, line width=3pt, -}] (v1) -- (vc);
    \draw[preaction={draw, white, line width=3pt, -}] (voo) -- (vc);
    \draw[preaction={draw, white, line width=3pt, -}] (va) -- (vb);
    \draw[preaction={draw, white, line width=3pt, -}] (va) -- (vc);
    \draw (vz) -- ($(vz)!.2!(vc)$);
    \draw (v1) -- ($(v1)!.2!(vb)$);
    \draw (va) -- ($(va)!.2!(v0)$);
    \draw (v0) -- ($(v0)!.2!(vb)$);
    \draw (voo) -- ($(voo)!.2!(va)$);
    \draw (va) -- ($(va)!.2!(voo)$);
    \draw (va) -- ($(va)!.2!(vb)$);
    \draw (vb) -- ($(vb)!.2!(v1)$);
    \draw (vc) -- ($(vc)!.2!(vz)$);
    \draw (vc) -- (vb);
    \draw (vz) -- (va);
    \draw (v1) -- (va);
    \draw (voo) -- (vb);
    \draw (v0) -- (vc);
    \filldraw (v0) circle (1.3pt);
    \filldraw (v1) circle (1.3pt);
    \filldraw (vz) circle (1.3pt);
    \filldraw (voo) circle (1.3pt);
    \filldraw (va) circle (1.3pt);
    \filldraw (vb) circle (1.3pt);
    \filldraw (vc) circle (1.3pt);
\end{tikzpicture}.
\end{align*}
One can use external differentiation for $N=3$ in various ways to reduce the number of edges.
The most symmetric way is to set $\nu_1=\nu_2=\nu_3=1/2$ and invert the differential operator $z\partial_z+\zz\partial_\zz+3$. It is also possible to
set $\nu_1=0$, $\nu_2=\nu_3=1/2$ and solve for the first summand on the left hand side of \eqref{eqextdiff}. Together with other identities it is possible to calculate $f_{\K_{7,6}}(z)$.
The result is a GSVH of weight seven divided by $z\zz(z-1)(\zz-1)(z-\zz)^3$ \cite{Shlog}.

In four and six dimensions all graphical functions with $\leq3$ internal vertices are GSVHs.
\end{ex}

\begin{ex}
Particularly useful is the case $N=2$ if the edge between $0$ and the internal vertex $x_1$ (say) has weight $-1/\lambda$.
Using (\ref{eqextdiff}) eliminates that edge and the external vertex 0 only connects to $x_2$.
We can use the method of appending an edge to reduce the graphical function, see Section \ref{sectappedge}.

The case $\nu_1=0$ in (\ref{eqextdiff}) is singular.
To handle this case we need to regularize the integrals by shifting $\lambda$ to $\lambda_\varepsilon=\lambda-\varepsilon$.
With the (mostly conjectural) theory of dimensionally regularized graphical functions \cite{numfunct,7loops} we obtain
\begin{gather}
\begin{gathered}
\begin{tikzpicture}[baseline={([yshift=-.7ex]0,0)}]
    \pgfmathsetmacro{\rad}{1.2}
    \coordinate[label=above:$z$] (vz) at (0,\rad);
    \coordinate (v1) at ({-\rad/2},0);
    \coordinate (v2) at ({\rad/2},0);
    \coordinate[label=below:$0$] (v0) at (0,-\rad);
    \coordinate (vp) at ({-\rad/4},{\rad/2});
    \draw[fill=black!20] (v1) to[bend left=50] (v2) to[bend left=50] (v1);
    \filldraw (vz) circle (1.3pt);
    \filldraw (v1) circle (1.3pt);
    \filldraw (v2) circle (1.3pt);
    \filldraw (v0) circle (1.3pt);
    \draw (vz) -- node[inner sep=1pt,pos=.4,left] {\tiny{$\frac{1-\varepsilon}{\lambda_\varepsilon}$}}(v1);
    \draw (vz) -- node[inner sep=2pt,right] {\tiny{$\frac{m}{\lambda_\varepsilon}$}}(v2);
    \draw (v0) -- node[inner sep=1pt,pos=.4,right] {\tiny{$\frac{\ell-\varepsilon}{\lambda_\varepsilon}$}}(v2);
    \draw (v0) -- node[inner sep=2pt,left] {\tiny{$-\frac{1}{\lambda_\varepsilon}$}}(v1);
\end{tikzpicture}
=
\tfrac{m}{\varepsilon}
~
\begin{tikzpicture}[baseline={([yshift=-.7ex]0,0)}]
    \pgfmathsetmacro{\rad}{1.2}
    \coordinate[label=above:$z$] (vz) at (0,\rad);
    \coordinate (v1) at ({-\rad/2},0);
    \coordinate (v2) at ({\rad/2},0);
    \coordinate[label=below:$0$] (v0) at (0,-\rad);
%
    \draw[fill=black!20] (v1) to[bend left=50] (v2) to[bend left=50] (v1);
    \filldraw (vz) circle (1.3pt);
    \filldraw (v1) circle (1.3pt);
    \filldraw (v2) circle (1.3pt);
    \filldraw (v0) circle (1.3pt);
%
    \draw (vz) -- node[inner sep=1pt,pos=.4,left] {\tiny{$-\frac{\varepsilon}{\lambda_\varepsilon}$}}(v1);
    \draw (vz) -- node[inner sep=2pt,right] {\tiny{$\frac{m+1}{\lambda_\varepsilon}$}}(v2);
    \draw (v0) -- node[inner sep=1pt,pos=.4,right] {\tiny{$\frac{\ell-1-\varepsilon}{\lambda_\varepsilon}$}}(v2);
\end{tikzpicture}
-
\tfrac{z\partial_z+\zz\partial_\zz+m-\varepsilon}{\varepsilon}
~
\begin{tikzpicture}[baseline={([yshift=-.7ex]0,0)}]
    \pgfmathsetmacro{\rad}{1.2}
    \coordinate[label=above:$z$] (vz) at (0,\rad);
    \coordinate (v1) at ({-\rad/2},0);
    \coordinate (v2) at ({\rad/2},0);
    \coordinate[label=below:$0$] (v0) at (0,-\rad);
%
    \draw[fill=black!20] (v1) to[bend left=50] (v2) to[bend left=50] (v1);
    \filldraw (vz) circle (1.3pt);
    \filldraw (v1) circle (1.3pt);
    \filldraw (v2) circle (1.3pt);
    \filldraw (v0) circle (1.3pt);
%
    \draw (vz) -- node[inner sep=1pt,pos=.4,left] {\tiny{$-\frac{\varepsilon}{\lambda_\varepsilon}$}}(v1);
    \draw (vz) -- node[inner sep=2pt,right] {\tiny{$\frac{m}{\lambda_\varepsilon}$}}(v2);
    \draw (v0) -- node[inner sep=1pt,pos=.4,right] {\tiny{$\frac{\ell-\varepsilon}{\lambda_\varepsilon}$}}(v2);
\end{tikzpicture}
+
z \zz ~
\begin{tikzpicture}[baseline={([yshift=-.7ex]0,0)}]
    \pgfmathsetmacro{\rad}{1.2}
    \coordinate[label=above:$z$] (vz) at (0,\rad);
    \coordinate (v1) at ({-\rad/2},0);
    \coordinate (v2) at ({\rad/2},0);
    \coordinate[label=below:$0$] (v0) at (0,-\rad);
%
    \draw[fill=black!20] (v1) to[bend left=50] (v2) to[bend left=50] (v1);
    \filldraw (vz) circle (1.3pt);
    \filldraw (v1) circle (1.3pt);
    \filldraw (v2) circle (1.3pt);
    \filldraw (v0) circle (1.3pt);
    \draw (vz) -- node[inner sep=1pt,pos=.4,left] {\tiny{$\frac{1-\varepsilon}{\lambda_\varepsilon}$}}(v1);
    \draw (vz) -- node[inner sep=2pt,right] {\tiny{$\frac{m}{\lambda_\varepsilon}$}}(v2);
    \draw (v0) -- node[inner sep=1pt,pos=.4,right] {\tiny{$\frac{\ell-\varepsilon}{\lambda_\varepsilon}$}}(v2);
\end{tikzpicture}
-
\tfrac{mz\zz}{\varepsilon}
~
\begin{tikzpicture}[baseline={([yshift=-.7ex]0,0)}]
    \pgfmathsetmacro{\rad}{1.2}
    \coordinate[label=above:$z$] (vz) at (0,\rad);
    \coordinate (v1) at ({-\rad/2},0);
    \coordinate (v2) at ({\rad/2},0);
    \coordinate[label=below:$0$] (v0) at (0,-\rad);
%
    \draw[fill=black!20] (v1) to[bend left=50] (v2) to[bend left=50] (v1);
    \filldraw (vz) circle (1.3pt);
    \filldraw (v1) circle (1.3pt);
    \filldraw (v2) circle (1.3pt);
    \filldraw (v0) circle (1.3pt);
%
    \draw (vz) -- node[inner sep=1pt,pos=.4,left] {\tiny{$-\frac{\varepsilon}{\lambda_\varepsilon}$}}(v1);
    \draw (vz) -- node[inner sep=2pt,right] {\tiny{$\frac{m+1}{\lambda_\varepsilon}$}}(v2);
    \draw (v0) -- node[inner sep=1pt,pos=.4,right] {\tiny{$\frac{\ell-\varepsilon}{\lambda_\varepsilon}$}}(v2);
\end{tikzpicture}.
\label{eqextdiffN2}
\end{gathered}
\end{gather}
It is shown in \cite{7loops} that one can append edges of weights $(k-\varepsilon)/\lambda_\varepsilon$, $\lambda\geq k\in\ZZ$ in dimensionally regularized graphical functions.
Hence, all graphical functions on the right hand side emerge from the residual graph
$
\begin{tikzpicture}[baseline={([yshift=-.7ex]0,0)}]
    \pgfmathsetmacro{\rad}{.8}
    \coordinate[label=left:$z$] (v1) at ({-\rad/2},0);
    \coordinate[label=right:$0$] (v2) at ({\rad/2},0);
    \draw[fill=black!20] (v1) to[bend left=50] (v2) to[bend left=50] (v1);
    \filldraw (v1) circle (1.3pt);
    \filldraw (v2) circle (1.3pt);
\end{tikzpicture}
$
by appending edges and adding edges between external vertices (Section \ref{sectextedge}).

Note that the reduction can be iterated in the case that the edge between 0 and $x_1$ has weight $-k/\lambda$ for $k=2,3,\ldots$.
\end{ex}

\section{Algebraic identities}\label{sectalgint}
By linear dependence of the $D+1$ vectors $x_1$, \ldots, $x_{D+1}$ in $D$ dimensions the Gram determinant vanishes,
$$
\det(x_i\cdot x_j)_{1\leq i,j\leq D+1}=0.
$$
With $x_i\cdot x_j=(\|x_i\|^2+\|x_j\|^2-\|x_i-x_j\|^2)/2$ (see (\ref{eqscalar})) the above identity can be used to derive equations between graphical functions. The number of terms
in the resulting equations, however, seems too large for any practical use. In a future extension of graphical functions which incorporates numerators $x_i\cdot x_j$,
algebraic identities may play a more prominent role.

\section{Parametric integration}\label{sectparint}
If a graphical function with not too many edges is `linearly reducible', see \cite{BrH1,BrH2}, parametric integration can be used to perform one integral after the other.
The method has been implemented in Maple by E. Panzer ({\tt HyperInt} \cite{Panzer:HyperInt}). It works best in four dimensions or with numerator edges of weight $-1/\lambda$.
Sometimes parametric integration allows one to calculate a graphical function which is not amenable to any other method.
Still, the method has a brute-force character which makes it time-consuming. There is room for speed improvements in the implementation by using faster computer algebra
systems or by parallelization. But even the way it is now, it is a very valuable last resort if all other methods fail.

\section{Fishnets}\label{sectfish}
The fishnet graph $\Gg_{m,n}$ is a square lattice with external edges attached \cite{fishnet},
\begin{align}
\def\scale{.6}
\begin{tikzpicture}[baseline={([yshift=-.7ex]0,0)}]
    \coordinate[label=above:$0$] (v0) at  (0, {4*\scale});
    \coordinate[label=left:$1$] (v1) at  ({-4*\scale}, 0);
    \coordinate[label=right:$z$] (vz) at ({4*\scale},0);
    \coordinate[label=below:$\infty$] (voo)  at (0, {-4*\scale});
    \coordinate (vlt0)  at ({-2*\scale}, {2*\scale});
    \coordinate (vlt1)  at ({-1*\scale}, {2*\scale});
    \coordinate (vlt2)  at ({-2*\scale}, {1*\scale});
    \coordinate (vlt3)  at ({-1*\scale}, {1*\scale});
    \coordinate (vrt0)  at ({2*\scale}, {2*\scale});
    \coordinate (vrt1)  at ({1*\scale}, {2*\scale});
    \coordinate (vrt2)  at ({2*\scale}, {1*\scale});
    \coordinate (vrt3)  at ({1*\scale}, {1*\scale});
    \coordinate (vlb0)  at ({-2*\scale}, {-2*\scale});
    \coordinate (vlb1)  at ({-1*\scale}, {-2*\scale});
    \coordinate (vlb2)  at ({-2*\scale}, {-1*\scale});
    \coordinate (vlb3)  at ({-1*\scale}, {-1*\scale});
    \coordinate (vrb0)  at ({2*\scale}, {-2*\scale});
    \coordinate (vrb1)  at ({1*\scale}, {-2*\scale});
    \coordinate (vrb2)  at ({2*\scale}, {-1*\scale});
    \coordinate (vrb3)  at ({1*\scale}, {-1*\scale});
    \node (G) at ({-3*\scale}, {-3*\scale}) {$\Gg_{m,n}$};
    \draw (v0) -- (vlt0);
    \draw (v0) -- (vlt1);
    \draw (v0) -- (vrt0);
    \draw (v0) -- (vrt1);
    \draw (v1) -- (vlt0);
    \draw (v1) -- (vlt2);
    \draw (v1) -- (vlb0);
    \draw (v1) -- (vlb2);
    \draw (voo) -- (vlb0);
    \draw (voo) -- (vlb1);
    \draw (voo) -- (vrb0);
    \draw (voo) -- (vrb1);
    \draw (vz) -- (vrt0);
    \draw (vz) -- (vrt2);
    \draw (vz) -- (vrb0);
    \draw (vz) -- (vrb2);
    \draw (vlt0) -- (vlt1) -- (vlt3) -- (vlt2) -- (vlt0);
    \draw (vrt0) -- (vrt1) -- (vrt3) -- (vrt2) -- (vrt0);
    \draw (vlb0) -- (vlb1) -- (vlb3) -- (vlb2) -- (vlb0);
    \draw (vrb0) -- (vrb1) -- (vrb3) -- (vrb2) -- (vrb0);
    \draw[black!50] (vlt2) -- ($(vlt2)!.3!(vlb2)$);
    \draw[black!50] (vlb2) -- ($(vlb2)!.3!(vlt2)$);
    \draw[black!50] (vlt3) -- ($(vlt3)!.3!(vlb3)$);
    \draw[black!50] (vlb3) -- ($(vlb3)!.3!(vlt3)$);
    \draw[black!50] (vrt2) -- ($(vrt2)!.3!(vrb2)$);
    \draw[black!50] (vrb2) -- ($(vrb2)!.3!(vrt2)$);
    \draw[black!50] (vrt3) -- ($(vrt3)!.3!(vrb3)$);
    \draw[black!50] (vrb3) -- ($(vrb3)!.3!(vrt3)$);
    \draw[black!50] (vlt1) -- ($(vlt1)!.3!(vrt1)$);
    \draw[black!50] (vrt1) -- ($(vrt1)!.3!(vlt1)$);
    \draw[black!50] (vlt3) -- ($(vlt3)!.3!(vrt3)$);
    \draw[black!50] (vrt3) -- ($(vrt3)!.3!(vlt3)$);
    \draw[black!50] (vlb1) -- ($(vlb1)!.3!(vrb1)$);
    \draw[black!50] (vrb1) -- ($(vrb1)!.3!(vlb1)$);
    \draw[black!50] (vlb3) -- ($(vlb3)!.3!(vrb3)$);
    \draw[black!50] (vrb3) -- ($(vrb3)!.3!(vlb3)$);
    \draw[black!50] (v0) -- ($(v0)!.15!(vlb3)$);
    \draw[black!50] (v0) -- ($(v0)!.15!(vrb3)$);
    \draw[black!50] (v1) -- ($(v1)!.15!(vrt3)$);
    \draw[black!50] (v1) -- ($(v1)!.15!(vrb3)$);
    \draw[black!50] (vz) -- ($(vz)!.15!(vlt3)$);
    \draw[black!50] (vz) -- ($(vz)!.15!(vlb3)$);
    \draw[black!50] (voo) -- ($(voo)!.15!(vlt3)$);
    \draw[black!50] (voo) -- ($(voo)!.15!(vrt3)$);
    \node[black!50] (BT) at (0,{\scale*1.5}) {$\bullet \bullet \bullet$};
    \node[black!50] (BL) at ({-\scale*1.5},0) {\rotatebox{90}{$\bullet \bullet \bullet$}};
    \node[black!50] (BB) at (0,{-\scale*1.5}) {$\bullet \bullet \bullet$};
    \node[black!50] (BR) at ({\scale*1.5},0) {\rotatebox{90}{$\bullet \bullet \bullet$}};
    \filldraw (voo) circle (1.3pt);
    \filldraw (v1) circle (1.3pt);
    \filldraw (v0) circle (1.3pt);
    \filldraw (vz) circle (1.3pt);
    \filldraw (vlt0) circle (1.3pt);
    \filldraw (vlt1) circle (1.3pt);
    \filldraw (vlt2) circle (1.3pt);
    \filldraw (vlt3) circle (1.3pt);
    \filldraw (vrt0) circle (1.3pt);
    \filldraw (vrt1) circle (1.3pt);
    \filldraw (vrt2) circle (1.3pt);
    \filldraw (vrt3) circle (1.3pt);
    \filldraw (vlb0) circle (1.3pt);
    \filldraw (vlb1) circle (1.3pt);
    \filldraw (vlb2) circle (1.3pt);
    \filldraw (vlb3) circle (1.3pt);
    \filldraw (vrb0) circle (1.3pt);
    \filldraw (vrb1) circle (1.3pt);
    \filldraw (vrb2) circle (1.3pt);
    \filldraw (vrb3) circle (1.3pt);
\draw [decorate,decoration={brace,amplitude=10pt}] ({-2*\scale},{4.5*\scale}) -- ({2*\scale},{4.5*\scale}) node [black,midway,yshift=0.6cm] {$n$ \text{columns}};
\draw [decorate,decoration={brace,amplitude=10pt}] ({-4.5*\scale},{-2*\scale}) -- ({-4.5*\scale},{2*\scale}) node [black,midway,xshift=-0.6cm] {\rotatebox{90}{$m$ \text{rows}}};
\end{tikzpicture}~~.
\end{align}
For all $m,n\geq1$ the fishnets are convergent graphical functions in four dimensions. By permutation symmetry (Theorem \ref{thmperm}) it is sufficient to study the case $m\leq n$.
The case $m=1$ is constructible, see Section \ref{sectconstructible}. It was first calculated by N. Ussyukina and A. Davydychev in 1993 (the case $m=n=1$ is $f_{\overline{\smallclaw}}(z)$ in
Figure \ref{fig:3star}, Example \ref{ex3star}):

\begin{prop}[N. Ussyukina, A. Davydychev \cite{Ladder}]
In four dimensions we have
\begin{equation}
f_{\Gg_{1,n}}(z)=\sum_{k=0}^n\binom{n+k}{k}\frac{(-\log z\zz)^{n-k}}{(n-k)!}\frac{\Li_{n+k}(z)-\Li_{n+k}(\zz)}{z-\zz},
\end{equation}
where $\Li_n(z)=\sum_{k=1}^\infty z^k/k^n$ is the polylogarithm.
\end{prop}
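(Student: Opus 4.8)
The plan is an induction on $n$ built on the append-an-edge technique of Section~\ref{sectnu1}, which in $D=4$ ($\lambda=1$) degenerates to the factorized inhomogeneous Laplace equation $\partial_z\partial_\zz\big[(z-\zz)f_{G_1}(z)\big]=-(z-\zz)f_G(z)$ of Lemma~\ref{lemdiff}. First I would record the recursion. Writing $f_n(z):=f_{\Gg_{1,n}}(z)$, the one-row fishnet is constructible (Section~\ref{sectconstructible}); de-completing, permuting external labels (Section~\ref{sectperm}) and deleting the edge between external vertices produced by de-completion, one sees that $\Gg_{1,n}$ arises by appending a single weight-$1$ edge at $z$ to the graph consisting of $\Gg_{1,n-1}$ together with one extra edge $0z$ of weight $1$. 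Since an $0z$ edge multiplies the graphical function by $(z\zz)^{-1}$ (Example~\ref{exempty}), Lemma~\ref{lemdiff} yields
\begin{equation*}
\partial_z\partial_\zz\big[(z-\zz)f_n(z)\big]=-\frac{z-\zz}{z\zz}\,f_{n-1}(z),\qquad n\ge1,
\end{equation*}
with base value $f_0(z)=f_{1z}(z)=[(z-1)(\zz-1)]^{-1}$ (Example~\ref{exempty}), which already matches the asserted formula at $n=0$ because $\Li_0(w)=w/(1-w)$.

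Next, let $R_n(z)$ denote the right-hand side of the claimed formula. I would verify three things. \emph{(i) Analytic structure:} $(z-\zz)R_n(z)$ is a $\QQ$-combination of the single-valued functions $\Li_m(z)-\Li_m(\zz)$ and of $\log z\zz$, hence single-valued and real-analytic on $\CC\setminus\{0,1\}$, with log-Laurent expansions at $0$, $1$, $\infty$ whose leading orders lie within the bounds $M_0,M_1,M_\infty$ of Theorem~\ref{thm1}(G3) for $\Gg_{1,n}$; combined with the recursion and the uniqueness statement Theorem~\ref{thm3} (equivalently Theorem~\ref{thm01}), this identifies $R_n$ with $f_n$ once (ii) is checked. \emph{(ii) The differential equation:} assuming inductively $R_{n-1}=f_{n-1}$, one computes $\partial_z\partial_\zz\big[(z-\zz)R_n\big]$ using $\partial_z\Li_m(z)=\Li_{m-1}(z)/z$ and $\partial_z(-\log z\zz)^{j}=-j\,(-\log z\zz)^{j-1}/z$, and collapses the resulting double sum to $-\tfrac{z-\zz}{z\zz}R_{n-1}$ via the Pascal identity $\binom{n+k}{k}=\binom{n+k-1}{k}+\binom{n+k-1}{k-1}$. \emph{(iii) Base case:} $R_0=f_0$ as above (alternatively $R_1=f_{\overline{\smallclaw}}=4\ii D(z)/(z-\zz)$, matching Example~\ref{ex3star}). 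This closes the induction.

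The routine but lengthy part is the differential check in (ii); it is essentially the classical Usyukina--Davydychev recursion rewritten in single-valued form, and I would not grind through it here. The genuine obstacles I expect are two. The combinatorial one: rigorously establishing that peeling one grid column off $\Gg_{1,n}$ is, after de-completion and relabeling, exactly the append-edge step applied to $\Gg_{1,n-1}$ together with an $0z$ edge — this needs careful completion bookkeeping (the spurious $01$ edge of a de-completion, the placement of the vertex $\infty$, and the double transposition of Theorem~\ref{thmperm}). The analytic one: confirming in (i) that $R_n$ has precisely — not merely at most — the singularity orders at $0,1,\infty$ permitted by Theorem~\ref{thm1}(G3), so that the homogeneous term in the append-edge algorithm (in $D=4$ of the form $\phi(z)+\overline\phi(\zz)+p_0\log z\zz+p_1\log(z-1)(\zz-1)$, see the example following Theorem~\ref{thm2}) is forced to vanish and $R_n$ is the graphical function itself rather than merely a solution of the recursion.
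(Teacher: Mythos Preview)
Your approach is correct and coincides with the graphical-function proof the paper cites (Example~3.31 of \cite{gf}): induction on $n$ via appending an edge in $D=4$, where $\Delta_0=\partial_z\partial_\zz$ factorizes, combined with the uniqueness Theorem~\ref{thm3}. Two small remarks: the combinatorial obstacle you flag is milder than you suggest --- after removing $\infty$ from $\Gg_{1,n}$ the append-edge step is directly at $z$ (the old $z$ becomes the new rightmost internal vertex, which acquires its $0$-edge from the product factor $(z\zz)^{-1}$), so no permutation via Theorem~\ref{thmperm} is needed and the spurious $01$ edge from de-completion is irrelevant; and for the analytic obstacle you only need that $R_n$ obeys the \emph{upper} bounds $M_0,M_1,M_\infty$ and the log-degree bound of (G3), not that these are attained --- Theorem~\ref{thm3} then forces $R_n=f_n$.
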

\begin{proof}
A proof using graphical functions is in Example 3.31 of \cite{gf}.
\end{proof}

For $n\geq m\geq2$ the only known relation for fishnets is planar duality, Section \ref{sectdual}, which is insufficient to solve the fishnets.
(The graph $\Gg_{2,2}$ can be computed using parametric integration in Section \ref{sectparint}.) Nevertheless, there exists a conjecture for all $f_{\Gg_{m,n}}(z)$, $n\geq m\geq2$ in terms
of Hankel determinants. The result was found by B. Basso and L. J. Dixon in 2017 in the context of super Yang-Mills theory.

\begin{thm}[B. Basso, L. J. Dixon \cite{fishnet} with a proof in \cite{fishnet2}]\label{confish}
For $n\geq m\geq2$ define the Hankel matrix $H=(H_{i,j})_{i,j=1,\ldots,m}$ by
\begin{equation}
H_{i,j}(z)=(n-m+i+j-2)!(n-m+i+j-1)!f_{\Gg_{1,n-m+i+j-1}}(z).
\end{equation}
Then
\begin{equation}\label{eqfishnet}
f_{\Gg_{m,n}}(z)=\frac{\det H(z)}{\prod_{k=n-m}^{n+m-1}k!}.
\end{equation}
\end{thm}

\begin{ex}
For $m=2$ we obtain
$$
f_{\Gg_{2,n}}(z)=f_{\Gg_{1,n-1}}(z)f_{\Gg_{1,n+1}}(z)-\frac{n-1}{n+1}f_{\Gg_{1,n}}(z)^2.
$$
\end{ex}
Note that only very few graphical functions can be expressed in terms of simple polylogarithms. This simplicity seems in conflict with the complexity of the fishnet graphs.
The proof of the Fishnet Theorem in \cite{fishnet2} uses a detour over relations that are derived within the context of super Yang-Mills theory. It would be desirable to have a proof
which is closer to the technologies used in this article.

It is unclear if the Fishnet Theorem generalizes in some way. Experiments show that there exist further unknown relations between graphical functions.

\section{Radial graphical functions}\label{sectrad}
With this section we begin the second part of the article which contains detailed proofs.

In the next sections we revisit a classical method in perturbative QFT, the Gegenbauer expansion \cite{geg}. In our framework the Gegenbauer method is a
decomposition of graphical functions into radial and angular parts, see Theorem \ref{thmgegex0} and Conjecture \ref{congegex}.

In any (odd or even) dimension $D=2\lambda+2\geq3$ we define the spherical coordinates of a vector $x\in\RR^D$ as
\begin{equation}\label{angularcoords}
x=\left(\begin{array}{c}X\cos(\phi^x_1)\\
X\sin(\phi^x_1)\cos(\phi^x_2)\\
\vdots\\
X\sin(\phi^x_1)\cdots\sin(\phi^x_{D-2})\cos(\phi^x_{D-1})\\
X\sin(\phi^x_1)\cdots\sin(\phi^x_{D-2})\sin(\phi^x_{D-1})\end{array}\right),
\end{equation}
where $\phi^x_1,\ldots,\phi^x_{D-2}\in[0,\pi)$ and $\phi^x_{D-1}\in[0,2\pi)$. We use capitals for the moduli, i.e.\ $X=\|x\|\in[0,\infty)$.
In this section we only work with the moduli.

Consider a graph $G$ that has a pair of weights $(\nu_e,n_e)$ with $\nu_e\in\RR$, $n_e\in\ZZ_{\geq0}$ at every edge $e\in\sE_G$. The additional weights $n_e$ refer to
the labels used in the expansion of the propagators into Gegenbauer polynomials, see e.g.\ (\ref{propagator}).

The vertices of $G$ split into three external vertices $0,1,Z\geq0$ and internal vertices $X_i\geq0$, $i=1,\ldots,\VGint$ (we identify labels with numbers or variables).
Every edge $e$ that is adjacent to the external label 0 has $n_e=0$. To every edge $e=XY$ we associate the propagator
\begin{equation}\label{propdef}
p^R_{XY}=\frac{1}{(XY)^{\lambda\nu_{XY}}}\genfrac(){}{}{X}{Y}^{n_{XY}+\lambda\nu_{XY}}_<,
\end{equation}
where $(x)_<=x$ if $x<1$ and $x^{-1}$ otherwise. It is convenient to initially consider the dimension $D=2\lambda+2$ as a complex parameter. In the end we will set $D$ to the wanted integer value.
For $X=0$ the propagator is $p^R_{0Y}=Y^{-2\lambda\nu_{0Y}}$ (the weighted position space Feynman propagator from 0 to $Y$).

We define the radial graphical function of $G$ as
\begin{equation}\label{fGRdef}
f_G^R(Z)=\Big(\prod_{i=1}^\VGint\int_0^\infty X_i^{D-1}\dd X_i\Big)\prod_{e\in\sE_G}p^R_e,
\end{equation}
whenever the integral converges.

\begin{ex}\label{exWSnR}
Consider the wheel with $n$ spokes $W\!S_{n,D}$ in Figure \ref{fig:wheels}. The hub has label 0 and the first vertex on the rim has label 1.
We assign the label $Z$ to the last label on the rim (label $n$ in Figure \ref{fig:wheels}). The spokes have propagators $X_i^{-2}$ where $X_1=1$ and $X_n=Z$.
The edge $i,i+1$ on the rim has the propagator $(X_iX_{i+1})^{-\lambda}(X_i/X_{i+1})^{n_{i,i+1}+\lambda}_<$ where we set $X_{n+1}=X_1=1$. We get
\begin{equation}\label{WSR}
f_{W\!S_{n,D}}^R(Z)=\frac{(Z)^{n_{1Z}+\lambda}_<}{Z^{2\lambda+2}}\Big(\prod_{i=2}^{n-1}\int_0^\infty\frac{\dd X_i}{X_i}\Big)\prod_{i=1}^{n-1}\genfrac(){}{}{X_i}{X_{i+1}}^{n_{i,i+1}+\lambda}_<.
\end{equation}
\end{ex}

\begin{lem}
We write $f_{G_{Z_0,Z_1,Z_2}}^R$ for the radial graphical function $f_G^R(Z)$ of the graph $G$ with external labels $Z_0,Z_1,Z_2$. Then
\begin{equation}\label{Zrez}
f_{G_{0,1,Z^{-1}}}^R=Z^{2\lambda N_G}f_{G_{0,Z,1}}^R,
\end{equation}
where the weight $N_G$ of the graphical function $G$ is defined in (\ref{eqNg}).
\end{lem}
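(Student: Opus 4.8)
The plan is to deduce \eqref{Zrez} from a simple homogeneity (scaling) property of the radial graphical function, which is the radial analogue of the scaling relation \eqref{eqfA1} for ordinary graphical functions.

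First I would establish that for every $\mu>0$ and every assignment of non-negative moduli $Z_0,Z_1,Z_2$ to the three external vertices of $G$,
\begin{equation*}
f_G^R(\mu Z_0,\mu Z_1,\mu Z_2)=\mu^{-2\lambda N_G}\,f_G^R(Z_0,Z_1,Z_2).
\end{equation*}
To prove this I would substitute $X_i\mapsto\mu X_i$ in the defining integral \eqref{fGRdef}; this substitution is a diffeomorphism of $(0,\infty)^{\VGint}$ onto itself, so it preserves absolute convergence. Each internal measure factor $X_i^{D-1}\,\dd X_i$ then contributes a factor $\mu^D$, giving $\mu^{D\VGint}$ in total. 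For a propagator \eqref{propdef} on an edge $e=XY$ not incident to the label $0$, the ratio $(X/Y)_<=\min(X/Y,Y/X)$ is scale invariant, so $p^R_{XY}$ picks up the factor $\mu^{-2\lambda\nu_{XY}}$; for an edge $e=0Y$ the propagator $p^R_{0Y}=Y^{-2\lambda\nu_{0Y}}$ picks up the same factor, the label $0$ being unchanged under scaling. Hence the integrand scales by $\mu^{-2\lambda\sum_e\nu_e}$, and since $D\VGint-2\lambda\sum_e\nu_e=(2\lambda+2)\VGint-2\lambda\sum_e\nu_e=-2\lambda N_G$ by \eqref{eqNg}, the claimed homogeneity follows.

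With this in hand, \eqref{Zrez} is immediate: I would set $(Z_0,Z_1,Z_2)=(0,1,Z^{-1})$ and $\mu=Z>0$. Then $(\mu Z_0,\mu Z_1,\mu Z_2)=(0,Z,1)$, so the homogeneity relation reads $f_{G_{0,Z,1}}^R=Z^{-2\lambda N_G}f_{G_{0,1,Z^{-1}}}^R$, which is exactly \eqref{Zrez} after rearranging; the change-of-variables remark also shows that convergence of one side is equivalent to convergence of the other.

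I do not expect any genuine obstacle here. The only points needing a little attention are the bookkeeping of the powers of $\mu$ --- in particular the scale invariance of $(x)_<$, which is what makes the slightly unusual propagator \eqref{propdef} homogeneous --- and the edge case of propagators incident to the vertex $0$, both of which are routine.
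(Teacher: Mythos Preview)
Your proposal is correct and is precisely the paper's argument spelled out in detail: the paper's one-line proof ``scale all integration variables in $f_{G_{0,Z,1}}^R$ by $Z$'' is exactly your substitution $X_i\mapsto\mu X_i$ with $\mu=Z$, and your power-counting $D\VGint-2\lambda\sum_e\nu_e=-2\lambda N_G$ is the implicit content of ``from the definition of the radial graphical function.''
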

\begin{proof}
We scale all integration variables in $f_{G_{0,Z,1}}^R$ by $Z$ to obtain the result from the definition of the radial graphical function.
\end{proof}

Now, we assume $Z<1$. With the previous lemma we can translate the results to the case $Z>1$.
Let $S_{Z1}$ be the set of orderings of $1,Z,X_i$, $i=1,\ldots,\VGint$ which preserve $Z<1$,
i.e.\ $S_{Z1}=\{\sigma:\{1,2,\ldots,\VGint+2\}\to\{1,Z,X_i\},$ with $\sigma^{-1}(Z)<\sigma^{-1}(1)\}$.
Note that $S_{Z1}$ depends on the internal vertices $X_1,\ldots,X_\VGint$.
\begin{ex}
For $\VGint=1$ the set $S_{Z1}$ consists of the three elements that map $1,2,3$ to $X_1,Z,1$, to $Z,X_1,1$, or to $Z,1,X_1$.
\end{ex}
In general, $S_{Z1}$ has $(\VGint+2)!/2$ elements.
The domain of integration $\{X_i>0\}$ partitions into the $|S_{Z1}|$ sectors $0<\sigma(1)<\ldots<\sigma(\VGint+2)$. Accordingly, the radial graphical function decomposes as
$$
f_G^R(Z)=\sum_{\sigma\in S_{Z1}}f_\sigma(Z),\quad f_\sigma(Z)=\int_{0<\sigma(1)<\ldots<\sigma(\VGint+2)}\Big(\prod_{i=1}^\VGint X_i^{D-1}\dd X_i\Big)\prod_{e\in\sE_G}p^R_e.
$$

We fix $\sigma\in S_{Z1}$ and use (without restriction) labels with $\{0<\sigma(1)<\ldots<\sigma(\VGint+2)\}=\{0<X_1<\ldots<X_s<Z<X_{s+1}<\ldots<X_t<1<X_{t+1}<\ldots<X_\VGint\}=\Sigma$
for $0\leq s<t\leq\VGint$.
The propagator is additive in the weights so that we can replace multiple edges by single edges with added weights. For vanishing weights $\nu_e=n_e=0$ the propagator $p^R_e$ is one.
To simplify the notation we may hence assume that $G$ is the complete graph (with zero weights at unwanted edges).
From (\ref{propdef}) we get
$$
f_\sigma(Z)=\int_\Sigma\prod_{X\in\{Z,X_1,\ldots,X_\VGint\}}\omega_X,
$$
with
\begin{align*}
\omega_X&=X^{D-1-(\sum_{Y<X}n_{XY}+2\lambda\nu_{XY})+\sum_{Y>X}n_{XY}}\dd X,\quad\text{for }X=X_1,\ldots,X_\VGint,\\
\omega_Z&=Z^{-(\sum_{Y<Z}n_{YZ}+2\lambda\nu_{YZ})+\sum_{Y>Z}n_{YZ}}.
\end{align*}

The path of the iterated integral over $\Sigma$ splits at $Z$ and 1. Therefore $f_\sigma(Z)$ factors,
$$
f_\sigma(Z)=f^{0Z}_\sigma(Z)\omega_Zf^{Z1}_\sigma(Z)f^{1\infty}_\sigma,
$$
where the $f^{XY}_\sigma$ are iterated integrals from $X$ to $Y$. Note that $f^{1\infty}_\sigma$ does not depend on $Z$.

For $f^{0Z}_\sigma$ we re-scale all variables $X_1,\ldots,X_s$ by $Z$ and obtain
$$
f^{0Z}_\sigma(Z)=f^{0Z}_\sigma(1)Z^{\alpha_s},
$$
where we used the notation
\begin{equation}\label{alphak}
\alpha_k=kD-\Big(\sum_{X<Y\leq X_k}n_{XY}+2\lambda\nu_{XY}\Big)+\sum_{Y>X\leq X_k}n_{XY}=kD-\sum_{X<Y\leq X_k}2\lambda\nu_{XY}+\sum_{X\leq X_k<Y}n_{XY}
\end{equation}
for $k=1,\ldots,\VGint$. In the above formula the sums are over $X$ and $Y$. The second equation holds because the contributions of the weights $n_{XY}$ cancel if both $X,Y\leq X_k$.

The factor $f^{0Z}_\sigma(1)$ can be calculated by integrating over $X_1,X_2,\ldots,X_s$. Each integration provides the reciprocal of a linear form in $D$ and in the weights.

More complicated is the term $f^{Z1}_\sigma(Z)$: For generic $D$ we are allowed to split the iterated integral from $Z$ to 1 at 0 (i.e.\ we integrate from $Z$ to 0 and then from 0 to 1).
We reverse the orientation of the path from $Z$ to 0 and obtain by path concatenation and path reversal of iterated integrals (see e.g.\ \cite{gf})
$$
f^{Z1}_\sigma(Z)=\sum_{k=s}^t(-1)^{k-s}f^{0Z}_k(Z)f^{01}_k,
$$
with
$$
f^{0Z}_k(Z)=\int_{0<X_k<X_{k-1}<\ldots<X_{s+1}<Z}\prod_{i=s+1}^k\omega_{X_i},\quad f^{01}_k=\int_{0<X_{k+1}<\ldots<X_t<1}\prod_{i=k+1}^t\omega_{X_i},
$$
where empty integrals are 1. We have
$$
f^{0Z}_k(Z)=f^{0Z}_k(1)Z^{\beta_k},\quad\text{with }\beta_k=(k-s)D-\Big(\sum_{X<Y;X_s<Y\leq X_k}n_{XY}+2\lambda\nu_{XY}\Big)+\sum_{Y>X;X_s<X\leq X_k}n_{XY}.
$$
The term $f^{0Z}_k(1)f^{01}_k$ gives $t-s$ reciprocals of linear forms. With (\ref{alphak}) we obtain (including $\omega_Z$)
$$
f_\sigma(Z)=\sum_{k=s}^t\frac{Z^{\alpha_k}}{P_k(D)},
$$
where $P_k$ is a polynomial in $D$ of degree $\VGint$ which factors into linear forms.

Consider the induced subgraph $G_k=G[\{0,Z,X_1,\ldots,X_k\}]$. The first sum in (\ref{alphak}) is over edges of $G_k$ whereas the second sum is over edges that cut
$G$ into $G_k$ and the graph $G[\{1,X_{k+1},\ldots,X_\VGint\}]$ which is induced by $\sV_G\backslash\sV_{G_k}$.
Summing over $\sigma\in S_{Z1}$ the graph $G_k$ can contain any subset of $\sVGint$. Using (\ref{eqNg}) we obtain the generic result
\begin{equation}\label{fRres}
f_G^R(Z)=\sum_{\sV\subseteq\sVGint}c_\sV Z^{-2\lambda N_{G[\sV\cup\{0,z\}]}+\sum_{\!\!\genfrac{}{}{0pt}{}{u\in\sV\cup\{0,z\}}{v\in\sVGint\cup\{1\}\backslash\sV}}n_{uv}},
\end{equation}
where $c_\sV$ is a finite sum over reciprocals of products of $\VGint$ linear forms in $D$ and the weights. It is possible to give a formula for $c_\sV$. Here, we only need its general shape.

\begin{thm}\label{radialthm}
Let $G$ be the graph of a radial graphical function in $D=2\lambda+2$ dimensions with external vertices $0,1,Z<1$. Then
\begin{equation}\label{Req}
f_G^R(Z)=\sum_{\ell=0}^{\VGint}\sum_{\sV\subseteq\sVGint}c_{\ell,\sV}(\log Z)^\ell Z^{-2\lambda N_{G[\sV\cup\{0,z\}]}+\sum_{e\in\sC^G_{\sV\cup\{0,z\}}}n_e},
\end{equation}
where $c_{\ell,\sV}\in\RR$ are constants (which depend on $\lambda,\nu_e,n_e$) and $\sC^G_{\sV}=\sE_G\backslash(\sE_{G[\sV]}\cup\sE_{G[\sV_G\backslash\sV]})$ is the set of edges which connect
the induced subgraph $G[\sV]$ to the induced subgraph of the complement $\sV_G\backslash\sV$.
\end{thm}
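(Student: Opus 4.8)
The plan is to push the generic‑dimension computation that precedes the theorem — which expresses $f_G^R(Z)$ meromorphically in $D$ as in \eqref{fRres} — down to the integer value $D=D_0:=2\lambda+2$ by a dimensional‑regularization argument. Summing the sector formula $f_\sigma(Z)=\sum_k Z^{\alpha_k}/P_k(D)$, with $\alpha_k$ as in \eqref{alphak}, over $\sigma\in S_{Z1}$ and grouping the terms according to which subset $\sV\subseteq\sVGint$ of internal vertices occurs, one obtains for $D$ in the region of convergence
\begin{equation*}
f_G^R(Z)=\sum_{\sV\subseteq\sVGint}c_\sV(D)\,Z^{\alpha_\sV(D)},\qquad\alpha_\sV(D)=-2\lambda N_{G[\sV\cup\{0,z\}]}+\sum_{e\in\sC^G_{\sV\cup\{0,z\}}}\!\! n_e+(D-D_0)\,|\sV| ,
\end{equation*}
where I have used $D_0=2(\lambda+1)$ and \eqref{eqNg} to put the $D_0$‑value of $\alpha_k$ into the shape displayed in \eqref{fRres}, and the linear‑in‑$D$ remainder is $(D-D_0)$ times the number $|\sV|$ of internal vertices of $G[\sV\cup\{0,z\}]$. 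By construction each $c_\sV(D)$ is a finite $\RR$‑linear combination of reciprocals of products of exactly $\VGint$ linear forms in $D$, hence is rational in $D$ with a pole at $D_0$ of order at most $\VGint$.

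Next I would check that $f_G^R(Z)$ is analytic in $D$ on a neighbourhood of $D_0$ for each fixed $Z\in(0,1)$. The convergence of \eqref{fGRdef} is governed, sector by sector, by finitely many \emph{strict} linear inequalities in $\Re D$, namely the positivity conditions for the one‑dimensional integrations that produce the linear factors of the $P_k$; hence convergence at $D_0$ persists on an open neighbourhood of $D_0$ in $\CC$, where the same bounds justify differentiation under the integral sign. Therefore $f_G^R(Z)$ and the meromorphic right‑hand side above agree as analytic functions of $D$ on the nonempty open set where the integral converges and no $P_k(D)$ vanishes, so they agree as meromorphic functions; in particular the apparent pole of the right‑hand side at $D_0$ is removable, and the $\epsilon^0$‑coefficient ($\epsilon:=D-D_0$) of its Laurent series equals $f_G^R(Z)$ at $D=D_0$.

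Now expand each summand at $\epsilon=0$:
\begin{equation*}
Z^{\alpha_\sV(D)}=Z^{\alpha_\sV(D_0)}\exp\!\big(\epsilon\,|\sV|\log Z\big)=Z^{\alpha_\sV(D_0)}\sum_{m\geq0}\frac{(|\sV|\log Z)^m}{m!}\,\epsilon^m ,\qquad c_\sV(D)=\sum_{j\geq-\VGint}c_{\sV,j}\,\epsilon^j .
\end{equation*}
Multiplying, the $\epsilon^0$‑coefficient of $c_\sV(D)\,Z^{\alpha_\sV(D)}$ equals $Z^{\alpha_\sV(D_0)}q_\sV(\log Z)$ with $q_\sV(t)=\sum_{m=0}^{\VGint}c_{\sV,-m}\,(|\sV|\,t)^m/m!$, a polynomial in $t$ of degree $\leq\VGint$ with real coefficients (the sum truncates at $m=\VGint$ because $c_\sV$ has pole order $\leq\VGint$). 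By the preceding paragraph the principal part in $\epsilon$ of $\sum_\sV c_\sV(D)Z^{\alpha_\sV(D)}$ vanishes identically in $Z$, so its $\epsilon^0$‑coefficient is $\sum_{\sV\subseteq\sVGint}Z^{\alpha_\sV(D_0)}q_\sV(\log Z)$, which therefore equals $f_G^R(Z)$ at $D=D_0$. Writing $q_\sV(t)=\sum_{\ell=0}^{\VGint}c_{\ell,\sV}\,t^\ell$ and substituting $\alpha_\sV(D_0)=-2\lambda N_{G[\sV\cup\{0,z\}]}+\sum_{e\in\sC^G_{\sV\cup\{0,z\}}}n_e$ yields exactly \eqref{Req}.

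The one real difficulty is the analytic‑continuation step of the second paragraph: one must be sure that $f_G^R$ genuinely extends analytically through $D_0$ — equivalently, that the principal part of the meromorphic expression cancels — so that the formal Laurent extraction computes the true value at the integer dimension and not an artifact of the sectorwise splitting at $0$. Granting that, the ceiling $\ell\leq\VGint$ on the powers of $\log Z$ is automatic, being nothing more than the bound on the pole order of the $c_\sV$ already recorded in \eqref{fRres}; the remainder is routine bookkeeping.
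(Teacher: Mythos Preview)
Your approach is essentially the same as the paper's: both start from the generic-$D$ formula \eqref{fRres} and pass to the integer dimension by a limiting argument that turns vanishing denominators in $c_\sV$ into powers of $\log Z$ of degree at most $\VGint$ (the paper phrases this as ``use l'H\^opital'', you phrase it as extracting the $\epsilon^0$-coefficient of the Laurent expansion in $\epsilon=D-D_0$; these are the same mechanism). You are more careful about the analytic-continuation/removability step, which the paper's short proof leaves implicit.
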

\begin{proof}
We use the generic result (\ref{fRres}). If we approach $D$ from generic values, we may encounter vanishing denominators in $c_\sV$.
We use l'Hospital to calculate the limit. The coefficient becomes a polynomial in $\log Z$ of degree $\leq\VGint$ (the maximum number of factors which can vanish in the denominators).
\end{proof}

\begin{ex}\label{exWSnR1}
We continue Example \ref{exWSnR}. The integral in (\ref{WSR}) is a multiple convolution which can be solved by Fourier transformation (this trick was used in Equation (81) of
\cite{Snatren} to calculate the period of $W\!S_{3,4}$). With the residue theorem we get for any positive number $x$,
$$
\frac{\mu}{\pi}\int_{-\infty}^\infty \frac{x^{\ii P}\dd P}{P^2+\mu^2}=(x)_<^\mu.
$$
Substitution of the right hand side into (\ref{WSR}) gives
$$
f_{W\!S_{n,D}}^R(Z)=\frac{2^{n-1}(Z)_<^{n_{1Z}+\lambda}}{2\pi Z^{2\lambda+2}}\Big(\prod_{i=1}^{n-1}\int_{-\infty}^\infty\frac{Z^{-iP_{n-1}}(n_{i,i+1}+\lambda)\dd P_i}
{P_i^2+(n_{i,i+1}+\lambda)^2}\Big)\cdot\prod_{i=2}^{n-1}\int_0^\infty\frac{X_i^{\ii(P_i-P_{i-1})}\dd X_i}{2\pi X_i}.
$$
The integrals over the $X_i$ identify all $P_i$s (substitute $X_i=\exp{\xi_i}$) and we obtain
\begin{equation}\label{WSReq0}
f_{W\!S_{n,D}}^R(Z)=\frac{2^{n-1}(Z)_<^{n_{1Z}+\lambda}}{Z^{2\lambda+2}}\int_{-\infty}^\infty\frac{Z^{-\ii P}\dd P}{2\pi}\prod_{i=1}^{n-1}\frac{n_{i,i+1}+\lambda}{P^2+(n_{i,i+1}+\lambda)^2}.
\end{equation}
Using the residue theorem again we close the contour in the upper half-plane and obtain in the special case of mutually distinct weights $n_{i,i+1}$ and $Z<1$,
\begin{align}\label{WSReq}
f_{W\!S_{n,D}}^R(Z)&=2^{n-1}Z^{n_{1Z}-\lambda-2}\ii\sum_{j=1}^{n-1}\mathrm{res}_{P=\ii(n_{j,j+1}+\lambda)}Z^{-\ii P}\prod_{i=1}^{n-1}
 \frac{n_{i,i+1}+\lambda}{P^2+(n_{i,i+1}+\lambda)^2}\nonumber\\
&=2^{n-2}Z^{n_{1Z}-2}\sum_{j=1}^{n-1}Z^{n_{j,j+1}}\prod_{1=i\neq j}^{n-1}\frac{n_{i,i+1}+\lambda}{(n_{i,i+1}+\lambda)^2-(n_{j,j+1}+\lambda)^2}.
\end{align}
To compare the result with Theorem \ref{radialthm} we consider the vertex set $\sV=\{0,Z,X_{j+1},\ldots,X_{n-1}\}$. The induced subgraph has $n-j+1$ spokes of weight $1/\lambda$,
$n-j$ edges on the rim with weight 1 and $n-j$ internal vertices. Hence $N_{G[\sV]}=1/\lambda$. The spokes in the cut $\sC_{\sV}$ have $n_e=0$. Only the two edges $1Z$ and $j,j+1$
contribute to the sum in the exponent, yielding the term $c_{0,\sV}Z^{-2+n_{1Z}+n_{j,j+1}}$ which we find in (\ref{WSReq}). Note that most constants $c_{\ell,\sV}$ in (\ref{Req}) are zero.
\end{ex}

\section{Angular graphical functions}\label{sectang}
In this section we focus on the angular part ($X=1$) of the spherical coordinates defined in (\ref{angularcoords}).

We consider a graph $G$ with a pair of edge-weights $(\nu_e,n_e)$ where $\nu_e\in\RR$, $n_e\in\ZZ_{\geq0}$ for every edge $e\in\sE_G$.
In general, the Feynman propagator between two vectors in $\RR^D$ depends on their relative angle.
If, however, an edge $0x$ of $G$ is adjacent to 0, the propagator $\|x\|^{-2\lambda\nu_{0x}}$ has no angular dependence. So, in the angular graph $G$ we will have no external vertex 0.
In this section we generalize to any number of external vertices $z_1,\ldots,z_\VGext$.
Because only angles matter we assume that internal and external vertices are unit vectors. In the context of standard graphical functions we will have $z_1=1$ and $z_2=z/|z|\in\CC$.

We normalize the integral over the unit sphere $S_{D-1}$ in (odd or even) $D$ dimensions, i.e.\ $\int_{S_{D-1}}\Omega_{D-1}^x=1$,
\begin{equation}\label{SD1}
\int_{S_{D-1}}\Omega_{D-1}^x=\frac{\Gamma(\lambda+1)}{2\pi^{\lambda+1}}\int_0^\pi\dd\phi^x_1\cdots\int_0^\pi\dd\phi^x_{D-2}\int_0^{2\pi}\dd\phi^x_{D-1}\prod_{i=1}^{D-1}\sin^{D-i-1}\phi^x_i.
\end{equation}

For $\alpha\in\RR$ the Gegenbauer polynomials $C^\alpha_k(x)$ are defined by the generating function (see e.g.\ \cite[Chapter~22]{AS})
\begin{equation}\label{Cdef}
\frac{1}{(1-2tx+t^2)^\alpha}=\sum_{k=0}^\infty C^\alpha_k(x)t^k,\quad\text{for }t<1.
\end{equation}
Every Gegenbauer polynomial $C^\alpha_k(x)$ is a polynomial of degree $k$ in $x$. If $k$ is even, then $C^\alpha_k(x)$ is symmetric, otherwise it is antisymmetric.
In the coordinate $x=(\ee^{\ii\phi}+\ee^{-\ii\phi})/2$ the generating function factors and one obtains
\begin{equation}\label{Cdef1}
C^\alpha_k(\cos\phi)=\sum_{\ell=0}^k\genfrac(){0pt}{}{\alpha+k-\ell-1}{\alpha-1}\genfrac(){0pt}{}{\alpha+\ell-1}{\alpha-1}\ee^{(k-2\ell)\ii\phi}.
\end{equation}
This identity implies
\begin{equation}\label{C1}
|C^\alpha_k(\cos\phi)|\leq C^\alpha_k(1)=\genfrac(){0pt}{}{k+2\alpha-1}{k}.
\end{equation}
Gegenbauer polynomials are orthogonal with respect to the measure $\Omega_{D-1}$ (see e.g.\ \cite{DStanton}),
\begin{equation}\label{Gegorth}
\int_{S_{D-1}}\dd\Omega^y_{D-1}C^\lambda_k(\cos\phi_{xy})C^\lambda_\ell(\cos\phi_{yz})=\frac{\lambda\delta_{k,\ell}}{\lambda+k}C^\lambda_k(\cos\phi_{xz}),
\end{equation}
where $x,y,z\in\RR^D$ and $\phi_{xy}$ is the angle between $x$ and $y$.

\begin{ex}
We have
\begin{equation}\label{Cex}
C^\alpha_0(x)=1,\quad C^\alpha_1(x)=2\alpha x,\quad C^\alpha_2(x)=2\alpha(\alpha+1)x^2-\alpha,\quad C^\alpha_3(x)=\frac{4(\alpha+2)!}{3(\alpha-1)!}x^3-\frac{2(\alpha+1)!}{3(\alpha-1)!}x.
\end{equation}
\end{ex}

To every edge $e\in\sE_G$ of the graph $G$ we associate the angle $\phi_e$ between its vertices. The angular propagator is (note that $p^\angle_e$ is not additive in the weights)
\begin{equation}\label{angprop}
p^\angle_e=C^{\lambda\nu_e}_{n_e}(\cos\phi_e).
\end{equation}
The angular graphical function of $G$ is defined as
\begin{equation}\label{fGangledef}
f_G^\angle(z_1,\ldots,z_\VGext)=\Big(\prod_{i=1}^\VGint\int_{S_{D-1}}\Omega^{x_i}_{D-1}\Big)\prod_{e\in\sE_G}p^\angle_e.
\end{equation}

\begin{ex}\label{exWSnA}
Let $C_n$ be the cycle with $n$ vertices $1,\ldots,n$ and edge-weights $(1,n_{i,i+1})$ where we identify the labels $n+1$ and 1.
We set $1=z_1$ and $n=z_2$ as the two external labels of $C_n$. All other labels are internal. By orthogonality we get
\begin{align}\label{WSA}
f_{C_n}^\angle(z_1,z_2)&=\Big(\prod_{i=2}^{n-1}\int_{S_{D-1}}\Omega^{x_i}_{D-1}\Big)\prod_{i=1}^nC^\lambda_{n_{i,i+1}}(\cos\phi_{i,i+1})\nonumber\\
&=\Big(\frac{\lambda}{\lambda+n_{z_1,2}}\Big)^{n-2}\Big(\prod_{i=2}^{n-1}\delta_{n_{i-1,i},n_{i,i+1}}\Big)C^\lambda_{n_{z_1,2}}(\cos\phi_{z_1,z_2})C^\lambda_{n_{z_1,z_2}}(\cos\phi_{z_1,z_2}).
\end{align}
\end{ex}

To express the connection between degree and symmetry of Gegenbauer polynomials we introduce the relation $\leq_2$ according to
\begin{equation}\label{leq2def}
a\leq_2b\quad\Leftrightarrow\quad(a\leq b\quad\text{and}\quad a\equiv b\mod 2)\quad\text{for }a,b\in\ZZ.
\end{equation}
With this notation we get
\begin{equation}\label{pexp}
p^\angle_e=\sum_{k\leq_2 n_e}c_k\cos^k\phi_e\quad\text{with }c_k\in\RR.
\end{equation}
Using (\ref{Cex}) we interpret $\cos\phi_e$ as angular propagator of an edge with weights $(\nu_e,n_e)=(1/2\lambda,1)$. The propagator $p^\angle_e$ becomes an $\RR$-linear-combination of
propagators for multiple edges of this type. Therefore it suffices to consider angular graphs with multiple edges where each edge $e$ has weights $\nu_e=1/2\lambda$ and $n_e=1$.

The general result for angular graphical functions can be derived from the $N$-star $\star_{\mathbf k}=\star_{k_1,\ldots,k_N}$ where a single internal vertex $x_1=x$ connects
to the external vertex $z_i$, $i=1,\ldots,N$ with $k_i$ parallel edges of weight $(1/2\lambda,1)$. We calculate the generating function of $f^\angle_{\star_{\mathbf k}}$
in the variables $\mathbf t=t_1,\ldots,t_N$,
$$
f_\star(\mathbf t)=\sum_{\mathbf k}\frac{(\sum_{i=1}^Nk_i)!}{\prod_{i=1}^Nk_i!}f_{\star_{\mathbf{k}}}^\angle(z_1,\ldots,z_N)\prod_{i=1}^Nt_i^{k_i}
=\int_{S_{D-1}}\frac{\Omega^x_{D-1}}{1-\sum_{i=1}^N(x\cdot z_i)t_i}.
$$
The denominator of the integrand is
$$
1-x\cdot z_{\mathbf t}=1-\cos(\phi)Z_{\mathbf t},\quad\text{with}\quad z_{\mathbf t}=\sum_{i=1}^Nz_it_i\quad\text{and}\quad\phi=\phi_{xz_t},\;Z_{\mathbf t}=|z_{\mathbf t}|.
$$
We orient the coordinate system such that $z_{\mathbf t}$ points into the 1-direction and obtain
$$
f_\star(\mathbf t)=\frac{\Gamma(\lambda+1)}{\sqrt{\pi}\Gamma(\lambda+\frac{1}{2})}\int_0^\pi\frac{\sin^{2\lambda}\phi\,\dd\phi}{1-\cos\phi Z_{\mathbf t}},
$$
where we performed the integral over the $S_{D-2}$ sphere.
For sufficiently small $t_i$ the integral convergences absolutely and the integrand can be expanded in $Z_{\mathbf t}$.
Consider the integral transformation $\phi\mapsto\pi-\phi$. Because $\cos\phi$ changes sign whereas $\sin\phi$ does not, only even powers in the expansion give non-zero results. We obtain
\begin{equation}\label{genstar}
f_\star(\mathbf t)=\frac{\Gamma(\lambda+1)}{\sqrt{\pi}\Gamma(\lambda+\frac{1}{2})}\sum_{m=0}^\infty\int_0^\pi\sin^{2\lambda}\phi\cos^{2m}\phi\,\dd\phi Z_{\mathbf t}^{2m}
=\sum_{m=0}^\infty\frac{\Gamma(\lambda+1)\Gamma(m+\frac{1}{2})}{\Gamma(\lambda+m+1)\Gamma(\frac{1}{2})}\Big(\sum_{i,j=1}^Nz_i\cdot z_j t_it_j\Big)^m.
\end{equation}
We have $z_i\cdot z_i=1$ whereas $z_i\cdot z_j=\cos\phi_e$ for the edge $e=z_iz_j$ between the external vertices $z_i$ and $z_j$.
Note that the coefficients of the generating series are rational functions in $\lambda$.

\begin{ex}\label{angularex1}
For the three-star $\star_{k_1,k_2,k_3}$ in $D=2\lambda+2\geq3$ dimensions we get
\begin{align*}
f^\angle_{\star_{0,0,0}}&=1,\quad f^\angle_{\star_{2,0,0}}=\frac{1}{2(\lambda+1)},\quad f^\angle_{\star_{4,0,0}}=\frac{3}{4(\lambda+1)(\lambda+2)},\quad
 f^\angle_{\star_{1,1,0}}=\frac{z_1\cdot z_2}{2(\lambda+1)},\\
f^\angle_{\star_{3,1,0}}&=\frac{3 z_1\cdot z_2}{4(\lambda+1)(\lambda+2)},\quad f^\angle_{\star_{2,2,0}}=\frac{2(z_1\cdot z_2)^2+1}{4(\lambda+1)(\lambda+2)},\quad
 f^\angle_{\star_{2,1,1}}=\frac{2(z_1\cdot z_2)(z_1\cdot z_3)+(z_2\cdot z_3)}{4(\lambda+1)(\lambda+2)}.
\end{align*}
Up to permutations these are the only non-zero three-stars with $k_1+k_2+k_3\leq4$.
\end{ex}

\begin{ex}\label{angularex2}
Consider the angular graphical function of the three-star $\smallclaw$ in Figure \ref{fig:3star} with external vertices $z_1$, $z_2$, $z_3$.
We assign the weights $(\nu_1,n_1)$, $(\nu_2,n_2)$, $(\nu_3,n_3)$ to the edges attached to $z_1$, $z_2$, $z_3$, respectively. In $D=2\lambda+2\geq3$ dimensions we get the following values
for $f_\smallclaw^\angle(z_1,z_2,z_3)$ with $(n_1,n_2,n_3)=$
\begin{align*}
&(0,0,0):1,\quad(2,0,0):-\frac{\lambda^2\nu_1(1-\nu_1)}{\lambda+1},\quad(4,0,0):\frac{\lambda^2\nu_1(\lambda\nu_1+1)(1-\nu_1)(\lambda(1-\nu_1)-1)}{2(\lambda+1)(\lambda+2)},\\
&(1,1,0):\frac{\lambda\nu_1\nu_2}{\lambda+1}C^\lambda_1(z_1\cdot z_2),\quad
 (3,1,0):-\frac{\lambda^2\nu_1\nu_2(\lambda\nu_1+1)(1-\nu_1)}{(\lambda+1)(\lambda+2)}C^\lambda_1(z_1\cdot z_2),\\
&(2,2,0):\frac{\lambda\nu_1\nu_2(\lambda\nu_1+1)(\lambda\nu_2+1)}{(\lambda+1)^2(\lambda+2)}C^\lambda_2(z_1\cdot z_2)+\frac{\lambda^4\nu_1\nu_2(1-\nu_1)(1-\nu_2)}{(\lambda+1)^2},\\
&(2,1,1):\frac{\lambda\nu_1\nu_2\nu_3(\lambda\nu_1+1)}{(\lambda+1)(\lambda+2)}C^\lambda_1(z_1\cdot z_2)C^\lambda_1(z_1\cdot z_3)
 -\frac{\lambda^2\nu_1\nu_2\nu_3(\lambda(1-\nu_1)+1)}{(\lambda+1)(\lambda+2)}C^\lambda_1(z_2\cdot z_3).
\end{align*}
Up to permutations these are the only non-zero three-stars with $n_1+n_2+n_3\leq4$.
\end{ex}

\begin{thm}\label{angularthm}
Let $G$ be the graph of an angular graphical function with external vertices $\sVGext=\{z_1,$ $\ldots,$ $z_\VGext\}$ and edge-weights $(\nu_e(G),n_e(G))$, $e\in\sE_G$.
Let $\sG$ be the set of purely external subgraphs $g$---i.e.\ $\sV_g=\sV_g^{\mathrm{ext}}\subseteq\sVGext$---whose weights $n_e(g)$, $e\in \sE_g$, have the property that
for every $\sV\subseteq\sV_G$
\begin{equation}\label{gammaG}
\sum_{e\in\sC^g_{\sV}}n_e(g)\leq_2\sum_{e\in\sC^G_{\sV}}n_e(G)\qquad\text{(see Theorem \ref{radialthm} and (\ref{leq2def}))}.
\end{equation}
Then the angular graphical function $f^\angle_G$ has a (non-unique) representation as
\begin{equation}\label{faG}
f^\angle_G(z_1,\ldots,z_\VGext)=\sum_{g\in\sG}c_gf^\angle_g(z_1,\ldots,z_\VGext),
\end{equation}
where the sum has a finite number of non-zero constants $c_g\in\RR$. The empty sum is zero.
\end{thm}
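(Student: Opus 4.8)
The plan is to prove the identity in two reduction steps, keeping track of the cut condition (\ref{gammaG}) throughout. \emph{Step 1 (reduction to standard edges).} By (\ref{pexp}) each angular propagator $p^\angle_e=C^{\lambda\nu_e}_{n_e}(\cos\phi_e)$ is a finite real combination of powers $\cos^k\phi_e$ with $k\leq_2 n_e$, and since $C^{1/2\lambda}_1(x)=x/\lambda$ (see (\ref{Cex})) each $\cos^k\phi_e$ equals $\lambda^k$ times the propagator of $k$ parallel edges of weight $(\nu,n)=(1/2\lambda,1)$. Expanding every edge in this way writes $f^\angle_G$ as a finite real combination of functions $f^\angle_{\widetilde G}$, where $\widetilde G$ is obtained from $G$ by replacing each edge $e$ by $k_e\leq_2 n_e(G)$ parallel $(1/2\lambda,1)$-edges. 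As $\leq_2$ respects sums, $\sum_{e\in\sC^{\widetilde G}_\sV}n_e(\widetilde G)=\sum_{e\in\sC^G_\sV}k_e\leq_2\sum_{e\in\sC^G_\sV}n_e(G)$ for every $\sV\subseteq\sV_G$; hence it suffices to prove the theorem for graphs all of whose edges have weight $(1/2\lambda,1)$ (with multiplicities), provided one checks that the external graphs produced satisfy (\ref{gammaG}) relative to $\widetilde G$, since transitivity of $\leq_2$ then yields (\ref{gammaG}) relative to $G$.

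\emph{Step 2 (eliminating internal vertices).} I argue by induction on $\VGint$. If $\VGint=0$ the graph is already purely external and $g=G$ works, using $a\leq_2 a$. Otherwise choose an internal vertex $x$ with neighbours $y_1,\dots,y_N$, joined to $x$ by $k_i$ parallel $(1/2\lambda,1)$-edges. The $x$-integral in (\ref{fGangledef}) factors out of the remaining integrations, depends only on the unit vectors $y_i$, and by the generating function (\ref{genstar}) equals $\sum_h c_h\prod_{\{i,j\}\in h}\cos\phi_{y_iy_j}$, the sum running over multigraphs $h$ on $\{y_1,\dots,y_N\}$ with degree sequence $(k_i)$ (loops discarded, as $z_i\cdot z_i=1$), with coefficients $c_h$ that are finite products of the Gamma-ratios in (\ref{genstar}) and powers of $\lambda$, hence real. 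This gives $f^\angle_{\widetilde G}=\sum_h c_h f^\angle_{G_h}$ with $G_h=(\widetilde G\setminus x)\cup h$ having $\VGint-1$ internal vertices and only $(1/2\lambda,1)$-edges. Applying the induction hypothesis to each $G_h$ and collecting terms produces a representation of the required shape; it remains to verify the cut condition.

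\emph{Step 3 (the cut condition).} The combinatorial core is the following: for a multigraph $h$ with degree sequence $(k_i)_{i\in V}$ and any $A\subseteq V$, the number $c$ of $h$-edges with exactly one endpoint in $A$ satisfies $c\leq_2\sum_{i\in A}k_i$ and $c\leq_2\sum_{i\notin A}k_i$, because $\sum_{i\in A}k_i=2\,\#\{\text{edges inside }A\}+c$ (loops contribute evenly) forces both the magnitude bound and the congruence. Now fix $\sV\subseteq\sV_{\widetilde G}$ and set $\sW=\sV\setminus\{x\}$. The edges of $G_h$ not incident to $\{y_1,\dots,y_N\}$ that cross $\sW$ are exactly the edges of $\widetilde G$ crossing $\sV$ that are not incident to $x$; the remaining crossing edges of $G_h$ are the crossing edges of $h$, whose count is, by the lemma applied with $A=\sV\cap\{y_1,\dots,y_N\}$ and the bound taken on the side of the cut not containing $x$, at most $\leq_2$ the number of edges of $\widetilde G$ incident to $x$ that cross $\sV$. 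Since $\leq_2$ is stable under adding a fixed integer, $\sum_{e\in\sC^{G_h}_\sW}n_e(G_h)\leq_2\sum_{e\in\sC^{\widetilde G}_\sV}n_e(\widetilde G)$. As any purely external $g$ omits $x$, one has $\sC^g_\sV=\sC^g_\sW$, so combining the inductive bound $\sum_{e\in\sC^g_\sW}n_e(g)\leq_2\sum_{e\in\sC^{G_h}_\sW}n_e(G_h)$ with the previous inequality by transitivity gives (\ref{gammaG}) for $g$ relative to $\widetilde G$, completing the induction.

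The main obstacle is the bookkeeping in Step 3: the parity-and-magnitude inequality must be maintained for \emph{all} vertex subsets simultaneously and propagated both through the propagator expansion of Step 1 and through each vertex elimination of Step 2, with attention to whether the eliminated vertex lies in the subset and to the discarded self-loops. The remaining ingredients are routine: reality and finiteness of the coefficients (finite products of ratios of Gamma functions at integer or half-integer arguments and powers of $\lambda$), convergence of every angular integral (the integrands are bounded by (\ref{C1}) over the compact product of spheres), and finiteness of all the sums.
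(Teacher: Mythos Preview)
Your proposal is correct and follows essentially the same approach as the paper: induction on the number of internal vertices, reduction of propagators via (\ref{pexp}) to parallel $(1/2\lambda,1)$-edges, integration of one internal vertex using the generating function (\ref{genstar}), and verification of the cut condition (\ref{gammaG}) through the degree-sum identity $\sum_{i\in A}k_i=2\#\{\text{edges inside }A\}+c$. The only organisational difference is that you perform the propagator expansion globally before starting the induction, whereas the paper expands locally at the vertex currently being integrated and packages the single-vertex step as a separate $\star_{\mathbf k}$ lemma; the combinatorial content and the chain of $\leq_2$ inequalities are identical.
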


\begin{ex}\label{angularex3}
Consider the three-star $\smallclaw$ of Example \ref{angularex2}. If $G$ has an isolated vertex $z_3$, then $\sV=\{1,2\}$ in (\ref{gammaG}) gives $n_{13}(g)+n_{23}(g)\leq_20$.
We get $n_{13}(g)=n_{23}(g)=0$. The result depends on $z_1\cdot z_2$ only (as is inferred by the orthogonality of the Gegenbauer polynomials).

If $(n_1,n_2,n_3)=(2,1,1)$, then (\ref{gammaG}) gives the three conditions $n_{12}(g)+n_{13}(g)\leq_22$, $n_{12}(g)+n_{23}(g)\leq_21$, and $n_{13}(g)+n_{23}(g)\leq_21$.
These conditions have the two solutions $n_{12}(g)=n_{13}(g)=1$, $n_{23}(g)=0$ and $n_{12}(g)=n_{13}(g)=0$, $n_{23}(g)=1$.
\end{ex}

\begin{ex}
For the cycle $C_n$ in Example \ref{exWSnA} the sum in (\ref{faG}) can be represented by a single term which corresponds to a graph $g$ with two edges $e_1$, $e_2$ between $z_1$ and $z_2$.
The edge $e_1$ has weights $(1,n_{z_1,z_2})$ and is also an edge in $G$. The edge $e_2$ has weights $(1,n_{z_1,2})$. We want to verify that (\ref{gammaG}) holds for $g$.

We bi-partition the vertices of $C_n$ into $\sV\subseteq\{1,\ldots,n\}$ and its complement in $C_n$. The partitions are connected by the even number of edges in $\sC\equiv\sC^{C_n}_{\sV}$.
If $z_1z_2\in\sC$ (i.e.\ $\sV$ has exactly one of the vertices $z_1,z_2$), then (\ref{gammaG}) becomes $n_{z_1,z_2}+n_{z_1,2}\leq_2 n_{z_1,z_2}+\sum_{e\in\sC\backslash\{z_1z_2\}}n_e$.
This holds true because the product of Kronecker deltas in (\ref{WSA}) ensures that all $n_e$ in the sum equal $n_{z_1,2}$.
If $z_1z_2\notin\sC$, then (\ref{gammaG}) becomes $0\leq_2\sum_{e\in\sC}n_e$ which is true for the same reason.

It is clear that (\ref{gammaG}) is not strong enough to deduce that $f_{C_n}^\angle=0$ unless all edges weights $n_{i,i+1}$, $1\leq i\leq n-1$, are equal.
\end{ex}

\begin{proof}[Proof of Theorem \ref{angularthm}]
We prove the theorem by induction over $\VGint$. If $\VGint=0$, then (\ref{faG}) is trivially true for $c_G=1$ and $c_g=0$, otherwise.

Next, we assume that $G$ is the $N$-star $\star_{\mathbf k}$. In (\ref{genstar}) we interpret the scalar products $z_i\cdot z_j$ as edges $z_iz_j$ with weights $(1/2\lambda,1)$
in some (multi-)graph $g$. So, $\star_{\mathbf k}$ has a finite expansion (\ref{faG}). We have to prove (\ref{gammaG}) for $\star_{\mathbf k}$.

We may assume that $\sV=\{z_i:i\in\sI\}$, $\sI\subseteq\{1,\ldots,N\}$, has only external vertices (otherwise we replace $\sV$ by its complement $\sV_{\star_{\mathbf k}}\backslash\sV$).
The vertices in $\sV$ contribute with the monomial $M_\sV({\mathbf t})=\prod_{i\in\sI}t_i^{k_i}$ to the generating function $f_\star({\mathbf t})$.
The edges in $\sC^g_{\sV}$ have exactly one vertex $z_i$ in $\sV$ and contribute with a factor of $t_i$ to $M_\sV({\mathbf t})$.
Edges of $g$ with no vertices in $\sV$ do not contribute to $M_\sV({\mathbf t})$ whereas edges of $g$ with both vertices $z_i,z_j$ in $\sV$ contribute with a factor of $t_it_j$.
The scalar products $z_i\cdot z_i=1$, $i\in\sI$ in (\ref{genstar}) contribute with a factor of $t_i^2$ to $M_\sV({\mathbf t})$.
Because every edge in $g$ has weight $n_e(g)=1$ we get
$$
\sum_{e\in\sC^g_{\sV}}n_e(g)\leq_2\sum_{e\in\sC^g_{\sV}}1+\sum_{z_iz_j\in g;\,i,j\in\sI}2\;\leq_2\sum_{e\in\sC^{\star_{\mathbf k}}_{\sV}}n_e(\star_{\mathbf k}).
$$
By transitivity of $\leq_2$ the inequality (\ref{gammaG}) follows.

If $G$ has a single internal vertex $x$ and no edges between external vertices, we use (\ref{pexp}) to expand $f_G^\angle$ into a finite sum over $N$-stars with
multiple edges of weight $(\lambda\nu_e,n_e)=(1/2,1)$. The multiplicity $k_i$ of an edge $xz_i$ in any such $N$-star fulfills
$$
k_i\leq_2\sum_{e=xz_i\in\sE_G}n_e(G),
$$
where the sum takes into account that $G$ may have multiple edges between $x$ and $z_i$. We expand the $N$-stars according to $(\ref{faG})$ and get from (\ref{gammaG}) for any
$N$-star,
$$
\sum_{e\in\sC^g_{\sV}}n_e(g)\leq_2\sum_{e\in\sC^{\star_{\mathbf k}}_{\sV}}n_e(\star_{\mathbf{k}})=\sum_{xz_i\in\sC^{\star_{\mathbf k}}_{\sV}}k_i\leq_2\sum_{e\in\sC^{\star_{\mathbf k}}_{\sV}}n_e(G).
$$

Finally we observe that (\ref{gammaG}) and (\ref{faG}) are trivially stable under adding edges between external vertices. This establishes the case $\VGint=1$.

For $\VGint\geq2$ we integrate over one vertex $x\in\sVGint$, i.e.\ we consider all other vertices as external and evaluate the integral over $S_{D-1}^x$ using the result for $\VGint=1$.
We obtain a sum over graphs $g_x$ with no vertex $x$. For these $g_x$ we use induction providing an expansion in terms of external graphs $g$ with
$$
\sum_{e\in\sC^g_{\sV}}n_e(g)\leq_2\sum_{e\in\sC^{g_x}_{\sV}}n_e(g_x)\leq_2\sum_{e\in\sC^G_{\sV}}n_e(G).
$$
The inequality (\ref{gammaG}) follows for every intermediate graph $g_x$.
\end{proof}

\begin{remark}\mbox{}
\begin{enumerate}
\item The proof of Theorem \ref{angularthm} is constructive. It establishes an algorithm to calculate angular graphical functions.
\item By rotational invariance the cases $\VGext=0$ and $\VGext=1$ are identical, see Section \ref{sectangper}.
If $\sVGext=\{z_1,z_2\}$, then $f_G^\angle(z_1,z_2)$ is a polynomial in $z_1\cdot z_2$.
\item It is possible to restrict the expansion (\ref{faG}) to simple graphs with weights $\nu_e(g)=1$ (or any other weight). In this case the coefficients $c_g$ are
well defined rational functions in the weights $\nu_e(G)$, see Example \ref{angularex2}.
\item It is possible to derive constraints on the weights $n_e$ for non-vanishing angular graphical functions.
For example, one may consider an internal edge cut $\sC$ (i.e.\ one of the cut subgraphs has only internal vertices).
Then $f_G^\angle=0$ unless for every edge $e\in\sC$
$$
n_e\leq_2\sum_{e\neq f\in\sC}n_f.
$$
We do not need this generalized triangle identity here and therefore leave it unproved.
\end{enumerate}
\end{remark}

\section{Angular periods}\label{sectangper}
If the graph $G$ has no external vertices, $\sV_G=\sVGint$, (or $G$ has one external vertex), its angular graphical function is a constant angular period,
\begin{equation}\label{Pangdef}
P^\angle_G=f^\angle_G(\emptyset).
\end{equation}
By the previous section it is clear that for fixed $n_e$, $e\in\sE$, angular periods in dimension $D=2\lambda+2\geq3$ are rational functions in $\lambda$ and $\nu_e$.
\begin{ex}\label{angularex4}
Consider the tetrahedron with vertices 1, 2, 3, 4. For tuples of integer edge-weights $(n_{12},n_{13},n_{14},n_{34},n_{24},n_{23})$ we get the angular periods (compare Example \ref{angularex2})
\begin{align*}
 \displaystyle(0,0,0,0,0,0)&:1,\\
 \displaystyle\!(2,0,0,0,0,0)&:-\frac{\lambda^2\nu_{12}(1-\nu_{12})}{\lambda+1},\\
 \displaystyle(4,0,0,0,0,0)&:\frac{\lambda^2\nu_{12}(\lambda\nu_{12}+1)(1-\nu_{12})(\lambda(1-\nu_{12})-1)}{2(\lambda+1)(\lambda+2)},\\
 \displaystyle\!(2,2,0,0,0,0)&:\frac{\lambda^2\nu_{12}\nu_{13}(1-\nu_{12})(1-\nu_{13})}{(\lambda+1)^2},\\
 \displaystyle(1,1,0,0,0,1)&:\frac{2\lambda^3\nu_{12}\nu_{13}\nu_{23}}{(\lambda+1)^2},\\
 \displaystyle\!(1,1,0,1,1,0)&:\frac{2\lambda^4\nu_{12}\nu_{13}\nu_{34}\nu_{24}}{(\lambda+1)^3}.
\end{align*}
Up to permutations these are the only non-zero tetrahedra periods with $\sum_e n_e\leq4$.
\end{ex}
In the above example we see that for general weights $\nu_e$ angular periods can be negative. If $D=4$ and $\nu_e=1$ for all $e\in\sE_G$, this is not the case.

\begin{thm}\label{D4posthm}
Let $P_G^\angle$ be the angular period of a graph $G$ with edge-weights $\nu_e=1$ for all $e\in\sE_G$ in $D=4$ dimensions. Then $P_G^\angle\geq0$.
\end{thm}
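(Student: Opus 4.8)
The plan is to exploit the exceptional fact that the sphere $S^3\subset\RR^4$ is a group. Since $D=4$ means $\lambda=1$, the hypothesis $\nu_e=1$ gives $\lambda\nu_e=1$, so the angular propagator of an edge $e=\{u,w\}$ is $p^\angle_e=C^1_{n_e}(\cos\phi_e)$, and by (\ref{Cdef1}) with $\alpha=1$ this equals $\sum_{\ell=0}^{n_e}\ee^{(n_e-2\ell)\ii\phi_e}=\sin((n_e+1)\phi_e)/\sin\phi_e=\chi_{n_e}(\phi_e)$, the character of the $(n_e+1)$-dimensional irreducible representation $\rho_{n_e}$ of $\mathrm{SU}(2)$. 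Identifying $S^3$ with the unit quaternions $\cong\mathrm{SU}(2)$, the normalized measure $\Omega_3$ becomes normalized Haar measure $\dd g$, and for unit vectors the Euclidean relation $\cos\phi_{xy}=x\cdot y$ is the cosine of the rotation angle of $g_x^{-1}g_y$, i.e.\ $\cos\phi_{xy}=\tfrac12\,\mathrm{tr}_{\CC^2}(g_x^{-1}g_y)$. Hence, for a graph $G$ with only internal vertices, $\sV_G=\sVGint$,
\begin{equation*}
P^\angle_G=\int_{\mathrm{SU}(2)^{\sV_G}}\ \prod_{e=\{u,w\}\in\sE_G}\chi_{n_e}(g_u^{-1}g_w)\ \prod_{v\in\sV_G}\dd g_v,
\end{equation*}
a finite integral of a bounded continuous function. (The case of one external vertex is the same by rotational invariance.)

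Next I would orient every edge of $G$ as $u\to w$ (tail to head) and use unitarity of $\rho_{n_e}$ to write $\chi_{n_e}(g_u^{-1}g_w)=\mathrm{tr}\big(\rho_{n_e}(g_u)^\dagger\rho_{n_e}(g_w)\big)=\sum_{i_e,j_e}\overline{[\rho_{n_e}(g_u)]_{i_ej_e}}\,[\rho_{n_e}(g_w)]_{i_ej_e}$. Substituting into $P^\angle_G$ and interchanging the finite index sums with the integral, the integral factorizes over vertices; the factor at $v$ is a matrix element of
\begin{equation*}
\Pi_v:=\int_{\mathrm{SU}(2)}\Big(\bigotimes_{e\in E^+(v)}\overline{\rho_{n_e}(g)}\Big)\otimes\Big(\bigotimes_{f\in E^-(v)}\rho_{n_f}(g)\Big)\dd g ,
\end{equation*}
which is the orthogonal projector onto the $\mathrm{SU}(2)$-invariant subspace of $H_v:=\big(\bigotimes_{E^+(v)}\overline{V_{n_e}}\big)\otimes\big(\bigotimes_{E^-(v)}V_{n_f}\big)$, where $V_n=\CC^{n+1}$ carries $\rho_n$. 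Writing $\Pi_v=\sum_\alpha|t^{v,\alpha}\rangle\langle t^{v,\alpha}|$ for an orthonormal basis of that invariant subspace, each $t^{v,\alpha}$ carries one row index and one column index per incident edge.

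The crux is then a purely linear-algebraic bookkeeping: inserting the spectral form of each $\Pi_v$ and carrying out the sum over the row indices $j_e$ and the sum over the column indices $i_e$ separately — these decouple because each factor $\chi_{n_e}(g_u^{-1}g_w)$ contributes the pair $(i_e,j_e)$ identically at the tail and at the head of $e$ — yields
\begin{equation*}
P^\angle_G=\sum_{(\alpha_v)_{v\in\sV_G}}\ \Big|\ \sum_{(j_e)_{e\in\sE_G}}\ \prod_{v\in\sV_G} t^{v,\alpha_v}_{\,(\,j_e\,:\,e\ni v\,)}\ \Big|^{2}\ \ge\ 0 .
\end{equation*}
I expect this index matching to be the main obstacle: one must verify that the $j$-contraction and the $i$-contraction of the invariant tensors $t^{v,\alpha_v}$ along the edges of $G$ are complex conjugates of one another (so a genuine sum of squares appears, not an indefinite bilinear pairing), and that the unitary intertwiners $\overline{V_n}\cong V_n$ forced in by the choice of orientations can be absorbed into the $t^{v,\alpha}$ without disturbing this — using that $P^\angle_G$ is manifestly real and independent of the chosen orientations. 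Conceptually, the mechanism is that on $S^3=\mathrm{SU}(2)$ the degree-$n$ spherical harmonics are exactly the matrix coefficients $\sqrt{n+1}\,[\rho_n]_{ij}$, which carry two indices; the graph integral therefore splits into two mirror copies of an $\mathrm{SU}(2)$ spin-network evaluation and is a norm-squared. This is precisely why $D=4$ is essential — no sphere $S^{D-1}$ with $D\neq4$ carries a group structure — consistently with Example~\ref{angularex4}, where angular periods become negative as soon as $\nu_e\neq1$.
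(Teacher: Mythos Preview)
Your proposal is correct and is essentially the same proof as the paper's: both identify $S^3$ with $\mathrm{SU}(2)$, recognize $C^1_n(\cos\phi_{xy})$ as the character $\chi_{n+1}(xy^{-1})$, expand the character into matrix coefficients, factorize the integral over vertices, and obtain a sum of absolute squares. The only difference is presentational---the paper constructs the invariant tensors at each vertex explicitly via iterated Clebsch--Gordan coefficients (the $S$-matrices, related to $3$-$j$ symbols) and then applies Schur orthogonality, whereas you invoke the projector onto invariants abstractly; your hesitation about the index matching is unnecessary, since the spectral decomposition $[\Pi_v]_{I,J}=\sum_\alpha t^{v,\alpha}_I\,\overline{t^{v,\alpha}_J}$ already separates the row and column multi-indices, so the $I$- and $J$-contractions over the edges are manifestly complex conjugates and no $\overline{V_n}\cong V_n$ intertwiner needs to be inserted.
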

\begin{proof}
We use that a unit vector $x$ in $\RR^4$ can be identified with an $SU(2)$ group element via
$$
x\sim\Big(\begin{array}{cc}x_1&x_2\\
-\xx_2&\xx_1\end{array}\Big),\quad\text{where }x_1,x_2\in\CC\quad\text{and}\quad x=(\Re x_1,\Im x_1,\Re x_2,\Im x_2)^T.
$$
We obtain the relation
$$
\mathrm{Tr}\,xy^{-1}=2\Re\,(x_1\overline{y_1}+x_2\overline{y_2})=2\cos\phi_{xy}=C^1_1(\cos\phi_{xy}),
$$
where $y_1$ and $y_2$ are the complex coordinates of $y$. The matrix of $x$ is the two-dimensional representation $R^{(2)}(x)$ of the $SU(2)$ element $x$ with character
$\mathrm{Tr}\,R^{(2)}(x)=\chi_2(x)$. The group $SU(2)$ has one irreducible representation $R^{(n+1)}$ in each finite dimension $n+1\geq1$. Tensor products of irreducible $SU(2)$ representations
reduce to irreducible representations according to
$$
R^{(n_1+1)}(x)\otimes R^{(n_2+1)}(x)=\sum_{|n_1-n_2|\leq_2n_3\leq_2n_1+n_2}SR^{(n_3+1)}(x)S^{-1}
$$
for some invertible matrix $S$. By unitarity of the representations we can choose $S$ to be unitary, $S^{-1}=S^\dagger$ (in fact one can choose $S$ to be orthogonal). Using indices this gives
\begin{equation}\label{RRR}
R^{(n_1+1)}_{m_1,m_1'}(x)R^{(n_2+1)}_{m_2,m_2'}(x)=\sum_{|n_1-n_2|\leq_2n_3\leq_2n_1+n_2}\;\sum_{m_3,m_3'=1}^{n_3+1}S^{n_1,n_2,n_3}_{m_1,m_2,m_3}R^{(n_3+1)}_{m_3,m_3'}(x)
\overline S^{n_1,n_2,n_3}_{m_1',m_2',m_3'},
\end{equation}
where we made the dependence of $S$ on $n_1,n_2,n_3$ explicit. The matrix index of $S$ is the bi-label $m_1,m_2$ and the label $m_3$.
The $S^{n_1,n_2,n_3}_{m_1,m_2,m_3}$ are related to the 3-$j$ symbols in quantum mechanics (see e.g.\ \cite{Smultij} and the references therein).
By taking traces on both sides we obtain the addition formula for the $SU(2)$ characters
$$
\chi_{n_1+1}(x)\chi_{n_2+1}(x)=\sum_{|n_1-n_2|\leq_2n_3\leq_2n_1+n_2}\chi_{n_3+1}(x).
$$
This formula mirrors the addition formula for the Gegenbauer polynomials $C^1_n$,
$$
C^1_{n_1}(x)C^1_{n_2}(x)=\sum_{|n_1-n_2|\leq_2n_3\leq_2n_1+n_2}C^1_{n_3}(x),
$$
which follows from (\ref{Cdef1}). Moreover, we have $\chi_1(xy^{-1})=C^1_0(\cos\phi_{xy})=1$. With the initial conditions for the first two cases
the addition formulae fully determine the characters and Gegenbauer polynomials for higher indices. We get
$$
C^1_n(\cos\phi_{xy})=\chi_{n+1}(xy^{-1})=\sum_{m,m'=1}^{n+1}R^{(n+1)}_{m,m'}(x)\overline R^{(n+1)}_{m,m'}(y).
$$
The left hand side is the angular propagator (\ref{angprop}) for weights $(\nu_e,n_e)=(1,n)$. To handle the formal asymmetry in $x$ and $y$ we orient the edges of the angular graph
$G$ by the convention that $ij$ is the edge from $i$ to $j$. From the definition (\ref{fGangledef}), (\ref{Pangdef}) of the angular period we get
$$
P_G^\angle=\Big(\prod_{i=1}^{|\sV_G|}\int_{S_{D-1}}\Omega^{x_i}_{D-1}\Big)\prod_{e=ij\in\sE_G}\sum_{m,m'=1}^{n_e+1}R^{(n_e+1)}_{m,m'}(x_i)\overline R^{(n_e+1)}_{m,m'}(x_j).
$$
We interchange the finite sums with the integral and obtain
\begin{equation}\label{angperthmpf}
P_G^\angle=\sum_{m_1,m_1'=1}^{n_1+1}\cdots\sum_{m_{|\sE_G|},m_{|\sE_G|}'=1}^{n_{|\sE_G|}+1}\prod_{i=1}^{|\sV_G|}\int_{S_{D-1}}\Omega^{x_i}_{D-1}
\Big(\prod_{j:ij\in\sE_G}R^{(n_{ij}+1)}_{m_{ij},m_{ij}'}(x_i)\Big)\Big(\prod_{j:ji\in\sE_G}\overline R^{(n_{ji}+1)}_{m_{ji},m_{ji}'}(x_i)\Big),
\end{equation}
where we sum over $m_1,m_1' \in \{1,\ldots,n_1+1\}$, over $m_2,m_2' \in \{1,\ldots,n_2+1\},$
$\ldots,$
over 
$m_{|\sE_G|},m_{|\sE_G|}' \in \{ 1, n_{|\sE_G|}+1\}$ and 
where the empty product is 1. Note that we label the edges by $1,\ldots,|\sE_G|$ and also by the directed pairs of their adjacent vertices.
Iterated use of (\ref{RRR}) gives
$$
\prod_{j:ij\in\sE_G}R^{(n_{ij}+1)}_{m_{ij},m_{ij}'}(x_i)=\sum_{M,M',N}S^{(n_{ij})_j,N}_{(m_{ij})_j,M}R^{(N+1)}_{M,M'}(x_i)\overline S^{(n_{ij})_j,N}_{(m_{ij}')_j,M'},
$$
where we have defined multi-$j$-type symbols $S$, see \cite{Smultij}.
If on the left hand side the product over $j$ is empty, the right hand side is defined as $1=R^{(1)}_{1,1}$ whereas in the case of a single factor the formula is trivial.
Otherwise we explicitly have
$$
S^{(n_{ij})_j,N}_{(m_{ij})_j,M}=\sum_{M_1,\ldots,M_k}\sum_{N_1,\ldots,N_k}S^{n_1,n_2,N_2}_{m_1,m_2,M_2}S^{N_2,n_3,N_3}_{M_2,m_3,M_3}\cdots S^{N_{k-1},n_k,N}_{M_{k-1},m_k,M},
$$
where we assumed that the edges in the product are labeled from 1 to $k$.

With the orthogonality relation of representations (see e.g.\ \cite{DStanton})
$$
\int_{S_{D-1}}\Omega^x_{D-1}R^{(n_1+1)}_{m_1,m_1'}(x)\overline R^{(n_2+1)}_{m_2,m_2'}(x)=\frac{\delta_{n_1,n_2}\delta_{m_1,m_2}\delta_{m_1',m_2'}}{n_1+1}
$$
the integral over $x_i$ in (\ref{angperthmpf}) gives
$$
\sum_{M,M',N}\frac{S^{(n_{ij})_j,N}_{(m_{ij})_j,M}\overline S^{(n_{ij})_j,N}_{(m_{ij}')_j,M'}\overline S^{(n_{ji})_j,N}_{(m_{ji})_j,M}S^{(n_{ji})_j,N}_{(m_{ji}')_j,M'}}{N+1}.
$$
With this result the terms with unprimed and primed indices factor in (\ref{angperthmpf}),
$$
P_G^\angle=\frac{1}{\prod_{i=1}^{|\sV_G|}(N_i+1)}\sum_{N_1,\ldots,N_{|\sV_G|}}\Big|\sum_{m_1,\ldots,m_{|\sE_G|}}\sum_{M_1,\ldots,M_{|\sV_G|}}\prod_{i=1}^{|\sV_G|}
S^{(n_{ij})_j,N_i}_{(m_{ij})_j,M_i}\overline S^{(n_{ji})_j,N_i}_{(m_{ji})_j,M_i}\Big|^2.
$$
This expression for $P_G^\angle$ is explicitly non-negative.
\end{proof}

\begin{ex}\label{exWSnAper}
For the $n$-cycle $C_n$ with only internal vertices (and edge-weights $(1,n_{i,i+1})$) we get the angular period from integrating (\ref{WSA}) in Example \ref{exWSnA} over $z_2$.
By orthogonality (\ref{Gegorth}) of the Gegenbauer polynomials we get
\begin{align}
\begin{aligned}
\label{exWSnApereq}
P_{C_n}^\angle&=\Big(\frac{\lambda}{\lambda+n_{z_1,2}}\Big)^{n-1}\!\Big(\prod_{i=2}^n\delta_{n_{i-1,i},n_{i,i+1}}\Big)C^\lambda_{n_{z_1,2}}(1)
 =\\
&=\genfrac(){0pt}{}{n_{z_1,2}+2\lambda-1}{n_{z_1,2}}\Big(\frac{\lambda}{\lambda+n_{z_1,2}}\Big)^{n-1}\!\Big(\prod_{i=2}^n\delta_{n_{i-1,i},n_{i,i+1}}\Big),
\end{aligned}
\end{align}
where we have used (\ref{C1}).
\end{ex}

\begin{ex}\label{exneg}
Consider the tetrahedron with vertices $1,2,3,4$ and edge-weights $\nu_e=1$ for all edges $e$.
For integer edge-weights $(n_{12},n_{13},n_{14},n_{34},n_{24},n_{23})$ we get the angular periods (compare Example \ref{angularex4})
$$
(0,0,0,0,0,0):1,\quad(1,1,0,0,0,1):\frac{2\lambda^3}{(\lambda+1)^2},\quad(1,1,0,1,1,0):\frac{2\lambda^4}{(\lambda+1)^3}.
$$
Up to permutations these are the only non-zero tetrahedra periods with $\nu_e=1$ and $\sum_e n_e\leq4$. For $n_e=2$ we get
$$
(2,2,2,2,2,2):\frac{8\lambda^7(2\lambda+1)(\lambda^2+4\lambda-3)}{(\lambda+2)^3(\lambda+3)^3}
$$
which is $-\frac{6}{42875}$ in three dimensions. For $D\geq4$ the angular period of the tetrahedron with $n_e=2$ is positive.
\end{ex}

The proof of Theorem \ref{D4posthm} explicitly uses the group structure of the unit-sphere in four dimensions.
It does not generalize to higher dimensions. Still, angular periods may be positive for $D>4$.
\begin{quest}\label{posquest}
Does Theorem \ref{D4posthm} hold for (some) dimensions $D>4$?
\end{quest}
Example \ref{exneg} admits the possibility that Theorem \ref{D4posthm} holds for all $D\geq4$ and, in particular, for even $D\geq4$. So, non-negativity of angular periods may still be
a viable path to extend the validity of (G3) in Theorem \ref{thm1}. Note that Theorem \ref{thm1} has been excessively tested by the calculation of many Feynman periods with graphical functions.

\section{Gegenbauer expansion}\label{sectgegexp}
We want to prove that it is possible to express graphical functions as multiple sums over radial and angular graphical functions \cite{geg}.
The obstacle is to legitimate interchanging the Gegenbauer sum with the position space integrals. We expect that this is always possible but
we only prove a special case with a positivity condition on angular periods, see Theorem \ref{D4posthm} and Question \ref{posquest}.

\begin{thm}\label{thmgegex0}
Let $G$ be a graph such that the graphical function $f_G(z)$ exists in $D=2\lambda+2\geq3$ (odd or even) dimensions.
Let $G(\mathbf n)$ be the graph $G$ with additional non-negative integer edge-weights $\mathbf n=n_1,\ldots,n_{|\sE_G|}$
which vanish for every edge that is incident to the external vertex 0. Moreover, for $n_{1z}\in\ZZ_{\geq0}$ we define $G(\mathbf n,n_{1z})$ as the graph that is obtained
from $G(\mathbf n)\backslash\{0\}$ by adding an edge $1z$ with weights $\nu_{1z}=1$ and $n_{1z}$. We assume that the angular period (\ref{Pangdef}) fulfills
\begin{equation}\label{gegthmcond}
P_{G(\mathbf n,n_{1z})}^\angle\geq0\quad\text{for all }\mathbf n,n_{1z}\in\ZZ_{\geq0}.
\end{equation}
Then the graphical function $f_G(z)$ admits an absolutely convergent Gegenbauer expansion for $z=Z\ee^{\ii\phi}$, $Z\neq1$,
\begin{equation}\label{gfr}
f_G(z)=\Big(\frac{2}{\Gamma(\lambda+1)}\Big)^\VGint\sum_{\mathbf n}f_{G(\mathbf n)}^R(Z)f_{G(\mathbf n)\backslash\{0\}}^\angle(\cos\phi).
\end{equation}
The radial and angular graphical functions $f_{G(\mathbf n)}^R$ and $f_{G(\mathbf n)\backslash\{0\}}^\angle$ are defined in Sections \ref{sectrad} and \ref{sectang}.
\end{thm}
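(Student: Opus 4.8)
The plan is to run the classical Gegenbauer expansion of propagators and then spend essentially all the effort justifying that the resulting double expansion over radial and angular degrees of freedom may be exchanged with the position space integral in \eqref{eqAdef}. Fixing an edge $e=xy$, writing $Q_e=X^2-2XY\cos\phi_e+Y^2$ and applying the generating function \eqref{Cdef} with $t=(X/Y)_<$ gives the edge-wise identity $Q_e^{-\lambda\nu_e}=\sum_{n_e\ge0}p^R_e\,C^{\lambda\nu_e}_{n_e}(\cos\phi_e)$, where $p^R_e$ is the radial propagator \eqref{propdef} (with this value of $n_e$) and, for an edge incident to the vertex $0$, there is no angular dependence so only $n_e=0$ contributes. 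Multiplying over all edges, using $\dd^Dx_i/\pi^{D/2}=\tfrac{2}{\Gamma(\lambda+1)}X_i^{D-1}\dd X_i\,\Omega^{x_i}_{D-1}$ with $\Omega_{D-1}$ normalized as in \eqref{SD1}, and collecting the radial and angular factors reproduces formally the right-hand side of \eqref{gfr}. Thus the theorem reduces to legitimating two interchanges of $\sum_{\mathbf n}$ with an integral: first with the angular integrations, then with the radial ones.

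For the angular step I would fix the moduli $X_1,\dots,X_{\VGint}$ at generic values (pairwise distinct and distinct from $Z$ and $1$; the exceptional set has measure zero). Then every ratio $(X/Y)_<$ occurring in an edge propagator is strictly less than $1$, and by \eqref{C1} the scalar series $\sum_{n_e}|p^R_e|\,C^{\lambda\nu_e}_{n_e}(1)=(XY)^{-\lambda\nu_e}\big(1-(X/Y)_<\big)^{-2\lambda\nu_e}$ converges. Hence the product over the finitely many edges converges absolutely in the sup-norm over the angles, and $\sum_{\mathbf n}$ may be exchanged with the finite-dimensional integral $\prod_i\int_{S_{D-1}}\Omega^{x_i}_{D-1}$, giving for almost every choice of moduli
\[
\Big(\prod_{i=1}^{\VGint}\int_{S_{D-1}}\Omega^{x_i}_{D-1}\Big)\prod_{e\in\sE_G}Q_e^{-\lambda\nu_e}
=\sum_{\mathbf n}\Big(\prod_{e\in\sE_G}p^R_e\Big)\,f^\angle_{G(\mathbf n)\backslash\{0\}}(\cos\phi).
\]

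The radial interchange is the crux, because the summands here need not be nonnegative (Gegenbauer polynomials change sign), so Tonelli does not apply directly; instead I would dominate $|f^\angle_{G(\mathbf n)\backslash\{0\}}(\cos\phi)|$ by $f^\angle_{G(\mathbf n)\backslash\{0\}}(1)$. By the Remark following Theorem~\ref{angularthm}, $f^\angle_{G(\mathbf n)\backslash\{0\}}$ is a polynomial in $\cos\phi_{1z}$, hence a finite combination $\sum_m a_m(\mathbf n)\,C^\lambda_m(\cos\phi_{1z})$; using the orthogonality relation \eqref{Gegorth} (which, specialized, gives the norm $\|C^\lambda_m\|^2=\tfrac{\lambda}{\lambda+m}C^\lambda_m(1)$ for the normalized measure) and observing that inserting an edge $1z$ of weights $(\nu_{1z}=1,n_{1z}=m)$ — whose angular propagator is exactly $C^\lambda_m(\cos\phi_{1z})$ — and integrating out all remaining vertices computes $P^\angle_{G(\mathbf n,m)}$, one obtains
\[
a_m(\mathbf n)=\frac{\lambda+m}{\lambda\,C^\lambda_m(1)}\,P^\angle_{G(\mathbf n,m)}.
\]
The hypothesis \eqref{gegthmcond} makes every $a_m(\mathbf n)\ge0$, and then \eqref{C1} yields $|f^\angle_{G(\mathbf n)\backslash\{0\}}(\cos\phi)|\le\sum_m a_m(\mathbf n)C^\lambda_m(1)=f^\angle_{G(\mathbf n)\backslash\{0\}}(1)\ge0$.

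With this bound I would first treat the real point $z=Z\in\RR_{>0}\backslash\{1\}$, i.e.\ $\phi=0$. There the full integrand $\prod_eQ_e^{-\lambda\nu_e}$ is nonnegative, so after the (already justified) angular interchange the function $\prod_i\int\Omega^{x_i}_{D-1}\prod_eQ_e^{-\lambda\nu_e}$ of the moduli is nonnegative and equals $\sum_{\mathbf n}(\prod_ep^R_e)\,f^\angle_{G(\mathbf n)\backslash\{0\}}(1)$ with every summand nonnegative; Tonelli now applies to the radial integral and gives $\big(\tfrac{\Gamma(\lambda+1)}{2}\big)^{\VGint}f_G(Z)=\sum_{\mathbf n}f^R_{G(\mathbf n)}(Z)\,f^\angle_{G(\mathbf n)\backslash\{0\}}(1)$. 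Since $f_G(Z)<\infty$ — the Feynman integral converges at every point of $\CC\backslash\{0,1\}$ by the standing convergence hypothesis, cf.\ Theorem~\ref{thm1} and Proposition~\ref{propconv} — this series of nonnegative terms converges; dropping the $\mathbf n$ with $f^\angle_{G(\mathbf n)\backslash\{0\}}\equiv0$ (which contribute $0$ everywhere) shows in particular each relevant $f^R_{G(\mathbf n)}(Z)$ is finite. Returning to general $z=Z\ee^{\ii\phi}$, the bound $|f^\angle_{G(\mathbf n)\backslash\{0\}}(\cos\phi)|\le f^\angle_{G(\mathbf n)\backslash\{0\}}(1)$ dominates the absolute value of the radial double series by this convergent series, so Fubini/dominated convergence legitimates exchanging $\sum_{\mathbf n}$ with the radial integral, producing \eqref{gfr} together with the asserted absolute convergence; the case $Z>1$ follows verbatim, or by the reflection \eqref{Zrez}. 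The main obstacle is exactly the identification of $a_m(\mathbf n)$ with the angular period $P^\angle_{G(\mathbf n,m)}$ and the bootstrap through the nonnegative real point $\phi=0$, which together convert the pointwise positivity hypothesis \eqref{gegthmcond} into a uniform summable majorant; everything else is the classical Gegenbauer expansion of the massless propagator.
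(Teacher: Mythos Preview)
Your proposal is correct and follows essentially the same route as the paper's proof: expand each propagator in Gegenbauer polynomials, interchange the sum with the compact angular integrations via the bound $|C^\alpha_k(\cos\phi)|\le C^\alpha_k(1)$, identify the Gegenbauer coefficients of $f^\angle_{G(\mathbf n)\backslash\{0\}}$ with the angular periods $P^\angle_{G(\mathbf n,m)}$ through orthogonality, use the positivity hypothesis to obtain the uniform majorant $|f^\angle(\cos\phi)|\le f^\angle(1)$, and then bootstrap absolute convergence of the radial integral from the real point $\phi=0$ where everything is nonnegative and bounded by $f_G(Z)<\infty$. The only cosmetic slip is the formula $(XY)^{-\lambda\nu_e}(1-(X/Y)_<)^{-2\lambda\nu_e}$, which is off by a factor of $(X/Y)_<^{\lambda\nu_e}$ and should evaluate to $|X-Y|^{-2\lambda\nu_e}$, but this does not affect the argument.
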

\begin{proof}
We use spherical coordinates (\ref{angularcoords}). From (\ref{eqzdef}) we get $z_1=(1,0,\ldots,0)^T$ and $z_2=(Z\cos\phi,$ $Z\sin\phi,$ $0,\ldots,0)^T$.
Propagators in (\ref{eqAdef}) which are not attached to 0 expand into Gegenbauer polynomials, see (\ref{Cdef}),
\begin{equation}\label{propagator}
\frac{1}{\|x-y\|^{2\lambda\nu_{xy}}}=\frac{1}{(XY)^{\lambda\nu_{xy}}}\sum_{n_{xy}=0}^\infty\left(\frac{X}{Y}\right)_<^{n_{xy}+\lambda\nu_{xy}}C^{\lambda\nu_{xy}}_{n_{xy}}(\cos\phi_{xy}),
\end{equation}
where $\phi_{xy}$ is the angle between $x$ and $y$ and $(x)_<=x$ if $x<1$ and $x^{-1}$ otherwise. Propagators of edges which are attached to 0 equal their radial propagators (\ref{propdef}).

The integration measure splits into radial and angular coordinates according to (see (\ref{SD1}))
$$
\int_{\RR^D}\frac{\dd^Dx_i}{\pi^{D/2}}=\frac{\text{vol}\,S_{D-1}}{\pi^{D/2}}\int_0^\infty\dd X X^{D-1}\int_{S_{D-1}}\Omega^x_{D-1}
 =\frac{2}{\Gamma(\lambda+1)}\int_0^\infty\dd X X^{D-1}\int_{S_{D-1}}\Omega^x_{D-1}.
$$
By Fubini we can do all angular integrals first. The domain of integration is compact and the integral is absolutely convergent because by (\ref{C1}) we get for $X\neq Y$,
$$
\frac{1}{(XY)^{\lambda\nu_{xy}}}\sum_{n_{xy}=0}^\infty\left(\frac{X}{Y}\right)_<^{n_{xy}+\lambda\nu_{xy}}|C^{\lambda\nu_{xy}}_{n_{xy}}(\cos\phi_{xy})|
\leq\frac{1}{|X-Y|^{2\lambda\nu_{xy}}}<\infty.
$$
The angular integration over the Gegenbauer polynomials gives the angular graphical function $f_{G(\mathbf n)\backslash\{0\}}^\angle(\cos\phi)$.
It admits an expansion into Gegenbauer polynomials
\begin{equation}\label{eqpfpf4}
f_{G(\mathbf n)\backslash\{0\}}^\angle(\cos\phi)=\sum_{k\geq0}f_{G(\mathbf n)\backslash\{0\},k}^\angle C^\lambda_k(\cos\phi).
\end{equation}
If we multiply $f_{G(\mathbf n)\backslash\{0\}}^\angle$ with $C^\lambda_{n_{1z}}(\cos\phi)$ and integrate over $\int_{S_{D-1}}\Omega^z_{D-1}$, orthogonality
(\ref{Gegorth}) of Gegenbauer polynomials provides an expansion into angular periods,
$$
f_{G(\mathbf n)\backslash\{0\}}^\angle(\cos\phi)=\sum_{n_{1z}\geq0}\frac{\lambda+n_{1z}}{\lambda C^\lambda_{n_{1z}}(1)}P_{G(\mathbf n,n_{1z})}^\angle C^\lambda_{n_{1z}}(\cos\phi).
$$
Because $P_{G(\mathbf n,n_{1z})}^\angle\geq0$ we get from (\ref{C1}) the estimate
$$
|f_{G(\mathbf n)\backslash\{0\}}^\angle(\cos\phi)|\leq\sum_{n_{1z}\geq0}\frac{\lambda+n_{1z}}{\lambda}P_{G(\mathbf n,n_{1z})}^\angle=f_{G(\mathbf n)\backslash\{0\}}^\angle(1).
$$
The integral over the radial variables is absolutely convergent,
$$
\Big(\prod_{i=1}^\VGint\int_0^\infty\dd X_i X_i^{D-1}\Big)\Big|\Big(\prod_{e\in\sE_G}p^R_e\Big)f_{G(\mathbf n)\backslash\{0\}}^\angle(\cos\phi)\Big|\leq f_G(Z)<\infty
$$
for $1\neq Z>0$. In the above estimate we reversed the interchange of the sum and the angular integral to obtain $f_G(Z\ee^{\ii\phi})$ for $\phi=0$. Finiteness follows from
(G2) in Theorem \ref{thm1}.

By absolute convergence we can interchange the sum with the radial integrals which gives (\ref{gfr}) for $Z\neq1$.
\end{proof}

\begin{cor}
Let $G$ be an uncompleted graph such that the Feynman period $P_G$ exists in $D=2\lambda+2\geq3$ dimensions (see Section \ref{sectper}).
Let $G(\mathbf n)\cup\{Z\}$ be the graph of the corresponding constant radial graphical function with isolated external vertex $Z$.
If the angular period $P_{G(\mathbf n)\backslash\{0\}}^\angle\geq0$ for all $\mathbf n$, then
\begin{equation}\label{gfrper}
P_G=\Big(\frac{2}{\Gamma(\lambda+1)}\Big)^{|\sV_G|-2}\sum_{\mathbf n}f_{G(\mathbf n)\cup\{Z\}}^RP_{G(\mathbf n)\backslash\{0\}}^\angle.
\end{equation}
\end{cor}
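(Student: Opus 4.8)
The plan is to apply Theorem \ref{thmgegex0} to the graph $\hat G := G\cup\{z\}$ obtained from the period graph $G$ by adjoining an \emph{isolated} external vertex $z$. Indeed, by the definition of the Feynman period in Section \ref{sectper} one has $P_G = f_{\hat G}(z)$, a constant graphical function, and $\hat G$ has external vertices $0,1,z$ with internal vertex set equal to that of $G$, so $|\sV_{\hat G}^{\mathrm{int}}| = |\sV_G|-2$; this accounts for the power of $2/\Gamma(\lambda+1)$ in \eqref{gfrper}. Since $z$ adds no edges, the weighted graphs appearing in Theorem \ref{thmgegex0} for $\hat G$ are $\hat G(\mathbf n) = G(\mathbf n)\cup\{z\}$ and $\hat G(\mathbf n)\setminus\{0\} = (G(\mathbf n)\setminus\{0\})\cup\{z\}$, with $z$ isolated in both.

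First I would verify that the positivity hypothesis of Theorem \ref{thmgegex0} for $\hat G$ reduces to the hypothesis of the corollary. In that theorem's notation, $\hat G(\mathbf n, n_{1z})$ is built from $\hat G(\mathbf n)\setminus\{0\}$ by adding an edge $1z$ with weights $(\nu_{1z},n_{1z})=(1,n_{1z})$; because $z$ is isolated in $\hat G$, the vertex $z$ enters the integrand of $P^\angle_{\hat G(\mathbf n,n_{1z})}$ only through the single angular propagator $C^\lambda_{n_{1z}}(\cos\phi_{1z})$. Integrating over $z$ with the measure $\Omega^z_{D-1}$ and using the orthogonality relation \eqref{Gegorth} against the constant polynomial $C^\lambda_0$ gives $\int_{S_{D-1}}\Omega^z_{D-1}\,C^\lambda_{n_{1z}}(\cos\phi_{1z}) = \delta_{n_{1z},0}$, hence $P^\angle_{\hat G(\mathbf n,n_{1z})} = \delta_{n_{1z},0}\,P^\angle_{G(\mathbf n)\setminus\{0\}}$. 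Therefore $P^\angle_{\hat G(\mathbf n,n_{1z})}\ge0$ for all $\mathbf n,n_{1z}$ exactly when $P^\angle_{G(\mathbf n)\setminus\{0\}}\ge0$ for all $\mathbf n$, which is assumed.

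Next I would identify the radial and angular factors in \eqref{gfr} applied to $\hat G$. As $z$ is isolated, the radial graphical function $f^R_{\hat G(\mathbf n)}(Z)$ carries no $Z$-dependence and equals the constant $f^R_{G(\mathbf n)\cup\{Z\}}$ of the corollary; similarly $f^\angle_{\hat G(\mathbf n)\setminus\{0\}}(\cos\phi)$ does not depend on $\phi$, and by rotational invariance of angular graphical functions (the cases $\VGext=0$ and $\VGext=1$ coincide, see the remark following Theorem \ref{angularthm}) it equals the angular period $P^\angle_{G(\mathbf n)\setminus\{0\}}$. Substituting these identifications into \eqref{gfr}, together with $f_{\hat G}(z)=P_G$, yields \eqref{gfrper}.

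I do not expect a serious obstacle: the statement is Theorem \ref{thmgegex0} specialized to an isolated external vertex, and the only step requiring care is the reduction of the angular positivity condition, which is the one-line orthogonality computation above. For completeness one should also observe that \eqref{gfr} holds for any $Z\neq1$ and, since $f_{\hat G}$ is constant on $\CC\setminus\{0,1\}$, may be evaluated at any such $Z$; absolute convergence of the resulting sum is provided by Theorem \ref{thmgegex0} because $P_G = f_{\hat G}(z)$ exists by hypothesis.
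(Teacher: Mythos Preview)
Your proposal is correct and follows essentially the same approach as the paper's proof: apply Theorem \ref{thmgegex0} to $G\cup\{z\}$ with isolated $z$, reduce the positivity hypothesis via orthogonality \eqref{Gegorth} (which forces $n_{1z}=0$), and read off \eqref{gfrper}. Your version is more explicit about the identification of the radial and angular factors and the vertex count, but the argument is the same.
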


\begin{proof}
Let $G_{0,1,z}$ be the constant graphical function which arises from the period $G$ by identifying any two vertices with 0 and 1 and adding an isolated vertex $z$.
We add an edge $1z$ with weight $(1,n_{1z})$ and consider the period $P_{G(\mathbf n,n_{1z})}^\angle$ as in Theorem \ref{thmgegex0}. By orthogonality (\ref{Gegorth}) the integration over
the vertex $z$ gives zero unless $n_{1z}=0$. In this case the edge $1z$ is absent and $P_{G(\mathbf n,n_{1z})}^\angle\geq0$ follows from $P_{G(\mathbf n)\backslash\{0\}}^\angle\geq0$.
Theorem \ref{thmgegex0} gives the result.
\end{proof}

\begin{con}\label{congegex}
The Gegenbauer expansion (\ref{gfr}) is valid for all graphical functions.
\end{con}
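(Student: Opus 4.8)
The conjecture is equivalent, via the proof of Theorem \ref{thmgegex0}, to a single analytic point: justifying the interchange of the Gegenbauer summation with the radial integrations in (\ref{gfr}). The angular integrals run over a compact domain and converge absolutely by (\ref{C1}), so Fubini always allows one to perform them first, producing the angular graphical functions $f^\angle_{G(\mathbf n)\backslash\{0\}}(\cos\phi)$ with the Gegenbauer index $\mathbf n$ left as an outer sum weighted by $\prod_e p^R_e$. Theorem \ref{thmgegex0} carries out the remaining interchange whenever the angular periods $P^\angle_{G(\mathbf n,n_{1z})}$ are non-negative, since then $|f^\angle_{G(\mathbf n)\backslash\{0\}}(\cos\phi)|\le f^\angle_{G(\mathbf n)\backslash\{0\}}(1)$ and the dominating radial sum integrates to $f_G(Z)<\infty$. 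Hence a positive answer to Question \ref{posquest}, i.e.\ positivity of $P^\angle_G$ in even dimensions $D>4$ for unit weights, would prove (\ref{gfr}) for every graph with $\nu_e=1$; non-unit positive weights (where angular periods can genuinely be negative, see Example \ref{angularex4}) and negative weights would then be reduced to the unit-weight case by resolving $k/\lambda$-edges into strings of weight-one edges via Proposition \ref{propwtk1} and by Theorem \ref{thmappneg}, together with the compatibility of the radial/angular calculus with edge subdivision.

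The plan for Question \ref{posquest} is to imitate the proof of Theorem \ref{D4posthm} in general even $D$. One starts from the hyperspherical addition theorem $C^\lambda_n(\hat x\cdot\hat y)=\tfrac{\lambda}{n+\lambda}\sum_m Y_{n,m}(\hat x)\overline{Y_{n,m}(\hat y)}$ with $\{Y_{n,m}\}_m$ an orthonormal basis of degree-$n$ spherical harmonics on $S_{D-1}$ (a refinement of (\ref{Cdef1}) that reduces to the Peter--Weyl identity for $S_3\cong SU(2)$ when $D=4$), substitutes it into $P^\angle_G=\prod_i\int_{S_{D-1}}\Omega^{x_i}_{D-1}\prod_{e=ij}C^\lambda_{n_e}(\hat x_i\cdot\hat x_j)$, interchanges the finite sums with the integrals, and reduces each vertex integral of a product of harmonics, via $SO(D)$ Clebsch--Gordan maps, to a single harmonic; the orthogonality relation $\int\Omega^x_{D-1}Y_{N,M}\overline{Y_{N',M'}}=\delta_{N,N'}\delta_{M,M'}$ should then collapse the whole expression into a manifest sum of squares. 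The main obstacle is exactly the step that in $D=4$ was supplied by the group structure: one must show that the $SO(D)$ coupling (``generalized Gaunt'') coefficients for products of harmonics can be chosen real and orthogonal enough that the vertex integrals assemble into $\sum|\cdot|^2$ rather than an indefinite quadratic form. This is a genuine representation-theoretic statement about $SO(D)$ for which the four-dimensional accident ($S_3$ being a Lie group) gives no guidance, and I expect it to be the hard part.

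Two fallbacks are worth keeping in mind if positivity proves elusive. First, one can try to replace positivity by Cauchy--Schwarz: the ``doubled'' angular graph obtained by gluing $G$ to its mirror image along the external vertices has a manifestly non-negative angular period (of the shape $\int|h|^2$), so Theorem \ref{thmgegex0} applies to it, and $|f^\angle_{G(\mathbf n)\backslash\{0\}}(\cos\phi)|$ is bounded by the square root of that doubled angular period; one would then need $\sum_{\mathbf n}(\prod_e p^R_e)$ of this square root to be radially integrable, which calls for a second, radial, Cauchy--Schwarz comparing $f_G(Z)$ with the radial integral of the doubled graph, the delicate point being to keep the angular and radial halves compatible. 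Second, since both sides of (\ref{gfr}) are real-analytic on $\CC\backslash\{0,1\}$ by (G2) of Theorem \ref{thm1}, it suffices to establish (\ref{gfr}) on a single open subset of $\{|z|<1\}$; there the radial propagators $(X/Y)_<^{n_e+\lambda\nu_e}$ supply geometric decay when $Z$ is bounded away from $1$, which together with the polynomial-in-$\mathbf n$ size of the Gegenbauer coefficients of $f^\angle_{G(\mathbf n)\backslash\{0\}}$ might make the interchange tractable by crude estimates, with the parametric representation of Theorem \ref{dualparam} used to identify the limit of the series with $f_G$ on that region. In every variant the crux is the same interchange of an infinite Gegenbauer sum with the position-space integral, so my recommendation is to invest the main effort in Question \ref{posquest}.
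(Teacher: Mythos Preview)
This statement is labeled as a \emph{conjecture} in the paper precisely because the paper does not prove it; there is no ``paper's own proof'' to compare against. What you have written is not a proof either: it is a research outline that correctly identifies the single analytic obstruction (interchanging the Gegenbauer sum with the radial integrals) and sketches several avenues of attack, each explicitly conditional on an unproved hypothesis (a positive answer to Question~\ref{posquest}, or a Cauchy--Schwarz mechanism whose two halves you yourself flag as ``delicate'', or crude estimates that ``might'' work). That is a reasonable plan, but it does not close the gap.

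Two of the reductions you propose are more fragile than you suggest. First, replacing a weight-$k/\lambda$ edge by a chain of weight-$1$ edges via Proposition~\ref{propwtk1} changes the graph: the Gegenbauer expansion \eqref{gfr} for the subdivided graph carries extra summation indices $n_e$ on the new edges, so even if \eqref{gfr} holds for the subdivided graph you must still collapse those inner sums back to the single-index expansion of the original edge and justify \emph{that} interchange. Second, Theorem~\ref{thmappneg} only treats a single negative-weight edge attached to the external vertex $z$; it does not reduce an arbitrary internal edge of negative or fractional weight to the unit-weight case. So the claimed reduction ``general weights $\Rightarrow$ unit weights'' is not available without substantial further work.

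Your main line---generalizing the proof of Theorem~\ref{D4posthm} via the hyperspherical addition theorem and $SO(D)$ Clebsch--Gordan coefficients---is the natural thing to try, and you correctly locate the obstacle: for $D>4$ the sphere is not a group, so the character identity $C^1_n(\cos\phi_{xy})=\chi_{n+1}(xy^{-1})$ that made the $D=4$ argument a pure sum of squares has no direct analogue, and it is genuinely unclear whether the higher-rank Gaunt coefficients assemble into a non-negative quadratic form. This is exactly what the paper flags as open in Question~\ref{posquest}. Until that (or one of your fallbacks) is actually carried out, the conjecture remains open.
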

In physics the Gegenbauer $x$-space technique is considered a well established tool \cite{geg}. It tacitly assumes that the Gegenbauer expansion in the integrand interchanges with the
$x$-space integration as in Conjecture \ref{congegex}.

\begin{ex}\label{exWSn1}
Consider the wheel with $n$ spokes $W\!S_{n,D}$ in Figure \ref{fig:wheels} and assume that the (odd or even) dimension $D\leq2n-2$ for convergence.
We want to calculate the period $P_{W\!S_{n,D}}$. The hub has label 0 so that $W\!S_{n,D}\backslash\{0\}=C_n$
whose angular period is calculated in Example \ref{exWSnAper}. The result (\ref{exWSnApereq}) is non-negative.
The radial period can be derived from Example \ref{exWSnR1} by integrating (\ref{WSReq0}) over $Z$ with the measure
$Z^{2\lambda+1}\dd Z$. Alternatively one can consider the wheel with $n+1$ spokes at $Z=1$ (the triangle $01Z$ contributes with a trivial factor 1 to the radial graphical function for $Z=1$).
We get
$$
f_{W\!S_{n,D}\cup\{Z\}}^R=2^n\int_{-\infty}^\infty\frac{\dd P}{2\pi}\prod_{i=1}^n\frac{n_{i,i+1}+\lambda}{P^2+(n_{i,i+1}+\lambda)^2}.
$$
With (\ref{exWSnApereq}) all the $n_{i,i+1}$ are identified and we obtain
\begin{align*}
P_{W\!S_{n,D}}&=\frac{2^{2n-1}}{\Gamma(\lambda+1)^{n-1}}\sum_{k=0}^\infty\genfrac(){0pt}{}{k+2\lambda-1}{k}\Big(\frac{\lambda}{\lambda+k}\Big)^{n-1}
 \int_{-\infty}^\infty\frac{\dd P}{2\pi}\Big(\frac{k+\lambda}{P^2+(k+\lambda)^2}\Big)^n\\
&=\frac{2^{2n-1}}{\Gamma(\lambda)^{n-1}}\int_{-\infty}^\infty\frac{\dd P}{2\pi(P^2+1)^n}\sum_{k=0}^\infty\frac{\genfrac(){0pt}{}{k+2\lambda-1}{k}}{(k+\lambda)^{2n-2}},
\end{align*}
where we scaled the integration variable $P$ with $k+\lambda$. The residue theorem gives
$$
\int_{-\infty}^\infty\frac{\dd P}{2\pi(P^2+1)^n}=\ii\,\mathrm{res}_\ii\frac{1}{(P^2+1)^n}=\frac{\ii}{(n-1)!}\frac{\partial^{n-1}}{\partial_P^{n-1}}\Big|_\ii\frac{1}{(P+\ii)^n}\\
=\frac{1}{2^{2n-1}}\genfrac(){0pt}{}{2n-2}{n-1}.
$$
We obtain the formula
$$
P_{W\!S_{n,D}}=\frac{\genfrac(){0pt}{}{2n-2}{n-1}}{(2\lambda-1)!\Gamma(\lambda)^{n-1}}\sum_{k=\lambda}^\infty\frac{(k+\lambda-1)(k+\lambda-2)\cdots(k-\lambda+2)(k-\lambda+1)}{k^{2n-2}}.
$$

For even $D$ we extend the sum to $k=1$ (the surplus terms have zero numerator). We find that a factor of $k$ cancels and get by pairing the factors in the numerator
\begin{equation}\label{evendwheels}
P_{W\!S_{n,D}}=\frac{\genfrac(){0pt}{}{2n-2}{n-1}}{(2\lambda-1)!(\lambda-1)!^{n-1}}\sum_{k=1}^\infty\frac{\prod_{\ell=1}^{\lambda-1}(k^2-\ell^2)}{k^{2n-3}},
 \quad\text{if}\;D\leq 2n-2\;\text{is even}.
\end{equation}
For odd $D$ the sum over $k$ ranges over half-integers from $\lambda$ to infinity. We are free to extend the sum to $k=1/2$. If we add and subtract the sum over the integers from
1 to infinity, we get
\begin{gather}
\begin{gathered}
\label{odddwheels}
P_{W\!S_{n,D}}=\frac{\genfrac(){0pt}{}{2n-2}{n-1}}{(2\lambda-1)!(\lambda-1)!^{n-1}}\sum_{k=1}^\infty
 \frac{2^{2n-2\lambda-1}\prod_{\ell=1}^{\lambda-1/2}(k^2-(2\ell-1)^2)-\prod_{\ell=1}^{\lambda-1/2}(k^2-\ell^2)}{k^{2n-2}},\; 
\\ \text{if}\;D\leq 2n-3\;\text{is odd}.
\end{gathered}
\end{gather}
The products on the right hand sides expand to $\QQ$-linear combinations of single zetas of odd weights for even $D$ and even weights for odd $D$.
\end{ex}

\section{Proof of Theorem \ref{thm1}}\label{sectpf1}
In this section we prove (G3) in Theorem \ref{thm1}. The main task will be to prove the case $s=0$ in (\ref{01expansion}) with the identity in (\ref{Ma}).
We will keep the dimension and the weights general when possible to show that Conjecture \ref{con1} follows from Conjecture \ref{congegex}.

We start proving that the case $s=1$ in (\ref{01expansion}) follows from the case $s=0$. We use (\ref{eqfA1}) for the graph $G$ with external vertices $a,b,c$ which
we first interpret as $a=z_0$, $b=z_1$, $c=z_2$ yielding the function of the graph $G_{0,1,z}$ (indexed by the external labels).
Then we interpret $a,b,c$ as $z_1,z_0,z_2$ interchanging the role of $z$ and $1-z$ in (\ref{eqinvs}). We hence obtain the graph $G_{1,0,1-z}$.
From (\ref{eqfA1}) we get (alternatively one can use completion, see Sections \ref{sectcomp} and \ref{sectperm})
$$
\|z_1-z_0\|^{-2\lambda N_G}f_{G_{0,1,z}}=\|z_0-z_1\|^{-2\lambda N_G}f_{G_{1,0,1-z}}.
$$
This implies $f_{G_{1,0,1-z}}=f_{G_{0,1,z}}$ and (\ref{01expansion}) follows for $s=1$ by swapping $0\leftrightarrow1$ and $z\leftrightarrow1-z$.

Next we interpret $a,b,c$ as $z_0,z_2,z_1$ which is equivalent to swapping $z\leftrightarrow z^{-1}$ in (\ref{eqinvs}). From (\ref{eqfA1}) we get
$$
\|z_1-z_0\|^{-2\lambda N_G}f_{G_{0,1,z}}=\|z_2-z_0\|^{-2\lambda N_G}f_{G_{0,z^{-1},1}}.
$$
With (\ref{eqinvs}) this implies $f_{G_{0,z^{-1},1}}=(z\zz)^{\lambda N_G}f_{G_{0,1,z}}$ (see (\ref{Zrez})).
We expand $f_{G_{0,1,z}}$ at $s=0$ and substitute $z\mapsto z^{-1}$ to get for $|z|>1$,
$$
f_{G_{0,z,1}}=\sum_{\ell=0}^\VGint\sum_{m,\mm=M_0(G_{0,1,z})}^\infty (-1)^\ell c_{\ell,m,\mm}^0(G_{0,1,z})(\log z\zz)^\ell z^{-m-\lambda N_G}\zz^{-\mm-\lambda N_G}.
$$
With the new summation indices $m'=-m-\lambda N_G$, $\mm'=-\mm-\lambda N_G$ we get (\ref{inftyexpansion}) where
$$
M_\infty(G_{0,z,1})=-M_0(G_{0,1,z})-\lambda N_G.
$$
This gives the identity in (\ref{Minfty}) from (\ref{Ma}) for $s=0$ by interchanging 1 and $z$.

We now prove the inequality in (\ref{Ma}). For $s=0,1$ and any $\emptyset\neq\sV\subseteq\sVGint$ let $G_s=G[\sV\cup\{s,z\}]$.
Let $\nu_z(G_s)$ be the sum of weights adjacent to $z$ in $G_s$. Because $G_s$ is a subgraph of $G$ we have $\nu_z(G_s)\leq\nu_z^>$.
We consider the graph $G[\sV\cup\{s\}]=G_s\backslash\{z\}$. Note that $G[\sV\cup\{s\}]$ has at least one internal vertex.
By convergence (\ref{eqconv}) we get $N_{G[\sV\cup\{s\}]}<0$. Integrality of the weights ($\lambda\nu_e\in\ZZ$) implies $\lambda N_{G[\sV\cup\{s\}]}\leq-1$.
We multiply
$$
\sum_{e\in G_s}\nu_e=\sum_{e\in G[\sV\cup\{s\}]}\nu_e+\nu_z(G_s)\leq\sum_{e\in G[\sV\cup\{s\}]}\nu_e+\nu_z^>
$$
with $\lambda$ and subtract $(\lambda+1)|\sV|$ to obtain
$$
\lambda N_{G_s}\leq\lambda N_{G[\sV\cup\{s\}]}+\lambda\nu_z^>\leq-1+\lambda\nu_z^>.
$$
This implies $-\max_{\emptyset\neq\sV\subseteq\sVGint}\lambda N_{G_s}\geq1-\lambda\nu_z^>$. For $\sV=\emptyset$ we get $G_s=sz$ and the inequality in (\ref{Ma}) follows.

To prove the inequality in (\ref{Minfty}) we choose any proper subset $\sV$ of $\sVGint$. Depending on the number of vertices in $\sVGint\backslash\sV$
we split the edges of $G\backslash G[\sV\cup\{0,1\}]$ into the three sets $\sE_2=G[\sVGint\backslash\sV]$, $\sE_1=\{uv,\,u\in\sVGint\backslash\sV,\,v\in\sV\cup\{0,1,z\}\}$,
and $\sE_0=\{vz,\,v\in\sV\cup\{0,1\}\}$. We get
\begin{equation}\label{pf1eq1}
\lambda N_G-\lambda N_{G[\sV\cup\{0,1\}]}+(\lambda+1)(\VGint-|\sV|)=\sum_{e\in G\backslash G[\sV\cup\{0,1\}]}\lambda\nu_e\geq\sum_{e\in\sE_1\cup\sE_2}\lambda\nu_e+\lambda\nu_z^<.
\end{equation}
We consider the graph $\Gg[\sVGint\cup\{\infty\}\backslash\sV]$ which has at least one internal vertex. Convergence (\ref{eqconv}) gives
$$
\sum_{e\in\Gg[\sVGint\cup\{\infty\}\backslash\sV]}\nu_e-\frac{\lambda+1}{\lambda}(\VGint-|\sV|)<0.
$$
Edges in $\Gg[\sVGint\cup\{\infty\}\backslash\sV]$ are either in $\sE_2$ or they have one vertex $\infty$. By completion, see Definition \ref{defcompl}, we have
$\lambda\nu_{v\infty}=D-\lambda\nu_v$ where $\nu_v$ is the sum of all weights adjacent in $G$ to the vertex $v\in\sVGint\backslash\sV$. Altogether, convergence implies
$$
\sum_{v\in\sVGint\backslash\sV}(2\lambda+2-\lambda\nu_v)+\sum_{e\in\sE_2}\lambda\nu_e-(\lambda+1)(\VGint-|\sV|)<0.
$$
Because $\sum_{v\in\sVGint\backslash\sV}\nu_v=2\sum_{e\in\sE_2}\nu_e+\sum_{e\in\sE_1}\nu_e$ we have
$$
(\lambda+1)(\VGint-|\sV|)-\sum_{e\in\sE_1\cup\sE_2}\lambda\nu_e<0.
$$
With integrality of the weights and (\ref{pf1eq1}) we obtain
$$
-\lambda N_G+\lambda N_{G[\sV\cup\{0,1\}]}\leq-1-\lambda\nu_z^<.
$$
For $\sV=\sVGint$ we get $-\lambda N_G+\lambda N_{G[\sV\cup\{0,1\}]}=-\lambda\nu_z$ and the inequality in (\ref{Minfty}) follows.

We are left with the task to prove (G3) for $s=0$. We assume $Z<1$ and use the Gegenbauer expansion into radial and angular graphical functions in Theorem \ref{thmgegex0}.
Here, we have to specialize to $D=4$ and $\nu_e=1$ in (\ref{thm1cond}) because we need Theorem \ref{D4posthm} to establish condition (\ref{gegthmcond}).

In (\ref{Req}) we have $\log Z=\log(z\zz)/2$. We split the power of $Z$ into two parts: One part is
$$
Z^{-2\lambda N_{G[\sV\cup\{0,z\}]}}=(z\zz)^{-\lambda N_{G[\sV\cup\{0,z\}]}}=(z\zz)^M\quad\text{with }M\geq M_0,
$$
where $M_0$ is defined in (\ref{Ma}). Because $\lambda\nu_e\in\ZZ$ for all edges $e$ we get $M\in\ZZ$.

We combine the second part with the angular graphical function in (\ref{faG}) which specializes to
$$
f^\angle_G(1,z)=\sum_{g\in\sG}c_gf^\angle_g(1,z).
$$
The vertices of the external graph $g$ are a subset of $\{1,z\}$.
If $g$ has the vertices 1 and $z$, then $f_g^\angle(1,z)$ is a polynomial in $\cos(\phi)$ of degree $\sum_en_e(g)$ which is symmetric if $\sum_en_e(g)$ is even and anti-symmetric otherwise.
We have $Z\cos(\phi)=(z+\zz)/2$ and $Z^2=z\zz$. From (\ref{gammaG}) for $\sV\cup\{0,z\}\subseteq\sV_G$ we get with the second part of the power of $Z$ in (\ref{Req}),
$$
Z^{\sum_{e\in\sC^G_{\sV\cup\{0,z\}}}n_e}f^\angle_g(1,z)=P(z,\zz),\quad\text{with }P\in\RR[z,\zz].
$$
If $g$ has one or zero vertices, then $f_g^\angle(1,z)$ is constant and $\sum_{e\in\sC^G_{\sV\cup\{0,z\}}}n_e(G)$ is even.
In this case $P\in\RR[z\zz]\subset\RR[z,\zz]$.

Altogether we get sums of the three factors
$$
\log^\ell (z\zz) (z\zz)^MP(z,\zz)
$$
for $\ell=0,1,\ldots,\VGint$, $M_0\leq M\in\ZZ$ and $P\in\RR[z,\zz]$. This gives (\ref{01expansion}) and completes the proof of (G3) in Theorem \ref{thm1}.

\begin{remark}
With a cutoff in the weights ${\mathbf n}$ (like $\sum_en_e\leq N$) one can use (\ref{gfr}) to obtain approximations of graphical functions (or Feynman periods).
Particularly efficient is the case when $G\backslash\{0\}$ is series-parallel.
With accelerated convergence this gives an algorithm to approximate Feynman periods to many digits. This method was used by Broadhurst and Kreimer in \cite{BK}
and later by the second author in \cite{Census} to determine some Feynman periods with an exact numerical approach.
For general graphs there exists the new method of tropical Monte Carlo quadrature by the first author to obtain good approximations \cite{TMCQ}.
\end{remark}

\section{Proof of Theorem \ref{thm4}}\label{sectpf4}
\begin{lem}
In polar coordinates $z=Z\ee^{\ii\phi}$ we have
\begin{align}\label{proppf4eq}
&D_z\equiv\frac{1}{z-\zz}(z\partial_z-\zz\partial_\zz)=-\frac{1}{2Z\sin\phi}\partial_\phi=\frac{1}{Z}\partial_{2\!\cos\phi},\nonumber\\
&\frac{1}{(\lambda-1)!}\partial_{2\!\cos\phi}^{\lambda-1}\frac{\sin(k+\lambda)\phi}{\sin\phi}=C^\lambda_k(\cos\phi),
\end{align}
where $\lambda=1,2,3,\ldots$, $k=0,1,2,\ldots$, and $C^\lambda_k$ are the Gegenbauer polynomials (\ref{Cdef}).
\end{lem}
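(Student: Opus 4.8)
The plan is to prove the two displayed identities separately, both by direct computation. For the first, I would start from the polar parametrization $z=Z\ee^{\ii\phi}$, $\zz=Z\ee^{-\ii\phi}$, which gives $z-\zz=2\ii Z\sin\phi$ together with $Z=\sqrt{z\zz}$ and $\phi=\tfrac{1}{2\ii}\log(z/\zz)$. Treating $(Z,\phi)$ as coordinates, the chain rule yields $\partial_\phi=(\partial_\phi z)\partial_z+(\partial_\phi\zz)\partial_\zz=\ii z\partial_z-\ii\zz\partial_\zz$, so that $z\partial_z-\zz\partial_\zz=-\ii\partial_\phi$ and hence $D_z=\frac{-\ii\partial_\phi}{2\ii Z\sin\phi}=-\frac{1}{2Z\sin\phi}\partial_\phi$. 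The last equality on the first line then follows from the one-variable chain rule $\partial_\phi=\frac{\dd(2\cos\phi)}{\dd\phi}\,\partial_{2\cos\phi}=-2\sin\phi\,\partial_{2\cos\phi}$, which gives $-\frac{1}{2Z\sin\phi}\partial_\phi=\frac{1}{Z}\partial_{2\cos\phi}$.

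For the second identity the key observation is that $\sin((k+\lambda)\phi)/\sin\phi$ is a Gegenbauer polynomial with parameter $1$. Indeed, setting $\alpha=1$ in (\ref{Cdef1}) makes every binomial coefficient equal to $1$, so $C^1_n(\cos\phi)=\sum_{\ell=0}^n\ee^{(n-2\ell)\ii\phi}=\sin((n+1)\phi)/\sin\phi$; taking $n=k+\lambda-1$ gives $\sin((k+\lambda)\phi)/\sin\phi=C^1_{k+\lambda-1}(\cos\phi)$. Next I would establish the classical index-lowering differentiation formula straight from the generating function (\ref{Cdef}): differentiating $(1-2tx+t^2)^{-\alpha}=\sum_k C^\alpha_k(x)t^k$ in $x$ gives $2\alpha t(1-2tx+t^2)^{-\alpha-1}=\sum_k\frac{\dd}{\dd x}C^\alpha_k(x)t^k$, and comparing coefficients of $t^k$ yields $\frac{\dd}{\dd x}C^\alpha_k=2\alpha C^{\alpha+1}_{k-1}$. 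Iterating $m$ times gives $\frac{\dd^m}{\dd x^m}C^\alpha_k=2^m\alpha(\alpha+1)\cdots(\alpha+m-1)\,C^{\alpha+m}_{k-m}$.

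To conclude, I put $x=\cos\phi$, note that $\partial_{2\cos\phi}=\tfrac12\partial_x$ so $\partial_{2\cos\phi}^{\lambda-1}=2^{-(\lambda-1)}\partial_x^{\lambda-1}$, and apply the differentiation formula with $\alpha=1$, $m=\lambda-1$ and index $k+\lambda-1$: $\partial_{2\cos\phi}^{\lambda-1}\frac{\sin((k+\lambda)\phi)}{\sin\phi}=2^{-(\lambda-1)}\partial_x^{\lambda-1}C^1_{k+\lambda-1}(x)=2^{-(\lambda-1)}\,2^{\lambda-1}(\lambda-1)!\,C^\lambda_k(\cos\phi)=(\lambda-1)!\,C^\lambda_k(\cos\phi)$, and dividing by $(\lambda-1)!$ gives the claim. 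I do not expect a genuine obstacle here; the only care required is bookkeeping --- keeping the notation $\partial_{2\cos\phi}$ consistent with the substitution $x=\cos\phi$, and checking that the Pochhammer product $1\cdot2\cdots(\lambda-1)$ is exactly $(\lambda-1)!$ so that it cancels the normalizing prefactor cleanly. As an alternative to the generating-function step one could instead verify $\partial_{2\cos\phi}^{\lambda-1}\big(\sin((k+\lambda)\phi)/\sin\phi\big)=(\lambda-1)!\,C^\lambda_k(\cos\phi)$ by induction on $\lambda$ using the three-term recurrences for $C^\lambda_k$, but the generating-function argument is shorter and cleaner.
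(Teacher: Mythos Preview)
Your proof is correct. For the first line you do exactly what the paper does: compute $\partial_\phi=\ii(z\partial_z-\zz\partial_\zz)$ from $z=Z\ee^{\ii\phi}$, $\zz=Z\ee^{-\ii\phi}$, divide by $z-\zz=2\ii Z\sin\phi$, and then change variable from $\phi$ to $2\cos\phi$.

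For the second identity your route differs from the paper's. The paper multiplies both sides by $t^{k+\lambda}$, sums over $k$, recognizes the left-hand side as the geometric series $t/(1-(2\cos\phi)t+t^2)$, and then applies $\partial_{2\cos\phi}^{\lambda-1}$ to this closed form to land on the Gegenbauer generating function $(1-2t\cos\phi+t^2)^{-\lambda}$. You instead work termwise: you first identify $\sin((k+\lambda)\phi)/\sin\phi$ as the Chebyshev-type polynomial $C^1_{k+\lambda-1}(\cos\phi)$ via (\ref{Cdef1}), and then invoke the classical derivative rule $\tfrac{\dd}{\dd x}C^\alpha_k=2\alpha\,C^{\alpha+1}_{k-1}$, which you derive cleanly by differentiating (\ref{Cdef}) in $x$. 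Iterating $\lambda-1$ times with $\alpha=1$ and absorbing the factor $2^{\lambda-1}$ into $\partial_{2\cos\phi}^{\lambda-1}=2^{-(\lambda-1)}\partial_x^{\lambda-1}$ gives the claim. Both arguments are short; the paper's is a one-shot generating-function computation, while yours is more modular and makes explicit that the identity is nothing but the standard index-raising differentiation formula for Gegenbauer polynomials applied to the $\alpha=1$ case. Your bookkeeping of the Pochhammer product $1\cdot2\cdots(\lambda-1)=(\lambda-1)!$ and the powers of $2$ is correct.
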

\begin{proof}
The first identity follows from $z-\zz=2\ii Z\sin\phi$ and
$\partial_\phi=\frac{\partial z}{\partial\phi}\partial_z+\frac{\partial\zz}{\partial\phi}\partial_\zz=\ii(z\partial_z-\zz\partial_\zz)$.

We prove the second identity with generating functions. We multiply both sides with $t^{k+\lambda}$ and sum over $k$. It is convenient to extend the sum to negative values
of $k$. To do this we define $C^\lambda_k=0$ for $k<0$. The left hand side gives
$$
\frac{1}{(\lambda-1)!}\partial_{2\!\cos\phi}^{\lambda-1}\sum_{k=-\lambda}^\infty\frac{(\ee^{\ii\phi}t)^{k+\lambda}-(\ee^{-\ii\phi}t)^{k+\lambda}}{\ee^{\ii\phi}-\ee^{-\ii\phi}}.
$$
The geometric series is $t/(1-(2\cos\phi)t+t^2)$. Differentiation with $\partial_{2\!\cos\phi}^{\lambda-1}$ gives $(\lambda-1)!t^\lambda/(1-(2\cos\phi)t+t^2)^\lambda$.
The factor $(\lambda-1)!$ cancels and we obtain the generating function of Gegenbauer polynomials (\ref{Cdef}).
\end{proof}

\begin{prop}\label{proppf4}
The kernel of the differential operator $D_z^{\lambda-1}$ is
\begin{equation}\label{corpf4eq}
\ker D_z^{\lambda-1}=\sum_{k=0}^{\lambda-2}h_k(z\zz)(z^k+\zz^k)
\end{equation}
for arbitrary functions $h_k(z\zz)$. In particular, every function in the kernel is symmetric under $z\leftrightarrow\zz$.
\end{prop}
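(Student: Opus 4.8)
The plan is to prove the two inclusions in (\ref{corpf4eq}) separately. For ``$\supseteq$'' I would argue purely algebraically. The operator $D_z=\frac{1}{z-\zz}(z\partial_z-\zz\partial_\zz)$ is a derivation, since $z\partial_z-\zz\partial_\zz$ is first order, and it annihilates every differentiable function of the product $z\zz$ because $(z\partial_z-\zz\partial_\zz)(z\zz)=0$. A direct computation gives $D_z(z^k+\zz^k)=k(z^k-\zz^k)/(z-\zz)=k\sum_{i=0}^{k-1}z^i\zz^{k-1-i}$, and grouping the monomials into symmetric pairs $z^i\zz^{k-1-i}+z^{k-1-i}\zz^i=(z\zz)^{\min(i,k-1-i)}(z^{|k-1-2i|}+\zz^{|k-1-2i|})$ (with the central monomial, present for odd $k$, equal to $(z\zz)^{(k-1)/2}$) shows that $D_z(z^k+\zz^k)$ lies in the space $V_m$ spanned by the functions $h(z\zz)(z^j+\zz^j)$ with $0\le j\le m$, for $m=k-1$. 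Combining this with the derivation property and the vanishing on functions of $z\zz$ yields $D_z(V_m)\subseteq V_{m-1}$ for every $m$, hence $D_z^{\lambda-1}$ maps $V_{\lambda-2}$ -- which is exactly the right-hand side of (\ref{corpf4eq}) -- into $V_{-1}=\{0\}$.

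For ``$\subseteq$'' I would pass to polar coordinates $z=Z\ee^{\ii\phi}$. By (\ref{proppf4eq}), $D_z=\frac1Z\partial_{2\cos\phi}$, so $D_z^{\lambda-1}f=0$ forces $\partial_{2\cos\phi}^{\lambda-1}f=0$; thus on each arc $\phi\in(0,\pi)$ and $\phi\in(-\pi,0)$, where $2\cos\phi$ serves as a coordinate, $f$ is a polynomial of degree $\le\lambda-2$ in $2\cos\phi$ with coefficients depending only on $Z$. After checking that the two arc-expressions coincide one obtains $f=\sum_{j=0}^{\lambda-2}g_j(Z)(2\cos\phi)^j$ globally; re-expanding in the basis $\{2\cos(k\phi)\}_{k=0}^{\lambda-2}$ of polynomials of degree $\le\lambda-2$ in $2\cos\phi$ and inserting $2\cos(k\phi)=Z^{-k}(z^k+\zz^k)$ turns this into $f=\sum_{k=0}^{\lambda-2}h_k(z\zz)(z^k+\zz^k)$, where $h_k(w)=w^{-k/2}G_k(\sqrt w)$ and $G_k$ is the appropriate linear combination of the $g_j$; the requirement that the coefficients be functions of $z\zz$ is automatic, since $g_j(Z)=g_j(\sqrt{z\zz})$. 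Symmetry of every kernel element under $z\leftrightarrow\zz$ then follows at once, because $2\cos\phi$, and hence each $z^k+\zz^k$, is invariant under $\phi\mapsto-\phi$.

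The delicate point is the gluing at $\sin\phi=0$ in the second step: the identity $D_z=\frac1Z\partial_{2\cos\phi}$ holds only away from $\phi\in\{0,\pi\}$, so the two polynomial representations of $f$ on the two arcs must be reconciled there. I expect to do this by decomposing $f$ into its even and odd parts in $\phi$, each of which again lies in $\ker D_z^{\lambda-1}$ because $D_z$ preserves parity in $\phi$. Real-analyticity of $f$ then forces the odd part -- which on $(0,\pi)$ is simultaneously a polynomial in the even function $2\cos\phi$ and odd through $\phi=0$ -- to vanish identically, and the same smoothness comparison at $\phi=0$ and $\phi=\pi$ identifies the two representations of the even part. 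This completes the ``$\subseteq$'' inclusion and at the same time establishes the symmetry claim of the proposition.
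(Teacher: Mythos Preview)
Your argument is correct and follows the same route as the paper: use the polar identity $D_z=\frac{1}{Z}\partial_{2\cos\phi}$ from (\ref{proppf4eq}) to identify the kernel with polynomials of degree $\le\lambda-2$ in $2\cos\phi$ (equivalently in $z+\zz$) with $z\zz$-dependent coefficients, then perform a linear basis change. The paper compresses this to two lines and is silent on the gluing across $\sin\phi=0$; note, incidentally, that for the \emph{even} part in $\phi$ the two arc-representations agree automatically by parity alone (if $f_{\mathrm{even}}=P_\pm(2\cos\phi)$ on the two arcs, evenness forces $P_+=P_-$), so your regularity hypothesis is only needed to eliminate the odd part.
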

\begin{proof}
From $2Z\cos\phi=z+\zz$ and the first identity in (\ref{proppf4eq}) we obtain that the kernel are polynomials in $z+\zz$ of degree $\leq\lambda-2$ with $z\zz$-dependent coefficients.
A linear basis change gives (\ref{corpf4eq}).
\end{proof}

\begin{proof}[Proof of Theorem \ref{thm4} assuming Conjecture \ref{congegex}]
We start by constructing the two-dimensional avatar ${}^2\!f_G(z)$. We do this in spherical coordinates and use the representation (\ref{gfr}) of $f_G(z)$ in terms
of radial and angular graphical functions. The angular graphical function $f_{G(\mathbf n)\backslash\{0\}}^\angle$ is a polynomial in $\cos\phi$ which can be expanded into
Gegenbauer polynomials by (\ref{eqpfpf4}) where the sum is finite for fixed values of the weights $\mathbf n$. We get (for $Z\neq1$ we have absolute convergence)
$$
f_G(z)=\Big(\frac{2}{\lambda!}\Big)^\VGint\sum_{k=0}^\infty\sum_{\mathbf n}f_{G(\mathbf n)}^R(Z)f_{G(\mathbf n)\backslash\{0\},k}^\angle C^\lambda_k(\cos\phi).
$$
We use this expansion to define
\begin{equation}\label{2fG}
{}^2\!f_G(z)=\Big(\frac{2}{\lambda!}\Big)^\VGint\frac{2\ii Z^\lambda}{(\lambda-1)!}\sum_{k=0}^\infty\sum_{\mathbf n}f_{G(\mathbf n)}^R(Z)f_{G(\mathbf n)\backslash\{0\},k}^\angle
\sin(k+\lambda)\phi.
\end{equation}
Anti-symmetry is provided by anti-symmetry in $\phi$. Equation (\ref{eqthm4int}) is fulfilled by the orthogonality of the sine,
\begin{equation}\label{sinort}
\int_0^{2\pi}\sin k\phi\sin\ell(\theta-\phi)\dd\phi=-\pi\delta_{k,\ell}\cos\theta\quad\text{for }k,\ell=1,2,3,\ldots,
\end{equation}
where we set $\theta=0$. To check that ${}^2\!f_G(z)$ fulfills the differential equation (\ref{eq2av}) we divide (\ref{2fG}) by $z-\zz=2\ii Z\sin\phi$ and use (\ref{proppf4eq}).

For uniqueness we observe that by Proposition \ref{proppf4} Equation (\ref{eq2av}) determines ${}^2\!f^{(\lambda)}_G(z)$ up to a function
$$
\sum_{k=0}^{\lambda-2}h_k(z\zz)(z^k+\zz^k)(z-\zz)=4\ii h_0(Z^2)Z\sin\phi+2\ii\sum_{k=1}^{\lambda-2}h_k(Z^2)Z^{k+1}[\sin(k+1)\phi-\sin(k-1)\phi].
$$
Condition (\ref{eqthm4int}) and orthogonality of the sine (\ref{sinort}) gives $Z^kh_{k-1}(Z^2)=Z^{k+2}h_{k+1}(Z^2)$ for $k=2,\ldots,\lambda-1$, where we set $h_\lambda(Z^2)=h_{\lambda-1}(Z^2)=0$.
This gives $h_{\lambda-2}(Z^2)=\ldots=h_1(Z^2)=0$. Orthogonality for $k=1$ gives $h_0(Z^2)=0$. This trivializes the kernel.

Now we prove (\ref{geg2}). The graph $G$ has a Gegenbauer split at an internal vertex $x$, see Definition \ref{defgegsplit}. The graphical function $f_G(z)$ is the
Feynman integral $A_G$ evaluated at special vertices $z_0=0$, $z_1$ and $z_2$, see (\ref{eqzdef}) and (\ref{eqfGdef}). It factors in the following sense,
\begin{equation}\label{gegfactor}
A_G(z_0,z_1,z_2)=\int_{\RR^D}\frac{\dd^Dx}{\pi^{D/2}}A_{G_1}(z_0,z_1,x)A_{G_2}(z_0,x,z_2),
\end{equation}
where $G_i=\Gg_i\backslash\{\infty\}$, $i=1,2$ are the de-completed split graphs.
With (\ref{eqfA1}) we express the Feynman integrals $A_{G_i}$ in terms of graphical functions evaluated at invariants (\ref{eqinvs}):
\begin{align*}
A_{G_1}(z_0,z_1,x)&=f_{G_1}(x_1),&\text{with $\|x\|^2=x_1\xx_1\;$ and $\;\;\|x-1\|^2=(x_1-1)(\xx_1-1)$},\\
A_{G_2}(z_0,x,z_2)&=\|x\|^{-2\lambda N_{G_2}}f_{G_2}(x_2),&\text{with $\frac{z\zz}{\|x\|^2}=x_2\xx_2$ and $\frac{\|z_2-x\|^2}{\|x\|^2}=(x_2-1)(\xx_2-1)$}.
\end{align*}
We want to express $f_{G_i}(x_i)$ in terms of radial and angular graphical functions. To to this we need polar representations of $x_1$ and $x_2$. From the above identities we find
(up to an irrelevant sign ambiguity in the angle)
$$
x_1=X\ee^{\ii\phi_{x1}},\quad\text{and}\quad x_2=\frac{Z}{X}\ee^{\ii\phi_{xz}},
$$
where $\phi_{x1}$, $\phi_{xz}$ are the angles between $x$ and $z_1$, $z_2$, respectively. From (\ref{gfr}) we get for $f_G(z)$ the expression
\begin{align*}
\Big(\frac{2}{\lambda!}\Big)^{\VGint-1}\!\!\!\int\frac{\dd^Dx}{\pi^{\lambda+1}} &\Big[\sum_{\mathbf n_1}f^R_{G_1(\mathbf n_1)}(X)f^\angle_{G_1(\mathbf n_1)\backslash\{0\}}(\cos\phi_{x1})\Big]
X^{-2\lambda N_{G_2}}
\times
\\
&\times
\Big[\sum_{\mathbf n_2}f^R_{G_2(\mathbf n_2)}\Big(\!\frac{Z}{X}\!\Big)f^\angle_{G_2(\mathbf n_2)\backslash\{0\}}(\cos\phi_{xz})\Big].
\end{align*}
We expand the angular graphical functions into Gegenbauer polynomials, see (\ref{eqpfpf4}).
With spherical coordinates for $x$ we use orthogonality of Gegenbauer polynomials (\ref{Gegorth}) to simplify the expression for $f_G(z)$ to
$$
\Big(\frac{2}{\lambda!}\Big)^\VGint\sum_{k=0}^\infty\frac{\lambda C^\lambda_k(\cos\phi)}{\lambda+k}
\int_0^\infty\dd X X^\alpha\sum_{\mathbf n_1,\mathbf n_2}f^R_{G_1(\mathbf n_1)}(X)f^\angle_{G_1(\mathbf n_1)\backslash\{0\},k}
f^R_{G_2(\mathbf n_2)}\Big(\!\frac{Z}{X}\!\Big)f^\angle_{G_2(\mathbf n_2)\backslash\{0\},k},
$$
where $\alpha=2\lambda+1-2\lambda N_{G_2}$ and $\phi$ is the angle between $z$ and 1. The two-dimensional avatar (\ref{2fG}) is obtained by replacing $C^\lambda_k(\cos\phi)$ with
$[2\ii Z^\lambda\sin(k+\lambda)\phi]/(\lambda-1)!$. By $z\partial_z-\zz\partial_\zz=-\ii\partial_\phi$ from the first identity in (\ref{proppf4eq}) we obtain for the
left hand side of (\ref{geg2}),
\begin{equation}\label{lhs}
\Big(\frac{2}{\lambda!}\Big)^{\VGint}\sum_{k=0}^\infty\frac{2\lambda Z^\lambda\cos(k+\lambda)\phi}{(\lambda-1)!}\int_0^\infty\dd X X^\alpha
\sum_{\mathbf n_1,\mathbf n_2} f^R_{G_1(\mathbf n_1)}(X)f^\angle_{G_1(\mathbf n_1)\backslash\{0\},k}f^R_{G_2(\mathbf n_2)}\Big(\frac{Z}{X}\Big)f^\angle_{G_2(\mathbf n_2)\backslash\{0\},k}.
\end{equation}
Convergence of the right hand side of (\ref{geg2}) is inherited from (\ref{gegfactor}). We use polar coordinates $x=X\ee^{\ii\theta}$, $\dd^2x=2X\dd X\wedge\dd\theta$,
and evaluate the integral over $\theta$ with (\ref{sinort}) (and the role of $\phi$ and $\theta$ interchanged). The result reproduces (\ref{lhs}).

For uniqueness of ${}^2\!f_G(z)$ in (\ref{geg2}) we observe that the kernel of $z\partial_z-\zz\partial_\zz$ are radial functions $h(z\zz)$, see the case $\lambda=2$ in
Proposition \ref{proppf4}. Asymmetry of ${}^2\!f_G(z)$ therefore kills the kernel.
\end{proof}

\begin{proof}[Proof of Proposition \ref{propgeg}]
From uniqueness of ${}^2\!f_G(z)$ and (\ref{2fG}) it is clear that ${}^2\!f_G(z)$ is analytic in $z$ and $\zz$ for $1\neq Z\in\RR_+$ with single-valued log-Laurent expansions at 0 and $\infty$.

We fix a point $a$ on the unit circle. We want to use (\ref{eq2av}) to determine ${}^2\!f_G(z)$ by $(\lambda-1)$ times inverting $D_z$.
Assume $g(z)=g(\zz)\in\sS\sV_{\{0,1,\infty\}}$. We show that in the neighborhood of $z=a$ there exists a symmetric solution of $D_zf(z)=g(z)$ which is analytic in $z$ and $\zz$
if $a\neq1$ or has a single-valued log-Laurent expansion if $a=1$. To do this we locally invert the differential operator $D_z=-(2Z\sin\phi)^{-1}\partial_\phi$ at $z=a$.

We first consider the case $a\neq1$. By linearity it suffices to handle the monomial $(z-a)^m(\zz-\aaa)^\mm$.
We multiply the monomial with $\ii(z-\zz)=-2Z\sin\phi$ and expand the expression. We obtain a sum of terms
$\ii Z^{k+\kk}\ee^{\ii(k-\kk)\phi}$ with $k\leq m+1$, $\kk\leq\mm+1$. To invert the differential operator $\partial_\phi$ we integrate over the variable $\phi$.
We get $z^k\zz^\kk/(k-\kk)$ if $k\neq\kk$ and $(z\zz)^k\log z$ otherwise (fix any branch of the logarithm). Both terms are analytic in $z$ and $\zz$ at $z=a$, $\zz=\aaa$.
The symmetry of $f(z)$ follows from the symmetry of $D_z$ (see (\ref{corpf4eq}) for $\lambda=2$).

We are left with the case $a=1$. By the symmetry of $g(z)$ the expansion (\ref{01expansion}) at $z=1$ can be organized in terms of symmetric monomials
\begin{equation}\label{lempf4eq}
\frac{\log^\ell[(z-1)(\zz-1)]}{[(z-1)(\zz-1)]^{-M_1}}\left[(z-1)^k(\zz-1)^\kk+(\zz-1)^k(z-1)^\kk\right]
\end{equation}
for $M_1\leq0$ and $k,\kk\geq0$. The differential operator $D_z$ lowers the total degree of symmetric polynomials by 1:
\begin{equation}\label{Dzsym}
D_z(z^k\zz^\kk+\zz^kz^\kk)=(k-\kk)\frac{z^k\zz^\kk-\zz^kz^\kk}{z-\zz}\in\ZZ[z,\zz],
\end{equation}
which is of total degree $k+\kk-1$. On functions of $|z-1|^2$ the operator $D_z$ is a negative derivative:
$$
D_zF((z-1)(\zz-1))=-F'((z-1)(\zz-1)).
$$
We combine both facts to invert $D_z$ on (\ref{lempf4eq}) using integration by parts. The first factor is integrated whereas the second factor is differentiated.
We repeat integrating by parts until the second factor is zero. The result provides the single-valued log-Laurent expansion of $f(z)$ at $z=1$.

By repeated application of this result and Proposition \ref{proppf4} we obtain that ${}^2\!f_G(z)/(z-\zz)$ differs from a function with good expansions at the unit circle by
a function $h(z)=\sum_{k=0}^{\lambda-2}h_k(z\zz)(z^k+\zz^k)$. By real-analyticity of ${}^2\!f_G(z)/(z-\zz)$ at some point $a\neq1$ on the unit circle we obtain
that $h(z)$ is analytic in $z$ and $\zz$ at $z=a$, $\zz=\aaa$. Hence, all $h_k(z\zz)$ are regular at 1. Therefore $h(z)$ is analytic in $z$ and $\zz$ on the whole unit circle and the result follows.
\end{proof}

\section{Proof of Theorem \ref{thm2a}}\label{sectpf2a}
Let $\lambda-1=n=0,1,2,\ldots$. In this section we determine the kernel of $\Delta_n=\partial_z\partial_\zz+n(n+1)/(z-\zz)^2$ in (\ref{eqdiffn}).

\begin{lem}
With
\begin{equation}\label{sDdef}
\sD_n=\sum_{k=0}^n\sum_{\ell=0}^k\frac{(-1)^kn!(k+\ell)!}{(n-k)!(k-\ell)!\ell!}(\partial_z\partial_\zz)^{n-k}(\partial_z-\partial_\zz)^{k-\ell}(z-\zz)^{n-k-\ell}
\end{equation}
and
$$
d_n=\sum_{k=0}^n(-1)^k\frac{(n+k)!}{(n-k)!k!}\frac{1}{(z-\zz)^k}\partial_z^{n-k}
$$
(see (\ref{Dn})) we get
\begin{align}\label{eqprop4}
\sD_n\Delta_n&=(\partial_z\partial_\zz)^{n+1}(z-\zz)^n,\nonumber\\
\Delta_nd_n&=\partial_zd_n\partial_\zz.
\end{align}
\end{lem}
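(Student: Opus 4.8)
The two identities in \eqref{eqprop4} are equalities in the associative ring $\mathcal R$ of differential operators in $z,\zz$ with coefficients in $\CC[z,\zz,w,w^{-1}]$, where $w=z-\zz$; the whole proof is an exercise in the commutation calculus generated by $[\partial_z,w]=1$, $[\partial_\zz,w]=-1$, $[\partial_z,\partial_\zz]=0$ and the resulting conjugation rules $w^a\partial_z w^{-a}=\partial_z-aw^{-1}$, $w^a\partial_\zz w^{-a}=\partial_\zz+aw^{-1}$ for $a\in\ZZ$. It is convenient to set $P=\partial_z\partial_\zz$ and $Q=\partial_z-\partial_\zz$, so that $\Delta_n=P+n(n+1)w^{-2}$ and $[P,w]=-Q$, $[Q,w]=2$, $[P,Q]=0$. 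Every operator occurring below is homogeneous of degree $-(n+2)$ for the grading $\deg z=\deg\zz=\deg w=1$, $\deg\partial_z=\deg\partial_\zz=-1$, so after normal ordering each side is a finite sum $\sum_{p,q\ge0}c_{p,q}\,w^{p+q-n-2}\partial_z^p\partial_\zz^q$, and the identities amount to finitely many scalar identities among the $c_{p,q}$.

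I would dispatch the second identity first, since it reduces to an operator identity in $\partial_z$ and $w$ alone. As $d_n=\sum_{k=0}^n c_k\,w^{-k}\partial_z^{n-k}$ with $c_k=(-1)^k(n+k)!/((n-k)!\,k!)$ contains no $\partial_\zz$, we have $\partial_z\partial_\zz d_n=\partial_z d_n\partial_\zz+\partial_z[\partial_\zz,d_n]$, hence $\Delta_n d_n-\partial_z d_n\partial_\zz=\partial_z[\partial_\zz,d_n]+n(n+1)w^{-2}d_n$, and it suffices to prove $\partial_z[\partial_\zz,d_n]=-n(n+1)w^{-2}d_n$. Using $[\partial_\zz,w^{-k}]=k\,w^{-k-1}$ and then applying $\partial_z$, both sides land in $\operatorname{span}\{w^{-j}\partial_z^{n+2-j}\}$; comparing the coefficient of $w^{-j}\partial_z^{n+2-j}$ turns the claim into the recursion $(j-1)c_{j-1}=\bigl[(j-1)(j-2)-n(n+1)\bigr]c_{j-2}$, which follows at once from $c_{j-1}/c_{j-2}=-(n+j-1)(n-j+2)/(j-1)$ since $(j-1)(j-2)-n(n+1)=-(n+j-1)(n-j+2)$. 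This step is short.

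The first identity, $\sD_n\Delta_n=P^{n+1}w^n$, is the substantive part. My plan is induction on $n$: the base case $n=0$ reads $1\cdot\partial_z\partial_\zz=\partial_z\partial_\zz$, and for $n=1$ one computes $\sD_1=Pw-Q-2w^{-1}$, $\Delta_1=P+2w^{-2}$, and indeed $\sD_1\Delta_1=wP^2-2QP=P^2w$, which already exhibits the mechanism. For the inductive step I would peel off the $k=n$ layer of the double sum defining $\sD_n$ in \eqref{sDdef} and combine it with $\Delta_n=\Delta_{n-1}+2n\,w^{-2}$; commuting the surviving powers of $w$ through the factors $P^{n-k}$ and $Q^{k-\ell}$ by means of $[P,w]=-Q$ and $[Q,w]=2$ produces only terms belonging to lower layers, and the task is to check that these corrections reorganise into $P^{n+1}w^n$ once $\sD_{n-1}\Delta_{n-1}=P^n w^{n-1}$ is assumed. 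Alternatively one can forgo the induction and normal-order $\sD_n\Delta_n$ and $P^{n+1}w^n$ directly, reducing each resulting coefficient identity to a Chu--Vandermonde type summation. I expect the bookkeeping of this rearrangement — tracking how each factor $w^{n-k-\ell}$ spawns $Q$-corrections and scalar corrections when pushed past the $P$'s and $Q$'s, across the triple sum — to be the genuine obstacle; the individual manipulations are elementary but voluminous, which is presumably why $\sD_n$ is written out explicitly in \eqref{sDdef} rather than characterised abstractly.

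Once \eqref{eqprop4} is in hand, Theorem~\ref{thm2a} drops out quickly: $\Delta_n F=0$ forces $P^{n+1}(w^nF)=\sD_n\Delta_nF=0$, so $w^nF$ is a sum of terms $z^jA_j(\zz)$ and $\zz^jB_j(z)$ with $0\le j\le n$, while the second identity of \eqref{eqprop4} together with its $z\leftrightarrow\zz$ conjugate shows $d_nh(z)+\ddd_n\hh(\zz)\in\ker\Delta_n$ for (anti-)holomorphic $h,\hh$; comparing the two descriptions of the solution space identifies $\ker\Delta_n$ with $d_nh(z)+\ddd_n\hh(\zz)$.
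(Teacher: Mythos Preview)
Your treatment of the second identity $\Delta_n d_n=\partial_z d_n\partial_\zz$ is correct and is essentially the paper's argument: both reduce to the coefficient identity $(j-1)c_{j-1}=[(j-1)(j-2)-n(n+1)]c_{j-2}$, which you verify from the explicit form of $c_k$.

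For the first identity $\sD_n\Delta_n=(\partial_z\partial_\zz)^{n+1}(z-\zz)^n$, however, you only sketch a plan (induction on $n$, or normal-ordering plus Vandermonde) and concede that the bookkeeping is the ``genuine obstacle''. The paper's proof bypasses all of this with a single observation you are missing: rather than commuting all the way to normal form, one commutes only the rightmost factor of $\Delta_n$ past $(z-\zz)^{n-k-\ell}$ via the single relation
\[
(z-\zz)^m\partial_z\partial_\zz=\partial_z\partial_\zz(z-\zz)^m+m(\partial_z-\partial_\zz)(z-\zz)^{m-1}-m(m-1)(z-\zz)^{m-2}.
\]
Applied termwise to $\sD_n\Delta_n$, the $\partial_z\partial_\zz$ piece produces one sum, the $(\partial_z-\partial_\zz)$ piece another, and the $m(m-1)$ piece combines with the $n(n+1)(z-\zz)^{-2}$ part of $\Delta_n$ into a third. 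After the index shifts $(k,\ell)\mapsto(k+1,\ell+1)$ in the first and $\ell\mapsto\ell+1$ in the second, the three summands share the same range and their coefficients add to zero identically---the triple sum telescopes. Only boundary terms at $k=\ell=-1$ and at $k=n$, $\ell=k$ survive, and these yield exactly $(\partial_z\partial_\zz)^{n+1}(z-\zz)^n$. No induction, no hypergeometric summation, and no Chu--Vandermonde are needed; the explicit form of $\sD_n$ in \eqref{sDdef} is chosen precisely to make this cancellation happen.

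So your proposal is not wrong, but it leaves the hard half of the lemma unproved, and the paper's telescoping trick is both shorter and conceptually sharper than the induction you outline.
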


\begin{proof}
For the first identity in (\ref{eqprop4}) we use
\begin{equation}\label{pfpropeq1}
(z-\zz)^k\partial_z\partial_\zz=\partial_z\partial_\zz(z-\zz)^k+k(\partial_z-\partial_\zz)(z-\zz)^{k-1}-k(k-1)(z-\zz)^{k-2},
\end{equation}
which follows by applying the Leibniz rule to the right hand side.

Substituting the above equation into $\sD_n\Delta_n$ gives three terms:
\begin{align}\label{pfpropeq2}
&\sum_{k=-1}^{n-1}\sum_{\ell=-1}^k\frac{(-1)^{k+1}n!(k+\ell+2)!}{(n-k-1)!(k-\ell)!(\ell+1)!}(\partial_z\partial_\zz)^{n-k}(\partial_z-\partial_\zz)^{k-\ell}(z-\zz)^{n-k-\ell-2}\nonumber\\
&+\;\sum_{k=0}^n\sum_{\ell=-1}^{k-1}\frac{(-1)^kn!(k+\ell+1)!(n-k-\ell-1)}{(n-k)!(k-\ell-1)!(\ell+1)!}(\partial_z\partial_\zz)^{n-k}(\partial_z-\partial_\zz)^{k-\ell}(z-\zz)^{n-k-\ell-2}\nonumber\\
&+\;\sum_{k=0}^n\sum_{\ell=0}^k\frac{(-1)^kn!(k+\ell)!(2n-k-\ell)(k+\ell+1)}{(n-k)!(k-\ell)!\ell!}(\partial_z\partial_\zz)^{n-k}(\partial_z-\partial_\zz)^{k-\ell}(z-\zz)^{n-k-\ell-2},
\end{align}
where in the first term we shifted $k\mapsto k+1$, $\ell\mapsto \ell+1$, in the second term we shifted $\ell\mapsto \ell+1$, and in the last term we combined the third term on the right hand
side of (\ref{pfpropeq1}) with the second term of $\Delta_n$.

The three summands in (\ref{pfpropeq2}) add up to zero. We are left with boundary terms $k,\ell=-1$ and $k=n$, $\ell=k$. The boundary terms are
\begin{align*}\label{pfpropeq3}
&(\partial_z\partial_\zz)^{n+1}(z-\zz)^n+\sum_{k=0}^n\frac{(-1)^{k+1}n!}{(n-k-1)!}(\partial_z\partial_\zz)^{n-k}(\partial_z-\partial_\zz)^{k+1}(z-\zz)^{n-k-1}\\
&+\;\sum_{k=0}^n\frac{(-1)^kn!}{(n-k-1)!}(\partial_z\partial_\zz)^{n-k}(\partial_z-\partial_\zz)^{k+1}(z-\zz)^{n-k-1}\quad=\quad(\partial_z\partial_\zz)^{n+1}(z-\zz)^n
\end{align*}
which proves the first identity in (\ref{eqprop4}). For the second identity in (\ref{eqprop4}) we calculate
$$
\partial_\zz d_n=\sum_{k=1}^n(-1)^k\frac{(n+k)!}{(n-k)!(k-1)!}\frac{1}{(z-\zz)^{k+1}}\partial_z^{n-k}+d_n\partial_\zz.
$$
Applying $\partial_z$ to the first term on the right hand side gives two terms, one when $\partial_z$ hits $z-\zz$ and the second one when $\partial_z$ commutes to the right.
In the second term we shift $k$ to $k+1$ and obtain for the sum of both terms
$$
\sum_{k=0}^n(-1)^{k+1}\frac{(n+k)![k(k+1)+(n+k+1)(n-k)]}{(n-k)!k!}\frac{1}{(z-\zz)^{k+2}}\partial_z^{n-k}=-\frac{n(n+1)}{(z-\zz)^2}d_n.
$$
This term cancels the contribution from the second term in $\Delta_n$. We obtain the second identity in (\ref{eqprop4}).
\end{proof}

\begin{proof}[Proof of Theorem \ref{thm2a}]
From the second identity in (\ref{eqprop4}) we get $\Delta_nd_nh(z)=\partial_zd_n\partial_\zz h(z)=0$. So, $d_nh(z)$ is in the kernel of $\Delta_n$ for any holomorphic function $h(z)$.
By complex conjugation we get $\Delta_n\ddd_n\hh(\zz)=0$.

We need to show that $d_nh(z)+\ddd_n\hh(\zz)$ exhausts the kernel of $\Delta_n$.
To do this we use the first identity in (\ref{eqprop4}) to see that any function $F$ in the kernel of $\Delta_n$ fulfills the identity
$$
(\partial_z\partial_\zz)^{n+1}(z-\zz)^nF=0.
$$
By repeated integration the kernel of $(\partial_z\partial_\zz)^{n+1}$ are polynomials in $z$ of degree $\leq n$ with anti-holomorphic coefficients plus polynomials in $\zz$ of degree $\leq n$
with holomorphic coefficients. Therefore, there are (independent) (anti-)holomorphic functions $h_k(z)$ ($\hh_k(\zz)$) such that $(z-\zz)^nF(z,\zz)=\sum_{k=0}^nh_k(z)\zz^k+\hh_k(\zz)z^k$.
By a linear transformation we equivalently have (anti-)holomorphic functions $f_k(z)$ ($\ff_k(\zz)$) with
$(z-\zz)^nF(z,\zz)=\sum_{k=0}^n(f_{n-k}(z)+\ff_{n-k}(\zz))(z-\zz)^k$. We get
$$
F=\sum_{k=0}^n\frac{f_k(z)+\ff_k(\zz)}{(z-\zz)^k}.
$$
With
$$
h(z)=\frac{(-1)^nn!}{(2n)!}f_n(z)\quad\text{and}\quad\hh(\zz)=\frac{n!}{(2n)!}\ff_n(\zz)
$$
we obtain
$$
F=d_nh(z)+\ddd_n\hh(\zz)+\sum_{k=0}^n\frac{f_k(z)+\ff_k(\zz)-\frac{(n+k)!n!}{(n-k)!k!(2n)!}[(-1)^{n-k}\partial_z^{n-k}f_n(z)+\partial_\zz^{n-k}\ff_n(\zz)]}{(z-\zz)^k}.
$$
The term $k=n$ is zero. For $n=0$ this proves the theorem, so we assume $n\geq1$ in the following.
Introducing new (anti-)homomorphic functions $g_k(z)$ ($\gggg_k(\zz)$) we get that
$$
F=d_nh(z)+\ddd_n\hh(\zz)+G_{n-1}
$$
where
\begin{equation}\label{pf2a1}
G_m=\sum_{k=0}^m\frac{g_k(z)+\gggg_k(\zz)}{(z-\zz)^k}.
\end{equation}
We have $\Delta_nF=\Delta_nG_{n-1}=0$. We show by induction that $\Delta_nG_m=0$ implies $G_m=0$ for all $m=0,\ldots,n-1$.

From $\Delta_nG_m=0$ we get
$$
\sum_{k=0}^m\left(k\frac{g_k'(z)-\gggg_k'(\zz)}{(z-\zz)^{k+1}}+[n(n+1)-k(k+1)]\frac{g_k(z)+\gggg_k(\zz)}{(z-\zz)^{k+2}}\right)=0,
$$
with $n(n+1)-k(k+1)=(n-k)(n+k+1)$. For $m=0$ the above equation implies $G_0=0$. We assume $m>0$ and multiply the above equation with $(z-\zz)^2/(n-m)(n+m+1)$.
In the first term we also shift the summation index $k$ to $k+1$ and obtain
$$
\frac{g_m(z)+\gggg_m(\zz)}{(z-\zz)^m}=-\sum_{k=0}^{m-1}\frac{(k+1)(g_{k+1}'(z)-\gggg_{k+1}'(\zz))+(n-k)(n+k+1)(g_k(z)+\gggg_k(\zz))}{(n-m)(n+m+1)(z-\zz)^k}.
$$
Substitution into (\ref{pf2a1}) gives
$$
G_m=\sum_{k=0}^{m-1}\frac{-(k+1)(g_{k+1}'(z)-\gggg_{k+1}'(\zz))-(m-k)(m+k+1)(g_k(z)+\gggg_k(\zz))}{(n-m)(n+m+1)(z-\zz)^k}.
$$
The right hand side is $G_{m-1}$ for some new functions $\widetilde{g}_k(z)$, $\widetilde{\gggg}_k(\zz)$ and vanishes by induction.
\end{proof}

\section{Proof of Theorem \ref{thm2}}\label{sectpf2}
\begin{lem}
For non-negative integers $a,b,c$ we have
\begin{align}
\begin{aligned}
\label{eqprop1a}
\sum_{k=0}^c(-1)^k\frac{(b+k)!}{(c-k)!k!(k+b-a)!}&=(-1)^c\frac{a!b!}{(a-c)!(b+c-a)!c!},\\
\sum_{k=0}^c(-1)^k\frac{(b+k)!}{(c-k)!k!(k+b+a)!}&=\frac{(a+c-1)!b!}{(a-1)!(a+b+c)!c!},
\end{aligned}
\end{align}
where terms with negative factorials in the denominator vanish. For $m\in\{0,1,\ldots,n\}$ we have
\begin{equation}\label{eqprop1b}
\sum_{k=1}^n(-1)^k\frac{(n+k)!}{(n-k)!k!(k-1)!(k+m)}=(-1)^n-\delta_{m,0},
\end{equation}
where $\delta_{k,\ell}$ is the Kronecker delta.
\end{lem}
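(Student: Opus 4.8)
The plan is to treat the first two identities by elementary generating-function and beta-integral manipulations, and to reduce the third to the first one together with the orthogonality of Legendre polynomials (equivalently, a balanced ${}_3F_2$-evaluation).

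For the first identity, rewrite $\frac{(b+k)!}{(b-a+k)!}=a!\binom{b+k}{a}$ and $\frac1{(c-k)!\,k!}=\frac1{c!}\binom ck$, with all binomials in the standard convention that makes them vanish when the lower index exceeds the (nonnegative) upper one; this is exactly the stated "negative factorials vanish" convention. The claim then becomes $\sum_{k=0}^c(-1)^k\binom ck\binom{b+k}{a}=(-1)^c\binom{b}{a-c}$. Writing $\binom{b+k}{a}=[x^a](1+x)^{b+k}$ and summing $\sum_k\binom ck(-(1+x))^k=(-x)^c$ by the binomial theorem gives $\sum_k(-1)^k\binom ck\binom{b+k}{a}=[x^a](1+x)^b(-x)^c=(-1)^c[x^{a-c}](1+x)^b=(-1)^c\binom{b}{a-c}$, which is the claim (and is correctly $0$ when $a<c$ or $b+k<a$). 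For the second identity, use the beta integral $\frac{(b+k)!}{(a+b+k)!}=\frac1{(a-1)!}\int_0^1 t^{b+k}(1-t)^{a-1}\,dt$; interchanging sum and integral and using $\sum_k\binom ck(-t)^k=(1-t)^c$ reduces the left-hand side to $\frac1{c!\,(a-1)!}\int_0^1 t^b(1-t)^{a+c-1}\,dt=\frac{b!\,(a+c-1)!}{c!\,(a-1)!\,(a+b+c)!}$, as asserted (the degenerate case $a=0$ being immediate).

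For the third identity I would first record the simplification $\frac{(n+k)!}{(n-k)!\,k!\,(k-1)!}=k\binom nk\binom{n+k}k$ and split $\frac{k}{k+m}=1-\frac m{k+m}$. The "$1$"-part is $\sum_{k=1}^n(-1)^k\binom nk\binom{n+k}k=(-1)^n-1$, which is the first identity at $a=b=c=n$ with the $k=0$ term removed; this already settles $m=0$, since there $\frac{k}{k+m}=1$. For $m\ge1$ the remaining part $-m\sum_{k=1}^n\frac{(-1)^k}{k+m}\binom nk\binom{n+k}k$ equals $1-m\,U(m)$, where $U(m)=\sum_{k=0}^n\frac{(-1)^k}{k+m}\binom nk\binom{n+k}k$ (the $k=0$ term contributing $\frac1m$), so the total is $(-1)^n-m\,U(m)$ and the claim reduces to $U(m)=0$ for $1\le m\le n$. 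Writing $\frac1{k+m}=\int_0^1x^{k+m-1}\,dx$ gives $U(m)=\int_0^1x^{m-1}P_n(1-2x)\,dx$, where $P_n$ is the Legendre polynomial via $P_n(y)=\sum_k\binom nk\binom{n+k}k\bigl(\tfrac{y-1}2\bigr)^k$; since $x^{m-1}$ has degree $m-1\le n-1<n$, orthogonality of $P_n$ to all polynomials of degree $<n$ forces $U(m)=0$. (Alternatively, $U(m)$ is a Saalschützian ${}_3F_2$ at argument $1$, and the Pfaff--Saalschütz evaluation produces the Pochhammer factor $(1-m)_n$, which vanishes for $1\le m\le n$.) Assembling the pieces yields $(-1)^n-\delta_{m,0}$.

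The only step with genuine content is the vanishing $U(m)=0$ for $1\le m\le n$, i.e. the Legendre-orthogonality (or Saalschütz) input; everything else is bookkeeping. I expect the main care to be needed there and in tracking the negative-factorial conventions so that the coefficient-extraction and integral interchanges in the first two identities remain valid in the boundary cases $a<c$, $b+k<a$.
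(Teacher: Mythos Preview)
Your argument is correct. For the first identity you do exactly what the paper does: extract the $x^a$-coefficient of $(1+x)^b\sum_k\binom ck(-(1+x))^k=(-x)^c(1+x)^b$.

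For the second identity your route is genuinely different. The paper does not introduce a beta integral; instead it replaces the $a$-dependent factorials by gamma functions, analytically continues the first identity to negative $a$, and uses the reflection formula $\Gamma(x)\Gamma(1-x)=\pi/\sin(\pi x)$ together with $\sin\pi(a-c+1)=(-1)^c\sin\pi(a+1)$ to turn $(k+b-a)!$ into $(k+b+a)!$. Your beta-integral proof is more elementary in the sense that it avoids analytic continuation altogether, at the small price of having to check the degenerate case $a=0$ by hand; the paper's argument is more structural in that it exhibits the two identities as the same formula under $a\mapsto -a$.

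For the third identity both proofs rest on the same phenomenon, but packaged differently. The paper is more direct: it writes $\int_0^1 x^m\,\partial_x^{n+1}\bigl(x^n(1-x)^n\bigr)\,dx$, expands the integrand to obtain exactly the sum in question, and then evaluates the integral by $(n{+}1)$-fold integration by parts, where only the first boundary term $[x^m\partial_x^n(x^n(1-x)^n)]_0^1=(-1)^nn!-\delta_{m,0}\,n!$ survives. This is the Rodrigues-formula proof of Legendre orthogonality carried out in situ. You instead split $\frac{k}{k+m}=1-\frac{m}{k+m}$, handle the first piece via the $a=b=c=n$ case of the first identity, and reduce the second piece to $\int_0^1 x^{m-1}P_n(1-2x)\,dx=0$, quoting Legendre orthogonality as a known fact. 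The paper's version is self-contained and avoids the extra decomposition step; yours makes the Legendre connection explicit and shows why the range $1\le m\le n$ is exactly the orthogonality range.
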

\begin{proof}
For the first identity in (\ref{eqprop1a}) we calculate the $x^a$-coefficient of
$$
\sum_{k=0}^c(-1)^k\genfrac(){0pt}{}ck(1+x)^{b+k}=(1-(1+x))^c(1+x)^b=(-x)^c(1+x)^b.
$$
For the second identity in (\ref{eqprop1a}) we replace factorials which depend on $a$ by the gamma function according to $x!=\Gamma(x+1)$.
We analytically continue to negative $a$. The reflection formula $\Gamma(x)\Gamma(1-x)=\pi/\sin(\pi x)$ gives the second identity because
$\sin\pi(a-c+1)=(-1)^c\sin\pi(a+1)$.

For (\ref{eqprop1b}) we calculate
$$
\int_0^1x^m\frac{\partial^{n+1}}{\partial x^{n+1}}x^n(1-x)^n\dd x=\sum_{k=0}^n(-1)^k\genfrac(){0pt}{}nk\int_0^1x^m\frac{\partial^{n+1}}{\partial x^{n+1}}x^{n+k}\dd x=
\sum_{k=0}^n(-1)^k\genfrac(){0pt}{}nk\frac{(n+k)!}{(k-1)!(k+m)}.
$$
The left hand side can also be calculated by using integration by parts $n+1$ times. Because $n+1>m$ the final integral is zero. Only the first boundary term gives a non-zero
result which is
\begin{gather*}
x^m\frac{\partial^n}{\partial x^n}x^n(1-x)^n\Big|_0^1=(-1)^nn!-\delta_{m,0}\,n!
\qedhere
\end{gather*}
\end{proof}

\begin{lem}
We have (see (\ref{Ipm}), $\sum_{k=1}^0\equiv0$)
\begin{equation}\label{eqprop2a}
\Delta_nI^-_n=\partial_zI^-_n\partial_\zz-\sum_{k=1}^n(-1)^{n-k}\frac{(n+k)!}{(n-k)!k!(k-1)!}\frac{1}{(z-\zz)^{k+1}}\left[\partial_\zz,\intsv\dd z\right](z-\zz)^k
\quad\text{on }\sS\sV_{\{0,1,\infty\}}.
\end{equation}
If $n\geq1$, there exist anti-holomorphic functions $\gggg_k(\zz)$ on $\CC\backslash\RR$ for $k=1,\ldots,n$ such that
\begin{equation}\label{eqprop2b}
\partial_zI^-_n\partial_zI^+_n=1+\sum_{k=1}^n\frac{\gggg_k(\zz)}{(z-\zz)^{k+1}}\quad\text{on }(z-\zz)^n\sS\sV_{\{0,1,\infty\}}.
\end{equation}
\end{lem}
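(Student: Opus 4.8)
The plan is to prove both identities by inserting the explicit expansions (\ref{Ipm}) of $I^{\pm}_n$, differentiating termwise, and collapsing the resulting sums with the binomial identities (\ref{eqprop1a})--(\ref{eqprop1b}). Throughout write $a_k=(-1)^{n-k}(n+k)!/((n-k)!\,k!^2)$, so that $I^{\pm}_nf=\sum_{k=0}^na_k(z-\zz)^{\pm k}\intsv(z-\zz)^{\mp k}f\,\dd z$. The one structural fact used repeatedly is that, since $\partial_z\intsv\dd z=\mathrm{id}$ on $\sS\sV_{\{0,1,\infty\}}$ (Definition \ref{defintsv}), any commutator $[\partial_\zz,\intsv\dd z]g$ lies in $\ker\partial_z$ and hence is a function of $\zz$ alone; equivalently, integration by parts under $\intsv\dd z$ is valid up to an anti-holomorphic ``constant of integration''. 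All sums below are finite, so no convergence question arises, and the case $n=0$ of each identity is immediate, so we take $n\ge1$.

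For (\ref{eqprop2a}) I would first compute $\partial_\zz I^-_nf$: pushing $\partial_\zz$ through $\intsv\dd z$ yields $I^-_n\partial_\zz f$, an ``on-shell'' remainder built from $\partial_\zz(z-\zz)^{\mp k}$, and a commutator part $\sum_{k=0}^na_k(z-\zz)^{-k}[\partial_\zz,\intsv\dd z](z-\zz)^kf$. Applying $\partial_z$ and adding $n(n+1)(z-\zz)^{-2}I^-_nf$, the term $\partial_zI^-_n\partial_\zz f$ detaches. In the on-shell part the coefficient of $(z-\zz)^{-1}f$ is a multiple of $\sum_{k=0}^n(-1)^{n-k}(n+k)!/((n-k)!\,k!\,(k+1)!)$, which vanishes by the first identity in (\ref{eqprop1a}) in the degenerate case $a=n-1$, $b=c=n$ (its right-hand side carries the negative factorial $(-1)!$), while the coefficient of each $(z-\zz)^{-k-2}\intsv(z-\zz)^kf\,\dd z$ is proportional to $a_k-(k+1)b_k$ with $b_k=(-1)^{n-k}(n+k)!/((n-k)!\,k!\,(k+1)!)$, which is identically zero; one also uses the elementary identities $a_k=(k+1)b_k$ and $a_kk-a_{k+1}(k+1)=n(n+1)b_k$. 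What remains is exactly $-\sum_{k=1}^na_kk\,(z-\zz)^{-k-1}[\partial_\zz,\intsv\dd z](z-\zz)^kf$, because the outer $\partial_z$ annihilates the anti-holomorphic factor and acts only on $(z-\zz)^{-k}$; in particular the $k=0$ commutator term drops out. Since $a_kk=(-1)^{n-k}(n+k)!/((n-k)!\,k!\,(k-1)!)$ this is the stated right-hand side.

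For (\ref{eqprop2b}) I would use $\sum_{k=0}^na_k=1$ (the first identity in (\ref{eqprop1a}) with $a=b=c=n$) to get $\partial_zI^+_nf=f+\sum_{j=1}^na_jj\,(z-\zz)^{j-1}\intsv(z-\zz)^{-j}f\,\dd z$ and likewise $\partial_zI^-_ng=g-\sum_{k=1}^na_kk\,(z-\zz)^{-k-1}\intsv(z-\zz)^kg\,\dd z$. Composing, the only new ingredient is the double integral $\intsv(z-\zz)^{k+j-1}\intsv(z-\zz)^{-j}f\,\dd z\,\dd z$; one integration by parts (using $\partial_z(z-\zz)^{k+j}=(k+j)(z-\zz)^{k+j-1}$) splits it as $(k+j)^{-1}(z-\zz)^{k+j}\intsv(z-\zz)^{-j}f\,\dd z-(k+j)^{-1}\intsv(z-\zz)^kf\,\dd z-C_{k,j}(\zz)$ with $C_{k,j}$ anti-holomorphic. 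The coefficient of every $(z-\zz)^{j-1}\intsv(z-\zz)^{-j}f\,\dd z$ then equals $a_jj\,(1-\sum_{k=1}^na_kk/(k+j))$ and the coefficient of every $(z-\zz)^{-k-1}\intsv(z-\zz)^kf\,\dd z$ equals $a_kk\,(-1+\sum_{j=1}^na_jj/(k+j))$; both vanish because $\sum_{k=1}^na_kk/(k+m)=1$ for $m\in\{1,\dots,n\}$, which is precisely (\ref{eqprop1b}) with right-hand side $(-1)^n-\delta_{m,0}=(-1)^n$ and the overall prefactor $(-1)^n$ cancelling. Thus $\partial_zI^-_n\partial_zI^+_nf=f+\sum_{k=1}^na_kk\,(z-\zz)^{-k-1}\sum_{j=1}^na_jj\,C_{k,j}(\zz)$, and setting $\gggg_k(\zz):=a_kk\sum_{j=1}^na_jj\,C_{k,j}(\zz)$ gives the claim; the pole order stops at $k=n$ because $I^-_n$ raises the pole order at the diagonal by at most $n$ and the final $\partial_z$ by one more, and the domain $(z-\zz)^n\sS\sV_{\{0,1,\infty\}}$ is exactly what keeps every intermediate $\intsv$ well-defined.

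The genuine content is the reduction of the coefficient sums to (\ref{eqprop1a}) and (\ref{eqprop1b}); everything else is bookkeeping of index shifts and boundary terms. The parts I expect to be most error-prone are the handling of the $k=0$ commutator term in (\ref{eqprop2a}) — it must disappear only after the outer $\partial_z$ is applied — and the anti-holomorphic integration-by-parts constants $C_{k,j}(\zz)$ in (\ref{eqprop2b}), where one must check that they depend on $\zz$ alone (this is where $\partial_z\intsv\dd z=\mathrm{id}$ is used) and that they come multiplied by the correct pole $(z-\zz)^{-k-1}$. No deeper input — for instance from the kernel description in Theorem \ref{thm2a} — seems to be required.
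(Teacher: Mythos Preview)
Your strategy is correct and coincides with the paper's: expand $I^\pm_n$ via (\ref{Ipm}), differentiate termwise, and collapse the sums using (\ref{eqprop1a})--(\ref{eqprop1b}); the integration-by-parts step for the double integral in (\ref{eqprop2b}) and the role of the anti-holomorphic constants are exactly as in the paper. The only organisational difference is that for (\ref{eqprop2a}) the paper first records the compact identity
\[
\partial_zI^\pm_n=1\pm\sum_{k=1}^n(-1)^{n-k}\frac{(n+k)!}{(n-k)!k!(k-1)!}(z-\zz)^{\pm k-1}\intsv(z-\zz)^{\mp k}\,\dd z
\]
(obtained from $\sum_k a_k=1$) and \emph{then} commutes $\partial_\zz$ through; you instead apply $\partial_\zz$ first and $\partial_z$ afterwards. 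Both orderings work and lead to the same telescoping $a_{k+1}(k+1)^2-a_kk(k+1)=-n(n+1)a_k$, which is your identity $a_kk-a_{k+1}(k+1)=n(n+1)b_k$.

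A couple of the intermediate claims in your sketch do not match the actual bookkeeping, though. The $(z-\zz)^{-1}f$ contributions from the two ``on-shell'' pieces cancel pairwise (the $\partial_z$ that kills $\intsv\dd z$ produces $+a_kk(z-\zz)^{-1}f$ from one sum and $-a_kk(z-\zz)^{-1}f$ from the other), so no appeal to the vanishing sum $\sum_k b_k$ is needed there; and the coefficient you write as ``$a_k-(k+1)b_k$'' is identically zero by definition of $b_k$, so it cannot be the relevant obstruction --- the actual coefficient of $(z-\zz)^{-k-2}\intsv(z-\zz)^kf\,\dd z$ after adding the $n(n+1)(z-\zz)^{-2}I^-_n$ term is $a_{k+1}(k+1)^2-a_kk(k+1)+n(n+1)a_k$, which vanishes by the identity you correctly state. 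These are cosmetic mismatches rather than gaps; once you straighten out which term carries which coefficient, your argument and the paper's are the same.
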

\begin{proof}
Differentiation $\partial_z$ of $I^\pm_n$ in (\ref{Ipm}) gives two terms. If $\partial_z$ hits $\intsv\dd z$, we obtain 1 from the first identity in (\ref{eqprop1a}) for $a=b=c=n$.
The second term comes from differentiating $(z-\zz)^{\pm k}$. We get
\begin{equation}\label{eqpf2prop2pf}
\partial_zI^\pm_n=1\pm\sum_{k=1}^n(-1)^{n-k}\frac{(n+k)!}{(n-k)!k!(k-1)!}(z-\zz)^{\pm k-1}\intsv\dd z\;(z-\zz)^{\mp k}
\end{equation}
on the corresponding function spaces.

For (\ref{eqprop2a}) we commute $\partial_\zz$ with $\partial_zI^-_n$.
We obtain two contributions from differentiating $z-\zz$ before and after the integral $\intsv\dd z$. In the latter term we shift the summation index $k$ to $k+1$ to obtain
for the sum of both terms
$$
-\sum_{k=0}^n(-1)^{n-k}\frac{(n+k)![k(k+1)+(n+k+1)(n-k)]}{(n-k)!k!^2}\frac{1}{(z-\zz)^{k+2}}\intsv\dd z\;(z-\zz)^k=-\frac{n(n+1)}{(z-\zz)^2}I^-_n.
$$
This gives the result.

For (\ref{eqprop2b}) we first note that $I^+_n$ maps $(z-\zz)^n\sS\sV_{\{0,1,\infty\}}$ to $\sS\sV_{\{0,1,\infty\}}$ which is stable under $\partial_z$.
Hence the left hand side is well-defined. We use (\ref{eqpf2prop2pf}) to express $-(\partial_zI^-_n-1)(\partial_zI^+_n-1)$ as the double sum
$$
\sum_{k,\ell=1}^n(-1)^{k+\ell}\frac{(n+k)!}{(n-k)!k!(k-1)!}\frac{(n+\ell)!}{(n-\ell)!\ell!(\ell-1)!}\frac{1}{(z-\zz)^{k+1}}\intsv\dd z\;(z-\zz)^{k+\ell-1}
\intsv\dd z\frac{1}{(z-\zz)^\ell}.
$$
Integration by parts gives for $k+\ell\neq0$,
$$
\intsv\dd z\;(z-\zz)^{k+\ell-1}\intsv\dd z\frac{1}{(z-\zz)^\ell}=\frac{1}{k+\ell}\left[(z-\zz)^{k+\ell},\intsv\dd z\right]\frac{1}{(z-\zz)^\ell}+\hh_{k+\ell}(\zz)
$$
with an anti-holomorphic function $\hh_{k+\ell}(\zz)$ on $\CC\backslash\RR$.
We substitute this into the previous equation and use (\ref{eqprop1b}) for $m=\ell\geq1$ and $m=k\geq1$.
With (\ref{eqpf2prop2pf}) this gives
\begin{gather*}
-(\partial_zI^-_n-1)(\partial_zI^+_n-1)= 
\\
=
\partial_zI^+-1+\partial_zI^--1+
\sum_{k,\ell=1}^n(-1)^{k+\ell}\frac{(n+k)!}{(n-k)!k!(k-1)!}\frac{(n+\ell)!}{(n-\ell)!\ell!(\ell-1)!}\frac{\hh_{k+\ell}(\zz)}{(z-\zz)^{k+1}}.
\end{gather*}
A linear transformation of the functions $\hh_\bullet(\zz)$ leads to (\ref{eqprop2b}).
\end{proof}

\begin{lem}\label{lem1pf2}
Let $f(z)\in(z-\zz)^n\sS\sV_{\{0,1,\infty\}}$. We have $\Delta_0\sI_0f(z)=f(z)$ ($\sI_n$ is defined in (\ref{eqIndef})).
If $n\geq1$, there exist anti-holomorphic functions $\gggg_k(\zz)$ on $\CC\backslash\RR$ for $k=1,\ldots,n$ such that
\begin{equation}\label{lem1pf2eq}
\Delta_n\sI_nf(z)=f(z)+\sum_{k=1}^n\frac{\gggg_k(\zz)}{(z-\zz)^{k+1}}.
\end{equation}
\end{lem}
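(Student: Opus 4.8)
The plan is to reduce the statement to the two operator identities \eqref{eqprop2a} and \eqref{eqprop2b} already proved in this section, together with the defining properties $\partial_z\intsv\dd z=\partial_\zz\intsv\dd\zz=\mathrm{id}$ of the single-valued integrations (Definition \ref{defintsv}).

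First I would dispose of the case $n=0$: here $I^\pm_0=\intsv\dd z$, so $\sI_0f=\intsv\intsv f\dd\zz\dd z$ and $\Delta_0\sI_0f=\partial_z\partial_\zz\intsv\intsv f\dd\zz\dd z=f$ immediately, the correction sum $\sum_{k=1}^0$ being empty.

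For $n\ge1$ I set $g=\intsv\partial_zI^+_nf(z)\dd\zz$, so that $\sI_nf=I^-_ng$. Since $f\in(z-\zz)^n\sS\sV_{\{0,1,\infty\}}$ we have $I^+_nf\in\sS\sV_{\{0,1,\infty\}}$ by \eqref{Ipm}; differentiation preserves single-valued log-Laurent expansions, so $\partial_zI^+_nf\in\sS\sV_{\{0,1,\infty\}}$, and applying the endomorphism $\intsv\dd\zz$ gives $g\in\sS\sV_{\{0,1,\infty\}}$. Now I apply \eqref{eqprop2a} with argument $g$:
\[
\Delta_n\sI_nf=\partial_zI^-_n\partial_\zz g-\sum_{k=1}^n c_k\,\frac{1}{(z-\zz)^{k+1}}\,[\partial_\zz,\intsv\dd z]\big((z-\zz)^kg\big),\qquad c_k=(-1)^{n-k}\frac{(n+k)!}{(n-k)!\,k!\,(k-1)!}.
\]
For the first term, $\partial_\zz g=\partial_zI^+_nf$ because $\partial_\zz\intsv\dd\zz=\mathrm{id}$, hence $\partial_zI^-_n\partial_\zz g=\partial_zI^-_n\partial_zI^+_nf=f+\sum_{k=1}^n\gggg_k(\zz)/(z-\zz)^{k+1}$ by \eqref{eqprop2b}, with $\gggg_k$ anti-holomorphic on $\CC\backslash\RR$. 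For each summand of the correction term I would note that $(z-\zz)^kg\in\sS\sV_{\{0,1,\infty\}}$ (the space is closed under multiplication by polynomials) and that $[\partial_\zz,\intsv\dd z]h\in\ker\partial_z$ for every $h\in\sS\sV_{\{0,1,\infty\}}$: indeed $\partial_z[\partial_\zz,\intsv\dd z]h=\partial_\zz\partial_z\intsv h\dd z-\partial_z\intsv\partial_\zz h\dd z=\partial_\zz h-\partial_\zz h=0$ since $\partial_z$ and $\partial_\zz$ commute and $\partial_z\intsv\dd z=\mathrm{id}$. Thus $[\partial_\zz,\intsv\dd z]\big((z-\zz)^kg\big)=\psi_k(\zz)$ for some anti-holomorphic $\psi_k$ on $\CC\backslash\RR$, the correction term equals $-\sum_{k=1}^n c_k\psi_k(\zz)/(z-\zz)^{k+1}$, and collecting everything gives $\Delta_n\sI_nf=f+\sum_{k=1}^n\big(\gggg_k(\zz)-c_k\psi_k(\zz)\big)/(z-\zz)^{k+1}$. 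Renaming the anti-holomorphic numerators $\gggg_k(\zz)$ yields \eqref{lem1pf2eq}.

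The only genuinely delicate point — the one I expect to be the main obstacle — is the bookkeeping of function spaces along the chain $f\mapsto I^+_nf\mapsto\partial_zI^+_nf\mapsto g\mapsto(z-\zz)^kg$, ensuring every intermediate object lies in $\sS\sV_{\{0,1,\infty\}}$ so that \eqref{eqprop2a} and \eqref{eqprop2b} legitimately apply, and confirming that the commutator $[\partial_\zz,\intsv\dd z]$ really produces a function of $\zz$ alone rather than something outside the expected class. Once these are verified, the lemma follows by the direct substitution of the two earlier identities carried out above.
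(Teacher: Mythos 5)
Your proposal is correct and follows essentially the same route as the paper's proof: apply identity (\ref{eqprop2a}) to $\intsv\dd\zz\,\partial_zI^+_nf$, use $\partial_\zz\intsv\dd\zz=\mathrm{id}$ together with (\ref{eqprop2b}) for the main term, and observe that $[\partial_\zz,\intsv\dd z]$ projects onto anti-holomorphic functions before absorbing the correction terms into the $\gggg_k(\zz)$. The function-space bookkeeping you flag is handled the same way in the paper, which treats it implicitly via the stated domains of $I^\pm_n$ and of (\ref{eqprop2a})--(\ref{eqprop2b}).
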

\begin{proof}
We have $\sI_0=\intsv\dd z\intsv\dd\zz$. Hence $\Delta_0\sI_0=1$. Now, assume $n\geq1$.
Because $\partial_z[\partial_\zz,\intsv\dd z]=\partial_z\partial_\zz\intsv\dd z-\partial_\zz=0$ the commutator $[\partial_\zz,\intsv\dd z]$ projects onto
anti-holomorphic functions. We define
$$
\hh_k(\zz)=(-1)^{n-k+1}\frac{(n+k)!}{(n-k)!k!(k-1)!}\left[\partial_\zz,\intsv\dd z\right](z-\zz)^k\intsv\dd\zz\,\partial_zI^+_nf(z).
$$
With (\ref{eqprop2a}) we obtain
$$
\Delta_n\sI_nf(z)=\partial_zI^-_n\partial_\zz\intsv\dd\zz\,\partial_zI^+_nf(z)+\sum_{k=1}^n\frac{\hh_k(\zz)}{(z-\zz)^{k+1}}.
$$
The result follows from (\ref{eqprop2b}) after a re-definition of $\gggg_k(\zz)+\hh_k(\zz)$ as $\gggg_k(\zz)$.
\end{proof}

\begin{lem}\label{pf2lem3}
For any anti-holomorphic function $\gggg_k(\zz)$ on $\CC\backslash\RR$ and $k=1,\ldots,n$ we have (see (\ref{Dn}))
\begin{equation}\label{eqlem2}
\Delta_nD_n\frac{\gggg_k(\zz)}{(z-\zz)^{k+1}}=\frac{\gggg_k(\zz)}{(z-\zz)^{k+1}}.
\end{equation}
\end{lem}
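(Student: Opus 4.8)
The plan is to reduce the identity to a single combinatorial sum and evaluate that sum by the generating-function technique already used for (\ref{eqprop1a}) and (\ref{eqprop1b}). Fix the anti-holomorphic function $\gggg:=\gggg_k$ on $\CC\backslash\RR$; we must show $\Delta_nD_n\bigl(\gggg(\zz)/(z-\zz)^{k+1}\bigr)=\gggg(\zz)/(z-\zz)^{k+1}$. The only analytic input is the operator formula
\[
(\partial_z\partial_\zz)^p\frac{\gggg(\zz)}{(z-\zz)^m}=(-1)^p\sum_{q=0}^p\binom pq\frac{\Gamma(m+2p-q)}{\Gamma(m)}\frac{\gggg^{(q)}(\zz)}{(z-\zz)^{m+2p-q}},
\]
valid for every integer $m$ with the convention $1/\Gamma(\text{non-positive integer})=0$; it follows from $(\partial_z\partial_\zz)^p=\partial_z^p\partial_\zz^p$ by expanding $\partial_\zz^p$ with the Leibniz rule and then using $\partial_z^r(z-\zz)^{-s}=(-1)^r[\Gamma(s+r)/\Gamma(s)](z-\zz)^{-s-r}$, the cases where $m$ or $m+2p-q$ is non-positive (i.e. where a power of $z-\zz$ is a polynomial) being covered by the same $\Gamma$-ratio convention.

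Substituting this into (\ref{Dn}) with $p=j-1$, after the two multiplications by $(z-\zz)^j$ in each term, gives
\[
D_n\frac{\gggg(\zz)}{(z-\zz)^{k+1}}=\sum_{q=0}^{n-1}A_q\,\frac{\gggg^{(q)}(\zz)}{(z-\zz)^{k-1-q}},\qquad A_q=\sum_{j=q+1}^{n}\frac{(n-j)!}{(n+j)!}\binom{j-1}q\frac{\Gamma(k+j-1-q)}{\Gamma(k+1-j)}.
\]
I would then apply $\Delta_n=\partial_z\partial_\zz+n(n+1)/(z-\zz)^2$ termwise (using the $p=1$ case of the formula above), reindex the $\gggg^{(q+1)}$ contributions, and collect powers. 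This yields $\Delta_nD_n\bigl(\gggg(\zz)/(z-\zz)^{k+1}\bigr)=\sum_{q=0}^{n}B_q\,\gggg^{(q)}(\zz)/(z-\zz)^{k+1-q}$ with $B_0=A_0(n+k)(n-k+1)$, $B_q=A_q(n+k-q)(n-k+q+1)-(k-q)A_{q-1}$ for $1\le q\le n-1$, and $B_n=-(k-n)A_{n-1}$. Hence the lemma is equivalent to the three scalar assertions $A_0=1/((n-k+1)(n+k))$, the recursion $(n+k-q)(n-k+q+1)A_q=(k-q)A_{q-1}$, and $(k-n)A_{n-1}=0$.

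All three follow from the closed form
\[
A_q=\frac{(k-1)!\,(n-k)!\,(n+k-q-1)!}{(k-1-q)!\,(n-k+q+1)!\,(n+k)!},
\]
which gives the value of $A_0$ and the recursion by inspection, while for $q=n-1$ the factor $(k-n)!$ in the denominator makes $A_{n-1}=0$ whenever $k<n$, and $k-n=0$ when $k=n$, so $(k-n)A_{n-1}=0$ for all $1\le k\le n$. Establishing this closed form is the crux of the argument: after the substitution $j=q+1+i$ the sum defining $A_q$ becomes a terminating hypergeometric sum of the type evaluated in (\ref{eqprop1a}), which I would resolve by reading off a suitable Taylor coefficient of a product of binomial series, with reflection formulae handling the boundary terms $j>k$ where a factorial turns negative. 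I expect this hypergeometric evaluation, together with the bookkeeping of the $\Gamma$-ratio conventions, to be the main obstacle; the rest is routine. Should the direct summation prove awkward, an alternative is to verify the recursion $(n+k-q)(n-k+q+1)A_q=(k-q)A_{q-1}$ by merging the two defining $j$-sums into a single one and showing it vanishes (again reducing to (\ref{eqprop1a})), after which the closed form follows by induction on $q$ from the value of $A_0$.
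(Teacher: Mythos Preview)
Your outline is correct and would lead to a proof once the closed form for $A_q$ is established, but it takes a markedly more computational route than the paper. The paper never evaluates the sum you call $A_q$. Instead, after applying $D_n$ it \emph{does not} expand $\partial_\zz^{\ell-1}\bigl(\gggg_k(\zz)/(z-\zz)^k\bigr)$ via the Leibniz rule; keeping these operators intact yields directly
\[
D_n\frac{\gggg_k(\zz)}{(z-\zz)^{k+1}}=\sum_{\ell=1}^{k}\frac{(n-\ell)!(k-1)!}{(n+\ell)!(k-\ell)!}(z-\zz)^{\ell}\partial_\zz^{\ell-1}\frac{\gggg_k(\zz)}{(z-\zz)^k},
\]
with a single-term coefficient for each $\ell$ (the truncation at $\ell\le k$, which you encode in the $\Gamma$-convention, is here the statement that $\partial_z^{\ell-1}(z-\zz)^{\ell-k-1}=0$ for $\ell>k$). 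Acting with $\partial_z$ and then $\partial_\zz$, together with the operator identity $\partial_\zz(z-\zz)^{\ell-1}\partial_\zz^{\ell-2}=(z-\zz)^{\ell-2}\partial_\zz^{\ell-1}(z-\zz)$, makes the sum telescope: one finds $\partial_\zz\partial_z D_n=-\tfrac{n(n+1)}{(z-\zz)^2}D_n$ on this input, up to a single $\ell=1$ boundary term that produces the identity. No hypergeometric evaluation is needed.

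What your approach buys is transparency: once you expand in the basis $\gggg^{(q)}(\zz)/(z-\zz)^{k+1-q}$, the problem becomes three concrete scalar identities, and your derivation of the $B_q$ in terms of the $A_q$ is correct. The cost is that the closed form for $A_q$ is now the heart of the proof and still has to be established; it is a Saalsch\"utzian-type ${}_3F_2$ at unit argument, so the plan you sketch (or Zeilberger's algorithm) will work, but it is precisely the step the paper's organization avoids. If you want to keep your structure, the quickest finish is the second option you mention: verify the two-term recursion $(n+k-q)(n-k+q+1)A_q=(k-q)A_{q-1}$ directly on the defining $j$-sum (a single index shift plus the identity $(n-\ell)(n+\ell+1)+\ell(\ell+1)=n(n+1)$ makes it collapse), rather than proving the closed form first.
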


\begin{proof}
The differential operator $\partial_z$ in $D_n$ can only act on $z-\zz$,
$$
D_n\frac{\gggg_k(\zz)}{(z-\zz)^{k+1}}=\sum_{\ell=1}^n(-1)^{\ell-1}\frac{(n-\ell)!}{(n+\ell)!}(z-\zz)^\ell\partial_\zz^{\ell-1}\gggg_k(\zz)\partial_z^{\ell-1}(z-\zz)^{\ell-k-1}.
$$
Terms with $\ell\geq k+1\geq2$ are nullified by $\partial_z^{\ell-1}$. This gives
\begin{equation}\label{eqprop3a}
D_n\frac{\gggg_k(\zz)}{(z-\zz)^{k+1}}=\sum_{\ell=1}^k\frac{(n-\ell)!(k-1)!}{(n+\ell)!(k-\ell)!}(z-\zz)^\ell\partial_\zz^{\ell-1}\frac{\gggg_k(\zz)}{(z-\zz)^k}.
\end{equation}
We get
$$
\partial_zD_n\frac{\gggg_k(\zz)}{(z-\zz)^{k+1}}=\sum_{\ell=1}^k\frac{(n-\ell)!(k-1)!}{(n+\ell)!(k-\ell)!}
\left[\ell(z-\zz)^{\ell-1}\partial_\zz^{\ell-1}(z-\zz)-k(z-\zz)^\ell\partial_\zz^{\ell-1}\right]\frac{\gggg_k(\zz)}{(z-\zz)^{k+1}}.
$$
The square bracket equals $-(k-\ell)(z-\zz)^\ell\partial_\zz^{\ell-1}-\ell(\ell-1)(z-\zz)^{\ell-1}\partial_\zz^{\ell-2}$.
We substitute this into the previous equation and shift $\ell\mapsto\ell-1$ in the first term. From the identity $-(n+\ell)(n-\ell+1)-\ell(\ell-1)=-n(n+1)$ we get
\begin{equation}\label{eqprop3b}
\partial_zD_n\frac{\gggg_k(\zz)}{(z-\zz)^{k+1}}=-n(n+1)\sum_{\ell=2}^k\frac{(n-\ell)!(k-1)!}{(n+\ell)!(k-\ell)!}(z-\zz)^{\ell-1}\partial_\zz^{\ell-2}\frac{\gggg_k(\zz)}{(z-\zz)^{k+1}}.
\end{equation}
We have
$$
\partial_\zz(z-\zz)^{\ell-1}\partial_\zz^{\ell-2}=(z-\zz)^{\ell-2}\partial_\zz^{\ell-1}(z-\zz)
$$
because both sides equal $-(\ell-1)(z-\zz)^{\ell-2}\partial_\zz^{\ell-2}+(z-\zz)^{\ell-1}\partial_\zz^{\ell-1}$.
We use this identity to calculate $\partial_\zz\partial_zD_n\gggg_k(\zz)(z-\zz)^{-k-1}$ from (\ref{eqprop3b}) and get with (\ref{eqprop3a})
$$
\partial_\zz\partial_zD_n\frac{\gggg_k(\zz)}{(z-\zz)^{k+1}}=-\frac{n(n+1)}{(z-\zz)^2}\left[D_n-\frac{(z-\zz)^2}{n(n+1)}\right]\frac{\gggg_k(\zz)}{(z-\zz)^{k+1}},
$$
where the second term in the square bracket cancels the $\ell=1$ summand in (\ref{eqprop3a}). Equation (\ref{eqlem2}) follows.
\end{proof}

\begin{lem}\label{lem1}
For any (anti-)holomorphic functions $h(z)$, $\hh(\zz)$ on $\CC\backslash\RR$ we have (see (\ref{Dn}))
\begin{equation}\label{prop1eq1}
\partial_\zz^n\partial_z^{n+1}(z-\zz)^n[d_nh(z)+\ddd_n\hh(\zz)]=(-1)^nn!\partial_z^{2n+1}h(z).
\end{equation}
For $m\in\{0,1,\ldots,2n\}$ we have
\begin{equation}\label{prop1eq2}
d_nz^m=(-1)^n\ddd_n\zz^m.
\end{equation}
Let $p\in\CC[z]$ be of degree $\leq2n$. Then $(z-\zz)^nd_np(z)$ is a symmetric polynomial of degree $\leq n$ in $z$ and $\zz$.
\end{lem}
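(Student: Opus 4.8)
The plan is to settle the three claims in sequence: \eqref{prop1eq1} by a one-step expansion, the symmetry-and-degree statement by an explicit coefficient extraction, and \eqref{prop1eq2} as an immediate consequence of that symmetry.

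For \eqref{prop1eq1} I would expand the definition \eqref{Dn}, obtaining $(z-\zz)^nd_nh(z)=\sum_{k=0}^n(-1)^k\frac{(n+k)!}{(n-k)!\,k!}(z-\zz)^{n-k}h^{(n-k)}(z)$ together with the analogous sum for $(z-\zz)^n\ddd_n\hh(\zz)$ with all signs positive. Since each $h^{(n-k)}$ is holomorphic, $\partial_\zz^n$ annihilates every summand of the first expression with $k\ge1$ (its $(z-\zz)$-degree is $n-k<n$) and turns the $k=0$ summand, whose coefficient is $1$, into $(-1)^nn!\,h^{(n)}(z)$; a subsequent $\partial_z^{n+1}$ gives $(-1)^nn!\,h^{(2n+1)}(z)$. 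The $\ddd_n\hh$-part contributes nothing because $\partial_z^{n+1}$ already kills each of its summands (every $(z-\zz)$-degree there is at most $n$). This is \eqref{prop1eq1}, and it needs no auxiliary identity.

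For polynomiality, symmetry and the degree bound, linearity reduces everything to $p(z)=z^m$ with $0\le m\le2n$. Using $\partial_z^{n-k}z^m=\frac{m!}{(m-n+k)!}z^{m-n+k}$ (zero when $m-n+k<0$) shows that $N_m(z,\zz):=(z-\zz)^nd_nz^m$ is a polynomial, homogeneous of degree $m$. I would then expand each $(z-\zz)^{n-k}$ by the binomial theorem and collect the coefficient of $z^\alpha\zz^\beta$ (necessarily with $\alpha+\beta=m$): after simplification this coefficient equals $\frac{(-1)^\beta m!}{\beta!}$ times the single sum $\sum_{k}(-1)^k\frac{(n+k)!}{k!\,(m-n+k)!\,(n-\beta-k)!}$, which is precisely the left-hand side of the first identity in \eqref{eqprop1a} with $b=n$, $c=n-\beta$, $a=2n-m$ (the terms below the natural lower limit $k=\max(0,n-m)$ vanish automatically by the negative-factorial convention). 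Evaluating it yields
\[
  [z^\alpha\zz^\beta]N_m=\frac{(-1)^n\,(\alpha+\beta)!\,(2n-\alpha-\beta)!\,n!}{\alpha!\,\beta!\,(n-\alpha)!\,(n-\beta)!},
\]
with the same convention that a negative-argument factorial in the denominator kills the term. This closed form is manifestly invariant under $\alpha\leftrightarrow\beta$ and is supported on $0\le\alpha,\beta\le n$; hence $N_m$, and by linearity $(z-\zz)^nd_np(z)$ for any $p$ of degree $\le 2n$, is symmetric and of degree $\le n$ in each of $z$ and $\zz$.

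Finally, \eqref{prop1eq2} drops out at once: interchanging $z$ and $\zz$ in $N_m$ produces $(\zz-z)^n\ddd_n\zz^m=(-1)^n(z-\zz)^n\ddd_n\zz^m$, so the symmetry $N_m(z,\zz)=N_m(\zz,z)$ just established is equivalent to $d_nz^m=(-1)^n\ddd_n\zz^m$. The main obstacle is the third paragraph: matching the extracted alternating sum to the precise shape of \eqref{eqprop1a} requires keeping careful track of the sign $(-1)^\beta$ left over from expanding $(z-\zz)^{n-k}$, of the correct choice of the parameters $a,b,c$, and of the fact that the sum's lower limit may exceed $0$ when $m<n$ without affecting the value. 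None of this is deep, but it is where a slip is most likely.
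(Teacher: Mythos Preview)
Your proposal is correct and follows essentially the same route as the paper. For \eqref{prop1eq1} both arguments exploit commutativity of $\partial_z,\partial_\zz$ to kill the $\hh$-part with $\partial_z^{n+1}$ and collapse the $h$-part to $k=0$ with $\partial_\zz^n$; for the remaining claims both reduce to monomials and evaluate the same alternating sum via \eqref{eqprop1a} with $a=2n-m$, $b=n$, $c=n-\beta$, the only cosmetic difference being that the paper writes the closed form as a sum over $\ell$ and reads off \eqref{prop1eq2} before the degree bound, whereas you extract the $z^\alpha\zz^\beta$-coefficient and deduce \eqref{prop1eq2} from the symmetry afterwards.
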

\begin{proof}
Because $\partial_z^{n+1}(z-\zz)^{n-k}=0$ at least one derivative $\partial_z$ hits $\hh(\zz)$, nullifying it. So, we can assume $\hh(\zz)=0$.
Then, the only anti-holomorphic term in $d_k$ is $(z-\zz)^{n-k}$ and $\partial_\zz^n(z-\zz)^{n-k}=(-1)^nn!\delta_{k,0}$. The sum over $k$ becomes trivial and (\ref{prop1eq1}) follows.

To prove (\ref{prop1eq2}) we calculate
\begin{align*}
(z-\zz)^nd_nz^m&=\sum_{k=0}^n(-1)^k\frac{(n+k)!m!(z-\zz)^{n-k}z^{m-n+k}}{(n-k)!k!(m-n+k)!}
=\sum_{k=0}^n\sum_{\ell=0}^{n-k}\frac{(-1)^{k+\ell}(n+k)!m!z^{m-\ell}\zz^\ell}{k!(m-n+k)!(n-k-\ell)!\ell!}\\
&=\sum_{\ell=0}^n(-1)^\ell\frac{m!z^{m-\ell}\zz^\ell}{\ell!}\sum_{k=0}^{n-\ell}(-1)^k\frac{(n+k)!}{(n-\ell-k)!k!(k+m-n)!},
\end{align*}
where terms with negative factorials in the denominator vanish.
We substitute $a=2n-m$, $b=n$, $c=n-\ell$ into the first identity of (\ref{eqprop1a}) and obtain
\begin{gather*}
(z-\zz)^nd_nz^m=\sum_{\ell=0}^n(-1)^\ell\frac{m!z^{m-\ell}\zz^\ell}{\ell!}\frac{(-1)^{n-\ell}(2n-m)!n!}{(n-m+\ell)!(m-\ell)!(n-\ell)!}
=\\
=(-1)^nn!\sum_{\ell=0}^m\genfrac(){0pt}{}{m}{\ell}\genfrac(){0pt}{}{2n-m}{n-\ell}z^{m-\ell}\zz^\ell.
\end{gather*}
By substituting $\ell\mapsto m-\ell$ we see that the result is symmetric under complex conjugation. Upon division by $(z-\zz)^n$ we obtain (\ref{prop1eq2}).

For the last statement we observe that $P=(z-\zz)^nd_np(z)$ is a polynomial in $z$ and $\zz$. Because $p$ has degree $\leq2n$ we obtain from (\ref{prop1eq2}) that $P$ is symmetric
under $z\leftrightarrow\zz$. Obviously $P$ is of degree $\leq n$ in $\zz$. By symmetry it is also of degree $\leq n$ in $z$.
\end{proof}

\begin{proof}[Proof of Theorem \ref{thm2}]
With $F(z)$ in (\ref{eqthm2}) we calculate $\Delta_nF(z)$:

By Theorem \ref{thm2a} we have $\Delta_n(d_nh(z)+\ddd_n\hh(\zz))=0$.

From Lemma \ref{lem1pf2} we get $\Delta_n\sI_nf(z)=f(z)+\sum_{k=1}^n\gggg_k(\zz)/(z-\zz)^{k+1}$ for some anti-holomorphic functions $\gggg_k(\zz)$.
Hence $(1-\Delta_n\sI_n)f(z)=-\sum_{k=1}^n\gggg_k(\zz)/(z-\zz)^{k+1}$ and by Lemma \ref{pf2lem3} we get
$$
\Delta_nD_n(1-\Delta_n\sI_n)f(z)=f(z)-\Delta_n\sI_nf(z).
$$
This gives $\Delta_nF(z)=f(z)$.

We need to show that for a suitable choice of $\phi$, $\overline{\phi}$, $p_0$, $p_1$ every such $F\in(z-\zz)^{-n}\sS\sV_{\{0,1,\infty\}}$ is given by the right hand side of (\ref{eqthm2}).
It follows from Theorem \ref{thm2a} that for any function $F(z)$ with $\Delta_nF(z)=f(z)$ there exist (anti-)holomorphic functions $h(z)$, $\hh(\zz)$ on $\CC\backslash\RR$ such
that (\ref{eqthm2}) holds. By (\ref{Dndef}) we get
\begin{equation}\label{eqpf2}
(z-\zz)^n[d_nh(z)+\ddd_n\hh(\zz)]\in\sS\sV_{\{0,1,\infty\}}.
\end{equation}
Because $\sS\sV_{\{0,1,\infty\}}$ is stable under (anti-)differentiation we get from Lemma \ref{lem1} that $\partial_z^{2n+1}h(z)\in\sS\sV_{\{0,1,\infty\}}$.
Hence $\partial_z^{2n+1}h(z)$ has single-valued log-Laurent expansions (\ref{01expansion}), (\ref{inftyexpansion}) at $0,1,\infty$. Because $h(z)$ is analytic we get that $\partial_z^{2n+1}h(z)$ is
meromorphic on the Riemann sphere with poles at $0,1,\infty$. By the theory of Riemann surfaces we get $\partial_z^{2n+1}h(z)\in\CC[z,z^{-1},(z-1)^{-1}]$.
We use partial fraction decomposition to see that $h(z)=h_0(z)\log(z)+h_1(z)\log(z-1)+\phi(z)$ with $h_0,h_1\in\CC[z]$ of degrees $\leq2n$ and $\phi\in\CC[z,z^{-1},(z-1)^{-1}]$.

We get the complex conjugated result for $\hh(\zz)$ and substitute this into (\ref{eqpf2}). By single-valuedness the result is of the form
$H_0\log(z\zz)+H_1\log[(z-1)(\zz-1)]+\Phi$ with $H_0,H_1\in\CC[z,\zz]$ and $\Phi\in\CC[z,z^{-1},(z-1)^{-1},\zz,\zz^{-1},(\zz-1)^{-1}]$.
The polynomials $H_0$ and $H_1$ are obtained when no differential operator acts on logarithms in $h$ and $\hh$. We get
\begin{equation}\label{eqpf3}
H_i=(z-\zz)^nd_nh_i(z)=(z-\zz)^n\ddd_n\hh_i(\zz),\quad\text{for }i=0,1.
\end{equation}
If $\hh_i(\zz)=(-1)^nh_i(\zz)$, then, by (\ref{prop1eq2}), we have $\ddd_n\hh_i(\zz)=(-1)^n\ddd_nh_i(\zz)=d_nh_i(z)$ and (\ref{eqpf3}) is fulfilled.

We need to show that (\ref{eqpf3}) fixes the choice of $\hh_i(\zz)$. Then (\ref{eqhdef}) follows and the proof of Theorem \ref{thm2} is complete.
We consider $z$ and $\zz$ as independent variables. The pole of order $n$ at $z=\zz$ in $\ddd_n\hh_i(\zz)$ is $(-1)^n(2n)!\hh_i(\zz)/(n!(z-\zz)^n)$.
This allows one to determine $\hh_i(\zz)$ from $\ddd_n\hh_i(\zz)$ and the claim follows.
\end{proof}

In Theorem \ref{thm2} the definition of the integral operator $\sI_n$ has three integrations: Two with respect to $\dd z$ in $I_n^-,I_n^+$ and one explicit integration with respect to $\dd\zz$.
We close the section with an alternative representation $\sI_n'$ which has only two integrations: one holomorphic and one anti-holomorphic.
We need the result for the proof of Theorem \ref{conthm} on the weights of constructible graphical functions. For implementing the algorithm of appending an edge
the operator $\sI_n$ is more efficient.

\begin{lem}\label{lem2}
Let $\sI_n':(z-\zz)^n\sS\sV_{\{0,1,\infty\}}\rightarrow(z-\zz)^{-n}\sS\sV_{\{0,1,\infty\}}$ be defined by
\begin{equation}\label{Inprimedef}
\sI_n'=\sum_{k,\ell=0}^n(-1)^{n+k+\ell}\frac{(n+k)!(n+\ell)!}{(n-k)!(n-\ell)!(k+\ell)!k!\ell!}\frac{1}{(z-\zz)^k}\intsv\dd z(z-\zz)^{k+\ell}\intsv\dd\zz\frac{1}{(z-\zz)^\ell}.
\end{equation}
Then Lemma \ref{lem1pf2} holds if $\sI_n$ is replaced by $\sI_n'$.
\end{lem}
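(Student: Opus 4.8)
The plan is to verify the statement by computing $\Delta_n\sI_n'$ directly, following the same strategy as the proof of Lemma~\ref{lem1pf2} but with the double-sum operator $\sI_n'$ in place of $\sI_n$. Write $\sI_n'f=\sum_{k,\ell=0}^n a_{k\ell}\,(z-\zz)^{-k}\intsv\dd z\bigl[(z-\zz)^{k+\ell}\intsv\dd\zz[(z-\zz)^{-\ell}f]\bigr]$ with $a_{k\ell}=(-1)^{n+k+\ell}(n+k)!(n+\ell)!/[(n-k)!(n-\ell)!(k+\ell)!\,k!\,\ell!]$. The case $n=0$ is immediate, since only the $k=\ell=0$ term survives: $\sI_0'=\intsv\dd z\intsv\dd\zz$, hence $\Delta_0\sI_0'=\partial_z\partial_\zz\intsv\dd z\intsv\dd\zz=\mathrm{id}$. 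For $n\ge1$ I would proceed in two differentiation steps.

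First I would compute $\partial_z\sI_n'f$. Here $\partial_z$ either differentiates the outer factor $(z-\zz)^{-k}$, giving a term with outer factor $(z-\zz)^{-k-1}$, or commutes through $\intsv\dd z$ with no error (because $\partial_z\intsv\dd z=\mathrm{id}$) and then collapses the innermost structure, using $(z-\zz)^{-k}(z-\zz)^{k+\ell}=(z-\zz)^{\ell}$, to $(z-\zz)^{\ell}\intsv\dd\zz[(z-\zz)^{-\ell}f]$. Summing this second contribution over $k$ and invoking the first identity of (\ref{eqprop1a}), which yields $\sum_{k=0}^n a_{k\ell}=\delta_{\ell,0}$ (and, by the $k\leftrightarrow\ell$ symmetry of $a_{k\ell}$, also $\sum_{\ell=0}^n a_{k\ell}=\delta_{k,0}$), collapses it precisely to $\intsv\dd\zz f$. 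Next I would apply $\partial_\zz$. The term $\partial_\zz\intsv\dd\zz f=f$ gives the leading term. On the remainder, $\partial_\zz$ either hits the outer $(z-\zz)^{-k-1}$, or commutes past $\intsv\dd z$ at the cost of an anti-holomorphic function $\gggg_k(\zz)$ multiplying $(z-\zz)^{-k-1}$; since the commutator $[\partial_\zz,\intsv\dd z]$ is annihilated by $\partial_z$ it is indeed a function of $\zz$ alone, and $k$ runs only over $1,\dots,n$, so these error terms have exactly the shape $\sum_{k=1}^n\gggg_k(\zz)/(z-\zz)^{k+1}$ allowed by the statement. Finally, inside $\intsv\dd z$, $\partial_\zz$ acting on $(z-\zz)^{k+\ell}\intsv\dd\zz[(z-\zz)^{-\ell}f]$ splits into a piece proportional to $(z-\zz)^k f$, whose $\ell$-sum vanishes because it carries an explicit factor $k$ while $\sum_{\ell=0}^n a_{k\ell}=\delta_{k,0}$, and a piece $-(k+\ell)(z-\zz)^{k+\ell-1}\intsv\dd\zz[(z-\zz)^{-\ell}f]$.

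It then remains to add the second part of $\Delta_n$, namely $n(n+1)(z-\zz)^{-2}\sI_n'f$, and to check that the three surviving ``bulk'' families of terms cancel. After reindexing, each is a multiple of $(z-\zz)^{-m}\intsv\dd z\bigl[(z-\zz)^{m+\ell-2}\intsv\dd\zz[(z-\zz)^{-\ell}f]\bigr]$. The contribution from $\partial_z\partial_\zz$ differentiating the outer factor twice, together with $n(n+1)(z-\zz)^{-2}\sI_n'f$, has combined coefficient $a_{m-2,\ell}\bigl[n(n+1)-(m-2)(m-1)\bigr]=a_{m-2,\ell}(n-m+2)(n+m-1)$, while the $-(k+\ell)$ piece contributes $a_{m-1,\ell}(m-1)(m+\ell-1)$. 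The elementary ratio $a_{m-1,\ell}/a_{m-2,\ell}=-(n+m-1)(n-m+2)/[(m-1)(m+\ell-1)]$ makes the sum of these two coefficients vanish for every $m,\ell$, and the boundary indices $m=1$ and $m=n+2$ give zero automatically (a factor $k=0$, resp.\ a factor $n-k=0$). This yields $\Delta_n\sI_n'f=f+\sum_{k=1}^n\gggg_k(\zz)/(z-\zz)^{k+1}$, i.e.\ the conclusion of Lemma~\ref{lem1pf2} with $\sI_n'$ in place of $\sI_n$; well-definedness of $\sI_n'$ on $(z-\zz)^n\sS\sV_{\{0,1,\infty\}}$ with image in $(z-\zz)^{-n}\sS\sV_{\{0,1,\infty\}}$ follows as usual by tracking the $(z-\zz)$-powers through the single-valued integrations.

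The computation is essentially bookkeeping once the commutation rules of $\partial_z,\partial_\zz$ with $\intsv$ and with powers of $z-\zz$ are in hand, exactly as in the proof of (\ref{eqprop2b}). The hard part will be the disciplined tracking of the anti-holomorphic error terms: one must make sure they only ever multiply the powers $(z-\zz)^{-(k+1)}$ with $1\le k\le n$ so that they are absorbed into the admissible $\gggg_k$, and that no boundary term of the combinatorial sums based on (\ref{eqprop1a})--(\ref{eqprop1b}) leaves a spurious contribution. Everything beyond that reduces to the identities already established in Section~\ref{sectpf2}.
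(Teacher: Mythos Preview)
Your proposal is correct and follows essentially the same approach as the paper: compute $\partial_z\sI_n'$ using the first identity in (\ref{eqprop1a}) with $a=n-\ell$, $b=c=n$ to collapse one sum to $\intsv\dd\zz$, then apply $\partial_\zz$ and isolate the commutator $[\partial_\zz,\intsv\dd z]$ as the source of the anti-holomorphic $\gggg_k(\zz)/(z-\zz)^{k+1}$. The only cosmetic difference is in the final cancellation: the paper combines the two ``bulk'' terms (after the shift $k\mapsto k+1$) via the identity $k(k+1)+(n+k+1)(n-k)=n(n+1)$ to recognize their sum directly as $-n(n+1)(z-\zz)^{-2}\sI_n'$, whereas you add $n(n+1)(z-\zz)^{-2}\sI_n'$ separately and verify the same cancellation through the ratio $a_{m-1,\ell}/a_{m-2,\ell}$ --- the underlying algebra is identical.
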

\begin{proof}
We have $\sI_0'=\sI_0$ and hence assume that $n\geq1$.

The derivative $\partial_z$ generates two terms on the right hand of (\ref{Inprimedef}). In one term $(z-\zz)^{-k}$ is differentiated and in the second term the integral operator $\intsv\dd z$
is annihilated. After the annihilation of $\intsv\dd z$ the sum over $k$ can be evaluated with the first identity in (\ref{eqprop1a}) for $a=n-\ell$, $b=c=n$.
The result $(-1)^n\delta_{\ell,0}$ trivializes the sum over $\ell$ and the second term becomes $\intsv\dd\zz$,
\begin{equation}\label{lem2pfeq}
\partial_z\sI_n'=\sum_{k=1}^n\sum_{\ell=0}^n\frac{-(-1)^{n+k+\ell}(n+k)!(n+\ell)!}{(n-k)!(n-\ell)!(k+\ell)!(k-1)!\ell!}\frac{1}{(z-\zz)^{k+1}}
 \intsv\dd z(z-\zz)^{k+\ell}\intsv\dd\zz\frac{1}{(z-\zz)^\ell}+\intsv\dd\zz.
\end{equation}
The derivative $\partial_\zz$ on the first term of the right hand side gives four terms. Term one and three hit powers of $z-\zz$.
The second term has the commutator $[\partial_\zz,\intsv\dd z]$ which gives rise to the anti-holomorphic functions $\gggg_k(\zz)$ on the right hand side of (\ref{lem1pf2eq}).
The fourth term annihilates $\intsv\dd\zz$. Like above we get in this term $(-1)^n\delta_{k,0}$ from evaluating the sum over $\ell$.
The $k$-sum starts at 1, so the fourth term is zero. After a shift $k\mapsto k+1$ the third term adds to the first term yielding
$$
-\sum_{k,\ell=0}^n\frac{(-1)^{n+k+\ell}[k(k+1)+(n+k+1)(n-k)](n+k)!(n+\ell)!}{(n-k)!(n-\ell)!(k+\ell)!k!\ell!\;(z-\zz)^{k+2}}\intsv\dd z(z-\zz)^{k+\ell}\intsv\dd\zz\frac{1}{(z-\zz)^\ell}.
$$
This is $-n(n+1)(z-\zz)^{-2}\sI_n'$ and the result follows because the second term in (\ref{lem2pfeq}) differentiates to the identity.
\end{proof}

\section{Proof of Theorem \ref{thm3}}\label{sectpf3}
Let $n=0,1,2,\ldots$. Assume $f(z)$ with (G3) for the graph $G_1$ is in the kernel of $\Delta_n(z-\zz)^{n+1}$. We need to show $f(z)=0$.

We consider $(z-\zz)^{2n+1}f(z)$. From Theorem \ref{thm2a} we know that there exist (anti-)holomorphic functions $h(z)$, $\hh(\zz)$ on $\CC\backslash\RR$ such that
\begin{equation}\label{pf3eq}
(z-\zz)^{2n+1}f(z)=(z-\zz)^n[d_nh(z)+\ddd_n\hh(\zz)].
\end{equation}
By (G3) we have $f\in\sS\sV_{\{0,1,\infty\}}$. We are thus in the situation of (\ref{eqpf2}) in the proof of Theorem \ref{thm2}.
We follow the proof to conclude that there exist (anti-)meromorphic functions $\phi\in\CC[z,z^{-1},(z-1)^{-1}]$, $\overline{\phi}\in\CC[\zz,\zz^{-1},(\zz-1)^{-1}]$
and polynomials $p_0$, $p_1$ of degrees $\leq2n$ such that (see (\ref{eqhdef}))
$$
(z-\zz)^{n+1}f(z)=d_n\Big[\phi(z)+\sum_{s=0,1}p_s(z)\log(z-s)\Big]+\ddd_n\Big[\overline{\phi}(\zz)+(-1)^n\sum_{s=0,1}p_s(\zz)\log(\zz-\sss)\Big].
$$
We consider the $\log(z\zz)$ term in the expansion (\ref{01expansion}) of $f(z)$ at $z=0$. The only contribution comes from
$[d_np_0(z)]\log(z)+(-1)^n[\ddd_np_0(\zz)]\log(\zz)$. By (\ref{prop1eq2}) this equals $[d_np_0(z)]\log(z\zz)$.
By Lemma \ref{lem1} there exists a symmetric polynomial $P$ of degrees $\leq n$ in $z$ and $\zz$ such that
$$
\text{$\log(z\zz)$ coefficient of $f(z)$ }=\frac{P(z,\zz)}{(z-\zz)^{2n+1}}.
$$
In the following we consider $z$ and $\zz$ as independent variables. If $P(z,\zz)\neq0$, the coefficient has singularity at $z=\zz$. This is ruled out by (G3).
(Because the right hand side is anti-symmetric, $P(z,\zz)=0$ also follows from (G1).)
Therefore $d_np_0(z)=0$ and because $d_n$ is injective in the space of analytic functions (this is the complex conjugate of the last statement in the proof
of Theorem \ref{thm2}) we have $p_0(z)=0$. Likewise we get $p_1(z)=0$.

Assume $\phi(z)$ has a pole of order $m>0$ at $z=0$. Then $d_n\phi(z)$ has a pole of order $n+m$ at $z=0$ (from the summand $k=0$ in $d_n$).
The term $\ddd_n\overline{\phi}(\zz)$ is analytic at $z=0$ (it may have poles at $\zz=0$). Therefore $(z-\zz)^{n+1}f(z)$ (and hence also $f(z)$) has a pole of order $n+m$ at $z=0$.
On the other hand the graph $G_1$ has a single edge of weight 1 connected to $z$ so that $\nu_z^>=1$ and $\nu_{0z}=0$ in (G3).
We get $M_0\geq-n$ which rules out poles of order $>n$ in $f(z)$. Therefore $\phi(z)$ is analytic at $z=0$. Likewise $\phi(z)$ is analytic at $z=1$ which makes $\phi(z)$ a polynomial in $z$.

Let $m$ be the degree of the polynomial $\phi(z)$ and assume $m>2n$. We consider the expansion of $f(z)$ at $z=\infty$.
From $d_n\phi(z)$ we get a leading order $z^{m-n}$ with coefficient
$$
\sum_{k=0}^n(-1)^k\frac{(n+k)!m!}{(n-k)!k!(m-n+k)!}=\frac{(m-n-1)!}{(m-2n-1)!},
$$
where we used the second identity in (\ref{eqprop1a}) for $a=m-2n$, $b=c=n$. The term $\ddd_n\overline{\phi}(\zz)$ is of order $\leq n<m-n$ for $z\to\infty$.
We conclude that in the expansion of $f(z)$ at $z=\infty$ we have a term of order $z^{m-2n-1}$. 
On the other hand we have $\nu_z^<=0$ and $\nu_z=1$ which implies $M_\infty\leq-1$ in (G3). This rules out the case $m>2n$. So, $\phi(z)$ is a polynomial of degree $\leq2n$ in $z$.
Likewise $\overline{\phi}(\zz)$ is a polynomial of degree $\leq2n$ in $\zz$. Like for the coefficient of $\log(z\zz)$ there exists a symmetric polynomial $P\in\CC[z,\zz]$
of degrees $\leq n$ in $z$ and $\zz$ such that $f(z)=P(z,\zz)/(z-\zz)^{2n+1}$. By (G3) (or by (G1)) we get $P(z,\zz)=0$.

\section{The algorithm for appending an edge of weight 1}\label{sectappalg}
Let $n=D/2-2$. By repeated single-valued integration we calculate
\begin{equation}
g(z)=(z-\zz)^n[\sI_n+D_n(1-\Delta_n\sI_n)](z-\zz)^{n+1}f_G(z)\in\sS\sV_{\{0,1,\infty\}}
\end{equation}
(if possible). By Lemma \ref{lemdiff} and Theorem \ref{thm2} we get
\begin{equation}\label{solution1}
-n!(z-\zz)^{2n+1}f_{G_1}(z)=g(z)-h(z),
\end{equation}
with
\begin{equation}\label{solution2}
h(z)=(z-\zz)^nd_n\Big[\phi(z)+\sum_{s=0,1}p_s(z)\log(z-s)\Big]+(z-\zz)^n\ddd_n\Big[\overline{\phi}(\zz)+(-1)^n\sum_{s=0,1}p_s(\zz)\log(\zz-\sss)\Big],
\end{equation}
where $\phi\in\CC[z,z^{-1},(z-1)^{-1}]$, $\overline{\phi}\in\CC[\zz,\zz^{-1},(\zz-1)^{-1}]$, and $p_0$, $p_1$, are polynomials of degrees $\leq2n$.
By partial fraction decomposition we write
\begin{align}\label{solution3}
\phi(z)&=\sum_{s=0,1}\phi_s(z)+\phi_\text{reg}(z)+\phi_\infty(z),\quad\text{where}\nonumber\\
\phi_s(z)&=\sum_{k=K_s}^{-1}c_k^s(z-s)^k,\quad\phi_\text{reg}(z)=\sum_{k=0}^{2n}c_k^\infty z^k,\quad\phi_\infty(z)=\sum_{k=2n+1}^{K_\infty}c_k^\infty z^k,
\end{align}
with an analogous decomposition for $\overline{\phi}(\zz)$.

\subsection{Solving for \texorpdfstring{$\phi_s(z)$}{phi(s,z)} and \texorpdfstring{$\overline{\phi}_s(\zz)$}{phibar(s,zbar)}}
For $s=0,1$ we expand $g(z)$ at $s$ in $z$ and $\zz$ up to order $n$ yielding the truncated single-valued log-Laurent series $g_s(z)$.

By (G3) the graphical function $f_{G_1}(z)$ admits an expansion at $z=s$ with terms $\log^\ell(|z-s|^2)(z-s)^m(\zz-\sss)^\mm$ with $m,\mm\geq-n$.
By (\ref{solution1}) the total degree (i.e.\ $m+\mm$) of any term in the expansion of $g(z)-h(z)$ at $z=s$ is thus $\geq1$.
Because the differential operators $(z-\zz)^nd_n$ and $(z-\zz)^n\ddd_n$ do not alter the total degree, terms of negative total degree in $h(z)$ at $z=s$
can only come from $\phi_s(z)$ and $\overline{\phi}_s(\zz)$. Therefore
$$
(z-\zz)^n[d_n\phi_s(z)+\ddd_n\overline{\phi}_s(\zz)]=g_s(z)\quad\text{for negative total degree.}
$$
Note that the left hand side has maximal degree $n$ in $z$ and $\zz$ (from the term $k=0$ in the sums of $d_n$ and $\ddd_n$).
Therefore all pole-terms are contained in $g_s(z)$. We read off $(z-\zz)^nd_n\phi_s(z)$ from the negative powers of $z$ in $g_s(z)$.
The coefficient of $\zz^n$ in $(z-\zz)^nd_n\phi_s(z)$ is $(-1)^n\partial_z^n\phi_s(z)$. This gives $\partial_z^n\phi_s(z)$ which can trivially be integrated $n$ times. Integration constants
are insignificant because they do not affect the pole terms in $\phi_s(z)$.

The procedure to obtain $\overline{\phi}_s(\zz)$ is analogous starting from negative powers of $\zz$ in $g_s(z)$.

\subsection{Solving for \texorpdfstring{$\phi_\infty(z)$}{phi(infty,z)} and \texorpdfstring{$\overline{\phi}_\infty(\zz)$}{phibar(infty,zbar)}}
We expand $g(z)$ at $\infty$ to order $-n-1$ in $z^{-1}$ and $-n$ in $\zz^{-1}$ yielding the truncated single-valued log-Laurent series $g_\infty(z)$.

By (G3) the graphical function $f_{G_1}(z)$ admits an expansion at $z=\infty$ with terms $\log^\ell(|z|^2)z^m\zz^\mm$ with $m,\mm\leq-1$.
The total degree of any term in the expansion of $g(z)-h(z)$ at $z=\infty$ is thus $\geq-2n+1$. Terms of total degree $\leq-2n-1$ in $h(z)$ at $z=\infty$
come from $\phi_\infty(z)$ and $\overline{\phi}_\infty(\zz)$. Therefore
$$
(z-\zz)^n[d_n\phi_\infty(z)+\ddd_n\overline{\phi}_\infty(\zz)]=g_\infty(z)\quad\text{for total degree}\leq-2n-1\text{ in }z^{-1}.
$$
In $(z-\zz)^nd_n\phi_\infty(z)$ the coefficient of $\zz^n$ is $(-1)^n\partial_z^n\phi_\infty(z)$ whereas in $(z-\zz)^n\ddd_n\overline{\phi}_\infty(\zz)$ all terms have degree
$\leq-n-1$ in $\zz^{-1}$. From the $\zz^n$ coefficient of $g_\infty(z)$ we read off $\partial_z^n\phi_\infty(z)$ which can trivially be integrated to
obtain $\phi_\infty(z)$ (integration constants are insignificant).

To obtain $\overline{\phi}_\infty(\zz)$ we expand $g(z)$ at $\infty$ to order $-n$ in $z^{-1}$ and $-n-1$ in $\zz^{-1}$. The analog of the above algorithm gives
$\overline{\phi}_\infty(\zz)$. Because expansions at infinity can be time consuming, it is faster to compute the orders $-n-1,-n$ and $-n,-n-1$ than a single expansion
to order $-n,-n$ in $z^{-1},\zz^{-1}$.

\subsection{Solving for \texorpdfstring{$p_s(z)$}{p(s,z)}}
Terms with $\log(|z-s|^2)$ in $h(z)$ come from (see (\ref{prop1eq2}))
$$
(z-\zz)^n[d_np_s(z)\log(z-s)+(-1)^n\ddd_np_s(\zz)\log(\zz-\sss)]=(z-\zz)^n\log(|z-s|^2)d_np_s(z)+\text{non-log terms.}
$$
The function $g\in\sS\sV_{\{0,1,\infty\}}$ has a single-valued log-Laurent expansion at $z=s$ of the form (\ref{01expansion}),
$$
g(z)=\sum_\ell\sum_{m,\mm}g_{\ell,m,\mm}^s\log^\ell(|z-s|^2)(z-s)^m(\zz-\sss)^\mm.
$$
We set $z=\zz$ in the $\log(|z-s|^2)$-coefficient of this expansion. Because of the factor $(z-\zz)^{2n+1}$ on the left hand side of (\ref{solution1}) we have $h(z)=g(z)$ in this limit.
In $(z-\zz)^nd_n$ only the summand $k=n$ survives. We get
\begin{equation}\label{solution4}
(-1)^n\frac{(2n)!}{n!}p_s(z)=\sum_{m,\mm}g_{1,m,\mm}^s(z-s)^{m+\mm}.
\end{equation}
On the other hand we get from (G1) and (\ref{solution1}) that $g(z)-h(z)$ is anti-symmetric under $z\leftrightarrow\zz$.
From Lemma \ref{lem1} we know that $(z-\zz)^nd_np_s(z)$ is a symmetric polynomial of degree $\leq n$ in $z$ and $\zz$.
Therefore, the expansion coefficients $g_{1,m,\mm}^s$ with $m$ or $\mm$ not in $\{0,1,\ldots,n\}$ are anti-symmetric,
$$
g_{1,m,\mm}^s=-g_{1,\mm,m}^s\quad\text{if not }m,\mm\in\{0,1,\ldots,n\}.
$$
This anti-symmetry lets the sum over $m$ and $\mm$ on the right hand side of (\ref{solution4}) collapse to values in $\{0,1,\ldots,n\}$,
$$
p_s(z)=(-1)^n\frac{n!}{(2n)!}\sum_{m,\mm=0}^ng_{1,m,\mm}^s(z-s)^{m+\mm}.
$$
With this restriction the coefficients $g_{1,m,\mm}^s$ are in $g_s(z)$.

\subsection{Solving for \texorpdfstring{$\phi_\text{reg}(z)$}{phiref(z)} and \texorpdfstring{$\overline{\phi}_\text{reg}(\zz)$}{phirefbar(zbar)}}
We expand
\begin{align*}
\widetilde{g}(z)=g(z)&-(z-\zz)^nd_n\Big[\phi_\infty(z)+\sum_{s=0,1}\phi_s(z)+p_s(z)\log(z-s)\Big]\\
&-\,(z-\zz)^n\ddd_n\Big[\overline{\phi}_\infty(\zz)+\sum_{s=0,1}\overline{\phi}_s(\zz)+(-1)^np_s(\zz)\log(\zz-\sss)\Big]
\end{align*}
in $z=0$ up to order $n$ yielding
$$
\widetilde{g}_\text{reg}(z)=\sum_\ell\sum_{m,\mm\leq n}\widetilde{g}^\text{reg}_{\ell,m,\mm}\log^\ell(z\zz)z^m\zz^\mm.
$$
By (\ref{prop1eq2}) we have
$$
(z-\zz)^n\ddd_n\overline{\phi}_\text{reg}(\zz)=(z-\zz)^nd_n\overline{\phi}_\text{reg}(z).
$$
So, the contributions of $\phi_\text{reg}(z)$ and $\overline{\phi}_\text{reg}(\zz)$ to $h(z)$ can be combined to $(z-\zz)^nd_n[\phi_\text{reg}(z)+\overline{\phi}_\text{reg}(z)]$.
We can hence set
$$
\overline{\phi}_{\mathrm{reg}}(\zz)=0
$$
without restriction. We follow the construction of $p_s(z)$ in the previous subsection, now using the logarithm-free part of the expansion of $\widetilde{g}_\text{reg}(z)$.
In complete analogy we get
$$
\phi_\text{reg}(z)=(-1)^n\frac{n!}{(2n)!}\sum_{m,\mm=0}^n\widetilde{g}^\text{reg}_{0,m,\mm}z^{m+\mm}.
$$

\bibliographystyle{plain}
\renewcommand\refname{References}

\end{document}